\documentclass[acmsmall,screen]{acmart}

\setcopyright{none}

\AtBeginDocument{%
  \providecommand\BibTeX{{%
    \normalfont B\kern-0.5em{\scshape i\kern-0.25em b}\kern-0.8em\TeX}}}

\usepackage[normalem]{ulem}
\usepackage{changebar}
\usepackage{hyperref, wrapfig, amsthm, pict2e, booktabs, subcaption,
  subfiles, rotating, pifont, indentfirst, amsmath, amsfonts,
  mathrsfs, adjustbox, mathtools,booktabs, microtype, ebproof,
  enumitem, multirow, multicol, setspace, xcolor, lineno, listings,
  stmaryrd, xspace, stackengine, makecell}

\usepackage{empheq}

\usepackage[tableposition=top]{caption}
\usepackage{tikz-cd}
\usetikzlibrary{decorations.pathmorphing}
\usepackage[ruled,linesnumbered,vlined]{algorithm2e}

\SetCommentSty{mycommfont}
\usepackage{graphicx}
\usepackage[frozencache,cachedir=.]{minted}
\usepackage{tikz}
\newcounter{markeq}
\usepackage{spacingtricks}

\usepackage[T1]{fontenc}

\newcommand{\tyName}{\mbox{\textsf{u}\textsc{Hat}}\xspace}

\usepackage{spacingtricks, commands, refinementtydef}

\def\showappendix{true}
\begin{document}

%%
%% The "title" command has an optional parameter,
%% allowing the author to define a "short title" to be used in page headers.
% \title{Synthesizing Test Controllers from Types: Property-Guided Bug-Finding for Distributed System Models}
\title{Trace-Guided Synthesis of Effectful Test Generators}
%%
%% The "author" command and its associated commands are used to define
%% the authors and their affiliations.
%% Of note is the shared affiliation of the first two authors, and the
%% "authornote" and "authornotemark" commands
%% used to denote shared contribution to the research.
\author{Zhe Zhou}
%\authornote{with author1 note}          %% \authornote is optional;
%% can be repeated if necessary
\orcid{0000-0003-3900-7501}             %% \orcid is optional
\affiliation{
  \institution{Purdue University}            %% \institution is required
  \country{USA}                    %% \country is recommended
}
\email{zhou956@purdue.edu}          %% \email is recommended

\author{Ankush Desai}
\orcid{0000-0001-9006-0100}
\affiliation{
  \institution{Snowflake}            %% \institution is required
  \country{USA}                    %% \country is recommended
}
\email{Ankush.desai@snowflake.com}

\author{Benjamin Delaware}
\orcid{0000-0002-1016-6261}
\affiliation{
  \institution{Purdue University}            %% \institution is required
  \country{USA}                    %% \country is recommended
}
\email{bendy@purdue.edu}

\author{Suresh Jagannathan}
\orcid{0000-0001-6871-2424}
\affiliation{
  \institution{Purdue University}            %% \institution is required
  \country{USA}                    %% \country is recommended
}
\email{suresh@cs.purdue.edu}

%%
%% By default, the full list of authors will be used in the page
%% headers. Often, this list is too long, and will overlap
%% other information printed in the page headers. This command allows
%% the author to define a more concise list
%% of authors' names for this purpose.
% \renewcommand{\shortauthors}{Trovato and Tobin, et al.}

%%
%% The abstract is a short summary of the work to be presented in the
%% article.

\begin{abstract}
  Several recently proposed program logics have incorporated notions
  of \emph{underapproximation} into their design, enabling them to
  reason about reachability rather than safety. In this paper, we
  explore how similar ideas can be integrated into an expressive type
  and effect system.  We use the resulting underapproximate type
  specifications to guide the synthesis of test generators that probe
  the behavior of effectful black-box systems. A key novelty of our
  type language is its ability to capture underapproximate behaviors
  of \emph{effectful} operations using symbolic traces that expose
  latent data and control dependencies, constraints that must be
  preserved by the test sequences the generator outputs.  We
  implement this approach in a tool called \name{}, and evaluate it on
  a diverse range of applications by integrating \name{}'s synthesized
  generators into property-based testing frameworks like QCheck and
  model-checking tools like P.  In both settings, the generators
  synthesized by \name{} are significantly more effective than the
  default testing strategy, and are competitive with state-of-the-art,
  handwritten solutions.
\end{abstract}

\maketitle

% \subfile{oopsla25/0-intro}
% \subfile{oopsla25/9-benchmark-MonkeyDB}
% \subfile{oopsla25/10-benchmark-Quickstorm}
% \subfile{oopsla25/11-benchmark-STLCFuzzer}

% \subfile{sections/reviews}
% \subfile{sections/rebuttal}

\section{Introduction}
\label{sec:intro}

Unlike program verifiers, symbolic execution~\cite{CD11,DFLO19},
bounded model checking~\cite{CKL04,DGJ+13,DQ17}, or property-based
testing~\cite{CH00,Quickstrom,BeginnerLuck,HC+24} frameworks
\emph{underapproximate} a program's behavior, with a goal of
generating only true positives (i.e., every reported bug is a real
bug), but admitting false negatives (i.e., not all real bugs may be
reported). To place such tools on a more formal footing, several new
underapproximate program logics have recently been
proposed~\cite{casl,OHP19,LR+22,ZDS23} for reasoning about
\emph{incorrectness} properties of a program, i.e., a characterization
of the conditions under which a particular error state is guaranteed
to be reached.  As one example, O'Hearn defines~\cite{OHP19}
incorrectness triples of the form $[P]\;\texttt{c}\;[Q]$, which assert
that for any post-state of a command \texttt{c} satisfying assertion
$Q$ there \emph{must} exist a pre-state satisfying $P$.  This
interpretation underapproximates \texttt{c}'s behavior because it
requires a witness pre-state to justify every valid post-state.

While these underapproximate axiomatic proof systems have focused on
imperative first-order state-based specifications, other recent
efforts have instead explored notions of incorrectness and
underapproximation using the language of
types~\cite{RW24,WGJ24,CoverageType}. \citet{RW24}, for example, introduces
a two-sided type system that allows for the refutation of typing
formulas, guaranteeing that not only do well-typed programs not go
wrong, but also that ill-typed programs don't
evaluate. \citet{CoverageType}, on the other hand, leverages notions of
underapproximation in the form of \emph{coverage} types to represent
the set of values guaranteed to be produced by a test generator.

These type systems consider compositional underapproximate reasoning
techniques for pure functional programs, and thus cannot be directly
applied to impure functional programs that, e.g., interact with
effectful libraries.  To address this important limitation, this paper
introduces \emph{Underapproximate Hoare Automata Types}, or
\tyName{}s, a type abstraction for \emph{underapproximating effectful
  computations}.  A \tyName{} type judgement has the form:
\vspace{-.4cm} \par\nobreak {
\[\vdash e\, :\, [H]\,t\,[F]\]
}\noindent where $H$ and $F$ qualify type $t$ with pre- ($H$) and post-
($F$) conditions. Both conditions are expressed as symbolic regular
expressions (SREs) that capture the trace of effects performed before
and by $e$, similar to the \textsc{Hat}s proposed by
\citet{ZYDJ24}. Unlike their overapproximate predecessors, however,
\tyName{}s \emph{underapproximate} program behaviors: the judgement
above states that for every trace of effects $\alpha$ accepted by $F$,
there exists a trace accepted by $H$ under which $e$ \emph{must}
produce $\alpha$.  Intuitively, a \tyName{} captures the sufficient
conditions (i.e., $H$) that guarantee a set of effects (i.e., $F$) can
be realized by executing $e$.

To illustrate, consider a program that interacts with a key-value
store that provides $\eff{get}$ and $\eff{put}$ operations. The
following \tyName{} underapproximates the behavior of $\eff{get}$ with
respect to a context (i.e., trace) of previously performed
$\eff{put}$s:
\vspace{-.4cm} \par\nobreak {\small %
\begin{align*}
  \eff{get}: k{:}\Code{Key.t} \sarr
  \effLR{\allA \seqA \msgA{put}{k\;v}\seqA \allA }
  \urt{\Code{Value.t}}{\vnu = v} \effLR{\msgA{get}{k\; v}}
\end{align*}
}\noindent
This \tyName{} asserts that a $\eff{get}$ operation is guaranteed to
produce a trace consisting of a single event, $\msgA{get}{k\; v}$,
provided that some $\eff{put}$ operation binding $k$ to $v$
(i.e., $\allA\seqA\msgA{put}{k\;v}\seqA\allA$) was executed before.  The
underapproximate interpretation given to \tyName{}s encodes a form of
reachability and data dependence between the effects that have
occurred prior to executing an operation, and the effect produced by
the operation --- in this case, the execution of a $\eff{get}$
operation on a key $k$ depends on the execution of a previous
$\eff{put}$ operation on $k$.\footnote{Note that an overapproximate
  interpretation of this judgement is not sensible, as it would
  require that a $\eff{get}$ operation on $k$ be able to
  simultaneously observe every value $v$ previously $\eff{put}$ into
  $k$. This would mean that after performing $\msgA{put}{k_1\;2}$ and
  $\msgA{put}{k_1\;3}$, the result of $\msgA{get}{k_1}$ would be both
  \emph{both} $2$ and $3$, which is obviously impossible.}

In this paper, we show how \tyName{}s can be used to guide a synthesis
procedure for the property-based testing (PBT) domain. PBT is an
increasingly popular framework for automated testing of user-specified
properties. A key component of PBT frameworks are \emph{test
  generators}, nondeterministic functions that supply input values to
the system under test (SUT). To be effective, a generator needs to
produce inputs that are relevant to the property of interest, e.g.,
they should satisfy the preconditions of the SUT.  The need to
manually write effective generators is a well-known pain
point~\cite{HC+24} to wider adoption of PBT, and developing generators
for properties that impose deep structural or semantic constraints on
inputs remains a publication-worthy exercise~\cite{pbt-ifc, PCRH+11,
  FQL24}.

The challenges with writing PBT generators are further exacerbated
when the SUT is effectful and includes opaque components, e.g.,
libraries whose source code may not be available or amenable for
inspection. In this setting, tests are typically comprised of
\emph{sequences} of inputs that induce effects in a particular order
to expose latent data and control dependencies relevant to the
property being checked, indirectly manipulating library-managed state.
Some examples of scenarios where tests are naturally expressed this
way include:
\begin{enumerate}

\item \textbf{Library API invocations:} The SUT is a library
  implementing an effectful abstract data type (ADT) equipped with a
  set of APIs.  A test consists of a sequence of API calls that could
  lead the implementation to an error state (e.g., a violation of a
  representation invariant~\cite{MP+20,ZYDJ24}).

\item \textbf{Serialized inputs:} The SUT is a procedure that
  processes serialized input data whose effects are tokens used to
  reconstruct an AST which is then evaluated. The test generator
  creates (serialized) inputs that produce semantic dependencies among
  nodes in the corresponding AST that could violate a property of
  interest (e.g., evaluation errors due to incorrect treatment of
  variable scoping or binding).

\item \textbf{Database transactions:} The SUT is a database supporting
  transactional read and write operations under various isolation
  levels.  A test consists of a sequence of transaction operations
  whose interleaved order is chosen to expose potential violations of
  isolation or consistency guarantees the database is supposed to
  provide.

\item \textbf{Distributed protocols:} The SUT is a distributed system
  whose components are expected to adhere to protocol invariants.  A
  test is a sequence of message invocations that can lead the system
  to a global state in which these invariants are not maintained.
\end{enumerate}

Despite being drawn from diverse domains, all these examples share the
requirement that an effective test generator needs to be cognizant of
how dependencies between the events in traces are (or are not)
meaningful to manifesting a property violation.  To capture this
information succinctly, we use \tyName{}s as a formal specification
language to precisely describe latent dependencies among effectful
operations, and to ensure that effectful generators adhere to these
specifications by conforming to those dependencies.  We develop a
custom DSL for writing such generators.  For example, using the
specification given above, a well-typed program written in our DSL
would only generate test sequences in which every $\eff{get}$
operation is preceded by at least one $\eff{put}$ on the same key,
while placing no other constraints on the operations that might occur
between the two.  While this use of \tyName{}s enables us to check
whether a manually-written generator explores traces relevant to a
target property, the burden of writing such a generator still remains.
To alleviate this overhead, we connect these two elements by defining
a type-guided program synthesis technique that automatically
constructs DSL generator programs using \tyName{} specifications. The
resulting generators are guaranteed to produce traces that violate a
target property when executed angelically~\cite{BC+10}, and can thus
evince a concrete bug assuming the SUT provides responses conforming to
provided specifications when performing effects.

% Test generators that use \tyName{}s to guide the search for
% interesting inputs thus implement a novel form of angelic
% non-determinism~\cite{BC+10} where underapproximation is used to
% intelligently inform SUT input generation.
\begin{wrapfigure}{l}{.6\textwidth}
  \centering
  \includegraphics[width=0.98\linewidth]{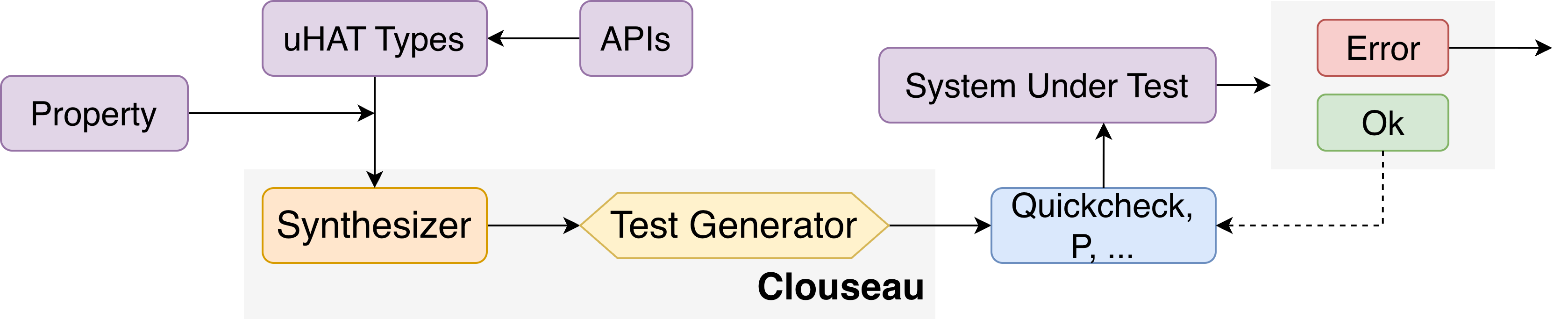}
  \vspace*{-.1in}
  \caption{\name{} workflow}
  \label{fig:workflow}
  \vspace*{-.2in}
\end{wrapfigure}

\autoref{fig:workflow} depicts the workflow of our system, named
\name{}.  The relevant APIs of the SUT are specified using \tyName{}s.
Along with the property of interest, these specifications drive the
synthesis of a test generator in \name{}'s DSL, which is then
transpiled into different testing frameworks; thus far, we have
implemented QCheck~\cite{QCheck} (OCaml) and P~\cite{DGJ+13}
targets.\footnote{For P, the generator replaces the default (random)
  scheduler.}  Synthesized test generator programs are guaranteed to
produce test sequences that respect the underapproximate
interpretation of the \tyName{} API specifications.

This paper makes the following contributions:

\begin{enumerate}

\item We present a new type abstraction, \tyName{}, that ascribes an
  underapproximate interpretation to an effect history trace used to
  justify the effects produced by an expression.

\item We develop a DSL for writing test generators typed using
  \tyName{}s.  The DSL, an extension of a typed $\lambda$-calculus
  with support for (asynchronous) effects, nondeterministic choice, and
  recursion, is sufficiently expressive to be applicable across a diverse
  range of applications,

\item We describe a trace-guided synthesis procedure that uses
  \tyName{} specifications associated with SUT-provided effectful
  operators, along with an SRE specification of the property to be
  tested, to generate well-typed test generators.  \cbnewadding{%
    We evaluate this procedure by comparing the generators synthesized
    by \name{} to state-of-the-art baselines on benchmarks whose bugs
    require deep (effectful) semantic and structural properties of the
    target SUT to surface.}

%% \item \cbremoved{We provide a detailed evaluation study, comparing the
%%   effectiveness of our synthesized generators against state-of-the-art
%%   baselines.  We focus on the synthesis of test generators that
%%   exercise deep (effectful) semantic and structural properties of the
%%   target SUT.}

\end{enumerate}

The rest of the paper is organized as follows.  \autoref{sec:overview}
provides additional motivation and examples.  ~\autoref{sec:formal}
presents the operational semantics and type system for the \name{} DSL.
~\autoref{sec:synthesis} describes the synthesis algorithm. A detailed
evaluation study over a broad range of diverse benchmarks drawn from
the domains given above is given in ~\autoref{sec:impl}; these include
generating tests for semantically complex data structure properties
such as well-typed simply-typed lambda calculus terms using de Bruijn
indices, as well as real-world distributed protocols deployed in
industry.  \autoref{sec:related} presents related work and conclusions.

% \BD{Do we want an explicit set of contributions?}

\section{Overview}\label{sec:overview}

We propose a trace-based type-guided synthesis framework for
generating effectful input test sequences in a PBT system interacting
with a SUT comprised of opaque, potentially concurrent and/or
distributed, components.  Before formalizing our underapproximate type
system and detailing our synthesis algorithm, we first present the key
features of our approach, followed by illustrative examples from
multiple domains.

\subsection{Trace-based Specifications}

We capture the inputs and observed outputs of an effectful operation
(e.g., $\eff{put}$ or $\eff{get}$ in our earlier example) as an
\emph{event} in a \emph{trace}, which is simply a sequence of events.
To reason compositionally about dependencies between the effects
performed by a generator program, we rely on local specifications of
operations in the form of \tyName{}s.  Our type system uses these
specifications to capture the constraints of individual operations in
a modular fashion, delegating the burden of managing the event and
data dependencies of effectful operations from top-level trace
specifications to the underlying type system.

Concretely, we model these local specifications as a collection of
\tyName{}-based signatures like the one for $\eff{get}$ from the
introduction: parameter types constrain the payloads of events, while
return types capture relationships between events. We employ symbolic
regular expressions (SREs) to qualify trace-based properties within a
type, following prior work on trace-based type
systems~\cite{KT+14,ZYDJ24,NUKT+18}. However, our formulation is
distinguished from these other efforts by how it generalizes beyond
history-sensitive specifications of effects or state, additionally
allowing types to express latent dependencies among arbitrary events.

%% The incorrectness using a triple $[P]\;e\;[Q]$ to represent an
%% underapproximation of program $e$, i.e., for all post states in $Q$,
%% there exists a pre state in $P$ such that the program $e$ can reach
%% the post state from the pre state. We adapt this idea into our
%% trace-based setting as follows:
%% \begin{align*}
%%   [\underbrace{\I{presumption}}_{\texttt{state before execution}}]\;\I{code}\;[\underbrace{\I{result}}_{\texttt{state after execution}}]  \;\; \effL\underbrace{\I{history}}_{\texttt{trace before execution}}\effR \;\I{code}\;\effL\underbrace{\I{future}}_{\texttt{new produced trace}} \effR
%% \end{align*}\noindent
%% The triple $\effLR{H}\;e\;\effLR{F}$ consists of a history regex $H$, a code $e$, and a future regex $F$. It means ``For all trace $\alpha$ that is matched by future regex $F$, there exists a history trace $\alpha_h$ matched by history regex $H$ such that the trace $\alpha$ can be produced by program $e$ under the history trace $\alpha_h$.''

\subsection{The \name{} DSL}

To support test generators across diverse application domains, we
introduce a domain-specific language (DSL) that extends a typed lambda
calculus core, with recursion, nondeterministic choice ($\oplus$), and
effectful operations ($\eff{op}$).  This language serves as the basis
for writing (and, in our case, synthesizing) effectful test
generators.
%% Beyond
%% its ability to express generators for effectful sequential libraries
%% (e.g., libraries that implement ADTs, transformers, serializers,
%% etc.), its support for asynchrony also enables it to generate events
%% that separate the initiation of an effect (modeled as a \emph{request}
%% event) from its completion (modeled as a \emph{response} event).  For
%% example, \reqOpNoArgs{read} asynchronously generates a read request to
%% the SUT, returning a tag $\iota$ that can be subsequenty used to
%% observed its result via an \respOp{read}{$\iota$} operation.  This
%% style of trace generation facilitates concise expression of
%% interactive test generators for inherently concurrent and distributed
%% applications (e.g., databases).
The DSL further allows (angelic) random value generation and property checking
via $\mykw{assume}$ and
$\mykw{assert}$ statements. Concretely, an $\mykw{assume}(\phi)$
expression constrains all the program variables at the point at which
it appears to satisfy the formula $\phi$, while $\mykw{assert}(\phi)$
halts program execution if the formula $\phi$ fails to hold.

\subsection{Examples}

\cbnewadding{%
  In order to bias test exploration toward desirable execution paths
  within a huge search space, test developers typically embed domain
  knowledge about the SUT into specialized test generators.  This
  knowledge includes not only \emph{basic correctness properties} SUT
  operations must respect, but also incorporates features of the SUT
  that can help a test generator focus on interesting configurations.
  While similar observations guide the design of prior PBT
  frameworks~\cite{pbt-ifc},\tyName{}s allow this knowledge to be
  stated formally and compositionally. These specifications inform our
  automated synthesis procedure and ensure conformance by the
  generators it synthesizes, without the need to \emph{manually craft
    and tune generation strategies} directly into test generator
  implementations.  The following three examples, drawn from different
  domains, highlight this point. We show both weak \tyName{}s that
  express only basic correctness properties as well as more precise
  specifications that are sufficient to guide the synthesizer to
  produce more effective generators. }

\paragraph{Read Atomicity}
Database transactions allow concurrent transactions to operate on
database state.  To mask latency, client-issued read operations are
typically implemented as asynchronous request and response event
pairs.  It is desirable, however, that read operations be logically
atomic, i.e., the response to a client-issued read request on a key
$k$ always returns the last committed value of $k$ \emph{at the time
  of the request}, regardless of writes that may have been committed
between the request and response.
The following trace captures an
execution that violates read atomicity\footnote{For simplicity, we
  assume the database manages a single key that is elided from event
  arguments, and that write operations are synchronous and commit
  their value to the database immediately.}: %
\vspace{-.4cm} \par\nobreak {\small %
    \begin{align*}
      & \text{\textcolor{orange}{Client 1:}}\ \msgA{write}{3};
      & \msgA{write}{4} & \\
      & \text{\textcolor{orange}{Client 2:}} \quad \quad \quad \quad \quad \quad \msgA{readReq}{\iota_1};\msgA{readRsp}{\iota_1\;3};\msgA{readReq}{\iota_2};& &\textcolor{red}{\msgA{readRsp}{\iota_2\;4}}
    \end{align*}}\noindent
  Here, \textcolor{orange}{Client 2} issues two consecutive asynchronous
  read operations.  A $\eff{readReq}$ event indicates that an asynchronous
  read request effect has been issued by the client on the single key managed
  by the database; the asynchronous response
  from the database is recorded by the $\eff{readRsp}$ event in the trace.  Requests
  are matched to responses using the tag $\iota$.  In this trace, however,
  the second read response witnesses the value of a write performed by another
  transaction that was issued after its matching request.   We are
  interested in surfacing executions like these from \tyName{}
  specifications.

  The following \tyName{} type on $\eff{readRsp}$
  operations admits executions that satisfy read atomicity:
  \vspace{-.4cm} \par\nobreak {\small %
  \begin{align*}
    &\eff{readRsp}:
      \effLR{\Code{LAST}(\msgA{write}{v})\;\seqA\;\msgA{readReq}{\iota}\;\seqA\;
      \allA}\urt{\Code{int}}{\vnu = v}\effLR{\msgA{readRsp}{\iota\;v}}
  \end{align*}}\vspace{-.5cm} \par\nobreak\noindent
This type specifies that the response to a previously issued
asynchronous $\eff{readReq}$ operation can return a value $v$, if the
last committed write prior to the request wrote the value $v$. (Here,
$\iota$ and $v$ are treated as implicitly quantified ghost variables;
we make this notion precise in~\autoref{sec:formal}.)
\cbnewadding{%
  Note that this specification also encodes a simple
  correctness property requiring read responses to correspond to
  a prior read request with the same tag $\iota$ and to reflect a write operation that
  occurred before the request.  A store that enforces read atomicity
  further constrains the write operations visible to a response.}
Because this specification is
interpreted as an underapproximation of the possible behaviors of the
SUT, it establishes a clear dependence between $\eff{write}$
and asynchronous $\eff{readReq}$/$\eff{readRsp}$ operations that should be respected by any
trace which contains these events.  This \tyName{} uses
\Code{LAST}($\msgA{op}{v}$) as an alias for the following regex:
\vspace{-.4cm} \par\nobreak {\small %
  \begin{align}
    \Code{LAST}(\msgA{op}{v}) \equiv \allA\;\seqA\; \msgA{op}{v}\;\seqA\; (\anyA \setminus \msgC{op})^*
  \end{align}
}\vspace{-.5cm} \par\nobreak\noindent %
An implementation of $\eff{readRsp}$ that does not satisfy this type
could yield a trace in which a response to a request yields a
different value than previously seen by the transaction, thus exposing
a read atomicity violation.  This safety property is captured by the
following SRE:\footnote{SREs are implicitly universally quantified
  over any free variables occurring in them.}%
\vspace{-.4cm} \par\nobreak {\small %
  \begin{align}
    &\Code{LAST}(\msgA{write}{v})\; \seqA\; \msgA{readReq}{\iota}\; \seqA\; \allA\seqA\msgA{readRsp}{\iota\;v}\seqA\allA
      \tag{$A_\text{atomic}$}\label{eqn:ReadAtomicity}
  \end{align}
}
\vspace{-.5cm} \par\nobreak\noindent
A violation of this property can be manifested by a test
generator that performs sequences of synchronous $\eff{write}$ and
asynchronous $\eff{read}$ effects in search of an offending trace
similar to the one given above.

Asynchronous effect pairs like $\eff{readReq}$/$\eff{readRsp}$ allow
our DSL to support a form of \emph{interactive} test generation,
instructing the SUT to perform effects via $\eff{opReq}$, and later
observing the results via their $\eff{opResp}$ counterparts.  For
instance, the test generator shown below is automatically
\begin{wrapfigure}{r}{.4\textwidth}
  \vspace{-10pt}
  \begin{minted}[fontsize = \footnotesize, escapeinside=??, xleftmargin=18pt, linenos]{ocaml}
let rec transaction_gen () =
  () ?$\oplus$? let x = int_gen () in
    let _ = ?$\eff{write}$? x in
    let i = ?$\eff{readReq}$? in
    transaction_gen ();
    let y = ?$\eff{readResp}$? i in
    ?$\mykw{assert}$? (x == y)
  \end{minted}
  \vspace{-10pt}
\end{wrapfigure}
synthesized from the above \tyName{} specification and
\ref{eqn:ReadAtomicity}.  The presence of the $\eff{readResp}~i$
operation at line 6 corresponds to the last effect present in
\ref{eqn:ReadAtomicity}.  The \tyName{} associated with this operation
admits arbitrary operations between it and the matching request,
captured by ``$\allA$'' in the specification, following the
$\msgA{readReq}{\iota}$ event in the history trace of $\eff{readRsp}$.
This behavior is subsumed by the recursive call to
\Code{transaction\_gen} at line 5.  The $\eff{readReq}$ effect at line
4 matches the $\msgA{readRsp}{\iota}$ event and the $\eff{write}$
operations on line 3 are synthesized from the $\msgA{write}{v}$ event
that precedes the $\EvtNoArgs{readReq}$ in the history trace; since $v$
is unconstrained in the specification, it is instantiated by a call
to the \Code{int\_gen} primitive generator at line 2.  The
choice operator at line 2 determines if more recursive calls are
required, i.e., if more $\eff{readReq}$ and $\eff{writeResp}$ events
are generated.  The assertion comparing \Code{y} with \Code{x}
determines if a violation (i.e., if the read result is different from
the first written value) occurs.

\cbnewadding{%
  Alternatively, we can provide a weaker specification that encodes
  the more basic correctness property that read responses can only
  return values written before its matching request, making no
  assumptions about \emph{which} write the response is paired with:}
\vspace{-.4cm} \par\nobreak \cbnewadding{{\small %
  \begin{align*}
    &\eff{readRsp}:
      \effLR{\allA\;\seqA\;\msgA{write}{v}\seqA\;\;\allA\;\seqA\;\msgA{readReq}{\iota}\;\seqA\;
      \allA}\urt{\Code{int}}{\vnu = v}\effLR{\msgA{readRsp}{\iota\;v}}
  \end{align*}}}
\vspace{-.5cm} \par\nobreak\noindent
\cbnewadding{This specification only requires that the value $v$ returned
  by the read response is \emph{some} value written before the
  read was requested: it need not be the \emph{last} value written before the corresponding request.
  The following generator is also type-safe with respect to this weaker specification:}
  \begin{minted}[fontsize = \footnotesize, escapeinside=??, xleftmargin=18pt, linenos]{ocaml}
  let x = int_gen () in let _ = ?$\eff{write}$? x in
  let y = int_gen () in let _ = ?$\eff{write}$? y in
  let i = ?$\eff{readReq}$? in let z = ?$\eff{readResp}$? i in
  ?$\mykw{assert}$? (z == x || z == y)
  \end{minted}
  \cbnewadding{This generator issues two writes followed by a read,
    and asserts that the read value must be one of the two previously
    written ones. Notably, although this program is type-safe
    according to the weaker uHAT, it is not allowed by the original
    specification, which requires the read value to be the last
    one. Because this weaker specification does not capture the core
    property of read atomicity, generators synthesized from this
    specification may produce benign test cases that would not surface
    read atomicity violations.  Nonetheless, the generator
    $\Code{transaction\_gen}$ can also be derived from this weaker
    specification.  Thus, an imprecise specification simply means that
    our synthesis algorithm has less guidance on the structure of the
    generators it produces. In the limit, a specification that places
    no constraints on the operations of the system under test results
    in synthesized generators that simply perform random testing. }
\paragraph{Information-Flow Control}
  \citet{pbt-ifc} applies the PBT methodology to a secure
  information-flow control (IFC) abstract machine meant to enforce
  end-to-end noninterference (EENI).  This property captures the
  notion that secret inputs should not influence public outputs: any
  two executions starting initial states that are indistinguishable
  (i.e., states in which all ``low'' security values are the same)
  should terminate in final states that are also indistinguishable.
  \citet{pbt-ifc} encode the IFC machine input as a sequence of stack
  machine commands.  For instance, the following serialized input
  corresponds to two IFC programs that both store value $1$ from stack
  into memory, but at different addresses; the only difference is the
  pointer address which is labeled \textsf{H} and thus not visible to
  an observer: \vspace{-.4cm} \par\nobreak {\small %
  \begin{flalign*}
   & \quad \quad \quad \quad  \zevent{push}{\Code{L}, \langle 1, 1\rangle};\zevent{push}{\Code{H}, \langle 0, 1\rangle};\eff{store} \equiv
  \\
  \text{\textcolor{gray}{Program 1:}} \quad &\zevent{push_1}{\Code{L}, 1};\zevent{push_1}{\Code{H}, 0};\eff{store} \quad \text{\textcolor{gray}{and}} \quad\text{\textcolor{gray}{Program 2:}} \quad \zevent{push_2}{\Code{L}, 1};\zevent{push_2}{\Code{H}, 1};\eff{store}
  \\\text{\textcolor{gray}{Memory 1:}} \quad & [(\Code{L}, 0); (\Code{L}, 0)] \Downarrow \textcolor{red}{[(\Code{L}, 1), (\Code{L}, 0)]} \quad\quad\quad\quad\text{\textcolor{gray}{Memory 2:}} \quad [(\Code{L}, 0); (\Code{L}, 0)] \Downarrow \textcolor{red}{[(\Code{L}, 0), (\Code{L}, 1)]}
\end{flalign*}}\vspace{-.5cm} \par\nobreak\noindent
Here, $\Code{L}$ and $\Code{H}$ are the low (public) and high (secret)
security levels, with $\Code{L} \sqsubseteq \Code{H}$.  To ensure
indistinguishability, public commands (level $\Code{L}$) must use
identical argument values in both programs, whereas secret commands
(level $\Code{H}$) may choose different address values ($0$ and $1$).
Since IFC is a property that holds over pairs of executions, the trace
specification $\zevent{push}{\Code{L}, \langle 1,1\rangle}$ identifies
two programs performing a $\eff{push}$ operation with the same
security label ($\Code{L}$) and value (1); in contrast,
$\zevent{push}{\Code{H}, \langle 0, 1\rangle}$ captures a $\eff{push}$
operation with security label $\Code{H}$ that pushes value 0 in one
execution and 1 in the other.  Although both executions start from the
same initial memory state ([{(\Code{L}, 0), (\Code{L}, 0)}]), the IFC
machine may yield different final states ([{(\Code{L}, 0), (\Code{L},
    1)}] and [{(\Code{L}, 1), (\Code{L}, 0)}]) after performing the
$\eff{store}$ which uses the $\Code{H}$ secret pointer (either 0 or 1)
to store the $\Code{L}$ value (1).  The goal of a PBT generator is to
generate input sequences like these to determine if an IFC machine
implementation will fault prior to the $\eff{store}$ being executed.

\cbnewadding{ To avoid synthesizing generators that produce
  uninteresting test programs, we would like the executions they
  induce to: (1) exercise a noninterference correctness property - two
  input programs must be indistinguishable at the low security level
  (e.g., $\textbf{push}(\texttt{L}, 1)$ and
  $\textbf{push}(\texttt{L}, 2)$ are distinguishable); (2) bias the
  search to avoid programs that expose uninteresting secrets, for
  example by storing a high security value at the same address in both
  executions (e.g., $\zevent{push}{\Code{H}, \langle 1, 1\rangle}$),
  or pushing values to invalid addresses (e.g.,
  $\zevent{push}{\Code{L},\langle -1, -1\rangle}$).  The following
  \tyName{} specification addresses both concerns:}
%% \cbremoved{To avoid generators that could produce uninteresting test programs
%% that cannot violate non-interference (e.g.,
%% $\zevent{push}{\Code{L}, \langle 0, 1 \rangle}$), operate over invalid
%% addresses (e.g., $\zevent{push}{\Code{L},\langle -1, -1\rangle}$), or
%% construct executions with uninteresting secrets (e.g.,
%% $\zevent{push}{\Code{H}, \langle 1, 1\rangle}$), we provide our
%% synthesizer with the following \tyName{} specifications:}%
\vspace{-.4cm} \par\nobreak {\small %
  \begin{align*}
    \eff{push} ~:~
    & \effLR{\allA} \;\Code{unit}\; \effLR{\msgB{push}{\I{lvl}\;x\;y}{\I{lvl} = \Code{L} {\iff} x = y}}
    \\
    \eff{store} ~:~
    & \effLR{\allA\; \seqA\; \msgA{push}{\I{lvl}\;x_1\;y_1\rangle} \seqA\;
      (\anyA \setminus \msgC{store})^* \seqA\; \msgB{push}{\I{lvl}\;x_2\;y_2}{\I{isAddr}(x_2) \land \I{isAddr}(y_2)}}\; \Code{unit} \; \effLR{\eff{store}}
  \end{align*}}\vspace{-.5cm} \par\nobreak\noindent
The post-condition SRE for $\eff{push}$ enforces that its arguments
should only be equal if they are intended to be public, while the
pre-condition SRE of a $\eff{store}$ event describes stacks
containing at least two elements (i.e., there must be at least two
$\eff{push}$ operations with no intervening $\eff{store}$), with the
top element of the stack referencing the memory location to be
updated.
The desired global property that a generator needs to test can be expressed as $\allA\;\seqA\;\eff{store}\seqA\;\allA$ that ensures $\eff{store}$ can be safely performed.
The following test generator generates effect sequences consistent with
this property and the \tyName{} specifications for $\eff{push}$ and $\eff{store}$ given above:
\begin{minted}[fontsize = \footnotesize, escapeinside=??, xleftmargin=18pt, linenos]{ocaml}
  let push_gen () =
    let (x: int) = ?$\mykw{assume}(\mbox{\textit{isAddr}}$? x) in let (y: int) = ?$\mykw{assume}(\mbox{\textit{isAddr}}$? y) in
    (?$\mykw{assume}$? (x = y); ?$\eff{push}$? L ?$\langle$?x, y?$\rangle$?) ?$\oplus$? (?$\mykw{assume}$?(x != y); ?$\eff{push}$? H ?$\langle$?x, y?$\rangle$?)
  in push_gen (); push_gen (); ?$\eff{store}$?;
\end{minted}

\cbnewadding{%
  On the other hand, the following specifications only guarantee that
  tests exercise noninterference, and do not impose any additional
  constraints on the diversity of secret inputs or address validity:%
  {\small %
  \begin{align*}
    \eff{push} ~:~
    & \effLR{\allA} \;\Code{unit}\; \effLR{\msgB{push}{\I{lvl}\;x\;y}{\I{lvl} = \Code{L} \impl x = y}}
    \\
    \eff{store} ~:~
    & \effLR{\allA\; \seqA\; \msgA{push}{\I{lvl}\;x_1\;y_1} \seqA\;
      (\anyA \setminus \msgC{store})^* \seqA\;
      \msgB{push}{\I{lvl}\;x_2\;y_2}{\true}}\; \Code{unit}\;
      \effLR{\eff{store}}
  \end{align*}}\noindent
  This weaker specification can synthesize generators that produce uninteresting secret values (\textbf{push}'s post-condition weakens
  the bi-implication found in the stronger specification given earlier) or invalid addresses (\textbf{store}s no longer constrain
  the arguments supplied to a previously executed \textbf{push} to be addresses); in either case, overall test efficiency is reduced.}

%\end{example}
%\begin{example}[De Bruijn Index]  \label{ex:de-bruijn-index}
\paragraph{STLC Terms}
Our final example applies our approach to the
well-studied~\cite{BeginnerLuck,CoverageType,FQL24} problem of using
PBT to test the correctness of an interpreter for simply-typed lambda
calculus (STLC) terms.  \cbnewadding{%
  The SUT in this example is an STLC interpreter, and the synthesized
  test generator produces a family of \emph{serialized} STLC terms,
  i.e., sequences of string tokens, using a single effectful
  operation, $\eff{token} : \Code{String} \to \Code{unit}$.}  Unlike
these earlier efforts, we consider terms in which variables are
represented by their de Bruijn index~\cite{DeBruijn}. Consider a SUT
that fails to correctly increment the index during substitution under
a lambda abstraction, leading to incorrect variable capture. %
\vspace{-.4cm} \par\nobreak {\small %
  \begin{align*}
    \text{\textsc{Reduction rule}:} \quad & (\lambda t. e)\; v \hookrightarrow e [0 \mapsto v]\\
    \text{\textsc{Incorrect substitution}:} \quad & (\lambda t. e)[n \mapsto v] = \lambda t. (e[\textcolor{red}{n} \mapsto v]) \quad \textcolor{red}{\xmark} \text{ \textcolor{gray}{should be $n + 1$}}
  \end{align*}}\vspace{-.5cm} \par\nobreak\noindent
The following term will trigger a bug in our SUT, as this well-typed
term will, in fact, get stuck: %:
\vspace{-.4cm} \par\nobreak {\small %
    \begin{flalign*}
      & (\lambda \Int. (\lambda (\Int\sarr\Int). [0]\;[1])\; (\lambda \Int. [0]))\; 3 && \tag{$e_\Code{STLC}$}\label{ex:e4} \\
      &\hookrightarrow (\lambda (\Int\sarr\Int). \textcolor{red}{3}\;[1])\; (\lambda \Int. \textcolor{red}{3})  \quad &&\text{\textcolor{red}{(wrong substitution ``$[0] \mapsto 3$'')}}
    \\&\hookrightarrow  3\;[1] \quad &&\text{\textcolor{red}{(stuck!)}}
  \end{flalign*}}\vspace{-.5cm} \par\nobreak\noindent %
Here, our buggy implementation mistakenly substitutes the bound
variable ``$[0]$'' with the argument value $3$ in the two inner
abstractions, instead of ``$[1]$''.

Guiding test generation for interpreters over \emph{serialized} STLC
terms is challenging, as the generated token sequences must contain
\emph{nested function abstractions} with appropriate de Bruijn
indices; otherwise, errors involving incorrect variable substitution
with non-zero bound variable indices cannot be detected. To enforce
the correct use of de Bruijn indices we need to express the relation
between indices and abstraction depth.  To do so, we introduce a
\emph{ghost} event $\effGray{depth}$ that tracks the current
abstraction depth;  as we describe
in the following sections, ghost events carry a payload that enables
establishing additional dependencies with other events beyond those involving
the effects parameters and result that can still be efficiently checked using
our type system.  Here, $\eff{token}$ is an effect that appends its
argument to a serialized string that is eventually passed to the
interpreter being tested.    A \tyName{} specification to track depth can
  now be given as follows:
  \vspace{-.4cm} \par\nobreak {\small %
    \begin{align*}
      \effLR{\Code{LAST}(\msgAGray{depth}{d}) \seqA\;
      \msgA{token}{\boxed{(\lambda t.}\;}\seqA\; \msgAGray{depth}{d +
      1}\seqA\; \allA }\; unit \;\effLR{\msgA{token}{\boxed{)}\;} \seqA\; \msgAGray{depth}{d}}
    \end{align*}}\vspace{-.5cm} \par\nobreak\noindent
  As shown above, upon entering a function abstraction, we retrieve
  the last tagged depth (assumed to be initialized to $0$), increment it at the
  start of the body, and restore it upon exiting. Leveraging
  $\effGray{depth}$, we can specify a correct de Bruijn index
  safety property using the following SRE:
  \vspace{-.4cm} \par\nobreak {\small %
    \begin{align*}
      \Code{LAST}(\msgBGray{depth}{d}{n < d}) \seqA\; \eff{token}\;\boxed{[\widehat{n}]} \seqA \allA
      \tag{$A_\lambda$}\label{eqn:deBruijn+prop}
    \end{align*}}\vspace{-.5cm} \par\nobreak\noindent
  The bound variable index $n$ is required to be less than the
  current depth $d$; the notation $\widehat{n}$ signifies that $n$'s value must
  be substituted in the string output produced by $\eff{token}$.  The following code snippet shows a test generator
  for nested abstractions synthesized by \name{} based on the above specification
  and safety property:
  \begin{minted}[fontsize = \footnotesize, escapeinside=??, xleftmargin=18pt, linenos]{ocaml}
    let rec abs_gen (d: depth) =
      (let (i: int) = ?$\mykw{assume} $? (0 <= i && i < d) in ?$\eff{token}\; \boxed{[\widehat{i}]}$?) ?$\oplus$?
      (let t = random_stlc_ty () in ?$\eff{token}\;\boxed{(\lambda t.};$? abs_gen (d + 1); ?$\eff{token}\;\boxed{)}$?
  \end{minted}
  \noindent Observe that the ghost event used in
  \ref{eqn:deBruijn+prop} and the specification for
  $\eff{token}\;\boxed{)}$ is not present in the code; instead it acts
  as a hint to the synthesis procedure to construct a recursive
  function $\Code{abs\_gen}$ parameterized by the current depth $d$,
  ensuring correct tracking of abstraction nesting.

  \cbnewadding{In order to facilitate test exploration, token
    sequences should correspond to STLC terms that are both: (1)
    \emph{well-formed}: all opened lambda abstractions need to
    eventually be closed and all variables in a term should be bound,
    so that terms are not immediately rejected by the parser; and (2)
    \emph{well-typed}: to avoid terms that are discarded by the type
    checker.  The specification given above encodes only the first
    property, and thus permit a synthesized test generator to produce
    well-formed but ill-typed STLC terms. A more precise specification
    that encodes both conditions is provided in the supplementary
    material. Although this weaker specification does not prohibit the
    generation of ill-typed terms, it nonetheless correctly encodes
    the structure of Bruijn indices, using the ghost variable $d$ to
    track abstraction depth.}

%SJ: removing as it seems redundant with what we've already said.
%% \cbnewadding{\paragraph{Summary} As shown in the examples above, \tyName{}s often combine
%% (1)~\emph{basic correctness}, e.g., IFC example requires the two input programs to be indistinguishable at the low security level; otherwise, it violates the precondition of EENI and test exploration stops even before the IFC machine is executed.
%% (2)~\emph{search biases}, e.g., biasing integers
%% toward valid addresses to avoid memory faults unrelated to the target
%% property. Without the basic correctness constraints, test generators that always raise exceptions (e.g., performing \texttt{store} as the first operation) can even be allowed.
%% The latter constraints are relatively ``soft'' because not every value pushed
%% onto the IFC stack is later used as a memory address.
%% The same predicate may play either role depending on the testing setup.
%% In the STLC example, if the interpreter type-checks first and
%% rejects ill-typed fragments immediately, well-typedness behaves as basic correctness; otherwise, it is treated as search bias since well-typed programs are more likely to trigger potential substitution errors.
%% }

%\end{example}

  In the following, we present a procedure to synthesize generators that only
  produce traces consistent with \tyName{}-based specifications and
  SRE-given safety properties.  We emphasize that because these
  specifications are used primarily to constrain the search space of
  terms used to synthesize a test generator, they give developers
  significant leeway on the level of precision captured: less precise
  underapproximations simply lead to (potentially) less effective
  generators.

\section{Language}
\label{sec:formal}

\begin{figure}[th!]
  \vspace{-.3cm}
{\footnotesize
  \begin{alignat*}{2}
    \text{\textbf{Variables }}& \quad &\quad& x, y, z, \vnu, ...
    \\[-0.2em]\text{\textbf{Base Types}}& \quad & b  ::= \quad &  \Code{unit} ~|~ \Code{bool} ~|~ \Code{nat} ~|~ \Code{int} ~|~ \Code{id} ...
    \\[-0.2em]\text{\textbf{Basic Types}}& \quad & s  ::= \quad &  b ~|~ s\sarr s
    \\\text{\textbf{Pure Operations}}& \quad & \primop ::= \quad & {+} ~|~ {-} ~|~ {==} ~|~ {<} ~|~ {\leq} ~|~ \Code{rand\_int}~...
    \\[-0.2em]\text{\textbf{Constants }}& \quad &c ::= \quad & () ~|~ \mathbb{B} ~|~ \mathbb{Z} ~|~ \mathbb{I}...
    \\[-0.2em]\text{\textbf{Qualifiers}}& \quad & \phi ::= \quad & c ~|~ x ~|~ \primop\;\overline{v} ~|~ \bot ~|~ \top  ~|~ \neg \phi ~|~ \phi \land \phi ~|~ \phi \lor \phi ~|~ \phi \impl \phi ~|~ \forall x{:}b.\, \phi
    \\[-0.2em]\text{\textbf{Effectful Operations}}& \quad & \eff{op} ::= \quad & \eff{put} ~|~ \eff{get} ~|~ \eff{readReq} ~|~ \eff{readRsp} ~|~ \eff{write} ~|~ \eff{push} ~|~ \eff{pop} ~|~ ...
    % \\\text{\textbf{Event Kinds}}& \quad & k ::= \quad & \eff{gen}
    % ~|~ \eff{obs}
      % ~|~ \reqOp{op}{\ensuremath{\overline{v}}} ~|~ \respOp{op}{\ensuremath{\iota}}
    \\[-0.2em]\text{\textbf{Values }}& \quad & v ::= \quad & c ~|~ x
    ~|~ \zlam{x}{s}{e} ~|~ \zfix{f}{s}{x}{s}{e}
    \\[-0.2em]\text{\textbf{Expressions}}& \quad & e ::=\quad & v ~|~
    e \oplus e ~|~ v\; v ~|~ \primop\; \overline{v} ~|~ \eff{op}\;
    \overline{v} ~|~\zlet{x{:}b}{e}{e} ~|~ \sassume{\phi} ~|~
    \sassert{\phi}
    \\[-0.2em]\text{\textbf{Events}} & \quad & m ::= \quad &
    \langle\eff{op}{\; \overline{v}}\rangle \\
    \text{\textbf{Traces}} & \quad & \alpha ::= \quad & \emptr ~|~ m
    \;{::}\; \alpha ~|~ \alpha \listconcat \alpha
    \\[5pt]
    \text{\textbf{Notations}} & \quad & e_1;e_2 \doteq \quad &
    \zlet{\_}{e_1}{e_2}  \\
    & & \hspace{-2.5cm}\zlet{x{:}t}{\sassume{\phi(x)}}{e} \doteq \quad &
    \zlet{x{:}t}{\Code{rand\_int ()}}{\sassume{\phi(x)}; e}
    %  \mykw{let}\;\eff{op}\;\overline{v}\;\mykw{in}\;e \doteq \zgen{op}{\overline{v}}{e} \qquad\qquad
  \end{alignat*}
}
\vspace*{-.3in}
\caption{Syntax of \DSL{} expressions}
\label{fig:term-syntax}
\vspace*{-.2in}
\end{figure}

We formalize our approach using a core language, \DSL{}, for
expressing \textit{trace generators}-- nondeterministic programs that
probe the behaviors of a black-box system under test via a set of
(effectful) operators that the system provides. This calculus is
equipped with a type system that characterizes the set of traces that
a program \emph{must} be able to produce, \emph{if} the system under
test produces appropriate responses. The syntax of \DSL{} is shown in
\autoref{fig:term-syntax}; the calculus includes both pure and
effectful operations ($\primop$ and $\eff{op}$), nondeterministic
choices ($\oplus$), recursion ($\mykw{fix}$), and $\mykw{assume}$ and
$\mykw{assert}$ statements in the style of solver-aided
languages~\cite{BC+10}. As we saw in \autoref{sec:overview},
asynchronous effects can be modeled as a pair of $\eff{opReq}$ /
$\eff{opResp}$ operations, where the former generates an id (given as
$\iota$ in the syntax) that can then be subsequently referenced when
performing a corresponding response operation.
%\autoref{fig:term-syntax} defines some shorthand notations used by our
%examples, e.g., we use $e_1;e_2$ to simplify $\mykw{let}$-expressions
%when the remaining program does not use the bound variable, i.e.,
%$\zlet{\_}{e_1}{e_2}$.

\begin{wrapfigure}{r}{.5\textwidth}
  \footnotesize
  \begin{minipage}{.45\textwidth}
    \begin{flalign*}
    &\text{\textbf{Operator Semantics }}\
    \fbox{$\vDash \primop(\overline{c}) \Downarrow c \quad \alpha \vDash \eff{op}(\overline{c}) \Downarrow c$} \\
    & \text{\textbf{Operational Semantics }}\
    \fbox{$
    % \phi \Downarrow c \quad
    \steptr{\alpha}{e}{\alpha}{e}$}
\end{flalign*}
\end{minipage}
\\[5pt]
\begin{center}
  \begin{prooftree}
    \hypo{\vDash \primop(\overline{c}) \Downarrow c'}
    \infer1[\textsc{\small StPOp}]{
      \steptr{\alpha}{\primop \; \overline{c}\;}{[\;]}{c'}
    }
  \end{prooftree}
  \begin{prooftree}
    \hypo{\alpha \vDash \eff{op}\overline{c} \Downarrow c'}
    \infer1[\textsc{\small StEfOp}]{
      \steptr{\alpha}{\eff{op}\; \overline{c}\;}{[\eff{op}\;\overline{c}\;c']}{c'}
    }
  \end{prooftree}
  \end{center}
  \vspace*{-.05in}
    \caption{Selected reduction rules}
    \label{fig:selected-semantics}
  \vspace*{-.2in}
\end{wrapfigure}
\DSL{} is equipped with a small-step operational semantics similar to
other calculi for trace-based type systems~\cite{ZYDJ24}; selected
reduction rules are shown in
\autoref{fig:selected-semantics}.\footnote{\label{fn:sm-formal-details}%
  The full reduction rules; complete definitions of the auxiliary typing
  relations (including well-formedness); the full typing rules beyond
  \autoref{fig:selected-typing-rules}; further denotational details; and
  proofs of the theorems in \autoref{sec:metatheory} are all given in the
  \techreport{}.}
The reduction relation
$\steptr{\alpha}{e}{~\alpha'}{e'}$ constrains the output trace
$\alpha'$ produced by executing $e$ based on a \textit{history trace}
$\alpha$.  The evaluation of pure operations (rule \textsc{\small
  StPOp}) makes no use of trace, either history or output. The
result of an effectful operation, in contrast, can depend on the
events that preceded it, and it produces an output trace containing an
event whose payload contains both the arguments and return value of
the operation.

\subsection{Type System}

The syntax of types in \DSL{} is shown in \autoref{fig:type-syntax}.
Types include \emph{pure} refinement types, which describe pure
computations, and underapproximate Hoare Automata Types (\tyName{}s),
which describe effectful computations.
% which describe incorrectness triple $\effLR{H}\;e\;\effLR{F}$ as a
% type $\uhat{H}{\Unit}{F}$ of term $e$.
Pure refinement types are similar to those found in prior
underapproximate refinement type systems~\cite{CoverageType}, and
allow base types (e.g., {\Code{int}}) to be further constrained by a
logical formula or qualifier. In contrast to traditional type systems
where a pure type qualifier $\ort{b}{\phi}$ constrains every value a
pure expression \emph{may} evaluate to; the type qualifier
$\urt{b}{\phi}$ constrains a subset of the values it \emph{must}
evaluate to.  For example, the constant value $1$ can be assigned the
types $\ort{\Int}{\vnu = 1}$ and $\ort{\Int}{0 < \vnu}$ -- the value 1
satisfies the constraint in both \emph{overapproximate} qualifiers --
but, 1 cannot be assigned the type,
$\urt{\Int}{\vnu = 1 \lor \vnu = 2}$, since it cannot evaluate to both
1 and 2 as required by this \emph{underapproximate} qualifier. The
term $1 \oplus 2 \oplus 3$, in contrast, does have this
underapproximate type.

\paragraph{Underapproximate Hoare Automata Types}

Similar to other recent trace-based type systems~\cite{ZYDJ24}, \DSL{}
uses \emph{Symbolic Finite Automata} (SFAs)~\cite{Vea13,
  SFA-minimization, SFA-Transducers} to qualify the traces generated
by effectful computations; in contrast to those prior works, the
\tyName{}s in $\DSL{}$ use SFAs to \emph{underapproximate} the set of effects
that a program must produce. A \tyName{} $\uhat{H}{x{:}t}{F}$ consists
of a pure refinement type $t$ and two SFAs: a \emph{history} SFA $H$
that constrains the history trace under which a term is executed, and a
\emph{future} SFA $F$ that describes a subset of the events that it
must generate as a result. \tyName{}s represent SFAs using symbolic
regular expressions (SREs), but in principle could support any
encoding of SFAs, e.g., Symbolic LTL$_f$~\cite{LTLf}. SREs support
\emph{symbolic events}, $\msgB{op}{\overline{x}}{\phi}$, atomic
predicates that describe an effectful operation {$\eff{op}$} whose
arguments $\overline{x}$ satisfy the qualifier $\phi$. SREs provide a
convenient language for writing SFAs, supporting versions of the
standard regex operators shown in \autoref{fig:type-syntax}, including
union $A \lorA A$, sequence $A \seqA A$, and Kleene star $A^*$
operators.

\begin{figure}[tb!]
{\small
\begin{alignat*}{2}
    % \\\text{\textbf{Symbolic Event}}& \quad &se  ::= \quad & \msgB{op}{\overline{x}}{\phi}
    \text{\textbf{Symbolic Regex}}& \quad &H,F,A  ::= \quad & \emptyset ~|~ \epsilon ~|~ \msgB{op}{\overline{x}}{\phi} ~|~ A \lorA A ~|~ A \seqA A ~|~ A^* ~|~ A \setminus A ~|~ A \landA A ~|~ \anyA
    % \\& \quad &  \quad & \text{\textcolor{gray}{in Symbolic LTL$_f$ }}  se ~|~ A \lorA A ~|~ \nextA A ~|~ A \untilA A ~|~ \negA A ~|~ A \landA A
    \\\text{\textbf{Pure Refinement Types}}& \quad & t  ::= \quad & \urt{b}{\phi} ~|~  x{:}t\sarr t ~|~ x{:}t\sarr \tau ~|~ x{:}b \garr t ~|~ \effGray{op}{:}s \sarr t
    \\ \text{\textbf{Underapproximate HATs}}& \quad & \tau  ::= \quad & \uhat{H}{x{:}t}{F} ~|~ \tau \interty \tau
    \\\text{\textbf{Type Contexts}}& \quad &\Gamma ::= \quad  &\emptyset ~|~ x{:}t, \Gamma
  \end{alignat*}
  \vspace*{-.15in}
  {\footnotesize\begin{flalign*}
    {\small\text{\textbf{Type Aliases}}} & \;\; \msg{op}{\phi} \doteq \msgB{op}{\overline{x}}{\phi} \;\; \msgA{op}{\overline{v}} \doteq \msgB{op}{\overline{x}}{\overline{x = v}} \;\; \msgC{op} \doteq \msgB{op}{\overline{x}}{\top} \;\; A_1 A_2 \doteq A_1 \seqA A_2 \;\; b \doteq \urt{b}{\top}
    % \;\; \tau \doteq x{:}b\garr\tau
    % \\
    % & \widehat{\ort{b}{\phi}} \doteq \urt{b}{\phi} \;\; \widehat{\urt{b}{\phi}} \doteq \ort{b}{\phi} \;\; \widehat{x{:}t\sarr t'} \doteq x{:}t\sarr t' \;\; \widehat{x{:}t\sarr \tau} \doteq x{:}t\sarr \tau
  \end{flalign*}}
}
\vspace*{-.24in}
\caption{\DSL{} types}
\label{fig:type-syntax}
\vspace*{-.2in}
\end{figure}

The figure also gives the definition of the type aliases that
\autoref{sec:overview} used to limit user annotation burden.  When the
fields of a symbolic event are clear from context, we omit its
arguments $\overline{x}$, e.g., $\langle\eff{push}\;|\;v > 0\rangle$
means $\langle\eff{push}\;v\;|\;v > 0\rangle$. We omit quantifiers
with equality constraints, e.g., writing
$\langle\eff{push}\;v\;|\;v = 3\rangle$, as $\msgA{push}{3}$, and the
top ($\top$) qualifier in symbolic events and types.

%% \BD{We need to describe ghost variables at some point. If we are not
%%   going to use them in the intro+overview, we should explain that we
%%   left them are implicit there.}

\begin{example}[Ghost Events]
  \label{ex:stack-example}
  Consider an effectful stack library implemented as a fixed-size
  array whose elements grow monotonically.  An incorrect implementation
  that fails to grow the array when the stack is full may cause
  $\eff{push}$ operations to be dropped.  To detect this violation, we
  need to explore traces of the following form, where the last pop
  operation raises an ``empty stack'' exception:
  \vspace{-.6cm} \par\nobreak {\small%
    \begin{align*}
      \zevent{push}{1};\zevent{push}{2}; \zevent{push}{3}; \zevent{pop}{2}; \zevent{pop}{1};\textcolor{red}{\zevent{pop}{?}}
    \end{align*}}\noindent
  Because we would like the \tyName{} specifications for $\eff{push}$
  and $\eff{pop}$ to guide the synthesis of a test generator, they
  should be interpreted as underapproximations, consisting of nested
  pairs of $\eff{push}$ and $\eff{pop}$ events. This can be expressed
  by the following \tyName{} specifications, using ghost events
  $\effGray{pushI}$ and $\effGray{popI}$ as shown below:
  \vspace{-.4cm} \par\nobreak {\small %
    \begin{align*} %
      &\eff{push} :~ n{:}\Int \garr y{:}\Int \garr
        x{:}\urt{\Int}{y < \vnu} \sarr
        \uhat{\Code{Last}(\msgAGray{pushI}{n\;y})}{\Unit}{\msgA{push}{x}\seqA\msgAGray{pushI}{n{+}1\;x}}
      \\
      &\eff{pop} :~ n{:}\Int \garr m{:}\Int \garr \\[-0.4em]
      & \quad
        \uhat{\Code{Last}(\msgAGray{popI}{m\;\_}) \land
        \Code{Last}(\msgAGray{pushI}{n\;\_}) \land
        (\allA\seqA\msgAGray{pushI}{(n{-}m)\;x}\seqA\allA)}{x{:}\Int}{\msgA{pop}{x}\seqA\msgAGray{popI}{m{+}1\;
        x}}
    \end{align*}}\vspace{-.5cm} \par\nobreak\noindent
  Intuitively, the $\effGray{pushI}$ and $\effGray{popI}$ ghost events are
  simply $\eff{push}$ and $\eff{pop}$ events equipped with indices that
  record the number of push and pop operations.  As seen here, ghost events allow \tyName{}
    specifications to capture latent data dependencies between effectful
    operations.   The signature for $\eff{push}$ requires
  that the new pushed value ($x$) is greater than the last pushed value
  ($y$), and adds a new ghost event $\effGray{pushI}$ with an
  incremented index (i.e., $\msgAGray{pushI}{n{+}1\;x}$).
  The
  signature for $\eff{pop}$ also records the number of pop operations
  in the same way (i.e., $\msgAGray{popI}{m{+}1\;x}$); additionally, it
  requires the last popped value ($x$) to be equal to the
  $(n{-}m)^{th}$ pushed value, reflecting the ``last-in-first-out’'
  property of the stack.
\end{example}

\subsection{Typing Rules}

$\DSL{}$ is equipped with a pair of typing judgments on pure and
effectful expressions, $\Gamma \vdash v : t$ and $\Gamma\vdash e:
\uhat{H}{x{:}t}{F}$, respectively. Both judgments rely on a type
context, $\Gamma$, that maps from variables to pure refinement types
(i.e., $t$). As in prior work~\cite{CoverageType}, typing contexts are not
allowed to contain \tyName{}s--- doing so breaks several structural
properties (e.g., weakening) that are used to prove type safety. Both
rules are parameterized over an additional context $\Delta$, which
provides types for pure and effectful operators.

\paragraph{Auxiliary typing relations}
Our type system also relies on three additional auxiliary
relations. The first is a well-formedness relation $\Gamma
\wellfoundedvdash \tau$ that ensures, among other things, that all
qualifiers appearing in a type $\tau$ are closed under the current
typing context $\Gamma$ and that the verification conditions generated by
the synthesis algorithm presented in the next section can be encoded
as effectively propositional (EPR) sentences~\cite{Ramsey1987}, which
can be efficiently handled by an off-the-shelf theorem prover such as
Z3~\cite{de2008z3}.  We also require the
history regex $H$ to not be empty, which guarantees that it represents
a meaningful underapproximate precondition.

\begin{figure}[t!]
  {\footnotesize
  {\footnotesize
  \begin{flalign*}
   &\text{\textbf{Auxiliary Typing}} & & \fbox{$\Gamma \wfvdash \tau \quad \Gamma \vdash A \subseteq A \quad \Gamma \vdash \tau <: \tau$} &\text{\textbf{Typing}} &  & \fbox{$\Gamma\vdash \eff{op}: t \quad \Gamma\vdash v : t \quad \Gamma \vdash e: \tau$}
  \end{flalign*}
  }
  \\ \
  \mprooftr{30mm}
  {$\Gamma\wfvdash \tau[x\mapsto v]$}
  {SubG}
  {$\Gamma \vdash x{:}b\garr \tau <: \tau[x\mapsto v]$}
  \;
  \mprooftr{41mm}
  {$\Gamma \vdash t_1 <: t_2$ \quad
  $\Gamma,x{:}t_1 \vdash F_2 \subseteq F_1$ \quad
  $\Gamma,x{:}t_1 \vdash H_1 \subseteq H_2$
   }
  {SubHF}
  {$\Gamma \vdash \uhat{H_1}{x{:}t_1}{F_1} <: \uhat{H_2}{x{:}t_2}{F_2}$}
  \;
  \mprooftr{30mm}
  {$\Gamma \vdash \eff{op} : \overline{x{:}t_x}\sarr\uhat{H}{t}{F}$ \quad
  $\Gamma,\overline{x{:}t_x} \vdash H' \subseteq H$ \quad
  $\Gamma,\overline{x{:}t_x} \not\vdash H' \subseteq \emptyset$}
  {TOpHis}
  {$\Gamma\vdash \eff{op} : \overline{x{:}t_x}\sarr\uhat{H'}{t}{F}$}
  % \quad
  % \mprooftr{30mm}
  % {$\Gamma\vdash e: \tau$ \quad
  % $\Gamma \vdash \tau <: \tau'$\quad $\Gamma \vdash \tau' <: \tau$}
  % {TEq}
  % {$\Gamma\vdash e: \tau'$}
  \\[.8em]
  \mprooftr{20mm}
  {$\Gamma\vdash v : t$}
  {TRet}
  {$\Gamma\vdash v : \uhat{H}{t}{\epsilon}$}
  \quad
  \mprooftr{15mm}
  {$\Delta(\eff{op}) = t$}
  {TOpCtx}
  {$\Gamma\vdash \eff{op} : t$}
  \quad
  \mprooftr{58mm}
  {
  $\Gamma \vdash \eff{op} : \overline{x_i{:}t_i}\sarr \uhat{H}{t}{\msgA{op}{\overline{x_i}} \seqA F}$ \quad
  $\forall i. \;\Gamma \vdash v_i : t_i$}
  {TEffOp}
  {$\Gamma\vdash \eff{op}{\;\overline{v_i}} : \uhat{H}{t}{\msgA{op}{\overline{v_i}} \seqA F}$}
  \\[.8em]
  %%\mprooftr{56mm}
  %%{
  %%$\Gamma \vdash \eff{op} : \overline{x_i{:}t_i}\sarr \uhat{H}{t}{\msgA{op}{\overline{x_i}} \seqA F}$ \\
  %%  $\forall i. \;\Gamma \vdash v_i : t_i$}
  %% {TReqOp}
  %% {$\Gamma\vdash \reqOp{op}{\;\ensuremath{\iota\; \overline{v_i}}} : \uhat{H}{\Code{id}}{\reqEvt{op}{\ensuremath{\iota\; \overline{v_i}}} \seqA F}$}
  %% % \mprooftr{62mm}
  %% % {
  %% % $\Gamma\vdash \eff{op} : \overline{x_i{:}t_i}\sarr \uhat{H}{\Unit}{\msgA{op}{\overline{x_i}} \seqA F}$ \quad
  %% % $\forall i. \;\Gamma \vdash v_i : \widehat{t_i}$ \quad
  %% % $\Gamma\vdash e : \uhat{H \seqA (\msgA{op}{\overline{v_i}} \land \msg{op}{\phi}) \seqA F}{t}{F'}$}
  %% % {TGen}
  %% % {$\Gamma\vdash \zgen{op}{\overline{v_i}}{e} : \uhat{H}{t}{\msg{op}{\phi} \seqA F \seqA F'}$}
  %% \;
  %% \mprooftr{51mm}
  %% {
  %%   $\Gamma \vdash \eff{op} : \overline{x_i{:}t_i}\sarr \uhat{H}{t}{\msgA{op}{\overline{x_i}} \seqA F}$\\
  %%   $\forall i. \;\Gamma \vdash v_i : t_i$
  %% }
  %% {TRspOp}
  %% {$\Gamma\vdash \respOp{op}{\ensuremath{\iota}} : \uhat{H}{t}{\respEvt{op}{\ensuremath{\iota\; \overline{v_i}}} \seqA F}$}
  %% \\[.8em]
  \mprooftr{35mm}
  {
  $\Gamma\vdash e_1 : \tau_1$ \quad $\Gamma\vdash e_2 : \tau_2$}
  {TChoice}
  {$\Gamma\vdash e_1 \oplus e_2 : \tau_1 \interty \tau_2$}
  \quad
  \mprooftr{66mm}{
    $\Gamma\vdash e_1 : \uhat{H}{x{:}t_x}{F_1}$ \quad
    $\Gamma, x{:}t_x\vdash e_2 : \uhat{H \seqA F_1}{t}{F_2}$}
  {TLet}
  {
    $\Gamma\vdash \zlet{x}{e_1}{e_2} : \uhat{H}{t}{F_1 \seqA F_2}$
  }
  \\[.8em]
  \mprooftr{22mm}{
   $\Gamma, x{:}t_x\vdash e : \tau$
  }{TFun}{
   $\Gamma\vdash \zzlam{x}{e} : x{:}t_x\sarr \tau$}
  \
  \mprooftr{47mm}{
  }{TFixBase}{
   $\Gamma\vdash \zzfix{f}{x}{e} : \overline{x{:}s}\sarr\uhat{H}{\urt{b}{\bot}}{\emptyset}$}
  \
  \mprooftr{27mm}{
   $\Gamma\vdash \zzfix{f}{x}{e} : t$ \quad
   $\Gamma, f{:}t\vdash \zzlam{x}{e} : t'$
  }{TFixInd}{
   $\Gamma\vdash \zzfix{f}{x}{e} : t'$}
  % \quad
  % \mprooftr{60mm}{
  %  $\Gamma\vdash v_1 : y{:}t_y\sarr\uhat{H}{A}$ \quad
  %  $\Gamma\vdash v_2 : t_y$ \quad
  %  $\Gamma; \Delta' \vdash e' : \uhat{H \seqA A}{A'}$
  % }{TApp}{
  %  $\Gamma\vdash \zlet{x}{v_1\;v_2}{e} : \uhat{H}{A \seqA A'}$}
  }
  \vspace*{-.1in}
  \caption{Selected typing rules.}
  \label{fig:selected-typing-rules}
  \vspace*{-.15in}
  \end{figure}

  The type system also uses a semantic subtyping relation tailored for
  underapproximation; \autoref{fig:selected-typing-rules} presents the
  key subtyping rule for \tyName{}s.  The rule \textsc{SubG} allows
  the instantiation of a ghost variable in a \tyName{} into
  well-formed types. The rule \textsc{SubHF} uses an auxiliary
  inclusion relation on SREs $\Gamma \vdash A_1 \subseteq A_2$ to relate
  the history and future regexes of two \tyName{}s under the current
  type context $\Gamma$, as well as the return type using the standard
  subtyping relation on pure refinement types. Notably, because our
  typing judgements are underapproximate, this subtyping relation is
  covariant in the history regex and contravariant in the future
  regex, yielding behavior similar to the reverse rule of consequence
  found in Incorrectness Logic (IL)~\cite{OHP19}.  Like that rule,
  \textsc{SubHF} allows us to strengthen a postcondition (i.e., shrink
  the future trace), and weaken the precondition (i.e., enlarge the
  history trace).  Since the future automata in a \tyName{} can refer
  to the pure value it returns, the inclusion check between the future
  automata extends the typing context with a binding for the return
  value.

A subset of our typing rules is shown in
\autoref{fig:selected-typing-rules}. All our
rules assume any types they use are well-formed, so we elide the
corresponding well-formedness judgments from their premises.  The
\textsc{TRet} rule requires the future regex of a pure term
($\epsilon$) to only accept the empty trace.  The operator type is
retrieved from the global operator context $\Delta$ using the
\textsc{TOpCtx} rule.  For a given event, there may be multiple
reachable history traces (underapproximate preconditions) that can be
allowed by an operator that is determined by the context in which the
operator executes, and by the operator type alone.  We, therefore,
expect that the operator type found in the operator context is the
most precise one, i.e., it doesn't allow any unreachable history
traces, and admits underapproximation specialization.  The rule
\textsc{TOpHis} can then freely choose a \emph{non-empty} subset of the reachable
history traces to align with the actual calling context.  The
application rule \textsc{TEffOp} requires the arguments of an
effectful operator to be well-typed against its parameter types; it
records the effect as an event in the future trace along with any
additional ghost events ($F$).  The nondeterministic choice operator
is typed using the \textsc{TChoice} rule which merges the \tyName{}s
of the two branches.  The \textsc{TLet} rule requires the binding
expression $e_1$ to generate the first part of its future trace $F_1$,
with the rest of trace produced by the body expression $e_2$ under a
new history that includes $F_1$.  The typing rule for functions
(\textsc{TFun}) is standard, while the rule for recursive functions is
modeled after the corresponding IL rule, unrolling the recursion as
needed.  The \textsc{FixBase} rule types any recursive function with a
\tyName{} lacking reachability constraints (i.e.,
$\uhat{H}{\urt{b}{\bot}}{\emptyset}$). In contrast, the
\textsc{FixInd} rule permits typing a recursive function with type
$t'$ by requiring it to have another type $t$, and by checking that
its body can be assigned type $t'$ under the assumption that the
function itself has type $t$.

%% The \textsc{TReqOp} and the
%% \textsc{TRspOp} rule handle asynchronous effects, using
%% well-formedness conditions to ensure that the values carried by a
%% response are the same as those of a request in the history trace with
%% the same id.

\begin{example}[Operator Application Typing]
  \label{ex:operator-application-typing}
  We present a typing derivation for a simple generator that generates
  $\eff{push}$ and $\eff{pop}$ operations using the signatures from
  Example~\ref{ex:stack-example}. Consider how we might type check the
  following expression: %
  \vspace{-.4cm} \par\nobreak {\small %
    \begin{align*}
      &\emptyset \vdash \zlet{x{:}\Code{int}}{\sassume{0 < x}}{\eff{push}\;x; \eff{pop}} :
      \\&\qquad \uhat{\msgAGray{pushI}{0\;0}\msgAGray{popI}{0\;0}}{x{:}\urt{\Int}{0 < \vnu}}{\msgA{push}{x}\msgAGray{pushI}{1\;x}\msgA{pop}{x}\msgAGray{popI}{1\;x}}
    \end{align*}}%
  \noindent
  The signature of the generator is a \tyName{} that assumes a history
  trace in which the stack is empty (i.e., the ghost events
  $\effGray{pushI}$ and $\effGray{popI}$ both hold a $0$ index and a
  dummy value).  The body of the generator generates a positive number
  $x$ using $\mykw{assume}$; the underapproximate refinement type of
  $x$ signifies that it must evaluate to every positive integer. The
  generator then is expected to pop the same value.  We can thus type
  check its two effectful operation applications using the following
  judgement: %
  \vspace{-.4cm} \par\nobreak {\small %
    \begin{flalign*}
      &x{:}\urt{\Int}{0 < \vnu} \vdash \eff{push}\;x; \eff{pop} :
      \\&\qquad\qquad\ \uhat{\msgAGray{pushI}{0\;0}\msgAGray{popI}{0\;0}}{x{:}\urt{\Int}{0 < \vnu}}{\msgA{push}{x}\msgAGray{pushI}{1\;x}\msgA{pop}{x}\msgAGray{popI}{1\;x}}
    \end{flalign*}}%
  \noindent
  Using the \textsc{TEffOp} rule to type the first $\eff{push}$
  effect, we retrieve the operator's types from the operator context
  $\Delta$:%
  \vspace{-.4cm} \par\nobreak {\small %
    \begin{flalign*}
      &... \vdash \eff{push} : n{:}\Int \garr y{:}\Int \garr &&
      \\[-0.2em]& \qquad\qquad\ \  x{:}\urt{\Int}{y < \vnu} \sarr
      \uhat{\Code{Last}(\msgAGray{pushI}{n\;y})}{\Unit}{\msgA{push}{x}\msgAGray{pushI}{n{+}1\;x}}  &&\text{(by \textsc{TOpCtx})}
    \end{flalign*}}\noindent We can then apply the subtyping rule
  (\textsc{SubG}) to instantiate the signature's ghost variables
  (e.g., $n \mapsto 0$ and $y \mapsto 0$): \vspace{-.4cm} \par\nobreak
  {\small %
    \begin{flalign*}
      &... \vdash \eff{push} : x{:}\urt{\Int}{0 < \vnu} \sarr
      \uhat{\Code{Last}(\msgAGray{pushI}{0\;0})}{\Unit}{\msgA{push}{x}\msgAGray{pushI}{1\;x}} &&\text{(by \textsc{SubG})}
    \end{flalign*}}%
  \noindent
  The rule \textsc{TOpHis} can then be used to specialize its history
  trace to align with that of the actual calling context
  ($\msgAGray{pushI}{0\;0}\msgAGray{popI}{0\;0}$). Note that we do not
  need to also specialize the future regex, since it is already a
  prefix of the future trace of the target type: %
  \vspace{-.4cm} \par\nobreak {\small %
    \begin{flalign*}
      &... \vdash \eff{push} : x{:}\urt{\Int}{0 < \vnu} \sarr
      \uhat{\msgAGray{pushI}{0\;0}\msgAGray{popI}{0\;0}}{\Unit}{\msgA{push}{x}\msgAGray{pushI}{1\;x}}
      &&\text{(by \textsc{TOpHis})}
    \end{flalign*}}%
  \noindent
  The rest of program is now typed via the following judgement:
  \vspace{-.4cm} \par\nobreak {\small %
    \begin{align*}
      &x{:}\urt{\Int}{0 < \vnu}\vdash \eff{pop} : \uhat{\msgAGray{pushI}{0\;0}\msgAGray{popI}{0\;0}\msgA{push}{x}\msgAGray{pushI}{1\;x}}{x{:}\urt{\Int}{0 < \vnu}}{\msgA{pop}{x}\msgAGray{popI}{1\;x}} &&
    \end{align*}}%
  \noindent
  As before, after looking up the type of $\eff{pop}$ using
  \textsc{TOpCtx} (\ref{eqn:typing+l1}), we can instantiate the ghost variable $n$ as $1$
  and $m$ as $0$ via \textsc{SubG} (\ref{eqn:typing+l2}), align the return type via
  \textsc{SubHF} ((\ref{eqn:typing+l3}) and refine the history regex via the rule
  \textsc{TOpHis} (\ref{eqn:typing+l4}): %
  \vspace{-.4cm} \par\nobreak {\footnotesize %
    \begin{flalign}
      &... \vdash \eff{pop} : n{:}\Int \garr m{:}\Int \garr \notag
      \\[-0.2em]& \qquad\qquad\  \uhat{\Code{Last}(\msgAGray{popI}{m\;\_}) \land
      \Code{Last}(\msgAGray{pushI}{n\;\_}) \land
      (\allA\msgAGray{pushI}{(n{-}m)\;x}\allA)}{x{:}\Int}{\msgA{pop}{x}\msgAGray{popI}{m{+}1\;x}}
    && \tag*{\ding{172}}\label{eqn:typing+l1}
      \\&... \vdash \eff{pop} : \uhat{\Code{Last}(\msgAGray{popI}{0\;\_}) \land
      \Code{Last}(\msgAGray{pushI}{1\;\_}) \land
      (\allA\msgAGray{pushI}{1\;x}\allA)}{x{:}\Int}{\msgA{pop}{x}\msgAGray{popI}{1\;x}}
    && \tag*{\ding{173}}\label{eqn:typing+l2}
      \\&... \vdash \eff{pop} : \uhat{\Code{Last}(\msgAGray{popI}{0\;\_}) \land
      \Code{Last}(\msgAGray{pushI}{1\;\_}) \land
      (\allA\msgAGray{pushI}{1\;x}\allA)}{x{:}\urt{\Int}{0 <
        \vnu}}{\msgA{pop}{x}\msgAGray{popI}{1\;x}} &&
    \tag*{\ding{174}}\label{eqn:typing+l3}
      \\&... \vdash \eff{pop} :
      \uhat{\msgAGray{pushI}{0\;0}\msgAGray{popI}{0\;0}\msgA{push}{x}\msgAGray{pushI}{1\;x}}{x{:}\urt{\Int}{0
          < \vnu}}{\msgA{pop}{x}\msgAGray{popI}{1\;x}} &&
      \tag*{\ding{175}}\label{eqn:typing+l4}
    \end{flalign}}%
  \noindent
\end{example}

\subsection{Metatheory}
\label{sec:metatheory}

Although ghost events constrain \tyName{}s, they do not
appear in actual traces produced by the operational semantics. We
define an erasure function $\eraseGhost{\alpha}$ to remove ghost events from a trace $\alpha$.

\paragraph{Type denotations} Similar to other refinement type
systems~\cite{JV21}, types in $\DSL$ are denoted as their inhabitants
(i.e., $\denot{t}$ and $\denot{\tau}$).  The type context $\Gamma$ is
denoted as a \emph{substitution} $\sigma$ (i.e.,
$[x_1\mapsto v_1, x_2\mapsto v_2]$) that provides the assignments for
binding variables in $\Gamma$ and the denotation (the denoted
language) of SREs is a set of traces. Then, trace inclusion under a
type context is defined as
$\Gamma \vdash A \subseteq A' \doteq \forall \sigma \in
\denot{\Gamma}. \denot{\sigma(A)} \subseteq \denot{\sigma(A')}$.  The
type denotation of $\uhat{H}{x{:}\urt{b}{\phi}}{F}$ is:
\vspace{-.4cm} \par\nobreak {\small\begin{align*}
  \{e ~|~ \emptyset \basicvdash e : b \land \forall v{:}b. \phi[x\mapsto v] \impl \forall \alpha_f \in \denot{F[x\mapsto v]}. \exists \alpha_h \in\denot{H[x\mapsto v]}. \msteptr{\eraseGhost{\alpha_h}}{e}{\eraseGhost{\alpha_f}}{v} \}
\end{align*}}\noindent
A term $e$ inhabits $\uhat{H}{x{:}\urt{b}{\phi}}{F}$ iff, for trace $\alpha_f$ accepted by $F$, there exists history trace $\alpha_h$ accepted by $H$, such that the execution of $e$ under
$\alpha_h$ produces $\alpha_f$.
Our formulation constitutes a trace-based analogue of IL~\cite{OHP19}, in which effects are modeled as program state transitions:
\vspace{-.4cm} \par\nobreak {\small %
  \begin{align*}
    [P]\; e\; [Q] \doteq \forall s_\Code{post} \in \denot{Q}. \exists s_\Code{pre} \in \denot{P}. (s_\Code{pre}, e) \Downarrow s_\Code{post}
  \end{align*}}\noindent
A notable distinction between these two presentations is that we
characterize $F$ as the set of \emph{newly produced} traces rather
than the full post-execution trace, reflecting our trace-based
type rules. This formulation additionally avoids
redundant prefixes inherited from the execution history.

We expect the \tyName{}s provided by the operator typing context
$\Delta$ to be consistent with the corresponding auxiliary operator
semantics:
\begin{definition}[Well-formed operator typing context]
  \label{lemma:built-in-typing}
  The operator typing context $\Delta$ is well-formed
  iff the semantics of every operator $\eff{op}$ is
  consistent with the type $\overline{x{:}b_{\I{x}}}\garr
  \overline{y{:}t_{\I{y}}}\sarr \uhat{H}{z{:}t}{F}$ provided by $\Delta$, and $H$
  does not specify any unreachable traces.
  \vspace{-.4cm} \par\nobreak {\small\begin{align*}
\overline{\forall
      x{:}b_{\I{x}}}.\; \overline{\forall y \in \denotation{t_{\I{y}}}}.\; \forall H'. H'\not=\emptyset \land  H' \subseteq H \impl (\eff{op}\, \overline{y}) \in \denotation{\uhat{H'}{z{:}t}{F}}
  \end{align*}} \vspace{-.4cm} \par\nobreak\noindent
\end{definition}

\begin{theorem}[Fundamental Theorem]
  \label{theorem:fundamental} A well-typed term, i.e.,
  \(\Gamma\vdash e : \uhat{H}{t}{F}\), generates traces consistent
  with its \tyName:
  $\forall \sigma \in \denotation{\Gamma}. \sigma(e) \in
  \denotation{\sigma(\uhat{H}{t}{F})}$.
\end{theorem}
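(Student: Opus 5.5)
The proof goes by induction on the typing derivation $\Gamma\vdash e:\tau$, strengthened to cover all three judgments at once: pure typing $\Gamma\vdash v:t$, operator typing $\Gamma\vdash\eff{op}:t$, and effectful typing $\Gamma\vdash e:\tau$. The claim for each is: for every $\sigma\in\denotation{\Gamma}$, the substituted term lies in the denotation of the substituted type. Function types are interpreted logically: a term inhabits $x{:}t\sarr\tau$ when, for every argument in $\denotation{t}$, the application inhabits $\tau[x\mapsto v]$. Intersections $\tau_1\interty\tau_2$ denote the union of their future obligations, so a term must realize every trace in $F_1$ or in $F_2$ from a suitable history.

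Before the induction, three auxiliary lemmas are needed.
\begin{enumerate}
\item \emph{Semantic subtyping soundness.} If $\Gamma\vdash\tau<:\tau'$, then $\sigma(\tau)$ denotes a subset of $\sigma(\tau')$. For \textsc{SubHF} this is direct from the denotation. Contravariance in $F$ means every $\alpha_f$ we must justify already lies in $F_1$. Covariance in $H$ means the witness $\alpha_h\in H_1$ also lies in $H_2$. The return type is handled by pure subtyping, and \textsc{SubG} is just instantiation of the $\forall$ that ghost arrows denote.
\item \emph{Substitution and weakening for denotations.} These let us move between $\sigma$ extended by one binding and closing substitutions of larger contexts.
\item \emph{Trace compositionality of the semantics.} If $\alpha\vDash e_1\to^*\alpha_1,v$ and $\alpha\listconcat\alpha_1\vDash e_2[x\mapsto v]\to^*\alpha_2,v'$, then $\alpha\vDash\zlet{x}{e_1}{e_2}\to^*\alpha_1\listconcat\alpha_2,v'$. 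This also has to commute with ghost erasure, i.e.\ $\eraseGhost{\alpha\listconcat\beta}=\eraseGhost{\alpha}\listconcat\eraseGhost{\beta}$.
\end{enumerate}

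The routine cases go as follows.
\begin{itemize}
\item \textsc{TRet} is immediate: $\epsilon$ contains only the empty trace, and any nonempty history works, which well-formedness guarantees.
\item \textsc{TOpCtx} and \textsc{TOpHis} are exactly the well-formed operator context definition. It was stated with $\forall H'\subseteq H$ nonempty precisely to discharge \textsc{TOpHis}, and its nonemptiness premise matches.
\item \textsc{TEffOp} combines the operator's denotation with the arguments' denotations from the IH, then uses \textsc{StEfOp}.
\item \textsc{TChoice} picks the branch whose future contains the given $\alpha_f$ and takes a $\oplus$ step.
\item \textsc{TFun} is standard.
\end{itemize}

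\textsc{TLet} is where underapproximation bites. Given $\alpha_f\in F_1\seqA F_2$, split it as $\alpha_1\listconcat\alpha_2$ with $\alpha_i\in F_i$. The IH on $e_2$, at the extended substitution $\sigma[x\mapsto v]$, yields a history $\beta\in H\seqA F_1$ under which $e_2$ produces $\alpha_2$. However, $\beta$ splits as $\beta_h\listconcat\beta_1$, and $\beta_1$ need not equal our $\alpha_1$. Applying the IH to $e_1$ at $\beta_1$ rather than $\alpha_1$ gives some value $v$ and history $\beta_h$; composing then produces $\beta_1\listconcat\alpha_2$, which is in $F_1\seqA F_2$ but is not the requested trace.

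The fix is to prove the statement for all $\alpha_f$ first and organize the argument so that $F_1$ is fixed. For each $\alpha_1\in F_1$, the IH on $e_1$ gives $\alpha_h\in H$ and a value $v$, with the value's membership in $t_x$ coming from the return-type qualifier. We then need $e_2$ to produce $\alpha_2$ under the specific history $\alpha_h\listconcat\alpha_1$. So I would strengthen the denotation to a \emph{history-indexed} form: for every $\alpha_f\in F$ and \emph{every} $\alpha_h\in H$ in the relevant reachable set, $e$ produces $\alpha_f$. Alternatively, I would exploit that the SUT-response semantics is deterministic given the erased history, and show that the existential witness can be taken as the prefix actually produced. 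The second route means checking that the witness chosen for $e_2$ is compatible with the run of $e_1$. I expect this alignment to be the main obstacle, and the well-formedness restrictions on $H$ (nonempty, no unreachable traces) are presumably there to make the strengthened invariant hold at operators.

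Recursion is handled by \textsc{TFixBase}, which is vacuous since the future is $\emptyset$ and the qualifier is $\bot$. For \textsc{TFixInd}, the IH on the premise $\Gamma\vdash\zzfix{f}{x}{e}:t$ puts $\sigma(\zzfix{f}{x}{e})$ in $\denotation{t}$. Instantiating the second premise's IH with $f\mapsto\sigma(\zzfix{f}{x}{e})$ and using one unfolding step of fix together with backward closure of denotations under reduction then gives membership in $\denotation{t'}$. No step-indexing is needed, because underapproximate claims are witnessed by finite executions.
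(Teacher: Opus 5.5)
\paragraph{The \textsc{TLet} gap.} Your plan matches the paper's: a mutual induction over the value, operator and term judgments, with the routine cases handled as you describe. The gap you flag at \textsc{TLet} is genuine, though, you leave it open, and it cannot be closed. With the main-text inclusion and the existential-history denotation, the statement is false.

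Take nullary operators with $\Delta(\eff{a})=\uhat{\allA\seqA\msgC{q}}{\Unit}{\msgC{a}}$ and $\Delta(\eff{b})=\uhat{\allA\seqA\msgC{p}\seqA\msgC{a}}{\Unit}{\msgC{b}}$. Let $\eff{a}$ return exactly when the history ends with a $\eff{q}$ event, and $\eff{b}$ exactly when it ends with $\eff{p}$ followed by $\eff{a}$; both are stuck otherwise. This $\Delta$ is well-formed.
\begin{itemize}
\item \textsc{TOpHis} and \textsc{TEffOp} give $\eff{a}:\uhat{\msgC{q}}{\Unit}{\msgC{a}}$ and $\eff{b}:\uhat{\msgC{p}\seqA\msgC{a}}{\Unit}{\msgC{b}}$.
\item Weakening the histories with \textsc{TSub}/\textsc{SubHF} yields $\uhat{\msgC{p}\lorA\msgC{q}}{\Unit}{\msgC{a}}$ and $\uhat{(\msgC{p}\lorA\msgC{q})\seqA\msgC{a}}{\Unit}{\msgC{b}}$.
\item \textsc{TLet} then derives $\emptyset\vdash\eff{a};\eff{b}:\uhat{\msgC{p}\lorA\msgC{q}}{\Unit}{\msgC{a}\seqA\msgC{b}}$.
\end{itemize}
Yet from $[\eff{p}]$ the call to $\eff{a}$ is stuck. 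From $[\eff{q}]$ the call to $\eff{b}$ runs under $[\eff{q};\eff{a}]$ and is stuck. So $[\eff{a};\eff{b}]$ is produced from no admissible history.

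This is exactly the misalignment you describe: the witness for $e_2$ ($[\eff{p};\eff{a}]$) and the witness for $e_1$ ($[\eff{q}]$) are independent existentials. The paper's proof does not resolve this either. It has no \textsc{TLet} case at all, and its \textsc{TSub} case cites a subtyping lemma whose inclusion is stated the wrong way round.

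\paragraph{Why neither repair works.} Determinism is beside the point: the example is deterministic, and what fails is the independence of the two witnesses, not the SUT's choice of response.

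A history-universal denotation (for every $\alpha_h\in H$ and $\alpha_f\in F$, $e$ produces $\alpha_f$ from $\alpha_h$) does make \textsc{TLet} compose. It is also exactly what the well-formed operator context supplies, since that definition quantifies over all nonempty $H'\subseteq H$, singletons included. You in fact need this universal form as the invariant for operator judgments anyway: membership in the existential denotation of \textsc{TOpHis}'s premise does not justify shrinking $H$.

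Under a universal reading, however, your own \textsc{SubHF} argument breaks. Enlarging $H$ becomes unsound, and the example shows that this weakening is precisely the culprit.

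You would also need the bound type $t_x$ to be inhabited. A $\bot$-qualified return type makes the premise on $e_1$ vacuous for any $F_1$. For example, an operator specified as $\uhat{\allA}{\urt{\Unit}{\bot}}{\msgC{z}}$ is vacuously well-formed even if it is always stuck, yet \textsc{TLet} types $\eff{z};()$ as producing $[\eff{z}]$.

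So what is provable is a modified theorem for a modified system: history only shrinkable in \textsc{SubHF}, as in \textsc{TOpHis}, together with an inhabitation condition. It is not the statement as given.
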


\begin{corollary}[Type Soundness]
  \label{theorem:sound} A generator \(e\) that satisfies
  \(\emptyset \vdash e : \uhat{\epsilon}{\Unit}{F}\) must produce all
  traces accepted by $F$, i.e.,
  $\forall \alpha \in
  \denot{F}. \msteptr{[]}{e}{\eraseGhost{\alpha}}{()}$.
\end{corollary}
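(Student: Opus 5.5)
The statement that needs proof is Corollary~\ref{theorem:sound}. I would derive it directly from Theorem~\ref{theorem:fundamental}, instantiated at the empty context, and then unfold the type denotation.

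\textbf{Instantiating the theorem.} Since $\Gamma = \emptyset$, its denotation $\denotation{\emptyset}$ contains only the identity substitution $\sigma = []$. Applying Theorem~\ref{theorem:fundamental} to $\emptyset \vdash e : \uhat{\epsilon}{\Unit}{F}$ therefore gives
\[
e \in \denotation{\uhat{\epsilon}{\Unit}{F}}.
\]
The substitution acts trivially on both $e$ and the type.

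\textbf{Unfolding the denotation.} By the type aliases, $\Unit$ abbreviates $\urt{\Code{unit}}{\top}$, and the return binder is an unused variable $x$. The \tyName{} denotation then says: for every value $v{:}\Code{unit}$ with $\top[x\mapsto v]$, and every $\alpha_f \in \denot{F[x\mapsto v]}$, there is some $\alpha_h \in \denot{\epsilon[x \mapsto v]}$ with $\msteptr{\eraseGhost{\alpha_h}}{e}{\eraseGhost{\alpha_f}}{v}$.

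\textbf{Specializing to the conclusion.} Three observations finish the argument.
\begin{itemize}
\item The only closed value of basic type $\Code{unit}$ is $()$, and $\top$ holds trivially. So we may take $v = ()$; in fact the quantifier ranges exactly over $()$.
\item $x$ does not occur free in $F$ or in $\epsilon$, since $F$ is well-formed under the empty context extended with $x$, and a unit binder cannot meaningfully constrain it. Hence $F[x\mapsto ()] = F$ and $\epsilon[x\mapsto()] = \epsilon$. Even if $x$ did occur in $F$, substituting $()$ is exactly the intended reading of $\denot{F}$ in the corollary.
\item $\denot{\epsilon} = \{[]\}$, so the existential witness is forced to be $\alpha_h = []$, and $\eraseGhost{[]} = []$ because erasure only deletes events.
\end{itemize}
Combining these, every $\alpha \in \denot{F}$ satisfies $\msteptr{[]}{e}{\eraseGhost{\alpha}}{()}$, which is exactly the claim.

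The only step needing any care is the handling of the binder $x$ and the value quantifier. That requires a canonical-forms fact: the closed values of type $\Code{unit}$ are exactly $\{()\}$. I would take this from the basic typing judgement $\emptyset \basicvdash e : b$ in the denotation, together with the value syntax. The real work lies in Theorem~\ref{theorem:fundamental}, which is assumed here.
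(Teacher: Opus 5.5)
Your proposal is correct and follows the paper's proof essentially step for step. It instantiates the Fundamental Theorem at the empty context, whose denotation contains only the empty substitution, then unfolds the \tyName{} denotation, and uses $\denot{\epsilon} = \{[]\}$ together with $\eraseGhost{[]} = []$ to force the history witness. Your extra care with the return binder is fine and more careful than the paper, which glosses over it. The canonical-forms fact you flag as needing care is actually unnecessary: you only instantiate the universal over $v$ at $()$, which does not require $()$ to be the only value of type $\Code{unit}$.
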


\section{Synthesis}
\label{sec:synthesis}

Given an SRE $A$ describing some property of a black-box system under
test, Corollary~\ref{theorem:sound} ensures that a well-typed test generator
program $\emptyset \vdash e : \uhat{\epsilon}{\Unit}{F}$ \emph{must}
be able to produce \emph{every} trace $\alpha$ that is consistent with
both $F$ and $A$, i.e. $\alpha \in F \land A$. Thus, the high-level
goal of our synthesis procedure is to find a generator that a) uses
effectful operators in a way that is consistent with the temporal and
data-dependency constraints captured by their specifications in
$\Delta$, and b) has a \tyName{} whose future automaton shares a
meaningful overlap with $A$. Our synthesis algorithm simultaneously
solves both problems by using a loop, depicted in
\autoref{fig:syn-gen}, to iteratively \emph{refine} the target SRE
into a set of more concrete SREs, each of which can be directly mapped
to a well-typed generator program. Each iteration of this loop targets
a single event in an element of this set and adds events before and
after that event so that its dependencies are satisfied, ensuring that
the corresponding operator in a synthesized program is well-typed. Our
loop also employs a regex non-emptiness check to ensure that each
element of this set represents a generator that produces at least one
feasible trace. After the refinement loop has finished, a well-typed
generator program can be mechanically extracted using the refined
property as a template.

\begin{figure}[h!]
  \vspace*{-.15in}
  \centering
  \includegraphics[width=0.75\linewidth]{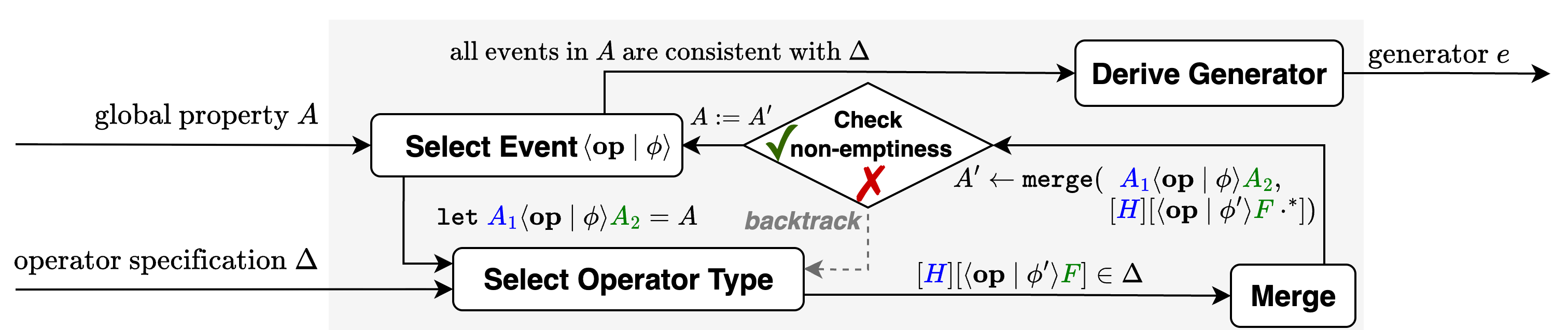}
  \caption{\name{}'s high-level test generator synthesis algorithm}
  \vspace*{-.15in}
  \label{fig:syn-gen}
  \vspace*{-.15in}
\end{figure}

\subsection{Synthesis Algorithm}

The main refinement loop in our synthesis algorithm operates over
\emph{abstract traces} --- sequences of symbolic events
$\msg{op}{\phi}$ plus alternatives defined by the following grammar: %
\vspace{-.4cm} \par\nobreak {\small
  \begin{flalign*}
    &\text{\textbf{Abstract Trace}} \quad \pi ::= \epsilon ~|~
    \msg{op}{\phi}^b ~|~ \pi \seqA \pi ~|~
    [\overline{\msg{op}{\phi}^b}]^*
  \end{flalign*}}\noindent Each symbolic event $\msg{op}{\phi}^b$ in
an abstract trace is tagged with a boolean $b$ indicating whether the
surrounding context satisfies its constraints. Operating over abstract
traces helps our synthesizer easily identify events with unresolved
dependencies. Every SRE can be normalized into a set of abstract
traces via an algorithm that applies a standard minimization algorithm
and then decomposes unions into abstract traces:
$\msg{op_1}{\phi_1} \cup \msg{op_2}{\phi_2}$ becomes
$\{\msg{op_1}{\phi_1}^\bot, \msg{op_2}{\phi_2}^\bot\}$, for
example.\footnote{The full definition of our SRE normalization
  procedure is provided in the \techreport{}.}

\begin{example}[SRE Normalization]\label{ex:sre-norm}
  The set of abstract traces corresponding to the history regex of
  $\eff{pop}$ from Example~\ref{ex:stack-example} includes the
  following abstract trace: %
  \vspace{-.4cm} \par\nobreak {\footnotesize %
    \begin{align*}
      \text{Original SRE:} \quad&  \Code{Last}(\msgAGray{popI}{m\;\_}) \land \Code{Last}(\msgAGray{pushI}{n\;\_}) \land (\allA\msgAGray{pushI}{(n{-}m)\;x}\allA)
      \\
      \text{One normalized trace:} \quad& \Pi_\Code{any}^*\seqA \msgAGray{popI}{m\;\_}^\bot \seqA \Pi_\Code{noPopI}^* \msgAGray{pushI}{(n-m)\;x}^\bot \seqA\Pi_\Code{noPopI}^* \msgAGray{pushI}{n\;\_}^\bot \seqA \Pi_\Code{noI}^* \tag{$\pi_\Code{pop1}$}\label{prop:popHis1}
      \\[-0.2em]
      \text{where} \quad& \Pi_\Code{any} \equiv [\msgCGray{pushI}^\bot \mid \msgCGray{popI}^\bot \mid \msgC{push}^\bot \mid \msgC{pop}^\bot]
      \\[-0.2em]
                                & \Pi_\Code{noPopI} \equiv [\msgCGray{pushI}^\bot \mid \msgC{push}^\bot \mid \msgC{pop}^\bot] \qquad\quad \Pi_\Code{noI} \equiv [\msgC{push}^\bot \mid \msgC{pop}^\bot]
    \end{align*}}\noindent
  This abstract trace imposes an ordering on the three intersected
  symbolic events in the original SRE; other traces correspond to
  different orderings. The alternatives after
  $\msgAGray{popI}{m\;\_}$ and $\msgAGray{pushI}{(n-m)\;x}$ in the
  trace disallow $\msgCGray{popI}$ event and
  $\msgCGray{pushI}$ events, respectively. The minimization step
  also produces traces that merge these intersected events: the
  procedure automatically considers the case where $m = 0$, which
  unifies the two $\effGray{pushI}$ events, resulting in the inclusion
  of the following abstract trace in the normalized set: %
  \vspace{-.4cm} \par\nobreak {\small %
\begin{align*}
  \Pi_\Code{any}^* \seqA \msgAGray{popI}{0\;\_}^\bot \seqA
  \Pi_\Code{noPopI}^* \seqA \msgAGray{pushI}{n\;x}^\bot \seqA \Pi_\Code{noI}^* \tag{$\pi_\Code{pop2}$}\label{prop:popHis2}
\end{align*}}\noindent
\end{example}

\setlength{\textfloatsep}{4pt}
\begin{algorithm}[t!]
  \Params{\quad Operator context $\Delta$, variables $\overline{x{:}b}$, and
    target trace property $A$}%
  \Output{\quad Generator $e$, such that
    $\overline{x{:}\urt{b}{\top}} \vdash e :
    \uhat{\epsilon}{\Unit}{A'} \land%
    \overline{x{:}\urt{b}{\top}} \vdash A' \subseteq A$} %
  $\Gamma \leftarrow \overline{x{:}\urt{b}{\top}}$ \tcp*[l]{initial
    type context} %
  $C \leftarrow \{(\Gamma, \pi) ~\mid~ \pi \in \normPlan(A)\}$
  \tcp*[l]{initialize set of abstract traces} %
  \While(\tcp*[h]{events still need to be refined}) %
  { %
    $\exists (\Gamma', \pi) \in C.\; %
    \pi = \pi_h \seqA \msg{op}{\phi}^\bot \seqA \pi_f$ } {
    $\Pi \leftarrow \refine(\Delta, \Gamma', \pi_h,
    \msg{op}{\phi}^\bot, \pi_f) \cup C - \{(\Gamma', \pi)\}$ }
\Return{$\derive(C)$} \tcp*[l]{synthesize recursions and derive generator program}
\caption{Test Generator Synthesis}
\label{algo:top-syn}
\end{algorithm}

\setlength{\textfloatsep}{4pt}
\begin{algorithm}[t!]
    \Params{\quad Operator context $\Delta$, typing context $\Gamma$,
      abstract trace $\pi_h \seqA \msg{op}{\phi}^\bot \seqA \pi_f$}%
    \Output{\quad Set of refined abstract traces $\Pi$ where the
      dependencies of $\msg{op}{\phi}$ are satisfied}
    $\overline{z{:}b}\garr\overline{y{:}t_y}\sarr\uhat{H}{x{:}t_x}{\msg{op}{\phi'}\seqA
      F} \leftarrow \Delta(\eff{op})$\tcp*[l]{retrieve \tyName of
      $\eff{op}$} %
    $\Gamma' \leftarrow \Gamma, \overline{z{:}\urt{b}{\top}},
    \overline{y{:}t_y}, x{:}t_x$ \tcp*[l]{add ghost variables and
      parameters types to type context} %
    $\msg{op}{\phi''} \leftarrow \msg{op}{\phi \land \phi'}$
    \tcp*[l]{refine target operation} %
    $H' \leftarrow \pi_h \land H$ \tcp*[l]{merge history regex}
    $F' \leftarrow \pi_f \land (F \seqA \allA)$ \tcp*[l]{merge future
      regex} %
    $ P \leftarrow \normPlan(H'\seqA \msg{op}{\phi''}^\top\seqA F')$
    \tcp*[l]{normalize refined trace} %
    \Return{$\{(\Gamma', \pi') \mid \pi' \in P \land \Gamma' \vdash
      \pi' \not \subseteq \emptyset \}$}
    \tcp*[l]{filter empty abstract traces} %
    \caption{Abstract Trace Refinement (\refine{})}
\label{algo:refine}
\end{algorithm}

Our top-level synthesis algorithm is shown in \autoref{algo:top-syn}.
Given a target trace property $A$ whose qualifiers only contain the
variables $\overline{x{:}b}$, this algorithm synthesizes a well-typed
$\DSL{}$ generator. The algorithm maintains a set of candidate
abstract traces, $C$, each of which is a refinement of $A$; this set
is initialized directly from $A$ (line 2). Each trace in $C$ is paired
with a typing context that the synthesizer uses to filter out
refinements that do not correspond to well-typed programs. The
algorithm first uses a loop (lines $3-4$) to refine $C$ until it only
contains abstract traces that are consistent with the operator context
$\Delta$, and then uses the resulting set of traces to derive the
final (recursive) generator (line $5$). Each iteration of this loop
nondeterministically chooses a target event in one of the current
abstract traces (line $2$) whose tag indicates its dependencies are
not satisfied; it then uses the $\refine$ subroutine to insert the
required operations before and after the target event.

\paragraph{Abstract trace refinement} The abstract trace refinement
subroutine, \refine{}, is shown in \autoref{algo:refine}. It first
retrieves the \tyName{} of the target operation $\eff{op}$ from
$\Delta$ (line $1$), and adds its parameters to the type context (line
$3$). \refine{} then merges the selected \tyName{} with the current
abstract trace (lines $4-6$) and normalizes the result to produce a
set of refinements of the input trace that are consistent with
$\eff{op}$ (line $7$).  Line $6$ effectively applies the
\textsc{SubHF} rule to drop any states that are unreachable after the
target operation; while line 4 effectively applies \textsc{TOpHis} to
ensure the trace that precedes the target operation satisfies its
specification. % \BD{The careful reader will notice we're using a very
  % precise specification of an operator here... Do we need to argue why
  % that's okay?}
Before returning the refined trace, \refine{} checks that it
corresponds to a meaningful test generator by filtering out any empty
traces (line $7$).

\begin{example}[Abstract Trace Refinement]\label{ex:refinement-process}
  We illustrate the refinement process by showing how it derives a
  trace corresponding to the generator from in
  Example~\ref{ex:operator-application-typing}. Assume the refinement
  loop begins with the following (singleton) set of abstract traces:
  \vspace{-.4cm} \par\nobreak {\small %
    \begin{align*}
      \Pi_\Code{any}^* \seqA \msg{pop}{\top} \seqA \Pi_\Code{any}^*
    \end{align*}}\noindent
  where the user requires that the test generator performs at least
  one $\eff{pop}$ operation.  In the first iteration, the symbolic
  event $\msg{pop}{\top}$ is selected for refinement using
  \refine{}. The algorithm first retrieves the \tyName{} of
  $\eff{pop}$ from $\Delta$: %
  \vspace{-.4cm} \par\nobreak{\small %
    \begin{align*}
      &\Delta(\eff{pop}) \equiv
        n{:}\Int \garr m{:}\Int \garr \uhat{\pi_\Code{pop1} \cup \pi_\Code{pop2} \cup ...}{x{:}\Int}{\msgA{pop}{x}\seqA\msgAGray{popI}{m{+}1\;x}}\seqA\Pi_\Code{any}^*
    \end{align*}}%
  \noindent We have normalized the history regex to illustrate that
  merging this \tyName{} with the current abstract trace and
  normalizing can result in multiple new abstract traces.  For
  example, property~\ref{prop:popHis2} in the history regex can yield
  the following trace: %
  \vspace{-.4cm} \par\nobreak{\small %
  \begin{align*}
    &(m{:}\urt{\Int}{\vnu = 0}, n{:}\urt{\Int}{\top},y{:}\urt{\Int}{\top},x{:}\urt{\Int}{\top}, \\
    &\qquad\qquad \Pi_\Code{any}^* \msgAGray{popI}{m\;y}^\bot \Pi_\Code{noPopI}^* \msgAGray{pushI}{n\;x}^\bot \Pi_\Code{noI}^* \msgA{pop}{x}^\top\msgAGray{popI}{m\;x}^\bot)
  \end{align*}} \vspace{-.4cm} \par\nobreak\noindent
The new variables in the type context require $m = 0$, as
mentioned in Example~\ref{ex:sre-norm}.  Although the \tyName{} of
$\eff{pop}$ does not explicitly add any new $\eff{push}$ or
$\eff{pop}$ events to the abstract trace, it does require multiple
ghost events, which will recover corresponding concrete events during
further refinement.  For example, the \tyName{} of $\effGray{pushI}$
is given below; it requires that a $\eff{pushI}$ event always occurs
after a $\eff{push}$ event, unless it is an initialization event
(i.e., $i = 0$): %
\vspace{-.4cm} \par\nobreak{\footnotesize %
  \begin{align*}
    \Delta(\effGray{pushI}) \equiv &\; i{:}\Int\sarr x{:}\Int \sarr \uhat{\allA\msgC{push}}{\Unit}{\msgBGray{pushI}{i\;x}{i > 0}} \interty \uhat{(\anyA\setminus\msgCGray{pushI})^*}{\Unit}{\msgBGray{pushI}{i\;x}{i = 0}} \\
    \Delta(\effGray{popI}) \equiv &\; i{:}\Int\sarr x{:}\Int \sarr \uhat{\allA\msgC{pop}}{\Unit}{\msgBGray{popI}{i\;x}{i > 0}} \interty \uhat{(\anyA\setminus\msgCGray{popI})^*}{\Unit}{\msgBGray{popI}{i\;x}{i = 0}}
  \end{align*}}\vspace{-.4cm} \par\nobreak\noindent
On the other hand, the $\msgAGray{popI}{m\;y}$ event in the abstract trace has index $m = 0$, and the algorithm can smartly determine that there is no previous $\eff{pop}$ event in this case. It is because the merge result of the first intersected type of $\effGray{popI}$ fails the non-emptiness check, where $m > 0 \land m = 0$ is unsatisfiable. Finally, this trace will be refined to one consistent with the program shown in Example~\ref{ex:operator-application-typing}:
\vspace{-.4cm} \par\nobreak{\small %
  \begin{align*}
    &(m{:}\urt{\Int}{\vnu = 0}, n{:}\urt{\Int}{\vnu = m{+}1}, y{:}\urt{\Int}{\top}, x{:}\urt{\Int}{\vnu > y},
    \\&\;\Pi_\Code{noI}^* \seqA \msgAGray{pushI}{m\;y}^\bot
    \seqA \msgAGray{popI}{m\;y}^\bot \seqA \Pi_\Code{noI}^* \seqA \msgA{push}{x}^\bot \seqA \msgAGray{pushI}{n\;x}^\bot \seqA \Pi_\Code{noI}^* \seqA \msgA{pop}{x}^\top \seqA \msgAGray{popI}{n\;x}^\bot \seqA \Pi_\Code{any}^*) \tag{$\pi_\Code{stack}$}\label{prop:stackTrace}
  \end{align*}}\vspace{-.6cm} \par\nobreak\noindent
\end{example}

\paragraph{Deriving a Generator}
The second component of our synthesis algorithm, \derive{}, derives
generators for each abstract trace in the set produced the refinement
loop and then combines them using $\oplus$ to construct the final
generator.  This process relies on a subroutine, \deriveTrace{}, that
generates straightline programs without recursion by dropping by all
ghost events and variables and then mapping each symbolic event to a
corresponding effectful operation with appropriate arguments. So,
given the following context and abstract trace:
\vspace{-.4cm} \par\nobreak{\small%
  \begin{align*} (m{:}\urt{\Int}{\vnu = 0}, n{:}\urt{\Int}{\vnu =
      m{+}1}, y{:}\urt{\Int}{\top}, x{:}\urt{\Int}{\vnu > y},
    \msgA{push}{x} \seqA \msgA{pop}{x})
  \end{align*}
}\noindent \deriveTrace{} will synthesize the following program:
\begin{minted}[fontsize = \footnotesize, escapeinside=??, xleftmargin=10pt, linenos]{ocaml}
  let (x: int) = ?$\mykw{assume}\; (\exists m\;n\;y. m = 0 \land n = m{+}1 \land x > y)$? in push x; (* equivalent to assume true *)
  let (z: int) = pop () in assert (x == z)
\end{minted}
\derive{} applies a sketch-style methodology to synthesize a recursive
generator, using \deriveTrace{} to fill in the holes of a template
recursive function definition. It does so by associating each of the
holes in the template with a subsequence of the abstract trace,
treating each occurence of a Kleene star in an abstract trace as a
candidate placeholder for a recursive call. In order to ensure a
particular partitioning is sensible, the algorithm simulates a bounded
number of loop iterations to ensure a completed sketch is well-typed.

% in abstract traces are interpreted as recursion
% points. The synthesis algorithm $\derive$ non-deterministically either
% (i) generates a sequential program by eliminating all ghost events and
% the Kleene stars \footnote{The definition of $\derive$ is provided in
%   the \techreport{}.}, or (ii) synthesizes a recursive program by
% expanding the Kleene star using recursion. We employ a template-based
% strategy, in which the algorithm matches the given abstract trace with a
% template $T$, together with abstract traces for the recursion body
% ($\Pi_{\Code{rec}}$), as well as its associated history ($\Pi_h$) and
% future ($\Pi_f$) traces. As dictated by \textsc{TFixInd}, the algorithm
% unrolls the recursive body up to a bounded depth and invokes
% refinement loop (lines 2-3 in \autoref{algo:top-syn}) to ensure that
% the unrolled abstract traces are well-typed. The synthesized programs
% for these three traces are then composed to instantiate the
% placeholders in the recursive template.

\begin{example}[Recursion Template]
  Consider the following template which defines and uses a recursive
  function $f$ that calls itself once:
  $T \doteq \zlet{f}{\mykw{fix}\ f.\lambda ().\ () \oplus \boxed{e_1};
    f\;(); \boxed{e_2}}{(\boxed{e_3};f\;();\boxed{e_4})}$. This
  template includes a pair of holes before and after the recursive
  call in the definition of $f$, and another pair around the top-level
  call to $f$. We can build the following two program sketches by
  mapping portions of $\pi_\Code{stack}$ from
  Example~\ref{ex:refinement-process} to the four holes in $T$: %
  \vspace{-.4cm} \par\nobreak {\footnotesize %
    \begin{align*}
      &
        \underbrace{\Pi_\Code{noI}^*\msgAGray{pushI}{m\;y}\msgAGray{popI}{m\;y}\Pi_\Code{noI}^*}_{e_3}\;\;
        \underbrace{\msgA{push}{x}\msgAGray{pushI}{n\;x}}_{e_1}\;\;
        \underbrace{\Pi_\Code{noI}^*}_\text{$f$\;()}\;\;
        \underbrace{\msgA{pop}{x}\msgAGray{popI}{n\;x}}_{e_2}\;\;
        \underbrace{\Pi_\Code{noI}^*}_{e_4}
      \\[1pt]
      & \underbrace{\Pi_\Code{noI}^*\msgAGray{pushI}{m\;y}\msgAGray{popI}{m\;y}\Pi_\Code{noI}^*\msgA{push}{x}\msgAGray{pushI}{n\;x}}_{e_3}\;\;
        \underbrace{\Pi_\Code{noI}^*}_{e_1}\;\;
        \underbrace{\Pi_\Code{noI}^*}_\text{$f$\;()}\;\;
        \underbrace{\msgA{pop}{x}\msgAGray{popI}{n\;x}}_{e_2}\;\;
        \underbrace{\Pi_\Code{noI}^*}_{e_4}
    \end{align*}}\vspace{-.3cm} \par\nobreak \noindent %
  We can then simulate unrolling the recursive call in a solution to
  each sketch by duplicating the subtraces labeled $e_1\; f\;()\;
  e_2$ a bounded number of times and checking whether they are consistent with
  $\pi_\Code{stack}$: doing so reveals that the second sketch would
  not yield a valid solution, as the recursive call would repeatedly
  perform $\eff{pop}$ events without any $\eff{push}$ events. The
  first sketch, in contrast, is consistent with $\pi_\Code{stack}$;
  using \deriveTrace{} to synthesize generators for each $e$ yields
  the following:
  \vspace{-.4cm}
  \begin{minted}[fontsize = \footnotesize, escapeinside=??, xleftmargin=10pt, linenos]{ocaml}
let rec f () = () ?$\oplus$?
    let (x: int) = int_gen () in push x; (* ?$e_1$? *)
    f ();
    let (z: int) = pop () in assert (x == z) (* ?$e_2$? *)
in (* ?$e_3$? *) f () (* ?$e_4$? *)
\end{minted}
\vspace{-.2cm}
\end{example}

\begin{theorem}[\autoref{algo:top-syn} is Sound]
  \label{theorem:algo-sound}
  Every generator synthesized by \autoref{algo:top-syn} is type-safe
  with respect to our declarative typing system.
\end{theorem}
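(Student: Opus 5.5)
The proof goes through an invariant on the candidate set $C$ maintained by \autoref{algo:top-syn}, followed by a separate argument that \derive{} maps each invariant-respecting abstract trace to a derivable typing judgement.

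The invariant is stated for every pair $(\Gamma', \pi) \in C$ and every event $\msg{op}{\phi}^\top$ in $\pi$ whose tag is $\top$. Split $\pi$ around that event as $\pi = \pi_h \seqA \msg{op}{\phi}^\top \seqA \pi_f$. Then there are instantiations of the ghost variables of $\Delta(\eff{op})$ and a history $H$ such that
\[
\Gamma' \vdash \pi_h \subseteq H,
\qquad
\Gamma' \not\vdash \pi_h \subseteq \emptyset,
\qquad
\Gamma' \vdash \pi_f \subseteq F \seqA \allA .
\]
Consequently $\eff{op}$ can be typed at that position by \textsc{TOpCtx}, then \textsc{SubG} (ghost variables are bound in $\Gamma'$, line 2 of \refine{}), then \textsc{TOpHis}, then \textsc{SubHF}, and finally \textsc{TEffOp}. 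The parameter types $\overline{y{:}t_y}$ are in $\Gamma'$, so the argument premises of \textsc{TEffOp} hold.

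The invariant holds initially, since every event produced by $\normPlan(A)$ is tagged $\bot$. Preservation is read off \refine{}.
\begin{enumerate}
\item The history is set to $H' = \pi_h \land H$, so $H' \subseteq H$.
\item The future is set to $F' = \pi_f \land (F \seqA \allA)$, which is contained in $F \seqA \allA$.
\item The filtering at line 7 guarantees non-emptiness, which discharges the side condition of \textsc{TOpHis}.
\end{enumerate}
For events that were already tagged $\top$ before this step, we need a monotonicity lemma. Normalization only refines languages: each $\pi' \in \normPlan(B)$ satisfies $\pi' \subseteq B$. Extending the typing context is sound by weakening. Together these imply that inclusions established earlier still hold for the refined prefixes and suffixes; the non-emptiness of each earlier prefix also has to survive, and I would check that it follows from non-emptiness of the whole refined trace, since every prefix of a non-empty trace is non-empty. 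At loop termination every event is tagged $\top$. Ghost events need no premise, because they appear only inside the $F$ components of operator types and are erased.

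Next comes \deriveTrace{}. It produces a straight-line let-chain of $\mykw{assume}$ bindings and operator calls, and I would prove by induction along the abstract trace that this chain has type $\uhat{\epsilon}{\Unit}{\pi}$ up to \textsc{SubHF}, under the extended context. The \textsc{TLet} rule threads the history $H \seqA F_1$ through the chain, and this matches exactly the $\pi_h$ given by the invariant. Each $\mykw{assume}$ binding gives its variable the underapproximate refinement taken from $\Gamma'$. Kleene-star segments $\Pi^*$ with no selected events contribute the empty effect, and \textsc{SubHF} drops them from the future because it permits shrinking the future. The top-level $\oplus$ over all traces is typed by \textsc{TChoice}, and \textsc{SubHF} then weakens the intersection type $\interty$ to a type whose future is included in $A$.

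The main obstacle is recursion in \derive{}. The sketch check only unrolls to a bounded depth, so soundness has to come from the declarative rules rather than from the simulation. My plan is to type the recursive function $f$ by finitely many applications of \textsc{TFixInd} starting from \textsc{TFixBase}, assigning $f$ exactly the type corresponding to the unrolled traces the synthesizer checked. The future of the final type is then only a bounded-unrolling sub-language of the star. This is still sound, because underapproximate types only promise to cover their future, and a smaller future is allowed by the contravariance in \textsc{SubHF}. So I would state the theorem as typability with a future $A' \subseteq A$, matching the algorithm's output specification, rather than with full coverage of the star. The delicate point is checking that each unrolled body's history agrees with the type previously assumed for $f$, and this is exactly what the bounded simulation verifies.
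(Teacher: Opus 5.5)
Your overall decomposition (an invariant that makes each refined operator typable, a let-chain derivation for \deriveTrace{}, and \textsc{TChoice} for $\oplus$) matches the paper's. The step that fails is the let-chain derivation for \deriveTrace{}. It rests on two claims: that ghost events ``appear only inside the $F$ components of operator types'', and that the history threaded by \textsc{TLet} ``matches exactly the $\pi_h$ given by the invariant.'' Both are false.

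Refined traces contain free-standing ghost events. The loop refines ghost events against their own $\Delta$ types, and $\pi_\Code{stack}$ in Example~\ref{ex:refinement-process} begins with the initialization events $\msgAGray{pushI}{m\;y}\msgAGray{popI}{m\;y}$ (with $m=0$). No concrete operator emits these, and target properties such as $A_\lambda$ mention ghost events directly.

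\deriveTrace{} compiles these events to $\mykw{assume}$s. In such a let-chain, a ghost event can enter a future only through the $F$ of an operator type in \textsc{TEffOp}. So if the generator is typed at history $\epsilon$, the history threaded to the first $\eff{push}$ (preceded only by $\mykw{assume}$ bindings) is just $\epsilon$. Since \textsc{SubHF} only enlarges histories, you would need a non-empty $H' \subseteq \Code{Last}(\msgAGray{pushI}{n\;y})$ from \textsc{TOpHis} that is also contained in $\epsilon$. None exists, so the derivation you describe cannot be built at $\uhat{\epsilon}{\Unit}{\cdot}$.

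Two smaller points about the invariant. Even setting ghosts aside, ``exactly'' is too strong: $\pi_h$ contains $\Pi^*$ fillers that straight-line code never produces. What you need is a non-empty $H'$ inside both $H$ and the threaded history, e.g.\ $\pi_h$ with its stars emptied. Likewise, $\pi_f \subseteq F \seqA \allA$ does not identify which block of $\pi_f$ is the operator's own future for \textsc{TEffOp} and \textsc{TLet}. The paper's Definition~\ref{def:real-event-def} instead states consistency directly as a typing of $\eff{op}$ whose future is the event followed by the ghost block right after it.

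The paper does not build a derivation for the straight-line programs. Lemma~\ref{lemma:trace-derive-sound} argues at the level of denotations: it refills ghost events into each produced trace to obtain a trace of $\pi$ whose erasure is produced. The $\eraseGhost{\cdot}$ in the \tyName{} denotation thus absorbs exactly the events your syntactic route cannot account for. To repair your route, you must do one of two things:
\begin{itemize}
\item type the generator under a history containing the free-standing ghost prefix, as Example~\ref{ex:operator-application-typing} does with $\msgAGray{pushI}{0\;0}\msgAGray{popI}{0\;0}$, which gives up the $\uhat{\epsilon}{\Unit}{A'}$ form; or
\item switch to the paper's semantic membership argument for these parts.
\end{itemize}

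The rest of your plan is reasonable and in places more careful than the paper. Your preservation argument for previously refined events addresses what Lemma~\ref{lemma:real-trace} only asserts. Your bounded \textsc{TFixBase}/\textsc{TFixInd} typing discharges the recursion-template hypothesis that Theorem~\ref{theorem:term-derive-sound} simply assumes. One small correction: projecting the final $\interty$ type onto a component uses \textsc{SubIntLL}/\textsc{SubIntLR}, not \textsc{SubHF}.
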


\section{Implementation And Evaluation}\label{sec:impl}
We have implemented a tool based on the above approach, called
\name{}, that synthesizes effectful test generators from user-provided
\tyName{} specifications and property.  The output program in our DSL
can be transpiled to OCaml (QCheck) or P.  Our experiments use the
OCaml transpilation to test effectful abstract datatype libraries,
serializers, transformers, and interpreters, concurrent data
structures and databases; we use the P target to test reactive
distributed system models (i.e., message-passing systems defined as a
collection of actors) by using the synthesized generator as a
property-directed scheduler that replaces P's default random one.
Notably, the same specification and synthesis strategy is used in both
cases.  \name{} consists of approximately 14K lines of OCaml code and
uses Z3~\cite{de2008z3} as its backend SMT solver.

\paragraph{Evaluation setting} \name{} takes two inputs: a target
property, expressed as an SRE, and an operator context that contains
\tyName{} specifications for the operators used in the SUT.  During
synthesis, \name{} decomposes the target property to build a generator
consistent with the traces that the property accepts, but constrained
to respect the history and future traces of any effects it
references.  Because operator types are interpreted
underapproximately, there are potentially many valid candidate
programs that are consistent with the \tyName{} specifications of the
operators they use.  Our implementation considers a union of
well-typed synthesis candidates, the number of which is bounded by the
amount of time allotted for synthesis (3 minutes in our experiments).
% Non-determinism also manifests in how recursive functions are unrolled
% - recall that in an underapproximate setting, the body of a recursive
% function need only be unrolled until it can be shown that the effects
% performed by the loop's continuation are justified by the effects
% performed within the loop body and its history trace.
Synthesized generators use $\mykw{assume}$ statements to ensure that
data dependencies among effectful operations are respected and use
$\mykw{assert}$ statements to flag property violations.
\cbnewadding{%
  Like other PBT setups that rely on precondition-style filtering, our
  evaluation adopts an \emph{angelic} reading of $\mykw{assume}$: when
  the SUT or runtime rejects a candidate trace (e.g., when the
  interpreter aborts on an ill-typed expression), we treat that run as
  uninformative and \emph{retry} with a fresh sample rather than
  counting it toward a found violation.  This matches how generators
  are typically used in practice and avoids conflating specification
  imprecision with genuine safety failures.}

Our experimental evaluation addresses four research questions:
\begin{itemize}
\item[\textbf{Q1}:] Is \name{} \emph{expressive}? Can it synthesize
  generators for a diverse set of benchmarks with interesting
  non-trivial structural and semantic safety properties?
\item[\textbf{Q2}:] Is \name{} \emph{effective}? Do synthesized
  generators enable more targeted exploration of the state
  space to witness violations of provided safety properties than
  existing techniques?
\item[\textbf{Q3}:] Is \name{} \emph{efficient}? Is it able to
  synthesize meaningful generators reasonably quickly?
  \cbnewadding{\item[\textbf{Q4}:] Is \name{} \emph{robust}? Is it
    able to synthesize meaningful generators for benchmarks even when
    provided with weaker specifications than necessary?}
\end{itemize} %

\newcolumntype{P}[1]{>{\centering\arraybackslash}p{#1}}
\newcommand{\timeout}{{\tiny Timeout}}

\begin{table}[t!]
  \renewcommand{\arraystretch}{0.8}
  % \vspace*{-.1in}
  \caption{\small Using \name{} to synthesize effectful test
    generators in OCaml. Benchmarks from prior work are annotated with
    their source: HAT~\cite{ZYDJ24}, the OCaml multicore PBT
    framework~\cite{OcamlMulticorePBT}, QuickCheck on IFC
    machines~\cite{pbt-ifc} and well-typed De Bruijn STLC
    programs~\cite{CoverageType}; as well as
    OLTP-Bench~\cite{OLTPBench}, an extensible testbed for
    benchmarking relational databases, and MonkeyDB~\cite{MonkeyDB}, a
    mock storage system for testing weak isolation levels.  The
    baselines used for comparison are taken from the original sources
    when they exist (these are labeled with $^{\dagger}$), otherwise
    we use QCheck's default random generator.  \name{} synthesized
    correct-by-construction test generators for all benchmarks within
    3 minutes; the number of generator variants synthesized is bounded
    at 3.  We set a 30-minute bound for the baselines to find a
    property violation. }
  \vspace*{-.1in} \footnotesize
%\multicolumn{2}{c||}{} &
\setlength{\tabcolsep}{0.55pt}
\begin{tabular}{c|P{5.1cm}||ccc||cc||c|ccc}
  \toprule
  \multicolumn{2}{c||}{Benchmark} & \#$\eff{op}$ & \multicolumn{2}{c||}{\#qualifier} & \multicolumn{2}{c||}{\scriptsize \# Num. Executions} & t$_\text{total}$(s) & \#evt & \#refine & \#SMT \\
  \midrule
Name & Property &   &uHAT &goal & {\scriptsize Clouseau} & {\scriptsize Baseline} &  &  \\
\midrule
\textsf{Stack}\cite{ZYDJ24} & {\scriptsize  Pushes and pops are correctly paired.} & $6$ & $12$ & $3$ & $1.0$ & $31.9$ & $0.60$ & $15$ & $4$ & $98$\\
\midrule
\textsf{Set}\cite{ZYDJ24} & {\scriptsize Membership holds for every element inserted into the set.} & $7$ & $22$ & $3$ & $1.0$ & $22.1$ & $1.28$ & $12$ & $12$ & $255$\\
\midrule
\textsf{Filesystem}\cite{ZYDJ24} & {\scriptsize A valid file path only contains non-deleted entries.} & $7$ & $67$ & $4$ & $1.3$ & $2812.1$ & $6.02$ & $13$ & $14$ & $853$\\
\midrule
\midrule
\textsf{Graph}\cite{ZYDJ24} & {\scriptsize A serialized stream of nodes and edges is restored to
  form a fully-connected graph.} & $6$ & $24$ & $4$ & $1.0$ & \timeout{} & $16.61$ & $16$ & $16$ & $2114$\\
\midrule
\textsf{NFA}\cite{ZYDJ24} & {\scriptsize An NFA reaches a final state for every string in the language it accepts.} & $8$ & $50$ & $4$ & $1.0$ & \timeout{} & $21.85$ & $26$ & $6$ & $5080$\\
\midrule
\textsf{IFCStore}\cite{pbt-ifc} & {\scriptsize A well-behaved IFC program containing a $\Code{Store}$ command never leaks a secret.} & $8$ & $37$ & $2$ & $2.8$ & $4811.7^{\dagger}$ & $1.02$ & $8$ & $8$ & $160$\\
\midrule
\textsf{IFCAdd}\cite{pbt-ifc} & {\scriptsize A well-behaved IFC program containing an $\Code{Add}$ command never leaks a secret.} & $8$ & $37$ & $2$ & $1.6$ & $3383.4^{\dagger}$ & $0.96$ & $8$ & $8$ & $154$\\
\midrule
\textsf{IFCLoad}\cite{pbt-ifc} & {\scriptsize A well-behaved IFC program containing a $\Code{Load}$ command never leaks a secret.} & $8$ & $37$ & $2$ & $15.4$ & $11293.7^{\dagger}$ & $5.76$ & $16$ & $18$ & $930$\\
\midrule
\cbnewadding{\textsf{IFCLoad}$_{\mathsf{W}}$\cite{pbt-ifc}} & \cbnewadding{{\scriptsize \textsf{IFCLoad} without search bias (cf. \autoref{sec:overview}).}} & \cbnewadding{$8$} & \cbnewadding{$33$} & \cbnewadding{$2$} & \cbnewadding{$115.4$} & \cbnewadding{$11293.7^{\dagger}$} & \cbnewadding{$4.83$} & \cbnewadding{$16$} & \cbnewadding{$18$} & \cbnewadding{$754$}\\
\midrule
\textsf{DeBruijnFO}\cite{CoverageType} & {\scriptsize An STLC interpreter correctly evaluates a
\emph{well-typed first-order} STLC program that uses a de Bruijn representation.} & $10$ & $117$ & $4$ & $1.5$ & $634.0^{\dagger}$ & $40.38$ & $18$ & $26$ & $4765$\\
\midrule
\textsf{DeBruijnHO}\cite{CoverageType} & {\scriptsize An STLC interpreter correctly evaluates a
\emph{well-typed higher-order} STLC program that uses a de Bruijn representation.} & $10$ & $118$ & $4$ & $1.4$ & \timeout{}$^{\dagger}$ & $91.73$ & $18$ & $31$ & $9034$\\
\midrule
\cbnewadding{\textsf{DeBruijnHO}$_{\mathsf{W}}$\cite{CoverageType}} & \cbnewadding{{\scriptsize \textsf{DeBruijnHO} with \emph{untyped higher-order} programs (cf. \autoref{sec:overview}).}} & \cbnewadding{$10$} & \cbnewadding{$107$} & \cbnewadding{$4$} & \cbnewadding{\timeout{}} & \cbnewadding{\timeout{}$^{\dagger}$} & \cbnewadding{$3.23$} & \cbnewadding{$7$} & \cbnewadding{$22$} & \cbnewadding{$957$}\\
\midrule
\midrule
\textsf{Shopping}\cite{MonkeyDB} & {\scriptsize All items added to a cart can be checked-out.} & $10$ & $60$ & $3$ & $1.0$ & $20.0^{\dagger}$ & $22.10$ & $17$ & $30$ & $1269$\\
\midrule
\textsf{HashTable}\cite{OcamlMulticorePBT} & {\scriptsize No updates to a concurrent hashtable are ever lost.} & $13$ & $38$ & $4$ & $1.0$ & $2.5^{\dagger}$ & $0.76$ & $6$ & $6$ & $57$\\
\midrule
\textsf{Transaction} & {\scriptsize Asynchronous read operations are logically atomic.} & $8$ & $45$ & $3$ & $1.2$ & \timeout{} & $2.89$ & $15$ & $16$ & $423$\\
\midrule
\cbnewadding{\textsf{Transaction}$_{\mathsf{W}}$} & \cbnewadding{{\scriptsize \textsf{Transaction} without read atomicity enforcement (cf. \autoref{sec:overview}).}} & \cbnewadding{$8$} & \cbnewadding{$40$} & \cbnewadding{$3$} & \cbnewadding{\timeout{}} & \cbnewadding{\timeout{}} & \cbnewadding{$2.24$} & \cbnewadding{$15$} & \cbnewadding{$15$} & \cbnewadding{$372$}\\
\midrule
\textsf{Courseware}\cite{MonkeyDB} & {\scriptsize Every student enrolled in a course exists in the
  enrollment database for that course.} & $16$ & $106$ & $3$ & $1.0$ & $57.5^{\dagger}$ & $27.71$ & $18$ & $33$ & $1479$\\
\midrule
\textsf{Twitter}\cite{MonkeyDB} & {\scriptsize Posted tweets are visible to all followers.} & $16$ & $99$ & $3$ & $1.0$ & $6.3^{\dagger}$ & $61.96$ & $24$ & $24$ & $2339$\\
\midrule
\textsf{Smallbank}\cite{OLTPBench} & {\scriptsize Account updates are strongly consistent.} & $22$ & $162$ & $3$ & $1.9$ & \timeout{} & $163.55$ & $28$ & $29$ & $10263$\\
\bottomrule
\end{tabular}
\label{tab:evaluation}
\vspace*{-.15in}
\end{table}

\begin{table}[t!]
  \renewcommand{\arraystretch}{0.8}
  % \vspace*{-.1in}
  \caption{\small Experimental results of using \name{} to synthesize
    property-directed schedulers for reactive distributed
    systems. Benchmarks from prior work are annotated with their
    source: P~\cite{DGJ+13}($^{\dagger}$),
    ModP~\cite{ModP}($^{\diamond}$) an extension of P with support for
    modules, and MessageChain~\cite{MessageChain}($^{\star}$), an
    automated verification tool for P.  We also include a real-world
    model of a two-phase commit protocol
    (\textsf{AnonReadAtomicity}$^{\square}$) used by a major cloud vendor.
    The components under test are written in P, and handler
    specifications for the SUTs components (message-passing actors)
    are given as \tyName{}s.  \name{} synthesizes a set
    of schedulers, each of which specifies a distinct scheduling
    order for messages, all consistent with provided specifications.
    We set a $3$ minute time bound for the synthesis procedure
    (t$_\text{total}$ is the average time to find a single
    controller.)  We set a bound of 30 minute time bound for the P
    baselines to generate a faulty execution.}
\vspace*{-.1in}
\footnotesize
\setlength{\tabcolsep}{1.3pt}
\begin{tabular}{c|P{4.0cm}||ccc||cc||c|ccc}
  \toprule
  \multicolumn{2}{c||}{Benchmark} & \#$\eff{op}$ & \multicolumn{2}{c||}{\#qualifier} & \multicolumn{2}{c||}{\scriptsize \# Num. Executions} & t$_\text{total}$(s) & \#evt & \#refine & \#SMT \\
  \midrule
Name & Property &   &uHAT &goal & {\scriptsize Clouseau} & {\scriptsize Baseline} &  &  \\
  \midrule
  {\scriptsize\textsf{Database}} & {\scriptsize The database maintains a Read-Your-Writes policy.} & $4$ & $15$ & $3$ & $1.0$ & $5.6$ & $0.32$ & $6$ & $6$ & $49$\\
  \midrule
  {\scriptsize\textsf{Firewall}}\cite{MessageChain} & {\scriptsize  Requests generated inside the firewall are eventually propagated to the outside.} & $5$ & $26$ & $4$ & $1.0$ & $10.0$ & $0.59$ & $10$ & $10$ & $222$\\
  \midrule
  {\scriptsize\textsf{RingLeaderElection}}\cite{MessageChain} & {\scriptsize There is always a single unique  leader.} & $3$ & $14$ & $2$ & $1.0$ & $17.7$ & $0.83$ & $8$ & $8$ & $100$\\
  \midrule
  {\scriptsize\textsf{BankServer}}\cite{DGJ+13} & {\scriptsize Withdrawals in excess of the available balance are never allowed.} & $6$ & $42$ & $1$ & $1.0$ & $2.2^{\dagger}$ & $0.17$ & $5$ & $5$ & $27$\\
  \midrule
  {\scriptsize\textsf{Simplified2PC}}\cite{DGJ+13} & {\scriptsize Transactions are atomic.} & $9$ & $32$ & $5$ & $2.1$ & $5.8^{\dagger}$ & $1.03$ & $8$ & $8$ & $88$\\
  \midrule
  {\scriptsize\textsf{HeartBeat}}\cite{DGJ+13} & {\scriptsize All available nodes are identified by a  detector.} & $7$ & $31$ & $3$ & $1.0$ & $6.0^{\dagger}$ & $1.10$ & $14$ & $14$ & $145$\\
  \midrule
  {\scriptsize\textsf{ChainReplication}}\cite{ModP} & {\scriptsize Concurrent updates are never lost.} & $7$ & $72$ & $4$ & $1.0$ & $500.0^{\dagger}$ & $19.24$ & $12$ & $169$ & $2054$\\
  \midrule
  {\scriptsize\textsf{Paxos}}\cite{ModP} & {\scriptsize Logs are correctly replicated.} & $10$ & $110$ & $2$ & $1.0$ & $667.7^{\dagger}$ & $23.70$ & $14$ & $77$ & $1763$\\
  \midrule
  {\scriptsize\textsf{Raft}} & {\scriptsize Leader election is robust to faults.} & $9$ & $39$ & $6$ & $1.0$ & \timeout{} & $26.84$ & $12$ & $78$ & $1262$\\
  \midrule
  {\scriptsize\textsf{AnonReadAtomicity}} & {\scriptsize Read Atomicity is preserved.} & $17$ & $113$ & $3$ & $1.0$ & $53.3^{\dagger}$ & $18.35$ & $16$ & $16$ & $1909$\\
\bottomrule
  \end{tabular}
\label{tab:evaluation2}
\vspace*{-.15in}
\end{table}

We have evaluated \name{} on a diverse corpus of programs\footnote{The
  supplemental material includes additional details about our
  benchmarks.} % as well as a docker image that contains the \name{}
  % source code and benchmarks.
drawn from a variety of sources (described in the captions of
\autoref{tab:evaluation} and~\autoref{tab:evaluation2}) % ; all of the
% benchmarks except for \textsf{Database} and \textsf{Raft} were
% written by others BD: This statement is incomplete (what about
% AnonReadAtomicity and Transaction?) and redundant (the previous
% sentence stated that they came from several sources), so I'm
% dropping it.
(\textbf{Q1}). Our benchmarks are grouped into four categories.
The first are data structure benchmarks (\textsf{Stack}, \textsf{Set},
\textsf{Filesystem}) where the test generator probes for violations of
structural invariants.  The second are evaluators - programs that
input serialized data, transform them into an internal AST
representation, and then interpret them; we consider evaluators for
data structures like graphs and automata (\textsf{Graph},
\textsf{NFA}), information flow control machines, and simply-typed
lambda-calculus, the latter two both described
in~\autoref{sec:overview}.  The third are concurrent and database
benchmarks (\textsf{Transaction}, \textsf{Shopping},
\textsf{Courseware}, \textsf{Twitter}, \textsf{Smallbank}) that test
properties related to consistency and synchronization.  Property
violations in any of these categories only manifest in specific
configurations (e.g., a violating test case for \textsf{NFA} involves
constructing an automata containing multiple nodes with more than two
outgoing edges).  We evaluate these three classes of benchmarks by
transpiling synthesized generators into OCaml, leveraging its QCheck
library as needed. The SUT for each of these benchmarks is known to be
buggy; in all cases we use implementations either available from prior
sources directly (if the OCaml source was available), or by directly
translating them into OCaml.  The fourth group of benchmarks shown
in~\autoref{tab:evaluation2} considers applications involving various
kinds of distributed protocols; these are executed using the P runtime
environment, by transpiling synthesized generators written in our DSL
into P's state machine modeling language.  Here as well, the protocol
being tested is known to be faulty from the sources they were adopted
from. \cbnewadding{The \textsf{IFCLoad}$_{\mathsf{W}}$,
  \textsf{DeBruijnHO}$_{\mathsf{W}}$, and
  \textsf{Transaction}$_{\mathsf{W}}$ benchmarks are variants of other
  benchmarks that use the weaker specifications described in
  \autoref{sec:overview}; the end of this section discusses these
  results in more detail (\textbf{Q4}).}

%% For example, \textsc{RingLeaderElection}
%% models a leader election algorithm.  Each message transmitted by the
%% model's actors contain a source ($\I{src}$) and destination
%% ($\I{dest}$), to allow recipient's to distinguish
%% messages from different nodes. Then, an $\eff{elect}$ handler can
%% be given the following \PAT{}: {\small\begin{align*}
%%   &\Code{n}\hasoty{tNode}{\top} \sarr \Code{m}\hasoty{tNode}{\top}
%%   \sarr \Code{ld}\hasoty{tNode}{v = \Code{m}} \sarr
%%   \\&\rg{\globalA\evparenth{\top}}{\lastA\msgB{elect}{\I{src}\;\I{dest}\;\I{leader}}{
%%       \I{src} = \Code{n} \land \I{dest} = \Code{m} \land \I{leader} =
%%       \Code{m} }}{\finalA\msgB{beLeader}{\I{leader}}{\I{leader} =
%%       \Code{m}}}
%% \end{align*}}\noindent
%% This specification captures a useful property of a leader
%% election algorithm: when a node $\Code{m}$ receive a message that elects
%% ($\eff{elect}$) itself as leader ($\I{leader} = \Code{m}$) from
%% another node $\Code{n}$, then node $\Code{m}$ can announce itself to
%% be the leader ($\eff{beLeader}$).

\autoref{tab:evaluation} divides the results of our experiments into
four categories, separated by double bars. The first measures the
complexity of benchmarks with respect to the number of distinct
effectful operators (\#$\eff{op}$) and the number of qualifiers (i.e.,
symbolic events) used in \tyName{} specifications and the property
being tested.  For the benchmarks in~\autoref{tab:evaluation}, our
results show that writing underapproximate specifications require
anywhere from $6$ - $22$ different operators and include over 100
symbolic events in our most complex examples.  The second group of
columns characterizes \name{}'s ability to discover a violation
compared to the baseline (\textbf{Q2}).  To make the comparison
objective, we execute generators in the two systems 10K times, and
count the number of runs in which a fault was uncovered.  For example,
for \textsf{Stack}, using QCheck to generate random sequences of
$\eff{push}$ and $\eff{pop}$ effects, resulted in an erroneous
execution every 31.9 times on average; in contrast, every execution of
the generator(s) synthesized by \name\ manifested the bug.  Notably,
even for benchmarks in which there is a sophisticated customized
generator available\footnote{When these custom generators are
  available, we use them as the baseline in our experiments.} (e.g.,
\textsf{IFC} or \textsf{de Bruijn}), \tyName{}-guided synthesis leads
to significantly better outcomes.  We use MonkeyDB~\cite{MonkeyDB}, a
mock storage system that can be used to find violations of database
isolation properties, as the baseline for a number of the concurrency
benchmarks.  Although it is specialized for this particular
application class, and \name{} is not, evaluation results are
comparable, with \name{} yielding generators that identify weak
isolation violations modestly more effectively than the baseline.
Similar conclusions apply to the P benchmarks shown
in~\autoref{tab:evaluation2}.  For simple benchmarks, P's default
scheduler, which performs enumerative state exploration to construct
schedules, independently of the behaviors of the actors under test or
the target property, or handwritten ones which inject additional
actors to control input message generation and prevent uninteresting
message orderings (e.g. \textsf{Simplified2PC} or \textsf{Heartbeat}),
are effective and \name{}'s performance is comparable to these.  As in
the previous set of experiments, however, for more complex benchmarks
(e.g., \textsf{ChainReplication}, \textsf{Paxos}), even manual
specialization to generate interesting test sequences is significantly
less effective than \name{}'s trace-guided synthesis approach.
Moreover, as indicated by the \textsf{Raft} numbers, a random
exploration procedure, without manually engineered guidance, is too
na\"{i}ve to find a violating execution.  It is notable that here too,
\name{}'s performance on these P benchmarks is competitive, and in
several cases, superior to handcrafted P schedulers, even though
\name{} has no built-in knowledge of P's underlying programming model.

The last group of columns in both tables provide details on the cost
of our synthesis procedure. The first column presents total synthesis
time (t$_\text{total}$), which takes anywhere from $.6$ to $165$
seconds (\textbf{Q3}); synthesis time is roughly proportional to
benchmark complexity, as reflected in the \#$\eff{op}$ and \#qualifier
columns.  The last three columns additionally analyze the
characteristics of \name{}'s synthesis procedure with respect to the
number of events in the abstract refined trace (\#evt), the number of
SMT queries (\#SMT), as well as the number of refinement steps
(\#refine); recall that the synthesis procedure uses a refinement loop
to construct a more specialized underapproximation from \tyName{}
specifications that are consistent with the provided safety property.
In general, these numbers correlate with each other, and serve as
rough proxies for benchmark complexity.  They indicate that even for
the most challenging benchmarks, \name{} is able to synthesize an
effective generator in a reasonable amount of time.
\paragraph{Case study}
To demonstrate that \name{} can be effective in real-world scenarios,
we have applied it to \textsf{AnonReadAtomicity}, a distributed
transaction system in use at a major cloud provider, where
asynchronous reads are expected to execute atomically, as discussed
in~\autoref{sec:overview}. The property to be checked requires that if
there exists a key $\Code{k}$ updated within an active transaction,
any successful read response asking its value should return the value
last written to $\Code{k}$ made by that transaction.
%% We can express a
%% violation of this property as the following SRE: %
%% {\small
%%   \begin{align*}
%%     \allA &\msgA{write}{\Code{tid}\;\Code{k}\;\Code{x}\;\Code{OK}}(\anyA\setminus\msgA{write}{\Code{tid}\;\Code{k}\;\Code{x}\;\Code{OK}})^*
%%                 \msgA{readRsp}{\iota\;\Code{tid}\;\Code{k}\;\Code{x}\;\Code{OK}}\;\allA
%% \end{align*}}%
%% \SJ{Fix....}
%% \noindent where the field $\mathit{tid}$ represents a transaction id, while
%% other fields have the same meanings as in the example from
%%\autoref{sec:intro}.
Generating a fault-inducing scenario requires (a) initiating a new
transaction with transaction id $\Code{tid}$, (b) successfully
performing a write within that transaction, and then (c) subsequently
performing a read within $\Code{tid}$ that yields a different value
than the one last written.  As the example in~\ref{sec:overview}
implies, using \tyName{} to express the behavior of $\eff{readRsp}$
enables the synthesis of a test generator that can strategically
request a new transaction to initiate triggering the intended
violation.  A version of the benchmark in which this sequence
structure is enforced manually discovers the violation in 53
executions, while \name{}'s version can uncover the violation on all
of its executions.
\cbnewadding{ \paragraph{Robustness to Specification Quality} As
  discussed in \autoref{sec:overview}, \name{}'s effectiveness depends
  on the quality of user-supplied \tyName{} specifications.  To
  quantify \name{}'s robustness to specification quality, we conducted
  an additional study using variants of the benchmarks in
  Tables~\ref{tab:evaluation} and \ref{tab:evaluation2} that keep the
  SUT and the global property fixed while weakening the \tyName{}
  specifications of individual operators provided by the SUT.
  Specifications were weakened in a similar manner to the examples
  described in \autoref{sec:overview}: the variant of the IFC
  benchmark (\textsf{IFCLoad}$_{\mathsf{W}}$), for example, eliminates
  the ``valid address'' constraint (i.e., $\I{isAddr}$) used in the
  original specification. Broadly speaking, the weakened
  specifications can be categorized as either relaxing basic
  correctness properties of the SUT or removing search biases. Both
  categories of specifications are \emph{lower-quality} because they
  provide less guidance to \name{} and thus increase the risk that it
  will synthesize generators misaligned with the SUT's intended
  behavior.}

\cbnewadding{ \autoref{tab:evaluation} includes the results of this
  experiment for the three examples from \autoref{sec:overview}
  (\textsf{Transaction}$_{\mathsf{W}}$,
  \textsf{IFCLoad}$_{\mathsf{W}}$, and
  \textsf{DeBruijnHO}$_{\mathsf{W}}$).\footnote{The experimental
    results for the full set of benchmarks are provided in the
    supplemental material.}  Under these weaker specifications,
  average synthesis time (t$_\text{total}$) improves by $18.18\%$, and
  by $78.07\%$ in the extreme case (\textsc{ChainReplication}). This
  comes at the cost of decreased testing effectiveness, however:
  $26.9\%$ (7 out of 26) of the benchmarks now time out; the remaining
  benchmarks show an elevated retry rate (\#Num. Executions),
  $1.9\times$ on average, and $7.5\times$ in the worst case
  (\textsc{IFCLoad}$_{\mathsf{W}}$). In general, weakening basic
  correctness properties significantly impacts \name{}'s
  effectiveness, as the resulting generators regularly yield
  executions that do not align with the intended behavior of the SUT:
  \textsf{DeBruijnHO}$_{\mathsf{W}}$, for example, generates a large
  number of ``useless'' untyped STLC programs that are immediately
  discarded by the type-checker. Obtaining an informative witness from
  such a generator may thus require generating many more inputs, and
  can quickly exhaust the testing time budget. In contrast, weakening
  the search bias was comparatively benign in our experiments (e.g.,
  the synthesized generator for \textsf{IFCLoad}$_{\mathsf{W}}$ still
  finds EENI violations relatively quickly), and primarily manifested
  in an increased number of retries before a bug was found.  Of
  course, since weaker specifications can lead to shorter synthesis
  times, the loss of testing efficiency may sometimes be an acceptable
  trade-off.}
% \section{Discussion}\label{sec:discussion}

\cbnewadding{ %
  \paragraph{Discussion}
  We have anecdotal evidence that the cost to write \tyName{}s is
  relatively low; the \tyName{}s for the $\textsc{HashTable}$ and
  $\textsc{Courseware}$ benchmarks in \autoref{tab:evaluation} were
  written by an undergraduate student, and those for
  $\textsc{Smallbank}$ and $\textsc{Twitter}$ by two first-year PhD
  students. None of the students had prior exposure to the \tyName{}
  specification language or \name{}'s synthesis algorithm.  Nonetheless,
  each was able to independently encode their knowledge about the SUT
  as \tyName{}s, requiring only a few person-hours to complete all the
  benchmarks.}

\cbnewadding{ %
  One potential area for future work is to provide better
  developer-facing feedback when synthesis fails. Because our
  algorithm consists of a refinement loop that maintains a set of
  candidate generators (see \autoref{fig:syn-gen}), integrating a
  refinement and debugging phase as part of the synthesizer and test
  execution pipeline would be a natural extension.
}

\section{Related Work and Conclusions}
\label{sec:related}
\paragraph{\it Underapproximate Reasoning} A number of recently
proposed logical frameworks provide program reasoning principles based
on under-, rather than over-, approximate abstractions.  Incorrectness
Logic~\cite{OHP19} and related variants~\cite{LR+22,RBD+20,ZDS23}
present compositional proof rules that enable verification of
reachability (aka incorrectness) properties of first-order imperative
programs; their primary motivation is to enable formal reasoning of
(unsound) program analysis tools such as Infer~\cite{DFLO19}.  Ideas
related to these notions have also been incorporated into type
systems~\cite{RW24,CoverageType,JY15,LPR26} for functional languages,
but none of these efforts have considered their application to the
synthesis of effectful test generators of the kind proposed in this
paper.  For example~\cite{CoverageType} uses underapproximation to
typecheck that a PBT generator is \emph{complete} with respect to the
functional inputs it generates to the SUT, while ~\cite{LZ+25}
considers an enumerative repair strategy to patch incomplete
generators so that they become complete.  In contrast, in this paper,
we show how interpreting type specifications from the lens of
underapproximation can guide a synthesis procedure to construct
bespoke effectful test generators tailored to the property of
interest, informed by the behavior of the operators provided by the
SUT.  As a mechanism to characterize underapproximate behavior, types
identify \emph{sufficient} effectful constraints that any test input
sequence produced by the generator \emph{must} satisfy based on these
specifications.
\paragraph{{\it Specializing PBT Generators}}
There has been significant prior work to make PBT implementations more
effective, by providing tools that enable reasoning and customization
of effect traces produced by a test generator.  For example,
~\citet{CP+09} describes a customizable user-level scheduler that uses
Quviq QuickCheck's support for state machine models to implement
deterministic replayable test inputs to an Erlang SUT.
Concuerror~\cite{CGS13} systematically explores process interleavings
in a concurrent Erlang program via program instrumentation and
preemption bounding.  Quickstrom~\cite{Quickstrom} uses specifications
of possible actions expressed in its Specstrom specification language,
a dialect of LTL, to dynamically choose the next input that the
generator should provide for interactive applications.  In contrast,
\name{}'s type-guided synthesis strategy yields
generators influenced by both the global property of interest and the
local \tyName{} specifications of the operations used by the SUT.
Importantly, the synthesis procedure is guided by the history and
future SREs of provided specifications to determine a semantically
meaningful ordering of effectful actions.
%% TLA+~\cite{Lam02} is a specification language based on LTL for
%% modeling finite-state distributed systems; the correctness of these
%% specifications are verified using the TLC explicit-state model
%% checker. TLA+ and its associated tooling has had notable real-world
%% impact~\cite{NR+15}.  While \name{}'s use of LTL$_f$ specifications in
%% \PAT{}s is a point of commonality with TLA+, the integration of these
%% specifications within a refinement type system, their role in driving
%% a component-based synthesis procedure, and the top-down (TLA) vs.
%% bottom-up (\name{}) exploration mechanism, differentiates \name{}'s
%% motivation and design from TLA+ and TLC in obvious ways.
\paragraph{Type and Effect Systems}
Type and effect systems that target \emph{temporal} properties on the
sequences of effects that a program may \emph{produce} is a
well-studied subject. For example, \citet{SS+04} presents a type and
effect system for reasoning about the shape of histories (i.e., finite
traces) of events embedded in a program.  \citet{KT+14} present a type
and effect system that additionally supports verification properties
of infinite traces, specified as B\"{u}chi automata.  More recently,
\citet{Temporal-Verification} have considered how to support richer
control flow structures, e.g., delimited continuations, in such an
effect system.  Closest to our work are \emph{Hoare Automata Types}
(\textsc{HAT}s)~\cite{ZYDJ24}, which integrate symbolic finite
automata into a refinement type system. \textsc{HAT}s enable reasoning
about stateful sequential programs structured as a functional core
interacting with opaque effectful libraries.  \tyName{}s revise
\textsc{HAT}s by viewing them as underapproximate specifications -
every trace accepted by the future regex in a \tyName{} is
justified by a trace accepted by its history.

\paragraph{\textbf {Conclusions}}
This paper proposes \name, a trace-guided synthesizer of effectful PBT
generators.  Its key innovations are (i) the adaptation of SREs,
expressed as types, that define traces of effectful actions that occur
before and after the execution of an expression, as underapproximate
specifications, and (ii) the synthesis of bespoke test generators
derived from these specifications.  Experimental results on a wide
range of applications show that \name{} is significantly more
effective in identifying test input sequences that falsify a property
than the existing state-of-the-art.

% \bibliographystyle{ACM-Reference-Format}
% \bibliography{bibliography}

% \section*{Acknowledgements}

%%
%% The next two lines define the bibliography style to be used, and
%% the bibliography file.

%\bibliographystyle{ACM-Reference-Format}
\bibliographystyle{plainnat}
% \bibliographystyle{plain}
% \bibliographystyle{ACM-Reference-Format}
% \citestyle{acmauthoryear}
\bibliography{bibliography}

\ifdefined\showappendix
\newpage
\appendix
\section{Outlines of Supplemental Materials}

The supplemental material is organized as follows. 
The detailed description of STLC example is provided in \autoref{sec:tech:details-overview}.
The complete set of rules for our operational semantics, basic typing, and declarative typing judgments are provided in \autoref{sec:tech:semantics}, \autoref{sec:tech:basic-typing}, and \autoref{sec:tech:typing}. The type denotation is presented in \autoref{sec:tech:denot}. Details of the auxiliary functions in our typing algorithm are given in \autoref{sec:tech:algo}. Proofs of the theorems in our paper are provided in \autoref{sec:tech:proof}. Finally, \autoref{sec:tech:bench-explain} presents the benchmark specification explanations, and \autoref{sec:tech:evaluation} offers detailed results of our experiments.
\section{Details in Overview}\label{sec:tech:details-overview}

We introduce how to use \tyName to specify the well-typedness of the De Bruijn STLC example introduced in the overview section.

\begin{figure}[th!]
  \begin{minted}[xleftmargin=5pt, numbersep=4pt, linenos = true,  escapeinside=??]{OCaml}
    type ty = Int | Arr of ty * ty
    type serialized_term = Const of int | Var of int | Abs of ty | EndAbs
                         | App | AppL | AppR
  \end{minted}
  \vspace*{-.1in}
  \caption{STLC terms in a serialized form.}
  \label{fig:stlc-syntax}
  \vspace*{-.1in}
\end{figure}

We explicitly define the syntax of the serialized STLC terms in \autoref{fig:stlc-syntax}. The definitions of constants and variables are standard, and variables
carry their de Bruijn index. However, the usual function abstraction
constructor ``$\Code{Abs\;\mykw{of}\;ty * term}$'' is now represented
by two operations, ``$\Code{Abs\;\mykw{of}\;ty}$'' and
``$\Code{EndAbs}$'', where the function body is the term that is
generated between them. Similarly, the typical application constructor
``$\Code{App\;\mykw{of}\;term * term}$'' is now captured by three
operations: ``$\Code{App}$'', ``$\Code{AppL}$'', and ``$\Code{AppR}$''.
Here, ``$\Code{App}$'' begins the application, ``$\Code{AppL}$'' marks the beginning of
the argument, and ``$\Code{AppR}$'' signals the end of the application.

An example of a serialized STLC term is shown below, where we color
the corresponding parts of the term and the serialized term for better
readability.
\begin{align*}
  \text{\textcolor{gray}{STLC term:}} \quad & \textcolor{blue}{(\lambda \Int}. \textcolor{orange}{[0]} \textcolor{blue}{)}\; \textcolor{DeepGreen}{3} \\
  \text{\textcolor{gray}{serialized STLC term:}} \quad & \eff{app}; \textcolor{blue}{\zevent{abs}{\Int}}; \textcolor{orange}{\zevent{var}{0}}; \textcolor{blue}{\eff{endAbs}}; \eff{appL}; \textcolor{DeepGreen}{\zevent{const}{3}}; \eff{appR}
\end{align*}
\noindent
We denote de Bruijn indices with ``$[...]$'' to distinguish them from
constants. The interpreter accepts a
serialized lambda term and evaluates it to a value. In the example
above, the input sequence should be reduced as $(\lambda \Int. [0])\;3
\hookrightarrow 3$.

\begin{figure}[t!]
  {\footnotesize
    \begin{align*}
      \eff{const} :~ &  x{:}\Int\sarr \uhat{\allA}{\Unit}{\msgA{const}{x}\msgAGray{ty}{\CInt}} \\
      \eff{appR} :~ & \uhat{\allA\msgC{app}\allA  \msgAGray{ty}{\Code{Arr}(t_1, t_2)}  \msgC{appL}  \allA  \msgAGray{ty}{t_1}  }{\Unit}{\msgC{appR}  \msgAGray{ty}{t_2}} \\
      \eff{endAbs} :~ & \uhat{\allA\msgAGray{depth}{d}\msgA{abs}{t_2}\msgAGray{depth}{d+1}\allA  \msgAGray{ty}{t_1}}{\Unit}{ \msgC{endAbs}\msgAGray{depth}{d}
        \msgAGray{putTy}{\Code{Arr}(t_1,t_2)}} \\
      \eff{var} :~ &  i{:}\urt{\Int}{\vnu \geq 0} \sarr \uhat{\underbrace{\allA\msgA{abs}{t}\msgAGray{depth}{d}(\anyA\setminus\msgC{endAbs})^*}_\text{type $t$ at the $d$ level is open}\underbrace{\msgAGray{depth}{d + i}}_\text{current level is $d + i$} }{\Unit}{\msgA{var}{i}\msgAGray{putTy}{t}  \allA}
    \end{align*}
  }
  \caption{\tyName{} specifications of STLC operations. } %
  \label{fig:stlc-spec}
\end{figure}

The types for STLC operations are shown in \autoref{fig:stlc-spec}. The type for $\eff{const}$ has no constraint on its prefix trace
($\allA$), but records a $\textcolor{gray}{\eff{ty}}$ event in
the trace that records the type of the output of the $\eff{const}$
expression as $\Code{Int}$. The arguments to a ghost event are either
serialized terms (like $\Code{Int}$) or ghost variables.
Type-checking a trace using ghost events involves solving constraints
over their arguments based on algebraic axiomatizations, examples of
which are given below. The history and future regex of the $\eff{appR}$ type in
\autoref{fig:stlc-spec} characterizes both the syntax and well-typing
constraints of an STLC function application. It requires that (a)
prior to evaluating a function argument ($\eff{appL}$), the type of
the abstraction is \Code{Arr(t$_1$,t$_2$)}, i.e., the abstraction
being applied is a function type ($\msgAGray{ty}{\Code{Arr}(t_1,t_2)}$); (b) after the argument is evaluated, its type is
  determined to be \Code{t$_1$} ($\msgAGray{ty}{t_1}$); and (c)
  the type of the application's result is \Code{t$_2$}
  ($\msgAGray{ty}{t_2}$).

The de Bruijn index constraint is captured via the $\eff{\textcolor{gray}{depth}}$ operation in the \tyName{}s of the $\eff{endAbs}$ and $\eff{var}$ operations. In addition to the well-typing constraint, the \tyName{} of the $\eff{endAbs}$ operation maintains the depth of nested function abstractions using $\eff{\textcolor{gray}{depth}}$. Specifically, its history regex obtains the current depth ($\msgAGray{depth}{d}$), increases it by one at the beginning of the function body, and restores it after the $\eff{endAbs}$ event.
On the other hand, a valid index $i$ in $\eff{var}$ must be derived from an open abstraction at some depth in the prefix trace. The history regex of the $\eff{var}$ operation precisely characterizes this constraint: the abstraction at depth $d$ must be open ($\allA\msgA{abs}{t}\msgAGray{depth}{d}(\anyA\setminus\msgC{endAbs})^*$), and the current depth must be $d + i$.

\section{Operational Semantics}\label{sec:tech:semantics}

The small-step operational semantics of our core language are shown in
\autoref{fig:semantics}.

\begin{figure}[t!]
    {\footnotesize
    \vspace*{-.15in}
    {\small\begin{flalign*}
      &\text{\textbf{Handler Semantics }}\ 
      \fbox{$\alpha \vDash \zevent{op}{\overline{c}} \Downarrow c$ \quad $\primop(\overline{c}) \Downarrow c$} &
      \text{\textbf{Operational Semantics }}\ 
      \fbox{$\steptr{\alpha}{e}{\alpha}{e}$}
    \end{flalign*}}
        \begin{prooftree}
        \mhypo{30mm}
        {
          $\alpha \vDash \zevent{op}{\overline{c}} \Downarrow c'$
        }
          \infer1[\textsc{\small StEfOp}]{
            \steptr{\alpha}{\eff{op}\;\overline{c}}{[\zevent{op}{\overline{c}\;c'}]}{c'}
          }
        \end{prooftree}
          \quad
          \mprooftr{35mm}{
            $\primop\ \overline{c} \Downarrow c'$
            }{StOp}
            {$\steptr{\alpha}{\primop\ \overline{c}}{[]}{c'}$}
            \quad
    \\[0.8em]
    \mprooftr{25mm}{
    $\phi\Downarrow \top$
    }{StAssume}{
    $\steptr{\alpha}{\sassume{\phi}}{[]}{()}$
    }
    \quad
    \mprooftr{25mm}{$\phi \Downarrow \top$}{StAssert}{
    $\steptr{\alpha}{\sassert{\phi}}{[]}{()}$
    }
    \quad
    \mprooftr{35mm}{$\phi \Downarrow \bot$}{StAssert}{
    $\steptr{\alpha}{\sassert{\phi}}{[]}{\Code{raise\;Error}}$
    }
    \\[.8em]
    \mprooftr{65mm}{
      $\steptr{\alpha}{e_1}{\alpha'}{e_1'}$
    }{StLetE1}{
    $\steptr{\alpha}{\zlet{y}{e_1}{e_2}}{\alpha'}{\zlet{y}{e_1'}{e_2}}$
    }\quad
    \mprooftr{35mm}{}{StChoice1}{
    $\steptr{\alpha}{(e_1 \oplus e_2)}{[]}{(e_1)}$
    }
    \\[0.8em]
    \mprooftr{55mm}{}{StLetE2}{
    $\steptr{\alpha}{\zlet{y}{v}{e_2}}{[]}{e_2[y\mapsto v]}$
    }
    \quad
    \mprooftr{35mm}{}{StChoice2}{
    $\steptr{\alpha}{(e_1 \oplus e_2)}{[]}{(e_2)}$
    }
    \\[0.8em]
    \mprooftr{40mm}{}{StAppLam}{
    $\steptr{\alpha}{\zlam{x}{s}{e_1}\ v_x}{[]}{e_1[x\mapsto v_x]}{e_2}$
    }
    \\[0.8em]
    \mprooftr{80mm}{}{StAppFix}{
    $\steptr{\alpha}{(\zfix{f}{s}{x}{s_x}{e_1})\ v_x}{[]}{(\zlam{f}{s}{e_1[x\mapsto v_x]}) \ (\zfix{f}{s}{x}{s_x}{e_1})}$
    }
}
\caption{Full Operational Semantics}
\label{fig:semantics}
\end{figure}

\newpage
\section{Basic Typing Rules}\label{sec:tech:basic-typing}

The basic typing rules of our core language and qualifiers are shown
in \autoref{fig:basic-type-rules}. We use an auxiliary function
$\Code{Ty}$ to provide a basic type for the primitives of our language,
e.g., constants, built-in operators, and data constructors.

\begin{figure}[h!]
 {\footnotesize
 {\small
\begin{flalign*}
 &\text{\textbf{Basic Typing }} & \fbox{$\Gamma \basicvdash e : s$ \quad $\Gamma \basicvdash \eff{op} : s$ \quad $\Gamma \basicvdash \primop : s$}
\end{flalign*}}
\\ \
\begin{prooftree}
\hypo{}
\infer1[\textsc{BtConst}]{
\Gamma \basicvdash c : \Code{Ty}(c)
}
\end{prooftree}
\quad
\begin{prooftree}
\hypo{ \Gamma(x) = s }
\infer1[\textsc{BtVar}]{
\Gamma \basicvdash x : s
}
\end{prooftree}
\quad
\begin{prooftree}
\hypo{ \Gamma, x{:}b \basicvdash e : s }
\infer1[\textsc{BtFun}]{
\Gamma \basicvdash \zlam{x}{b}{e} : b\sarr s
}
\end{prooftree}
\quad
\begin{prooftree}
  \hypo{ \Gamma, f{:}b\sarr s, x{:}b \basicvdash e : s }
  \infer1[\textsc{BtFix}]{
  \Gamma \basicvdash \zfix{f}{b\sarr s}{x}{b}{e} : b\sarr s
  }
\end{prooftree}
\\[0.8em]
\begin{prooftree}
  \hypo{\Code{Ty}(\eff{op}) = s}
  \infer1[\textsc{BtEfOp}]{
  \Gamma \basicvdash \eff{op} : s
  }
  \end{prooftree}
\quad
\begin{prooftree}
  \hypo{\Code{Ty}(\primop) = s}
  \infer1[\textsc{BtPrimOp}]{
  \Gamma \basicvdash \primop : s
  }
\end{prooftree}
\quad
\begin{prooftree}
  \hypo{\Gamma \basicvdash e_1 : s
  \quad \Gamma \basicvdash e_2 : s}
  \infer1[\textsc{BtChoice}]{
  \Gamma \basicvdash e_1 \oplus e_2 : s
  }
\end{prooftree}
\\[0.8em]
\begin{prooftree}
  \hypo{\Gamma \basicvdash \eff{op}: \overline{b_i}\sarr b \quad \forall i.\Gamma \basicvdash c_i : b_i}
  \infer1[\textsc{BtEfOpApp}]{
  \Gamma \basicvdash \eff{op}\;\overline{c_i} : b
  }
  \end{prooftree}
\quad
\begin{prooftree}
\hypo{\Gamma \basicvdash \primop: \overline{s_i}\sarr s \quad \forall i.\Gamma \basicvdash v_i : s_i}
\infer1[\textsc{BtPrimOpApp}]{
\Gamma \basicvdash \primop\,\overline{v_i} : s
}
\end{prooftree}
\\[0.8em]
\begin{prooftree}
\hypo{\Gamma \basicvdash \phi : \Code{bool}}
\infer1[\textsc{BtAssume}]{
\Gamma \basicvdash \sassume{\phi} : \Unit
}
\end{prooftree}
\quad
\begin{prooftree}
\hypo{\Gamma \basicvdash \phi : \Code{bool}}
\infer1[\textsc{BtAssert}]{
\Gamma \basicvdash \sassert{\phi} : \Unit
}
\end{prooftree}
\\[0.8em]
\begin{prooftree}
  \hypo{\Gamma\basicvdash e : s \quad \Gamma,y{:}s \basicvdash e' : s}
  \infer1[\textsc{BtLetE}]{
  \Gamma \basicvdash \zlet{y}{e}{e'} : s
  }
\end{prooftree}
  \quad
\begin{prooftree}
  \hypo{\Gamma \basicvdash v_1 : s\sarr s' \quad \Gamma \basicvdash v_2 : s}
  \infer1[\textsc{BtApp}]{
  \Gamma \basicvdash v_1\;v_2 : \Unit
  }
\end{prooftree}
}
{\footnotesize
 {\small
\begin{flalign*}
 &\text{\textbf{Basic Qualifier Typing }} & \fbox{$\Gamma \basicvdash \phi: s$}
\end{flalign*}}
\\ \
\begin{prooftree}
\hypo{\Code{Ty}(c) = s }
\infer1[\textsc{BtLitConst}]{
\Gamma \basicvdash c : s
}
\end{prooftree}
\quad
\begin{prooftree}
\hypo{ \Gamma(x) = s }
\infer1[\textsc{BtLitVar}]{
\Gamma \basicvdash x : s
}
\end{prooftree}
\quad
\begin{prooftree}
\hypo{ }
\infer1[\textsc{BtTop}]{
\Gamma \basicvdash \top : \Code{bool}
}
\end{prooftree}
\quad
\begin{prooftree}
\hypo{ }
\infer1[\textsc{BtBot}]{
\Gamma \basicvdash \bot : \Code{bool}
}
\end{prooftree}
\\ \ \\ \ \\
\begin{prooftree}
\hypo{\Code{Ty}(\primop) = \overline{s_i}{\sarr}s \quad \forall i. \Gamma \basicvdash l_i : s_i }
\infer1[\textsc{BtLitOp}]{
\Gamma \basicvdash \primop\,\overline{l_i} : s
}
\end{prooftree}
\quad
\begin{prooftree}
\hypo{
\Gamma \basicvdash \phi : \Code{bool} }
\infer1[\textsc{BtNeg}]{
\Gamma \basicvdash \neg \phi : \Code{bool}
}
\end{prooftree}
\\ \ \\ \ \\
\begin{prooftree}
\hypo{
\Gamma \basicvdash \phi_1 : \Code{bool} \quad
\Gamma \basicvdash \phi_2 : \Code{bool} }
\infer1[\textsc{BtAnd}]{
\Gamma \basicvdash \phi_1 \land \phi_2 : \Code{bool}
}
\end{prooftree}
\quad
\begin{prooftree}
\hypo{
\Gamma \basicvdash \phi_1 : \Code{bool} \quad
\Gamma \basicvdash \phi_2 : \Code{bool} }
\infer1[\textsc{BtOr}]{
\Gamma \basicvdash \phi_1 \lor \phi_2 : \Code{bool}
}
\end{prooftree}
\quad
\begin{prooftree}
\hypo{
\Gamma, x{:}b \basicvdash \phi : \Code{bool} }
\infer1[\textsc{BtForall}]{
\Gamma \basicvdash \forall x{:}b. \phi : \Code{bool}
}
\end{prooftree}
}
\caption{Basic Typing Rules and Qualifier Typing Rules}
    \label{fig:basic-type-rules}
\end{figure}

\newpage
\section{Declarative Typing Rules}\label{sec:tech:typing}

\begin{figure}[!t]
{\footnotesize
{\small\begin{flalign*}
 &\text{\textbf{Type Erasure}} & \fbox{$\eraserf{t} \quad \eraserf{\tau} \quad \eraserf{\Gamma}$}
\end{flalign*}}
\vspace{-1em}
\begin{alignat*}{4}
    \eraserf{\urt{b}{\phi}} &\doteq b \quad&
    \eraserf{x{:}t \sarr \tau} &\doteq \eraserf{t}\sarr\eraserf{\tau} \quad&
    \eraserf{x{:}b \garr t} &\doteq \eraserf{t} \quad&
    \eraserf{\uhat{H}{x{:}t}{F}} &\doteq \eraserf{t}
    \\\eraserf{ \tau_1 \interty \tau_2 } &\doteq \eraserf{\tau_1} \quad&
    \eraserf{\emptyset} &\doteq \emptyset \quad&
    \eraserf{x{:}t, \Gamma} &\doteq x{:}\eraserf{t}, \eraserf{\Gamma} \quad& \quad&
\end{alignat*}
{\small\begin{flalign*}
  &\text{\textbf{Ghost Event Erasure}} & \fbox{$\eraseGhost{\alpha}$}
 \end{flalign*}}
 \vspace{-1em}
 \begin{alignat*}{4}
     \eraseGhost{[\;]} &\doteq [\;] \quad&
     \eraseGhost{\zevent{op}{\overline{c}}\cons \alpha} &\doteq \zevent{op}{\overline{c}}\cons \eraseGhost{\alpha} \quad&
     \eraseGhost{\zeventGray{op}{\overline{c}}\cons \alpha} &\doteq \eraseGhost{\alpha} \quad&
     \eraseGhost{\alpha_1 \listconcat \alpha_2} &\doteq \eraseGhost{\alpha_1} \listconcat \eraseGhost{\alpha_2}
 \end{alignat*}
{\small
\begin{flalign*}
 &\text{\textbf{Well-formedness}} & \fbox{$\Gamma \wellfoundedvdash A \quad \Gamma \wellfoundedvdash \tau \quad \Gamma \wellfoundedvdash t$}
\end{flalign*}
}
\\ \
\begin{prooftree}
\hypo{
\S{Ty}(\eff{op}) = \overline{x_i{:}b_i}\sarr b \quad  \eraserf{\Gamma}, \overline{x_i{:}b_i}, \vnu{:}b \basicvdash \phi : \Code{bool}
}
\infer1[\textsc{WfEvent}]{
\Gamma \wellfoundedvdash \msgB{op}{\overline{x_i\;\vnu}}{\phi}
}
\end{prooftree}
\quad
\begin{prooftree}
\hypo{
\Gamma \wellfoundedvdash A_1 \quad \Gamma \wellfoundedvdash A_2
}
\infer1[\textsc{WfOr}]{
\Gamma \wellfoundedvdash A_1 \lor A_2
}
\end{prooftree}
\\[.8em]
 \begin{prooftree}
\hypo{
\Gamma \wellfoundedvdash A_1 \quad \Gamma \wellfoundedvdash A_2
}
\infer1[\textsc{WfConcat}]{
\Gamma \wellfoundedvdash A_1\seqA A_2
}
\end{prooftree}
\quad
\begin{prooftree}
\hypo{}
\infer1[\textsc{WfEmpty}]{
\Gamma \wellfoundedvdash \emptyset
}
\end{prooftree}
\quad
\begin{prooftree}
\hypo{}
\infer1[\textsc{WfEpsilon}]{
\Gamma \wellfoundedvdash \epsilon
}
\end{prooftree}
\quad
\begin{prooftree}
\hypo{\Gamma \wellfoundedvdash A}
\infer1[\textsc{WfStar}]{
\Gamma \wellfoundedvdash A^*
}
\end{prooftree}
\\[.8em]
\begin{prooftree}
\hypo{
\eraserf{\Gamma}, \vnu{:}b \basicvdash \phi : \Code{bool}
}
\infer1[\textsc{WfPBase}]{
\Gamma \wellfoundedvdash \urt{b}{\phi}
}
\end{prooftree}
\quad
\begin{prooftree}
\hypo{
\Gamma \wellfoundedvdash t_x
}
\hypo{
\Gamma, x{:}\eraserf{t_x} \wellfoundedvdash t
}
\infer2[\textsc{WfPArr}]{
\Gamma \wellfoundedvdash x{:}t_x \sarr t
}
\end{prooftree}
\\[.8em]
\begin{prooftree}
\hypo{
\parbox{50mm}{\center
$\Gamma,x{:}t \wellfoundedvdash H$ \quad
$\Gamma \wellfoundedvdash t$ \quad
$\Gamma,x{:}t \wellfoundedvdash F$
}
}
\infer1[\textsc{WfHF}]{
\Gamma \wellfoundedvdash \uhat{H}{x{:}t}{F}
}
\end{prooftree}
\quad
\begin{prooftree}
  \hypo{
    \Gamma, \effGray{op}{:}s \wellfoundedvdash t
  }
  \infer1[\textsc{WfGEvent}]{
  \Gamma \wellfoundedvdash \effGray{op}:s\garr t
  }
  \end{prooftree}
\\[.8em]
\begin{prooftree}
\hypo{
\Gamma \wellfoundedvdash \tau
}
\hypo{
\Gamma, x{:}\eraserf{t_x} \wellfoundedvdash \tau
}
\infer2[\textsc{WfEfArr}]{
\Gamma \wellfoundedvdash x{:}t_x \sarr \tau
}
\end{prooftree}
\quad
\begin{prooftree}
\hypo{
\Gamma \wellfoundedvdash \tau
}
\hypo{
\Gamma, x{:}b \wellfoundedvdash t
}
\infer2[\textsc{WfGArr}]{
\Gamma \wellfoundedvdash x{:}b \garr t
}
\end{prooftree}
\quad
\begin{prooftree}
\hypo{
\parbox{30mm}{\center
  $\Gamma \wellfoundedvdash \tau_1$\quad
  $\Gamma \wellfoundedvdash \tau_2$ \quad
  $\eraserf{\tau_1} = \eraserf{\tau_2}$
}
}
\infer1[\textsc{WFInter}]{
\Gamma \wellfoundedvdash \tau_1 \interty \tau_2
}
\end{prooftree}
}
\caption{Full set of well-formedness typing rules.}
\label{fig:full-wf-rules}
\end{figure}

The full set of rules for our auxiliary typing relations are shown in
\autoref{fig:full-wf-rules} and \autoref{fig:full-sub-rules}. The full
set of declarative typing rules are shown in
\autoref{fig:full-typing-rules}. We elide the basic typing relation
({\small$\emptyset \basicvdash e : s$}) in the premises of the rules
in \autoref{fig:full-typing-rules}; all of these rules assume any
terms they reference have a basic type.

\begin{figure}[!t]
{\footnotesize
{\small
\begin{flalign*}
&\text{\textbf{Automata Inclusion }} & \fbox{$\Gamma \vdash A \subseteq A$}\quad
&\text{\textbf{Subtyping }} & \fbox{$\Gamma \vdash t <: t \quad \Gamma \vdash \tau <: \tau$}
\end{flalign*}
}
\\ \
\begin{prooftree}
\hypo{
\forall \sigma_2 \in  \denotation{\Gamma}.\exists \sigma_1 \in \denotation{\Gamma}.\denot{\sigma_2(A_2)} \subseteq \denot{\sigma_1(A_1)}
}
\infer1[\textsc{SubAutomata}]{
\Gamma \vdash A_1 \subseteq A_2
}
\end{prooftree}
\quad
\begin{prooftree}
\hypo{
\parbox{50mm}{\center
  $\Gamma \vdash t_1 <: t_2$ \quad
$\Gamma,x{:}t_1 \vdash F_2 \subseteq F_1$ \quad
$\Gamma,x{:}t_1 \vdash H_1 \subseteq H_2$
}
}
\infer1[\textsc{SubHF}]{
\Gamma \vdash \uhat{H_1}{x{:}t_1}{F_1} <: \uhat{H_2}{x{:}t_2}{F_2}
}
\end{prooftree}
\\[.8em]
\begin{prooftree}
\hypo{}
\infer1[\textsc{SubIntLL}]{
\Gamma \vdash \tau_1 \interty \tau_2 <: \tau_1
}
\end{prooftree}
\quad
\begin{prooftree}
\hypo{}
\infer1[\textsc{SubIntLR}]{
\Gamma \vdash \tau_1 \interty \tau_2 <: \tau_2
}
\end{prooftree}
\quad
\begin{prooftree}
\hypo{
\parbox{33mm}{\center
  $\Gamma \vdash \tau <: \tau_1 $ \quad
  $\Gamma \vdash \tau <: \tau_2 $
}
}
\infer1[\textsc{SubIntR}]{
\Gamma \vdash \tau <: \tau_1 \interty \tau_2
}
\end{prooftree}
\\[.8em]
\begin{prooftree}
\hypo{
\parbox{30mm}{\center
$\Gamma \vdash t_2 <: t_1$ \quad
$\Gamma, x{:}t_2 \vdash \tau_1 <: \tau_2$
}
}
\infer1[\textsc{SubArr}]{
\Gamma \vdash x{:}t_1\sarr \tau_1 <: x{:}t_2\sarr \tau_2
}
\end{prooftree}
\quad
\begin{prooftree}
\hypo{
\Gamma, x{:}\urt{b}{\top} \vdash t_1 <: t_2
}
\infer1[\textsc{SubGRight}]{
\Gamma \vdash t_1 <: x{:}b\garr t_2
}
\end{prooftree}
\quad
\begin{prooftree}
\hypo{
\parbox{30mm}{\center
  $\Gamma \wellfoundedvdash t[x\mapsto v]$
}
}
\infer1[\textsc{SubG}]{
\Gamma \vdash x{:}b\garr t <: t[x\mapsto v]
}
\end{prooftree}
\\[.8em]
\begin{prooftree}
  \hypo{
    \Gamma \vdash t_1 <: t_2}
  \infer1[\textsc{SubGEvent}]{
  \Gamma \vdash \effGray{op}:s\garr t_1 <: \effGray{op}:s\garr t_2
  }
  \end{prooftree}
\quad
\begin{prooftree}
\hypo{
\forall \sigma_2 \in\denot{\Gamma}. \exists \sigma_1 \in\denot{\Gamma}. \sigma_1(\phi_2) \implies \sigma_2(\phi_2)
}
\infer1[\textsc{SubPBase}]{
\Gamma \vdash \urt{b}{\phi_1} <: \urt{b}{\phi_2}
}
\end{prooftree}
\\[.8em]
\begin{prooftree}
\hypo{
\parbox{45mm}{\center
$\Gamma \vdash t_2 <: t_1$ \quad
$\Gamma, x{:}t_2 \vdash \tau_1 <: \tau_2$
}
}
\infer1[\textsc{SubPArr}]{
\Gamma \vdash x{:}t_1\sarr \tau_1 <: x{:}t_2\sarr \tau_2
}
\end{prooftree}
}
\caption{Full set of subtyping rules.}
\label{fig:full-sub-rules}
\end{figure}

\begin{figure}[!t]
  {\footnotesize
  {\footnotesize
  \begin{flalign*}
   &\text{\textbf{Auxiliary Typing}} & & \fbox{$\Gamma \wfvdash \tau \quad \Gamma \vdash A \subseteq A \quad \Gamma \vdash \tau <: \tau$} &\text{\textbf{Typing}} &  & \fbox{$\Gamma\vdash \eff{op}: t \quad \Gamma\vdash v : t \quad \Gamma \vdash e: \tau$}
  \end{flalign*}
  }
  \\ \
  \mprooftr{20mm}
  {$\Gamma\vdash v : t$}
  {TRet}
  {$\Gamma\vdash v : \uhat{H}{t}{\epsilon}$}
  \quad
  \mprooftr{15mm}
  {$\Delta(\eff{op}) = t$}
  {TOpCtx}
  {$\Gamma\vdash \eff{op} : t$}
  \quad
  \mprooftr{58mm}
  {
  $\Gamma \vdash \eff{op} : \overline{x_i{:}t_i}\sarr \uhat{H}{t}{\msgA{op}{\overline{x_i}} \seqA F}$ \quad
  $\forall i. \;\Gamma \vdash v_i : t_i$}
  {TEffOp}
  {$\Gamma\vdash \eff{op}{\;\overline{v_i}} : \uhat{H}{t}{\msgA{op}{\overline{v_i}} \seqA F}$}
  \\[.8em]
  \mprooftr{35mm}
{
$\Gamma\vdash e_1 : \tau_1$ \quad $\Gamma\vdash e_2 : \tau_2$}
{TChoice}
{$\Gamma\vdash e_1 \oplus e_2 : \tau_1 \interty \tau_2$}
\quad
\mprooftr{66mm}{
  $\Gamma\vdash e_1 : \uhat{H}{x{:}t_x}{F_1}$ \quad
  $\Gamma, x{:}t_x\vdash e_2 : \uhat{H \seqA F_1}{t}{F_2}$}
{TLet}
{
  $\Gamma\vdash \zlet{x}{e_1}{e_2} : \uhat{H}{t}{F_1 \seqA F_2}$
}
\\[.8em]
\mprooftr{22mm}{
 $\Gamma, x{:}t_x\vdash e : \tau$
}{TFun}{
 $\Gamma\vdash \zzlam{x}{e} : x{:}t_x\sarr \tau$}
\
\mprooftr{47mm}{
}{TFixBase}{
 $\Gamma\vdash \zzfix{f}{x}{e} : \overline{x{:}s}\sarr\uhat{H}{\urt{b}{\bot}}{\emptyset}$}
\
\mprooftr{27mm}{
 $\Gamma\vdash \zzfix{f}{x}{e} : t$ \quad
 $\Gamma, f{:}t\vdash \zzlam{x}{e} : t'$
}{TFixInd}{
 $\Gamma\vdash \zzfix{f}{x}{e} : t'$}
 \\[.8em]
\mprooftr{25mm}
{$\eraserf{\Gamma} \basicvdash c : b$}
{TConst}
{$\Gamma \vdash c : \urt{b}{\vnu = c}$}
\quad
\mprooftr{25mm}
{$\eraserf{\Gamma} \basicvdash x : b$}
{TVar}
{$\Gamma \vdash x : \urt{b}{\vnu = x}$}
\quad
   \mprooftr{30mm}{
   $\Gamma v_1 : x{:}t_2 \sarr \tau$ \quad
   $\Gamma \vdash v_2 : t_2$
  }{TAppEff}{
   $\Gamma\vdash v_1 v_2 : \tau[x\mapsto v_2]$}
\\[.8em]
\mprooftr{52mm}
{
$\Gamma \vdash \primop : \overline{y{:}t_i}\sarr t$ \quad
$\forall i. \Gamma \vdash v_i : t_i$
}
{TOpApp}
{$\Gamma\vdash \primop\ \overline{v} : t$}
\quad
\mprooftr{17mm}
{$\Delta(\primop) = t$}
{TPureOp}
{$\Gamma \vdash \primop : t$}
\\[.8em]
\mprooftr{18mm}
{$\Gamma \vdash v : t$ \quad
$\Gamma \vdash t <: t'$}
{TPureSub}
{$\Gamma \vdash v : t'$}
\quad
\mprooftr{18mm}
{$\Gamma \vdash e : \tau$ \quad
$\Gamma \vdash \tau <: \tau'$}
{TSub}
{$\Gamma \vdash e : \tau'$}
\quad
\mprooftr{38mm}
{$\Gamma \vdash \eff{op} : \overline{x{:}t_x}\sarr\uhat{H}{t}{F}$ \quad
$\Gamma,\overline{x{:}t_x} \vdash H' \subseteq H$
\quad
$\Gamma,\overline{x{:}t_x} \not\vdash H' \subseteq \emptyset$}
{TOpHis}
{$\Gamma\vdash \eff{op} : \overline{x{:}t_x}\sarr\uhat{H'}{t}{F}$}
}
\caption{Full set of typing rules.}
\label{fig:full-typing-rules}
\end{figure}
\newpage
\section{Type Denotation}\label{sec:tech:denot}

\begin{figure}[t!]
\footnotesize{
{\small
\begin{flalign*}
 &\text{\textbf{Well-Formed Event and Trace}} & \fbox{$\wfvdash m \quad \wfvdash \alpha$}
\end{flalign*}
\begin{prooftree}
\hypo{
\emptyset \basicvdash \eff{op} : \overline{b_i}\sarr b \quad
\emptyset \basicvdash c' : b \quad
\forall i. \emptyset \basicvdash c_i : b_i
}
\infer1[\textsc{WfEvent}]{
    \wfvdash \zevent{op}{\overline{c_i}\;c'}
}
\end{prooftree}
\quad
\begin{prooftree}
\hypo{}
\infer1[\textsc{WfNil}]{
    \wfvdash []
}
\end{prooftree}
\quad
\begin{prooftree}
\hypo{
\wfvdash m \quad
\wfvdash \alpha
}
\infer1[\textsc{WfCons}]{
    \wfvdash m \cons \alpha
}
\end{prooftree}
\begin{flalign*}
 &\text{\textbf{Trace Language}} & \fbox{$\denot{A} \in \mathcal{P}(\alpha)$}
\end{flalign*}}
\vspace{-1em}
\begin{flalign*}
\denot{\emptyset} &\doteq \emptyset
    \quad&
\denot{\epsilon} &\doteq \{\; [] \;\}
\\
\denot{\msgB{op}{\overline{x}}{\phi}} &\doteq \{ [\zevent{op}{\overline{c}}] ~|~ \wfvdash [\zevent{op}{\overline{c}}] \land \phi\msubst{x}{c} \}
    \quad&
\denot{A \lorA A'} &\doteq \denot{A} \lor \denot{A'}
\\
\denot{A \seqA A'} &\doteq \{\alpha_1 \listconcat \alpha_2 ~|~ \alpha_1 \in \denot{A} \land \alpha_2 \in \denot{A'} \}
\quad&
\denot{A^*} &\doteq \{ \alpha_1 \listconcat \alpha_2 \listconcat \cdots \listconcat \alpha_n ~|~ \alpha_i \in \denot{A} \land n \geq 0 \}
\end{flalign*}
\vspace{-1.5em}
{\small
\begin{flalign*}
  &\text{\textbf{Type Denotation}}
  & \fbox{$\denot{t} \in \mathcal{P}(c)
    \quad \denot{\tau} \in \mathcal{P}(e)$}
\end{flalign*}}
\vspace{-1.5em}
\begin{align*}
  &\denot{ \urt{b}{\phi} } &&\doteq \{ e ~|~ \emptyset
                                       \basicvdash e : b \land
                                       \forall v. \phi[\nu\mapsto v] \impl \msteptr{[\;]}{e}{[\;]}{v}  \}
  \\ &\denot{ x{:}\urt{b}{\phi} \sarr t } &&\doteq \{ e ~|~ \emptyset \basicvdash e : b\sarr\eraserf{t} \land
                                           \forall v.\, \phi[\nu\mapsto v] \implies e\ v \in  \denot{ t[x \mapsto v ] } \}
  \\ &\denot{ x{:}\urt{b}{\phi}\sarr\tau } &&\doteq \{ e ~|~ \emptyset \basicvdash e : b\sarr\eraserf{\tau} \land
                                           \forall v.\, \phi[\nu\mapsto v] \implies e\ v \in  \denot{ \tau[x\mapsto v ] } \}
  \\ &\denot{ x{:}t_x \sarr t } &&\doteq \{ e ~|~ \emptyset \basicvdash e : \eraserf{t_x\sarr t} \land
                                           \forall v_x \in \denot{t_x}.\, e\ v_x \in  \denot{ t[x \mapsto v_x ] } \} \qquad \text{(if $t_x$ is a function type)}
  \\ &\denot{ x{:}t_x\sarr\tau } &&\doteq \{ e ~|~ \emptyset \basicvdash e : \eraserf{t_x\sarr\tau} \land
                                           \forall v_x \in \denot{t_x}.\, e\ v_x \in  \denot{ \tau[x \mapsto v_x ] } \} \qquad \text{(if $t_x$ is a function type)}
  \\ &\denot{ x{:}b\garr t } &&\doteq \{ e ~|~ 
    \emptyset \basicvdash e : \eraserf{t} \land 
    \forall c{:}b. e \in  \denot{ t[x\mapsto c ] } \}
  \\ &\denot{ \effGray{op}{:}s\garr t } &&\doteq \{ e ~|~ 
    \emptyset \basicvdash e : \eraserf{t} \land 
    e\in\denot{ t } \}
  \\[-.8em] &\denot{\uhat{H}{x{:}\urt{b}{\phi}}{F}} &&\doteq \{e ~|~ \emptyset \basicvdash e : b \land \forall v{:}b. \phi[x\mapsto v] \impl \forall \alpha_f \in \denot{F[x\mapsto v]}. \exists \alpha_h \in\denot{H[x\mapsto v]}. \msteptr{\eraseGhost{\alpha_h}}{e}{\eraseGhost{\alpha_f}}{v} \}
  \\[-.4em] &\denot{\uhat{H}{x{:}t}{F}} &&\doteq \{e ~|~ \emptyset \basicvdash e : \eraserf{t} \land \forall v\in\denot{t}. \forall \alpha_f \in \denot{F}. \exists \alpha_h \in\denot{H}. \msteptr{\eraseGhost{\alpha_h}}{e}{\eraseGhost{\alpha_f}}{v} \}
  \\ &\denot{ \tau_1 \interty \tau_2 } &&\doteq \denot{\tau_1} \cap \denot{\tau_2}
\end{align*}
\vspace{-1.5em}
{\small
\begin{flalign*}
  &\text{\textbf{Type Context Denotation}}
  & \fbox{$\denot{\Gamma} \in \mathcal{P}(\sigma)$}
\end{flalign*}}
\vspace{-1.2em}
\begin{flalign*}
    &\denot{ \emptyset } \doteq \{ \emptyset \} \qquad \denot{ x{:}\urt{b}{\phi} } \doteq  \{ \sigma[x\mapsto v] ~|~ \phi[\vnu \mapsto v], \sigma \in \denot{\Gamma[x\mapsto v]} \}
    &
    &\denot{ x{:}t, \Gamma } \doteq \denot{ \Gamma } \quad \text{(if $t$ is not a base type)}
\end{flalign*}
}
\vspace{-0.4cm}
\caption{Type denotations in $\DSL{}$}
\label{fig:type-denotation}
\vspace{-.4cm}
\end{figure}
\newpage
\section{Auxiliary Functions for Synthesis}\label{sec:tech:algo}

This section describes three auxiliary functions used for generator
synthesis. The first of these, $\normPlan$, converts a SRE formula
into a set of abstract traces. The second, $\termDerive$, generates test generators from a set of abstract traces. The third,
$\S{termTrace}$, generates a straightline program from a single refined abstract trace.

\begin{algorithm}[t!]
  \Procedure{$\normPlan(A) := $}{
  $A \leftarrow \Code{Minimize}(A)$\;
  \Match{$A$}{
    \lCase{$\emptyset$}{ \Return{$\emptyset$} }
    \lCase{$\epsilon$}{ \Return{$\{\epsilon\}$} }
    \lCase{$\msg{op}{\phi}$}{ \Return{$\{\msg{op}{\phi}^\bot\}$} }
    \lCase{$A_1 \seqA A_2$}{ \Return{$\{ \pi_1 \seqA \pi_2 ~|~ \pi_1 \in \normPlan(A_1) \land \pi_2 \in \normPlan(A_2) \}$}}
    \lCase{$A_1 \lorA A_2$}{ \Return{$\normPlan(A_1) \cup \normPlan(A_2)$}}
    \Case{$A^*$}{
      \Match{$\normPlan(A)$}{
        \lCase{$\emptyset$}{ \Return{$\emptyset$} }
        \lCase{$\bigcup \overline{\msg{op}{\phi}}$}{ \Return{$\{[\overline{\msg{op}{\phi}^\bot}]^*\}$} }
        \lCase{$\Pi$}{ \Return{$\{ \pi_1 \listconcat \pi_2 \listconcat \cdots \listconcat \pi_n ~|~ \pi_i \in \Pi \}$} }
      }
    }
  }
  }
  \caption{Abstract Trace Normalization}
  \label{algo:norm}
\end{algorithm}

\paragraph{Normalization}
The SRE normalization function $\normPlan{}$, shown in
\autoref{algo:norm}, first perform a standard symbolic automaton minimization step to remove intersections and complements, and then recursively translates an input SRE into a set of
abstract traces. It divides $A \lorA A'$ into multiple abstract
traces, and leaves the Kleene star over multiple singleton events
unchanged.

\begin{lemma}\label{lemma:norm-sound}[Normalization is sound]
  The normalized result has the same denotation as the input SRE, that
  is, for all SRE $A$ and set of traces $\{\overline{\pi}\}$, {\small
    $\denot{A} = \bigcup_{i} \denot{\pi_i}$}
\end{lemma}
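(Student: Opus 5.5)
I will prove the equation $\denot{A} = \bigcup_{\pi \in \normPlan(A)} \denot{\pi}$ by well-founded induction following the recursive structure of \autoref{algo:norm}. First I fix the denotation of abstract traces. The $\denot{\cdot}$ of an abstract trace is defined exactly like that of the corresponding SRE, and the boolean tag $b$ on $\msg{op}{\phi}^b$ is ignored. So $\denot{\msg{op}{\phi}^b} = \denot{\msg{op}{\phi}}$, concatenation is interpreted as language concatenation, and $\denot{[\overline{\msg{op}{\phi}^b}]^*} = \denot{(\bigcup \overline{\msg{op}{\phi}})^*}$.

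The first step is to discharge the call to $\Code{Minimize}$. I will take as given that standard symbolic-automaton minimization is language preserving, i.e.\ $\denot{\Code{Minimize}(A)} = \denot{A}$. The result is then re-expressed as an SRE built only from $\emptyset,\epsilon$, symbolic events, $\lorA$, $\seqA$ and ${}^*$, with no $\landA$ or $\setminus$. This is the classical state-elimination (Kleene) construction lifted to SFAs, whose transitions are labeled by predicates. Because the algorithm re-minimizes at each recursive call, termination and the induction measure need care. I will induct on the syntactic size of the minimized, intersection-free regex. Minimizing a subterm of an intersection-free regex yields an equivalent regex, and I require that it be no larger; if that fails, I instead assume minimization is idempotent on regexes already in normal form. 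I expect this to be the main obstacle, since the paper only describes minimization informally. I will state it as an explicit assumption about $\Code{Minimize}$ rather than prove it.

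The structural cases are then routine. $\emptyset$ and $\epsilon$ are immediate. For a symbolic event, the singleton $\{\msg{op}{\phi}^\bot\}$ has the same denotation because tags are ignored. For $A_1 \lorA A_2$, union distributes over the two induction hypotheses. For $A_1 \seqA A_2$, I use that concatenation distributes over unions: $\denot{A_1}\cdot\denot{A_2} = (\bigcup_i \denot{\pi_i})\cdot(\bigcup_j \denot{\pi'_j}) = \bigcup_{i,j}\denot{\pi_i \seqA \pi'_j}$.

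For $A^*$ there are three subcases, each using the induction hypothesis $\denot{A} = \bigcup_{\pi\in\Pi}\denot{\pi}$. If $\Pi=\emptyset$, then $\denot{A}=\emptyset$. Formally $\denot{\emptyset^*}=\{[]\}$, so here I will note that the returned value must be read as $\{\epsilon\}$, or that minimization never leaves $\emptyset^*$; I will fix this small discrepancy in the statement. If $\Pi$ is a set of single events, then $\denot{A^*} = (\bigcup \denot{\msg{op}{\phi}})^* = \denot{[\overline{\msg{op}{\phi}^\bot}]^*}$. In the general case I show $\denot{A^*} = \bigcup_{n\ge 0}\bigcup_{\pi_1,\dots,\pi_n\in\Pi}\denot{\pi_1\listconcat\cdots\listconcat\pi_n}$. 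I prove this by unfolding the definition of the star denotation and distributing concatenation over unions, which matches the returned (possibly infinite) set of abstract traces. Both inclusions then follow, which completes the induction.
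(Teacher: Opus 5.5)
The paper states this lemma without proof; it is not among the results proved in the appendix, so there is no argument of the authors to compare against. Your structural induction over the cases of Algorithm~\ref{algo:norm} is the natural proof, and your individual cases are right. Tags are semantically inert, concatenation of languages distributes over arbitrary unions in each argument, and the general star case is exactly $\denot{A^*}=\bigcup_{n\ge 0}\denot{A}^n$, with $n=0$ supplying $[\,]$.

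Your remark on the $\emptyset$ subcase of the star branch identifies a genuine error in the paper, not a presentational discrepancy. Whenever the body normalizes to the empty set, $\denot{A^*}=\{[\,]\}$, but the algorithm returns no traces, so the lemma is false for the algorithm as written. Of your two repairs, prefer changing that branch to return $\{\epsilon\}$. The alternative would have to be the semantic condition that $\Code{Minimize}$ never outputs $B^*$ with $\normPlan(B)=\emptyset$. Ruling out a syntactic $\emptyset^*$ is not enough, because the body may only become $\emptyset$ after the recursive call re-minimizes it.

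Your induction measure asks for more than you need. Requiring that re-minimizing a subterm never increases its size is not something automaton-based minimization satisfies: determinization followed by state elimination can blow a regex up exponentially. Assuming idempotence on normal forms either makes $\Code{Minimize}$ the identity on all intersection-free input, or requires that subterms of its outputs are again fixed points. The paper gives you neither.

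You can drop all of this. The lemma speaks of \emph{the} normalized result, so it presupposes that $\normPlan(A)$ is defined. Prove the partial-correctness statement ``if $\normPlan(A)=\Pi$ then $\denot{A}=\bigcup_{\pi\in\Pi}\denot{\pi}$'' by induction on the height of the recursion tree of that call. Every recursive call is on an immediate subterm of the minimized regex. So the only properties of $\Code{Minimize}$ you use are that it preserves the language and that its output is built from $\emptyset$, $\epsilon$, symbolic events, $\lorA$, $\seqA$ and ${}^*$, which makes the match exhaustive. Termination is a separate question that the lemma does not claim.

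Finally, the SREs contain free ghost variables and $\denot{\cdot}$ is only defined on closed ones, so state both the lemma and the language-preservation assumption pointwise for every closing substitution. That assumption genuinely restricts $\Code{Minimize}$. The trace $\pi_\Code{pop2}$ of Example~\ref{ex:sre-norm}, with $m$ instantiated to $0$ and no guard, admits traces outside the original SRE whenever $m\neq 0$. It is sound only because Example~\ref{ex:refinement-process} records $m=0$ in the typing context. So the implementation's normalization fits your proof only once such case splits are folded back into qualifiers.
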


\begin{algorithm}[t!]
    \Procedure{$\S{TermDerive}(C) := $}{
      $E \leftarrow \emptyset$\;
      \ForEach{$(\Gamma,\pi) \in C$} {
        $e_1 \leftarrow \S{DeriveTrace}(\Gamma, \pi)$\;
        $e_2 \leftarrow \S{SynRecursion}(\Gamma, \pi, \Code{UnrollingBound})$\;
        $E \leftarrow E \cup \{e_1, e_2\}$\;
      }
      \Return{$\bigoplus_{e \in E} e$}\;
    }
    \Procedure{$\S{DeriveTrace}(\Gamma, \pi) := $}{
      \Match{$\Gamma$}{
      \lCase{$[]$}{ \Return{$\S{DeriveTraceAux}(\pi)$} }
      \Case{$x{:}\urt{b}{\phi} \cons \Gamma'$}{
      \Return{$\zassume{\overline{x}}{\phi[\vnu \mapsto x]}{\S{DeriveTrace}(\Gamma', \pi)}$}\;
      }
    }
    }
    \Procedure{$\S{DeriveTraceAux}(\Pi) := $}{
      \Match{$\Pi$}{
        \lCase{$\epsilon$}{ \Return{$()$} }
        \lCase{$\Pi^* \seqA \pi'$}{ \Return{$\S{DeriveTraceAux}(\pi')$} }
        \Case{$\msgBGray{op}{\overline{x}}{\phi} \seqA \pi'$}{
          $\overline{x'} \leftarrow \Code{GetFreshNames}(\overline{x})$\;
          $\phi_1 \leftarrow \phi\msubst{x}{x'}$\;
          $e \leftarrow \S{DeriveTraceAux}(\pi')$\;
          \Return{$\zassume{\overline{x'}}{\phi_1}{e}$}
        }
        \Case{$\msgB{op}{\overline{x}\;y}{\phi} \seqA \pi'$}{
          $\overline{x'} \leftarrow \Code{GetFreshNames}(\overline{x})$\;
          $y' \leftarrow \Code{GetFreshName}(y)$\;
          $\phi_1 \leftarrow \exists y.\phi\msubst{x}{x'}$\;
          $\phi_2 \leftarrow \phi\msubst{x}{x'}[y\mapsto y']$\;
          $e \leftarrow \S{DeriveTraceAux}(\pi')$\;
          \Return{$\zassume{\overline{x'}}{\phi_1}{\zlet{y'}{\eff{op}\;\overline{x'}}{\sassert{\phi_2}; e}}$}
        }
      }
      }
    \caption{Term Derivation}
    \label{algo:term-derive}
  \end{algorithm}

  \paragraph{Term Derivation} The term derivation function
  \(\S{TermDerive}\) is shown in \autoref{algo:term-derive}. It first
  converts the input type context into \(\mykw{assume}\) statements
  that use the type qualifiers attached to each variable (line 5), and
  then derives the abstract trace with the help of the
  \(\S{DeriveTrace}\), which recursively transforms the abstract trace
  into a straightline generator program.  Any Kleene stars are dropped
  (line 9), ghost events are converted into \(\mykw{assume}\)
  expressions (line 14), and concrete events are converted into
  effectful operations (line 16-21).  Note that the arguments of the
  effectful operation are provided by $\mykw{assume}$ and the values
  they return are always checked by $\mykw{assert}$ (line 21).

\begin{algorithm}[t!]
  \Procedure{$\S{SynRecursion}(\Gamma, \Pi, \Code{UnrollingBound}) := $}{
     $(T, \{\pi_i\}) \leftarrow \S{MatchTemplate}(\Pi)$\;
     \If{$\forall i. 0 \leq i < \Code{UnrollingBound}. \exists \tau. \Gamma \vdash \Code{Unroll}(T, \{\pi_i\}, i) : \tau$}{
        $e_i \leftarrow \S{DeriveTrace}(\pi_i)$\;
        \Return{$T(\overline{e_i})$}
     }
  }
  \caption{Synthesis of Recursive Programs}
  \label{algo:syn-recursion}
\end{algorithm}

\paragraph{Recursion Synthesis} Kleene stars ($\Pi^*$) in abstract
traces are interpreted as recursion points. The synthesis algorithm
can synthesize a recursive program by expanding the Kleene star using
recursion. As shown in \autoref{algo:syn-recursion}, we employ a
template-based strategy, in which the algorithm matches the given
abstract trace with a template $T$, together with a set of abstract
traces $\{\pi_i\}$ that are matched from the orginal trace. As
dictated by \textsc{TFixInd}, the algorithm unrolls the recursive body
up to a bounded depth and invokes refinement loop (lines 3 in
\autoref{algo:syn-recursion}) to ensure that the unrolled abstract
traces are well-typed.  Finally, the synthesized program is
instantiated by the template with derived programs from these abstract
traces.

\begin{example}[Recursion Template]
  A representative template is a non-deterministic loop, whose form is
  $T \doteq \zlet{f}{\mykw{fix}\ f.\lambda ().\ () \oplus \boxed{e_1};
    f\;(); \boxed{e_2}}{\boxed{e_3};f\;();\boxed{e_4}}$, allowing
  $\Pi^*$ in the abstract trace to be matched. For instance, consider
  the abstract trace $\pi_\Code{stack}$ shown in
  Example~\ref{ex:refinement-process}: %
  \vspace{-.4cm} \par\nobreak {\small %
    \begin{align*}
      &
        \underbrace{\Pi_\Code{noI}^*\msgAGray{pushI}{m\;y}\msgAGray{popI}{m\;y}\Pi_\Code{noI}^*}_{e_3}\;\;
        \underbrace{\msgA{push}{x}\msgAGray{pushI}{n\;x}}_{e_1}\;\;
        \underbrace{\Pi_\Code{noI}^*}_\text{recursion}\;\;
        \underbrace{\msgA{pop}{x}\msgAGray{popI}{n\;x}}_{e_2}\;\;
        \underbrace{\Pi_\Code{noI}^*}_{e_4}
    \end{align*}}\noindent
  Here, $\pi_\Code{stack}$ is divided into five parts, corresponding
  to the four placeholders and the self-recursive call (i.e., $f\;()$)
  in the template. The synthesizer then re-refines the abstract trace for $i$-fold
  recursion unrolling after erasing all qualifiers from the abstract
  trace: %
  \vspace{-.4cm} \par\nobreak {\small %
    \begin{align*}
      \Pi_\Code{anyA}^*\msgCGray{pushI}\msgCGray{popI}&\;\underline{\epsilon}\;\Pi_\Code{anyA}^* \tag{0-fold unrolling}
      \\\Pi_\Code{anyA}^*\msgCGray{pushI}\msgCGray{popI}\Pi_\Code{anyA}^*\;\underline{\msgCGray{pushI}\msgC{push}}&\underline{\msgCGray{popI}\msgC{pop}}\;\Pi_\Code{anyA}^* \tag{1-fold unrolling}
      \\\Pi_\Code{anyA}^*\msgCGray{pushI}\msgCGray{popI}\Pi_\Code{anyA}^*\;\underline{\msgCGray{pushI}\msgC{push}\msgCGray{pushI}\msgC{push}}&\underline{\msgCGray{popI}\msgC{pop}\msgCGray{popI}\msgC{pop}}\;\Pi_\Code{anyA}^* \tag{2-fold unrolling}
    \end{align*}}\noindent
  This refinement ensures that multiple unfoldings remain well-typed,
  which disallows an incorrect recursion template instantiation, as
  shown below: %
  \vspace{-.4cm} \par\nobreak {\small %
    \begin{align*}
      & \underbrace{\Pi_\Code{noI}^*\msgAGray{pushI}{m\;y}\msgAGray{popI}{m\;y}\Pi_\Code{noI}^*\msgA{push}{x}\msgAGray{pushI}{n\;x}}_{e_3}\;\;
        \underbrace{\Pi_\Code{noI}^*}_{e_1}\;\;
        \underbrace{\Pi_\Code{noI}^*}_\text{recursion}\;\;
        \underbrace{\msgA{pop}{x}\msgAGray{popI}{n\;x}}_{e_2}\;\;
        \underbrace{\Pi_\Code{noI}^*}_{e_4}
    \end{align*}
  } %
  \noindent This instantiation will repeatedly perform $\eff{pop}$
  events without any $\eff{push}$ events; its two-fold recursion
  unrolling will therefore be rejected.  The final recursive program
  is assembled by composing synthesized programs from these traces to
  populate the template placeholders, thereby yielding the following
  program:

  \begin{minted}[fontsize = \footnotesize, escapeinside=??, xleftmargin=10pt, linenos]{ocaml}
let rec f () =
  () ?$\oplus$?
    let (x: int) = int_gen () in push x; (* e1 *)
    f ();
    let (z: int) = pop () in assert (x == z) (*e2 *)
in f ()
  \end{minted}
\end{example}

\newpage
\section{Proofs}\label{sec:tech:proof}

We omit the completely standard proof that basic typing
\(\basicvdash e : s\) is sound, assuming that all terms and qualifiers
in our typing rules and theorems are type-safe. Before presenting the
proof of the fundamental theorem and type soundness, we introduce
several useful lemmas.

\subsection{Lemmas}

\subsubsection{Denotations}

\begin{lemma}\label{lemma:lastA-denot}[Denotation of singleton modality] For all symbolic event $\msgB{op}{\overline{x_i}}{\phi}$ and values $\overline{v_i}$,
{\small\begin{align*}
    \phi\msubst{x_i}{v_i} \implies
    [\zevent{op}{\overline{v_i}}] \in \denot{\msgB{op}{\overline{x_i}}{\phi}}
\end{align*}}
\end{lemma}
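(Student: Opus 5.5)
The claim follows by unfolding the trace-language denotation of a symbolic event given in the type denotation figure. By definition,
\[
\denot{\msgB{op}{\overline{x_i}}{\phi}} \doteq \{ [\zevent{op}{\overline{c}}] ~|~ \wfvdash [\zevent{op}{\overline{c}}] \land \phi\msubst{x}{c} \}.
\]
To show that $[\zevent{op}{\overline{v_i}}]$ belongs to this set, I instantiate the bound constants $\overline{c}$ with the given values $\overline{v_i}$. Two conjuncts must then be discharged.

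The second conjunct, $\phi\msubst{x_i}{v_i}$, is exactly the hypothesis of the lemma.

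The first conjunct is the well-formedness judgement $\wfvdash [\zevent{op}{\overline{v_i}}]$. I would derive it by \textsc{WfCons} applied to \textsc{WfNil} and the event rule \textsc{WfEvent}. The event rule needs the basic typing $\emptyset \basicvdash \eff{op} : \overline{b_i}\sarr b$, together with $\emptyset \basicvdash v_i : b_i$ for each argument and the result component.

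These typing facts are not stated in the lemma. They follow from the standing convention at the start of this appendix: all terms and qualifiers occurring in the typing rules and theorems are assumed basic-type-safe. In particular, the symbolic event is well formed via \textsc{WfEvent} for SREs, so its variables $\overline{x_i}$ carry the argument and result base types of $\eff{op}$. The values substituted for them are implicitly closed constants of those same base types, since otherwise $\phi\msubst{x_i}{v_i}$ would not be a meaningful closed qualifier.

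The only delicate point is this implicit typing side condition. The event payload as written, $\overline{v_i}$, includes the return component, whereas the trace well-formedness rule separates the arguments $\overline{c_i}$ from the result $c'$. I would handle this by splitting the last value off as $c'$ and appealing to the basic-typing convention to supply $\emptyset \basicvdash c_i : b_i$ and $\emptyset \basicvdash c' : b$. With that, both conjuncts hold and membership follows immediately. No induction is needed.
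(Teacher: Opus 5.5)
Your proof is correct and is the intended argument: the paper lists this lemma without a written proof and treats it as immediate from the denotation of symbolic events, which is exactly the unfolding you do. Discharging the trace well-formedness conjunct through the appendix's standing basic-typing convention, with the result component split off for \textsc{WfEvent}, makes explicit a side condition the statement leaves implicit.
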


\begin{lemma}\label{lemma:seqA-denot}[Denotation of concatenation] For all automata $A_1$ and $A_2$ and trace $\alpha$,
{\small\begin{align*}
    \alpha \in \denot{A_1 \seqA A_2} \iff (\exists \alpha_1\;\alpha_2. \alpha = \alpha_1 \listconcat \alpha_2 \land \alpha_1 \in \denot{A_1} \land \alpha_2 \in \denot{A_2})
\end{align*}}
\end{lemma}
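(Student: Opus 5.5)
Both directions follow directly from the trace-language denotation of $A_1 \seqA A_2$ in \autoref{fig:type-denotation}:
\[
\denot{A_1 \seqA A_2} \doteq \{\alpha_1 \listconcat \alpha_2 \mid \alpha_1 \in \denot{A_1} \land \alpha_2 \in \denot{A_2}\}.
\]
No induction on $A_1$ or $A_2$ is needed. The lemma is essentially an unfolding of this set-builder definition into an explicit existential statement.

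\emph{Forward direction.} Assume $\alpha \in \denot{A_1 \seqA A_2}$. Membership in the set above means, by the semantics of set-builder notation, that there exist traces $\alpha_1,\alpha_2$ with $\alpha_1 \in \denot{A_1}$, $\alpha_2 \in \denot{A_2}$, and $\alpha = \alpha_1 \listconcat \alpha_2$. These traces are exactly the required witnesses.

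\emph{Backward direction.} Given witnesses $\alpha_1 \in \denot{A_1}$ and $\alpha_2 \in \denot{A_2}$ with $\alpha = \alpha_1 \listconcat \alpha_2$, the trace $\alpha_1 \listconcat \alpha_2$ is by construction an element of the set defining $\denot{A_1 \seqA A_2}$. Rewriting with $\alpha = \alpha_1 \listconcat \alpha_2$ gives the claim.

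\emph{Main subtlety: trace equality.} Traces are built with the constructors $\emptr$, $\cons$ and $\listconcat$, so the equation $\alpha = \alpha_1 \listconcat \alpha_2$ must be read up to the usual list equivalence. In particular, $\listconcat$ is associative with unit $\emptr$, and $m \cons \alpha$ coincides with $[m] \listconcat \alpha$. I would fix this convention explicitly, identifying traces with finite sequences of events, before carrying out the two directions. Once that convention is in place, no further argument is required.

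Well-formedness of the traces is not an issue. The denotation of a symbolic event already restricts to well-formed events, and \textsc{WfCons} is compositional, so $\alpha_1 \listconcat \alpha_2$ is well-formed whenever $\alpha_1$ and $\alpha_2$ are.
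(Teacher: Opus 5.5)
Your proof is correct and matches the paper. The paper states this lemma without proof because it is a direct unfolding of the set-builder definition of $\denot{A \seqA A'}$ in \autoref{fig:type-denotation}, which is exactly your argument. Your remarks on trace equality and well-formedness are harmless but unnecessary: the same notion of equality appears in both the definition and the lemma, and the backward direction needs no well-formedness check.
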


% \begin{lemma}\label{lemma:choice-denot}[Denotation of choice] For all term $e_1$ and $e_2$ and \PAT $\tau$,
% {\small\begin{align*}
%     e_1 \in \denot{\tau} \land e_2 \in \denot{\tau} \impl e_1 \oplus e_2 \in \denot{\tau}
% \end{align*}}
% \end{lemma}

\begin{lemma}\label{lemma:pure-denot}[Denotation of pure computation] For all term $e_1$ and $e_2$ and \PAT $\tau$,
{\small\begin{align*}
    (\forall \alpha. \msteptr{\alpha}{e}{[]}{e'}) \impl
   e'\in \denot{\tau} \impl e \in \denot{\tau}
\end{align*}}
\end{lemma}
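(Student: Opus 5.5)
The plan is to unfold the type denotation of $\tau$ and push the pure reduction prefix through it. I will induct on the structure of $\tau$. The base case $\tau = \uhat{H}{x{:}t}{F}$ carries the real content, and the remaining cases reduce to it or are immediate.

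\emph{Base case.} Fix $v$ in the denotation of the return type, $\alpha_f \in \denot{F}$, and assume $e' \in \denot{\tau}$. This gives some $\alpha_h \in \denot{H}$ with $\msteptr{\eraseGhost{\alpha_h}}{e'}{\eraseGhost{\alpha_f}}{v}$. Now instantiate the hypothesis $\forall \alpha.\ \msteptr{\alpha}{e}{[]}{e'}$ at $\alpha := \eraseGhost{\alpha_h}$. This yields a pure reduction from $e$ to $e'$ under the same history that emits the empty trace. I then concatenate the two multi-step reductions to obtain $\msteptr{\eraseGhost{\alpha_h}}{e}{[] \listconcat \eraseGhost{\alpha_f}}{v}$, which is the required witness with the same $\alpha_h$.

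The concatenation step needs an auxiliary fact about the multi-step relation. If $e \to^* e'$ under history $\alpha$ with output $\alpha_1$, and $e' \to^* e''$ under history $\alpha \listconcat \alpha_1$ with output $\alpha_2$, then $e \to^* e''$ under $\alpha$ with output $\alpha_1 \listconcat \alpha_2$. I will prove this by induction on the first derivation, assuming the multi-step relation is the standard reflexive–transitive closure that threads the history forward. Because $\alpha_1 = []$ here, the history seen by $e'$ is unchanged, so the history threading is trivial. This is why the hypothesis must hold for all $\alpha$: we do not know in advance which $\alpha_h$ the witness for $e'$ selects.

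\emph{Side condition.} Basic typing of $e$ has to be recovered from that of $e'$. I expect this to be the main obstacle, since subject reduction gives the wrong direction. My first approach is to take it from the paper's standing convention that all terms appearing in typing judgements and theorems are basically well-typed, with $\emptyset \basicvdash e : \eraserf{\tau}$. If that convention is unavailable, I would add it as an explicit premise; this is harmless in every use of the lemma, such as in \textsc{TLet}, where $e$ is well-typed by construction.

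\emph{Remaining cases.} For $\tau_1 \interty \tau_2$, the result follows by applying the induction hypothesis to each conjunct of the intersection denotation. If $\tau$ is a ghost-quantified or effectful arrow type, its denotation quantifies over values and requires $e\,v \in \denot{\tau[x\mapsto v]}$. In that case I lift the pure reduction through application via the evaluation-context rules and apply the induction hypothesis. Since the lemma is stated for \tyName{}s, however, I expect only the base and intersection cases to be needed.
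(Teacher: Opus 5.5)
Your proof is correct. The paper states this lemma without proof, so there is no competing argument to compare against. Your steps are exactly what the result needs: unfold the \tyName{} denotation, instantiate the universally quantified pure-reduction hypothesis at the witness history $\eraseGhost{\alpha_h}$ chosen for $e'$, and append the two reduction sequences. Because the first sequence emits the empty trace, the history seen by $e'$ is unchanged. The intersection case is immediate from $\denot{\tau_1 \interty \tau_2} = \denot{\tau_1} \cap \denot{\tau_2}$. The arrow-type cases are vacuous, since $\tau$ ranges only over \tyName{}s and their intersections.

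Your basic-typing caveat is necessary, not merely cosmetic: read literally, the statement is false. For example, $\Code{true} \oplus 1$ reduces purely to $1$ under every history, and $1$ inhabits $\uhat{H}{x{:}\urt{\Code{int}}{\vnu = 1}}{\epsilon}$ for any nonempty $H$, yet $\Code{true} \oplus 1$ has no basic type. So the premise $\emptyset \basicvdash e : \eraserf{\tau}$ is genuinely required, whether stated explicitly or taken from the appendix's blanket convention that all terms are basically well-typed.
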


\subsubsection{Subtyping}

\begin{lemma}\label{lemma:pure-subtyping}[Pure Subtyping Soundness] For Given type context $\Gamma$ and well-formed pure refinement type $t_1$ and $t_2$:
  $\Gamma \vdash t_1 <: t_2 \implies \forall \sigma_2 \in \denot{\Gamma}.\exists \sigma_1 \in \denot{\Gamma}. \denot{\sigma_2(t_2)} \subseteq \denot{\sigma_1(t_1)}$
\end{lemma}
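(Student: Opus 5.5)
The proof goes by induction on the derivation of $\Gamma \vdash t_1 <: t_2$. Because the arrow rules (\textsc{SubArr}, \textsc{SubPArr}) mention \tyName{}s $\tau$ in their codomains, I will strengthen the statement to a \emph{mutual} induction. The companion claim for effectful types is
\[
\Gamma \vdash \tau_1 <: \tau_2 \implies \forall \sigma_2 \in \denot{\Gamma}.\, \exists \sigma_1 \in \denot{\Gamma}.\, \denot{\sigma_2(\tau_2)} \subseteq \denot{\sigma_1(\tau_1)}.
\]
This covers \textsc{SubHF}, \textsc{SubIntLL}, \textsc{SubIntLR} and \textsc{SubIntR}. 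The direction is the underapproximate one: a supertype has a \emph{smaller} set of inhabitants, so each case shows that inhabitants of $t_2$ also inhabit $t_1$.

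\textbf{Base cases.} For \textsc{SubPBase}, read the premise as $\sigma_1(\phi_1) \implies \sigma_2(\phi_2)$; the printed $\phi_2$ on the left is a typo. Let $e \in \denot{\sigma_2(\urt{b}{\phi_2})}$, so $e$ reduces with empty trace to every $v$ satisfying $\sigma_2(\phi_2)$. Any $v$ satisfying $\sigma_1(\phi_1)$ satisfies $\sigma_2(\phi_2)$ by the premise, so $e$ reduces to it, and $e \in \denot{\sigma_1(\urt{b}{\phi_1})}$.

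For \textsc{SubG}, take $\sigma_1 = \sigma_2$. The denotation of $x{:}b \garr t$ quantifies over all constants $c$, and $v$ is closed under $\sigma_2$ by well-formedness. Instantiating with $c = \sigma_2(v)$ and using the standard substitution-commutes-with-denotation lemma gives $\denot{\sigma_2(x{:}b\garr t)} \subseteq \denot{\sigma_2(t[x\mapsto v])}$.

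\textbf{Remaining single-premise rules.} For \textsc{SubGRight}, apply the induction hypothesis in the extended context $\Gamma, x{:}\urt{b}{\top}$. Given $\sigma_2$, extend it by an arbitrary $x \mapsto c$; every such extension lies in the extended denotation. The induction hypothesis yields some $\sigma_1'$, whose restriction to $\Gamma$ serves as $\sigma_1$. \textsc{SubGEvent} is immediate, since the ghost-event binder is transparent in the denotation. The intersection rules follow from $\denot{\tau_1 \interty \tau_2} = \denot{\tau_1} \cap \denot{\tau_2}$.

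\textbf{Main obstacle: the arrow cases and \textsc{SubHF}.} Here the $\forall\sigma_2\,\exists\sigma_1$ alternation must be threaded through context extension. For \textsc{SubArr}, take $e \in \denot{\sigma_2(x{:}t_2\sarr\tau_2)}$ and an argument $v \in \denot{\sigma_1(t_1)}$; the goal is $e\,v \in \denot{\sigma_1(\tau_1)[x\mapsto v]}$. The contravariant premise $t_2 <: t_1$ only gives $\denot{\sigma(t_1)} \subseteq \denot{\sigma'(t_2)}$ for \emph{some} $\sigma'$. The codomain premise then has to be used in the extended context $\Gamma, x{:}t_2$, and the substitutions produced by the two induction hypotheses must agree.

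I would first try to prove a side lemma: for closed, well-formed pure types the witness can always be taken to be $\sigma_1 = \sigma_2$. This holds by inspection for every rule except \textsc{SubPBase}. I would therefore handle arrows by picking $\sigma_1 = \sigma_2$ on the $\Gamma$-part and using the extended-context witness only for the new binder.

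\textsc{SubHF} is analogous. Unfold the \tyName{} denotation. Given $\alpha_f \in \denot{F_2}$, the premise $F_2 \subseteq F_1$ puts it in $\denot{F_1}$. The inhabitant of $\tau_2$ supplies a history $\alpha_h \in \denot{H_2}$, which is exactly what the denotation of $\tau_1$ asks for. I would double-check this against the directions of the \textsc{SubAutomata} premises before finalizing.
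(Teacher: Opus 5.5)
Your \textsc{SubG} case proves the converse of what the lemma requires, and the inclusion the lemma actually requires is false. Here $t_1 = x{:}b\garr t$ and $t_2 = t[x\mapsto v]$, so the statement (and your own convention that inhabitants of $t_2$ inhabit $t_1$) demands $\denot{\sigma_2(t[x\mapsto v])} \subseteq \denot{\sigma_1(x{:}b\garr t)}$. What you derived is $\denot{\sigma_2(x{:}b\garr t)} \subseteq \denot{\sigma_2(t[x\mapsto v])}$.

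The required direction fails on a simple instance. Take $\Gamma = \emptyset$, $t = \urt{\Int}{\vnu = x}$ and $v = 3$. The constant $3$ inhabits $\urt{\Int}{\vnu = 3}$, but $x{:}\Int\garr\urt{\Int}{\vnu = x}$ contains only terms that can evaluate to every integer. Your companion claim breaks the same way at \textsc{SubIntLL}/\textsc{SubIntLR}, which would need $\denot{\tau_1} \subseteq \denot{\tau_1}\cap\denot{\tau_2}$.

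The root cause is an orientation clash in the rules themselves. Some rules are sound only for the covariant reading $\denot{t_1}\subseteq\denot{t_2}$:
\textsc{SubG}, the intersection rules, and the paper's own uses of subtyping. \textsc{TPureSub} and \textsc{TSub} need $\denot{t}\subseteq\denot{t'}$, and the stack example derives $\urt{\Int}{\top} <: \urt{\Int}{0<\vnu}$. Your repair of \textsc{SubPBase} as $\phi_1 \Rightarrow \phi_2$, and the appendix's \textsc{SubAutomata} (hence \textsc{SubHF}), are oriented the other way. No case analysis closes the induction for the statement as written. You must first fix one orientation for the whole system and restate the lemma to match. The usable choice is the covariant one, with \textsc{SubPBase} read as $\phi_2\Rightarrow\phi_1$ and SRE inclusion read as plain language inclusion.

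Even after that, the cases you left open do not go through as sketched. In \textsc{SubHF}, showing $e\in\denot{\sigma_1(\tau_1)}$ means starting from an arbitrary $\alpha_f\in\denot{F_1}$ and producing a history in $\denot{H_1}$. Your sketch starts in $\denot{F_2}$ and ends in $\denot{H_2}$, so it mixes the two readings of $\subseteq$.

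For arrows, your side lemma exempts exactly the rule that appears in the premises. The domain premise of \textsc{SubArr} and the return-type premise of \textsc{SubHF} are typically \textsc{SubPBase} instances. The \textsc{SubAutomata} premises also carry their own independent $\exists\sigma_1$. So you cannot just take $\sigma_1=\sigma_2$ on the $\Gamma$-part.

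More seriously, consider a base-type domain. The denotation of $x{:}\urt{b}{\phi}\sarr\tau$ quantifies over all $v$ satisfying $\phi$. Like $\denot{\urt{b}{\phi}}$ itself, it therefore shrinks as $\phi$ grows, so the domain is effectively covariant. Whichever consistent orientation you choose, \textsc{SubArr} then derives a subtyping whose soundness needs $\denot{x{:}\urt{\Int}{\vnu=0}\sarr\tau}\subseteq\denot{x{:}\urt{\Int}{\top}\sarr\tau}$, with $\tau = \uhat{\allA}{\urt{\Int}{\vnu=0}}{\epsilon}$. The term $\lambda x.\,\sassume{x=0};0$ refutes this inclusion. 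That case needs a changed rule, with a covariant comparison of qualifiers for base-type domains, not a cleverer proof.
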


\begin{lemma}\label{lemma:inclusion}[Inclusion] For Given type context $\Gamma$ and well-formed regex $H_1$ and $H_2$:
  $\Gamma \vdash H_1 \subseteq H_2 \implies \forall \sigma_2 \in \denot{\Gamma}.\exists \sigma_1 \in \denot{\Gamma}. \denot{\sigma_2(H_2)} \subseteq \denot{\sigma_1(H_1)}$
\end{lemma}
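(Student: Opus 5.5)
This statement is the inclusion lemma for \textsc{SubAutomata}: from $\Gamma \vdash H_1 \subseteq H_2$ we must conclude $\forall \sigma_2 \in \denot{\Gamma}.\exists \sigma_1 \in \denot{\Gamma}. \denot{\sigma_2(H_2)} \subseteq \denot{\sigma_1(H_1)}$. The plan is to invert the rule and read the conclusion off its premise. The only rule deriving $\Gamma \vdash A_1 \subseteq A_2$ is \textsc{SubAutomata} in \autoref{fig:full-sub-rules}. Its premise is literally $\forall \sigma_2 \in \denot{\Gamma}.\exists \sigma_1 \in \denot{\Gamma}.\denot{\sigma_2(A_2)} \subseteq \denot{\sigma_1(A_1)}$. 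So the proof is inversion followed by instantiating $A_1 := H_1$ and $A_2 := H_2$.

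\textbf{Step 1.} Check that inversion is legitimate. No subsumption-like or transitivity rule concludes the judgement $\Gamma \vdash A \subseteq A$; it is defined by the single semantic rule \textsc{SubAutomata}. Hence any derivation of $\Gamma \vdash H_1 \subseteq H_2$ ends in that rule, and its premise holds for $A_1 = H_1$, $A_2 = H_2$.

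\textbf{Step 2.} Make the premise line up with the lemma's conclusion. Well-formedness of $H_1$ and $H_2$ under $\Gamma$ (\textsc{WfEvent}, \textsc{WfConcat}, and so on) guarantees that every free variable of $H_i$ is bound in $\Gamma$. Therefore, for any $\sigma \in \denot{\Gamma}$, the substitution instance $\sigma(H_i)$ is closed and $\denot{\sigma(H_i)}$ is a well-defined set of traces under the trace-language semantics of \autoref{fig:type-denotation}. With both sides well-defined, the premise and the conclusion are syntactically the same proposition. Given $\sigma_2$, we take the witness $\sigma_1$ supplied by the premise.

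\textbf{Main obstacle.} The difficulty is only bookkeeping, not mathematics. The main-text definition of $\Gamma \vdash A \subseteq A'$ in \S\ref{sec:metatheory} is phrased with a single $\sigma$ and the opposite inclusion direction, whereas \textsc{SubAutomata} uses the $\forall\sigma_2\exists\sigma_1$ form. I will fix the appendix rule as authoritative and note the reversal explicitly. This reversed, underapproximate reading is what later proofs (e.g.\ \textsc{SubHF} soundness and \textsc{TOpHis}) consume.

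If instead one wanted the single-$\sigma$ main-text form, the same argument goes through by choosing $\sigma_1 = \sigma_2$. Either way, no induction on regex structure is needed, since the relation is defined semantically.
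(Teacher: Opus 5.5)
Your main argument is correct, and it is exactly what the paper relies on. The paper states this lemma without proof because \textsc{SubAutomata} is the only rule for $\Gamma \vdash A_1 \subseteq A_2$, and its premise is the lemma's conclusion verbatim. Inversion plus the instantiation $A_1 := H_1$, $A_2 := H_2$ is therefore the whole proof. Your Step~2 is harmless but unnecessary, since the two propositions coincide syntactically whatever $\denot{\sigma(H_i)}$ turns out to be.

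Your closing remark, however, is wrong. Under the main-text definition, $\Gamma \vdash H_1 \subseteq H_2$ gives $\denot{\sigma(H_1)} \subseteq \denot{\sigma(H_2)}$ for every $\sigma$. Choosing $\sigma_1 = \sigma_2 = \sigma$ in the lemma would instead require the converse, $\denot{\sigma(H_2)} \subseteq \denot{\sigma(H_1)}$. That choice fixes the quantifier shape but not the reversed direction you flagged yourself. Under that reading the lemma is in fact false. Take $\Gamma = \emptyset$, so the only substitution is the empty one, with $H_1 = \epsilon$ and $H_2 = \allA$. The main-text inclusion holds, but $\denot{\allA} \not\subseteq \denot{\epsilon} = \{[\,]\}$; any single-event trace is a witness.

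Relatedly, it is not accurate that \textsc{TOpHis} consumes the reversed reading. In the paper's proof of the fundamental theorem, the \textsc{TOpHis} case cites this lemma to obtain $\forall\sigma.\ \denot{\sigma(H')} \subseteq \denot{\sigma(H)}$. That is the natural direction, which the well-formed operator context definition needs, and it does not follow from the lemma as stated. This is an inconsistency in the paper's appendix, and your inversion proof need not resolve it. But you should not present the two readings as interchangeable.
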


\begin{lemma}\label{lemma:subtyping}[Subtyping Soundness] For Given type context $\Gamma$ and well-formed \PAT $\tau_1$ and $\tau_2$:
  $\Gamma \vdash \tau_1 <: \tau_2 \implies \forall \sigma_2 \in \denot{\Gamma}.\exists \sigma_1 \in \denot{\Gamma}. \denot{\sigma_2(\tau_2)} \subseteq \denot{\sigma_1(\tau_1)}$
\end{lemma}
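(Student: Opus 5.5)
We prove the Subtyping Soundness lemma by induction on the derivation of $\Gamma \vdash \tau_1 <: \tau_2$, with one case per subtyping rule in \autoref{fig:full-sub-rules}. In each case we fix $\sigma_2 \in \denot{\Gamma}$ and construct a witness $\sigma_1 \in \denot{\Gamma}$ such that every $e \in \denot{\sigma_2(\tau_2)}$ also lies in $\denot{\sigma_1(\tau_1)}$. The pure premises are discharged by Lemma~\ref{lemma:pure-subtyping}, and the regex premises by Lemma~\ref{lemma:inclusion}.

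\textbf{The \textsc{SubHF} case.} This is the central case. From $\Gamma \vdash t_1 <: t_2$, Lemma~\ref{lemma:pure-subtyping} gives a $\sigma_1$ with $\denot{\sigma_2(t_2)} \subseteq \denot{\sigma_1(t_1)}$. Now take $e \in \denot{\sigma_2(\uhat{H_2}{x{:}t_2}{F_2})}$, a value $v$ admitted by $\sigma_1(t_1)$, and a trace $\alpha_f \in \denot{\sigma_1(F_1)[x\mapsto v]}$. We must produce some $\alpha_h \in \denot{\sigma_1(H_1)[x\mapsto v]}$ under which $e$ produces $\alpha_f$. The argument chains the two regex inclusions in the extended context $\Gamma, x{:}t_1$:
\begin{itemize}
\item the contravariant future premise $F_2 \subseteq F_1$ lets us view $\alpha_f$ as a future trace of $F_2$;
\item the hypothesis on $e$ then supplies a history $\alpha_h \in \denot{H_2}$;
\item the covariant history premise $H_1 \subseteq H_2$ transports this $\alpha_h$ into $\denot{H_1}$.
\end{itemize}
Throughout, we instantiate each inclusion lemma with the extended substitution $\sigma[x\mapsto v]$, using the context-denotation clause for base-type bindings.

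\textbf{Remaining cases.} The intersection rules are routine, since $\denot{\tau_1 \interty \tau_2} = \denot{\tau_1} \cap \denot{\tau_2}$. For \textsc{SubIntLL} and \textsc{SubIntLR} we take $\sigma_1 = \sigma_2$. For \textsc{SubIntR} we must pick a single $\sigma_1$ that serves both premises. The route we plan is to first prove the strengthened statement with $\sigma_1 = \sigma_2$ for every rule that introduces no existential, and fall back to the existential form only where it is forced. For \textsc{SubArr} and \textsc{SubPArr} we unfold the function denotation: an argument $v \in \denot{\sigma_2(t_2)}$ is related back to $t_1$ by the contravariant premise, and the result types are handled by the induction hypothesis in $\Gamma, x{:}t_2$. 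For \textsc{SubG}, the denotation of $x{:}b \garr t$ quantifies over all constants $c$, so instantiating $c$ with $\sigma_2(v)$ gives the inclusion directly, provided we have a substitution lemma $\denot{t[x\mapsto v]}$ relating syntactic substitution to semantic instantiation. \textsc{SubGRight} is the dual: the premise holds in $\Gamma, x{:}\urt{b}{\top}$, and every $c$ extends $\sigma$. \textsc{SubGEvent} follows immediately because the denotation ignores the ghost binder.

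\textbf{Main obstacle.} The hard part is the quantifier alternation $\forall\sigma_2\,\exists\sigma_1$ inherited from \textsc{SubAutomata} and \textsc{SubPBase}. In \textsc{SubHF}, the $\sigma_1$ returned by the pure premise must also be the one that witnesses both regex inclusions. Moreover, those inclusions are stated over $\Gamma, x{:}t_1$, whose denotation depends on $t_1$ itself. We will first try to show that every inclusion used in practice holds pointwise, i.e.\ with $\sigma_1 = \sigma_2$, since the ghost variables are instantiated via \textsc{SubG}. If that fails, we will instead strengthen the induction to return a single witness valid for all premises simultaneously.
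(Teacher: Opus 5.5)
The paper states this lemma without proof. It is used only in the \textsc{TSub} case of the fundamental theorem, and what is needed there is $\denot{\sigma(\tau)} \subseteq \denot{\sigma(\tau')}$ for one and the same $\sigma$. So your argument has to stand on its own, and as written it does not close.

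\textbf{Two orientations.} Your proposal is not a single induction. In \textsc{SubHF} you argue the stated orientation $\denot{\sigma_2(\tau_2)} \subseteq \denot{\sigma_1(\tau_1)}$, reading $\Gamma, x{:}t_1 \vdash F_2 \subseteq F_1$ as $\denot{F_1} \subseteq \denot{F_2}$. That reading follows the appendix's \textsc{SubAutomata}, which reverses the main text's definition of inclusion. Your intersection, ghost and arrow cases, however, are argued for the opposite inclusion $\denot{\tau_1} \subseteq \denot{\tau_2}$. In the stated orientation those cases are false, not merely harder:
\begin{itemize}
\item \textsc{SubIntLL} would require $\denot{\sigma(\tau_1)} \subseteq \denot{\sigma(\tau_1)} \cap \denot{\sigma(\tau_2)}$. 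This fails already for $\Gamma = \emptyset$, $\tau_1 = \uhat{\allA}{\Unit}{\msgA{op}{1}}$, $\tau_2 = \uhat{\allA}{\Unit}{\msgA{op}{2}}$ and $e = \eff{op}\;1$, with $\eff{op}$ always succeeding.
\item \textsc{SubG} would require a single instance $t[x\mapsto v]$ to imply the quantification over all $c$ in $\denot{x{:}b\garr t}$.
\end{itemize}
So the statement cannot be proved literally. You must fix an orientation, and the one that is both true and used by \textsc{TSub} is $\denot{\tau_1}\subseteq\denot{\tau_2}$.

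\textbf{The quantifier alternation.} The $\forall\sigma_2\,\exists\sigma_1$ problem you flag is fatal, and neither of your fallbacks repairs it. The hypothesis on $e$ can only be used at exactly $\sigma_2[x\mapsto v]$, with the same value $v$ that $e$ must return. Lemma~\ref{lemma:inclusion}, applied to a premise over $\Gamma, x{:}t_1$, only places $\alpha_f$ in $\denot{\rho(F_2)}$ for some unknown $\rho \in \denot{\Gamma, x{:}t_1}$. That $\rho$ may disagree with $\sigma_2$ and may rebind $x$.

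Here is a concrete failure. Take $\Gamma = \emptyset$, $t_1 = t_2 = \urt{\Int}{\top}$, $H_1 = H_2 = \allA$, $F_1 = \msgA{op}{x}$ and $F_2 = \msgA{op}{x+1}$. The \textsc{SubAutomata} premise holds by shifting $x$ by one, so \textsc{SubHF} fires. Yet the generator that draws $x$ with \Code{rand\_int}, performs $\eff{op}\;(x+1)$ and returns $x$ lies in $\denot{\uhat{\allA}{x{:}\urt{\Int}{\top}}{F_2}}$ but not in $\denot{\uhat{\allA}{x{:}\urt{\Int}{\top}}{F_1}}$. The symmetric generator, performing $\eff{op}\;x$ instead, refutes the other orientation.

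The inclusion and \textsc{SubPBase} premises are leaves of the derivation. Strengthening the induction hypothesis therefore cannot manufacture a shared witness. Restricting to inclusions that \emph{hold pointwise in practice} changes the hypothesis rather than proving the lemma.

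\textbf{What does go through.} Use the main text's definition $\Gamma \vdash A \subseteq A' \doteq \forall \sigma \in \denot{\Gamma}.\, \denot{\sigma(A)} \subseteq \denot{\sigma(A')}$, read \textsc{SubPBase} as $\sigma(\phi_2) \Rightarrow \sigma(\phi_1)$, and prove $\forall \sigma \in \denot{\Gamma}.\, \denot{\sigma(\tau_1)} \subseteq \denot{\sigma(\tau_2)}$. Then \textsc{SubHF} is your three-step chain run in the other direction:
\begin{itemize}
\item take $e \in \denot{\sigma(\tau_1)}$ and $v$ admitted by $\sigma(t_2)$, hence by $\sigma(t_1)$, so that $\sigma[x\mapsto v] \in \denot{\Gamma, x{:}t_1}$;
\item a trace of $F_2$ is then a trace of $F_1$;
\item $e$ supplies a history in $H_1$, and that history lies in $H_2$.
\end{itemize}
Your intersection, ghost and arrow arguments then work as written, with \textsc{SubIntR} immediate.
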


\subsubsection{Substitution}

\begin{lemma}[Substitution Lemma]\label{lemma:subt} For Given type context $\Gamma$, variable $x$, well-formed pure refinement type $t$,\PAT $\tau$ and term $e$:
  $\Gamma, x{:}t \vdash e : \tau \impl \forall v. \Gamma \vdash v : t \impl \Gamma, x{:}t \vdash e[x\mapsto v] : \tau[x\mapsto v]$
\end{lemma}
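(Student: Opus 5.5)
We prove the Substitution Lemma (Lemma~\ref{lemma:subt}) by induction on the derivation of $\Gamma, x{:}t \vdash e : \tau$, strengthened to the form $\Gamma, x{:}t, \Gamma' \vdash e : \tau$, where $\Gamma'$ is an arbitrary extension. The conclusion then becomes that $e[x\mapsto v]$ has type $\tau[x\mapsto v]$ under $\Gamma, x{:}t, \Gamma'[x\mapsto v]$. Strengthening is needed because the binding rules \textsc{TLet}, \textsc{TFun} and \textsc{TFixInd} push new variables after $x$. We prove it simultaneously for the pure judgement $\Gamma\vdash v : t$ and the operator judgement $\Gamma \vdash \eff{op} : t$, since the rules are mutually dependent. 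We also prove in parallel two auxiliary facts that are stable under value substitution.
\begin{itemize}
\item Well-formedness: $\Gamma \wellfoundedvdash \tau$ implies $\Gamma \wellfoundedvdash \tau[x\mapsto v]$.
\item Auxiliary relations: $\Gamma, x{:}t, \Gamma' \vdash A_1 \subseteq A_2$ and $\vdash \tau_1 <: \tau_2$ imply the same relations with $[x\mapsto v]$ applied to $\Gamma'$, $A_i$ and $\tau_i$.
\end{itemize}

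For these semantic relations we unfold \textsc{SubAutomata}, \textsc{SubPBase} and the context denotation. The key observation is that when $\Gamma \vdash v : t$ with $t$ a base refinement $\urt{b}{\phi}$, each $\sigma \in \denot{\Gamma}$ yields a value $\sigma(v)$. By the underapproximate denotation, $\sigma(v)$ is a value that $\phi$ admits, so $\sigma[x\mapsto \sigma(v)] \in \denot{\Gamma, x{:}t}$. Every substitution for the smaller context therefore extends to one for the larger context that agrees after substitution. This gives the $\forall\sigma_2\exists\sigma_1$ inclusions after substitution. For the $\exists\sigma_1$ side we pick $\sigma_1$ from the original relation and restrict it. When $t$ is a function type, $x$ contributes nothing to $\denot{\Gamma, x{:}t}$ and cannot occur in qualifiers by well-formedness, so these facts are trivial.

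The main induction then proceeds case by case.
\begin{itemize}
\item \textsc{TVar}: if the variable is $x$ itself, we get $\urt{b}{\vnu = x}[x\mapsto v] = \urt{b}{\vnu = v}$. We discharge this with the hypothesis $\Gamma \vdash v : t$, using \textsc{TPureSub} together with the inversion fact that a value of base type can be given its singleton type. Other variables and \textsc{TConst} are immediate.
\item \textsc{TOpCtx}: $\Delta$ is closed, so nothing changes.
\item \textsc{TRet}, \textsc{TEffOp}, \textsc{TChoice}, \textsc{TAppEff}, \textsc{TOpApp}: these follow directly from the induction hypotheses, using commutation of substitutions, $\tau[y\mapsto v_2][x\mapsto v] = \tau[x\mapsto v][y\mapsto v_2[x\mapsto v]]$ for $y\neq x$ fresh.
\item \textsc{TLet}: the history $H \seqA F_1$ substitutes componentwise.
\item \textsc{TFun}, \textsc{TFixInd}: we use the strengthened context with $y{:}t_y$ appended, after $\alpha$-renaming so that bound variables avoid $x$ and the free variables of $v$.
\item \textsc{TFixBase}: trivial.
\item \textsc{TSub}, \textsc{TPureSub}, \textsc{TOpHis}: we combine the induction hypothesis with the auxiliary substitution facts. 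For \textsc{TOpHis} we also preserve non-emptiness: if $H'[x\mapsto v]$ were empty under every substitution, the extension argument above would make $H'$ empty under the corresponding substitutions of the larger context. Here we use the $\not\vdash$ premise in its semantic form.
\end{itemize}

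We expect the subtyping and inclusion cases to be the main obstacle. Their $\forall\exists$ form means substitution is not purely syntactic. The extension step needs $\sigma(v)$ to actually inhabit $\sigma(t)$ in the sense required by $\denot{\Gamma, x{:}t}$, which relates underapproximate qualifiers to values via \emph{all} values satisfying $\phi$. We will try to obtain this through Lemma~\ref{lemma:pure-subtyping} and the fundamental theorem for pure values. If that creates circularity, we will instead prove the lemma only for values $v$ that are constants or variables, which are the only values substituted into base-typed positions. For those, membership in the context denotation follows directly from \textsc{TConst}/\textsc{TVar} and \textsc{SubPBase}.
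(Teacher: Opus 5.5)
Your induction skeleton is reasonable: a suffix $\Gamma'$ for binders, mutual induction with the value and operator judgements, and commuting substitutions in the syntactic cases. The paper states this lemma without proof, so there is nothing to compare against. There is, however, a genuine gap in how you transport the semantic side conditions of \textsc{SubAutomata}, \textsc{SubPBase} and \textsc{TOpHis}: your key observation reads the underapproximate qualifier backwards. For a base-typed value $v$, $\sigma(v)\in\denot{\urt{b}{\sigma(\phi)}}$ says that every value satisfying $\sigma(\phi)$ is reached from $\sigma(v)$. That is, $\sigma(\phi)$ implies $\vnu=\sigma(v)$: the qualifier admits \emph{at most} $\sigma(v)$, and possibly nothing. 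For instance, $\emptyset\vdash 1:\urt{\Int}{\bot}$ follows from \textsc{TConst} and \textsc{TPureSub} (and $1\in\denot{\urt{\Int}{\bot}}$ vacuously), yet $\bot$ does not admit $1$. So the claim that every $\sigma\in\denot{\Gamma}$ extends to $\sigma[x\mapsto\sigma(v)]\in\denot{\Gamma,x{:}t}$ is false, and your fallback for constants and variables makes the same inversion.

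This is not a cosmetic slip. If the extension claim held, your argument would also prove the usual form of the lemma with $x{:}t$ dropped from the conclusion, and that form is false here. Since $\denot{x{:}\urt{\Int}{\bot}}=\emptyset$, the semantic premises of \textsc{SubHF} hold vacuously. Hence $x{:}\urt{\Int}{\bot}\vdash ():\uhat{\allA}{\urt{\Unit}{\vnu=()}}{\msgA{write}{x}}$ is derivable by \textsc{TRet} and \textsc{TSub}. Yet $\emptyset\vdash ():\uhat{\allA}{\urt{\Unit}{\vnu=()}}{\msgA{write}{1}}$ would contradict the Fundamental Theorem. Your $\exists\sigma_1$ step has the matching hole: ``restricting'' $\sigma_1$ silently assumes $\sigma_1(A_1[x\mapsto v])=\sigma_1(A_1)$. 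In the \textsc{TVar} case you need neither the hypothesis nor \textsc{TPureSub}, since \textsc{TConst}/\textsc{TVar} give $\urt{b}{\vnu=v}$ directly. Subsuming from $t$ instead would require $\vnu=v$ to imply $\phi$, again the wrong direction.

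The repair uses exactly what your argument never exploits: the statement keeps $x{:}t$ in the conclusion context. Take $t=\urt{b}{\phi}$ and any $\sigma\in\denot{\Gamma,x{:}t,\Gamma'}$. Then $\sigma(x)$ satisfies $\sigma(\phi)$, and $\sigma(\phi)$ implies $\vnu=\sigma(v)$, so $\sigma(x)=\sigma(v)$. Consequently $\denot{\Gamma,x{:}t,\Gamma'[x\mapsto v]}=\denot{\Gamma,x{:}t,\Gamma'}$, and $\sigma(A[x\mapsto v])=\sigma(A)$ for every such $\sigma$. Every semantic premise, including non-emptiness in \textsc{TOpHis}, therefore carries over with the same substitutions, with no extension or restriction at all.

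Two things must be settled first for this to work:
\begin{itemize}
\item The pointwise fact about $v$ needs its own induction over derivations built only from \textsc{TConst}, \textsc{TVar} and \textsc{TPureSub}, proved before this lemma. The paper's Fundamental Theorem cannot be used, because it invokes this lemma in its \textsc{TFun} and \textsc{TAppEff} cases.
\item You need the pointwise ($\forall\sigma\in\denot{\Gamma}$) reading of inclusion from the main text, together with a pointwise \textsc{SubPBase}. Under the appendix's $\forall\sigma_2\exists\sigma_1$ form that you propose to unfold, even $z{:}\urt{\Int}{\top}\vdash z:\urt{\Int}{\vnu=0}$ becomes derivable, and the pointwise fact fails.
\end{itemize}
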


\subsubsection{Type Contexts}

In the \autoref{sec:formal}, we defined the well-formedness of effectful operator typing context, we also define the well-formedness of pure operator typing context.

\begin{definition}[Well-formed pure operator typing context]
  \label{lemma:built-in-pure-typing}
  The operator typing context $\Delta$ is well-formed
  iff the semantics of every pure operator $\eff{op}$ is
  consistent with the type $\overline{x{:}b_{\I{x}}}\garr
  \overline{y{:}t_{\I{y}}}\sarr t$ provided by $\Delta$.
  \vspace{-.4cm} \par\nobreak {\small\begin{align*}
\overline{\forall
      x{:}b_{\I{x}}}.\; \overline{\forall y \in \denotation{t_{\I{y}}}}.\; \forall c \in \denotation{t}. c' \in \denotation{t_{\I{x}}\msubst{y}{c}} \impl \primop (c) \Downarrow c'
  \end{align*}} \vspace{-.4cm} \par\nobreak\noindent
\end{definition}

The typing relation is defined mutually recursively between values, terms and operators, that is:

\begin{theorem}\label{theorem:fundamental-full}[Pure Fundamental Theorem] For Given type context $\Gamma$, term $e$ and value $v$:
  $\Gamma \vdash v: t \implies \forall \sigma \in \denot{\Gamma}. \sigma(v) \in \denot{\sigma(t)}$ and
  $\Gamma \vdash e : \tau \implies \forall \sigma \in \denot{\Gamma}. \sigma(e) \in \denot{\sigma(\tau)}$ and 
  $\Gamma \vdash \eff{op} : t \implies \forall \sigma \in \denot{\Gamma}. \sigma(\lambda \overline{x}.\eff{op}\; \overline{x}) \in \denot{\sigma(t)}$ and 
  $\Gamma \vdash \primop : \overline{y{:}t_i}\sarr t \implies \forall \sigma \in \denot{\Gamma}. \sigma(\lambda \overline{x}.\primop\; \overline{x}) \in \denot{\sigma(t)}$.
\end{theorem}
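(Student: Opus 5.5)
We prove Theorem~\ref{theorem:fundamental-full} by simultaneous induction on the typing derivations of the four judgements (values, terms, effectful operators, pure operators). In each case we fix an arbitrary $\sigma \in \denot{\Gamma}$ and show that the closed instance inhabits the closed type. Substitution commutes with the term and type constructors, so each case reduces to a statement about closed terms and the denotations in Figure~\ref{fig:type-denotation}. Basic typing is assumed sound, so the $\emptyset \basicvdash e : s$ conjunct of every denotation is immediate.

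The structural cases go as follows.
\begin{itemize}
\item \textsc{TConst} and \textsc{TVar} hold because $\sigma(x)$ is a value satisfying $\vnu = \sigma(x)$, and a value reduces to itself with the empty trace.
\item \textsc{TOpCtx} and \textsc{TPureOp} follow from Definition~\ref{lemma:built-in-typing} and its pure counterpart. The instantiation $H' = H$ is non-empty by well-formedness.
\item \textsc{TOpHis} also uses Definition~\ref{lemma:built-in-typing}, which quantifies over every non-empty $H' \subseteq H$. The premise $\not\vdash H' \subseteq \emptyset$ supplies the non-emptiness, and Lemma~\ref{lemma:inclusion} converts the syntactic inclusion into denotational inclusion under $\sigma$.
\item \textsc{TSub} and \textsc{TPureSub} use Lemmas~\ref{lemma:subtyping} and~\ref{lemma:pure-subtyping}. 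The $\exists\sigma_1$ form of these lemmas is what we need, since the induction hypothesis holds for every $\sigma_1$.
\item \textsc{TFun} and \textsc{TAppEff} unfold the arrow denotation. \textsc{TAppEff} additionally uses \textsc{StAppLam} together with Lemma~\ref{lemma:pure-denot} to absorb the $\beta$-step, which emits no events.
\item \textsc{TRet} holds because $\denot{\epsilon}=\{[]\}$: the value reduces in zero steps under any history, and $H$ is non-empty by well-formedness.
\item \textsc{TChoice} holds because $\denot{\tau_1\interty\tau_2}$ is an intersection. Every trace in $\tau_1$ is realised by first stepping \textsc{StChoice1} and then running $e_1$, and symmetrically for $\tau_2$. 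Since \textsc{StChoice} emits $[]$, Lemma~\ref{lemma:pure-denot} closes the case.
\end{itemize}

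For \textsc{TEffOp}, the induction hypothesis on the operator premise gives $\sigma(\lambda\overline{x}.\eff{op}\,\overline{x}) \in \denot{\sigma(\overline{x_i{:}t_i}\sarr\tau)}$, and the argument premises give $\sigma(v_i)\in\denot{\sigma(t_i)}$. Applying the arrow denotation and then Lemma~\ref{lemma:pure-denot} to the $\beta$-redex yields the result.

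\textsc{TLet} is the central case. Fix a returned value $v$ and a trace $\alpha_f\in\denot{F_1\seqA F_2}$. By Lemma~\ref{lemma:seqA-denot}, split $\alpha_f=\alpha_1\listconcat\alpha_2$ with $\alpha_1 \in \denot{F_1}$ and $\alpha_2 \in \denot{F_2}$. Apply the induction hypothesis for $e_2$ under the extended substitution $\sigma[x\mapsto v_x]$, where $v_x$ is the value $e_1$ returns. This gives a history $\alpha_h'\in\denot{H\seqA F_1}$ realising $\alpha_2$. There is a mismatch, however: $\alpha_h'$ decomposes as some $\alpha_h\listconcat\alpha_1'$, and $\alpha_1'$ need not equal the chosen $\alpha_1$. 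To get around this, we induct in the opposite order. First choose $\alpha_1$, and use the induction hypothesis for $e_1$ to obtain $\alpha_h \in \denot{H}$ with $e_1$ producing $\alpha_1$ and returning $v_x$. Then instantiate the $e_2$ hypothesis at the specific future $\alpha_2$. The resulting witness $\alpha_h'$ must be exactly $\alpha_h\listconcat\alpha_1$; the existential quantifier in the denotation is too weak to guarantee this. We therefore expect to strengthen the denotation used in the induction (a Kripke-style formulation quantifying over the history) or to prove an auxiliary lemma stating that operator semantics depend only on the ghost-erased history, so that any witness history in $\denot{H\seqA F_1}$ can be re-aligned. Closing this gap is the main obstacle of the proof. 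Once it is closed, chaining \textsc{StLetE1} and \textsc{StLetE2} with trace concatenation finishes the case. Ghost variables bound by $\garr$ and \textsc{SubG} are handled by the universal quantifier in $\denot{x{:}b\garr t}$.

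Recursion is handled through the fix rules. \textsc{TFixBase} holds trivially, because $\urt{b}{\bot}$ has no values and $\emptyset$ has no traces. \textsc{TFixInd} follows from the induction hypothesis on both premises: from $\denot{t}$ for the fix term and the body typing, we instantiate $f$ with the fix term, use \textsc{StAppFix} and Lemma~\ref{lemma:pure-denot}, and conclude membership in $\denot{t'}$. Since unrolling is explicit in the derivation, no step-indexing is needed. Theorem~\ref{theorem:fundamental} is then the term component of this theorem, and Corollary~\ref{theorem:sound} follows by taking $\Gamma=\emptyset$ and $H=\epsilon$.
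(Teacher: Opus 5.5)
There is a genuine gap, and it is the one you flag yourself: the \textsc{TLet} case is left open. The paper's proof gives you nothing to borrow here, because its case analysis has no \textsc{TLet} case at all. More seriously, the misalignment you found is a property of the rules, not of your argument, so neither of your proposed repairs can close it.

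Take unit-argument operations $\eff{c},\eff{d}$ that are always enabled, $\eff{f}$ enabled exactly when the history ends in $\eff{d}$, and $\eff{g}$ enabled exactly when the history ends in $\eff{c}$ followed by $\eff{f}$. Give them the types $\Delta(\eff{c})=\uhat{\allA}{\Unit}{\msgC{c}}$, $\Delta(\eff{f})=\uhat{\allA\seqA\msgC{d}}{\Unit}{\msgC{f}}$ and $\Delta(\eff{g})=\uhat{\allA\seqA\msgC{c}\seqA\msgC{f}}{\Unit}{\msgC{g}}$; these meet the formal condition of Definition~\ref{lemma:built-in-typing}. Narrow with \textsc{TOpHis} and then widen with \textsc{SubHF}. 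For $\eff{f}$, narrowing to $\msgC{d}$ gives $\eff{f}\,() : \uhat{\msgC{c}\lorA\msgC{d}}{\Unit}{\msgC{f}}$. For $\eff{g}$, narrowing to $\msgC{c}\seqA\msgC{f}$ gives $\eff{g}\,() : \uhat{(\msgC{c}\lorA\msgC{d})\seqA\msgC{f}}{\Unit}{\msgC{g}}$. \textsc{TLet} then derives $\emptyset\vdash\zlet{x}{\eff{f}\,()}{\eff{g}\,()} : \uhat{\msgC{c}\lorA\msgC{d}}{\Unit}{\msgC{f}\seqA\msgC{g}}$. Yet this term is stuck from $[\eff{c}]$, where $\eff{f}$ is not enabled, and from $[\eff{d}]$, where $\eff{g}$ is not enabled after $\eff{d},\eff{f}$. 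So it is not in the denotation. A Kripke-style denotation that quantifies over every history in $H$ is broken by exactly this \textsc{SubHF} widening. An erasure lemma cannot help either, because $\eff{c}$ and $\eff{d}$ are real events.

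Your reordered argument also has a second, independent hole. It needs $e_1$ to return some $v_x$ satisfying $t_x$, but $t_x$ may be uninhabited. Since every term inhabits $\urt{\Unit}{\bot}$, \textsc{SubHF} may weaken $\eff{c}\,() : \uhat{\epsilon}{\Unit}{\msgC{c}}$ (obtained by \textsc{TOpHis}) to $\uhat{\epsilon}{x{:}\urt{\Unit}{\bot}}{\msgC{c}}$. Under the binder $x{:}\urt{\Unit}{\bot}$ the context denotation is empty, so every inclusion premise holds vacuously. \textsc{TRet} plus \textsc{TSub} then give $() : \uhat{\epsilon\seqA\msgC{c}}{\Unit}{\msgC{d}}$, and \textsc{TLet} yields $\emptyset\vdash\zlet{x}{\eff{c}\,()}{()} : \uhat{\epsilon}{\Unit}{\msgC{c}\seqA\msgC{d}}$. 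This contradicts even Corollary~\ref{theorem:sound}.

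So the statement cannot be proved for the system as written. It needs changed rules, for instance an inhabitedness side condition on $t_x$ and a restriction on widening histories that feed into \textsc{TLet}; a cleverer proof will not do.

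Separately, your \textsc{TSub} step runs the wrong way. Lemma~\ref{lemma:subtyping} as stated gives $\denot{\sigma_2(\tau_2)}\subseteq\denot{\sigma_1(\tau_1)}$. Combined with the hypothesis at $\sigma_1$, this says nothing about $\sigma_2(e)$. Subsumption needs $\denot{\sigma(\tau)}\subseteq\denot{\sigma(\tau')}$ for every $\sigma$.
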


\begin{proof} We proceed by induction over our type judgment $\Gamma \vdash e : \tau$ and $\Gamma \vdash v : t$, which has cases proved as following:
\begin{enumerate}[label=Case]
\item:
{\footnotesize
\mprooftr{25mm}
{$\eraserf{\Gamma} \basicvdash c : b$}
{TConst}
{$\Gamma \vdash c : \urt{b}{\vnu = c}$}
\ \\ \ \\}
Since the constant $c$ is closed, we have $\sigma(c) = c$ for any substitution $\sigma$. By the semantics of evaluation, $\msteptr{\alpha}{c}{[]}{c}$ holds trivially. Therefore, $c \in \denot{\urt{b}{\vnu = c}}$ directly by the definition of type denotation. This concludes the case.
\item:
{\footnotesize
\mprooftr{25mm}
{$\eraserf{\Gamma} \basicvdash x : b$}
{TVar}
{$\Gamma \vdash x : \urt{b}{\vnu = x}$}
\ \\ \ \\}
Since the variable $x$ is will typed under the context $\Gamma$ in the basic type system, thus there exists a pure refinement type $t$ such that $(x, t) \in \Gamma$. The denotation of $\Gamma$ will always contains a substitution $\sigma$ such that $\sigma = \sigma'[x\mapsto v]$ where $v \in \denot{\sigma(t)}$. Therefore, $\sigma(x) = v$, and $\sigma(x) \in \denot{\urt{b}{\vnu = v}}$. This concludes the case.
\item:
{\footnotesize
\mprooftr{22mm}{
 $\Gamma, x{:}t_x\vdash e : \tau$
}{TFun}{
 $\Gamma\vdash \zzlam{x}{e} : x{:}t_x\sarr \tau$}
\ \\ \ \\}
According to the assumption of this case and the hypothesis, we know:
{\footnotesize\begin{alignat}{2}
  \setcounter{equation}{0}
  &\Gamma, x{:}t_x\vdash e : \tau \ptag{assumption} \label{found:1-1} \\
  &\forall \sigma \in \denot{\Gamma, x{:}t_x}. \sigma(e) \in \denot{\sigma(\tau)} \ptag{induction hypothesis} \label{found:1-2} \\
  &\sigma =\sigma'[x\mapsto v] \quad \text{where} \quad \sigma' \denot{\Gamma} \land v \in \denot{\sigma'(t_x)} \ptag{denotation of type context and \ref{found:1-2}} \label{found:1-3} \\
  &\sigma'(e[x\mapsto v]) \in \denot{\sigma(\tau)} \ptag{by \ref{found:1-2} and \ref{found:1-3} and Lemma~\ref{lemma:subt}} \label{found:1-4} \\
  &\forall \alpha. \msteptr{\alpha}{\zzlam{x}{e}\;v}{[]}{e[x\mapsto v]} \ptag{by the semantics of evaluation} \label{found:1-5} \\
  &\sigma'(\zzlam{x}{e}\;v) \in \denot{\sigma(\tau)} \ptag{by \ref{found:1-4} and \ref{found:1-5} and Lemma~\ref{lemma:pure-denot}} \label{found:1-6}
  \end{alignat}}\noindent
Then, the conclusion is consistent with the denotation definition, thus the case is proved.
\item:
{\footnotesize
\mprooftr{47mm}{
}{TFixBase}{
 $\Gamma\vdash \zzfix{f}{x}{e} : \overline{x{:}s}\sarr\uhat{H}{\urt{b}{\bot}}{\emptyset}$}
\ \\ \ \\}
Any function can have a empty return type for free, thus the case is proved.
\item:
{\footnotesize
\mprooftr{27mm}{
 $\Gamma\vdash \zzfix{f}{x}{e} : t$ \quad
 $\Gamma, f{:}t\vdash \zzlam{x}{e} : t'$
}{TFixInd}{
 $\Gamma\vdash \zzfix{f}{x}{e} : t'$}
\ \\ \ \\}
We inline the recursion function $f$ within its body, then this case is reduced to the case of function typing.
\item:
{\footnotesize
\mprooftr{20mm}
{$\Gamma\vdash v : t$}
{TRet}
{$\Gamma\vdash v : \uhat{H}{t}{\epsilon}$}
\ \\ \ \\}
According to the assumption of this case and the hypothesis, we know:
{\footnotesize\begin{alignat}{2}
  \setcounter{equation}{0}
  &\Gamma\vdash v : t \ptag{assumption} \label{found:2-1} \\
  &\forall \sigma \in \denot{\Gamma}. \sigma(v) \in \denot{\sigma(t)} \ptag{induction hypothesis} \label{found:2-2} \\
  &\forall \alpha.\msteptr{\alpha}{v}{[]}{v} \ptag{by the semantics of evaluation} \label{found:2-3}
  \end{alignat}}\noindent
Then, the conclusion is consistent with the denotation definition, thus the case is proved.
\item:
{\footnotesize
\mprooftr{35mm}
{
$\Gamma\vdash e_1 : \tau_1$ \quad $\Gamma\vdash e_2 : \tau_2$}
{TChoice}
{$\Gamma\vdash e_1 \oplus e_2 : \tau_1 \interty \tau_2$}
\ \\ \ \\}
According to the assumption of this case and the hypothesis, we know:
{\footnotesize\begin{alignat}{2}
  \setcounter{equation}{0}
  &\Gamma\vdash e_1 : \tau_1 \ptag{assumption} \label{found:3-1} \\
  &\forall \sigma \in \denot{\Gamma}. \sigma(e_1) \in \denot{\sigma(\tau_1)} \ptag{induction hypothesis} \label{found:3-2} \\
  &\Gamma\vdash e_2 : \tau_2 \ptag{assumption} \label{found:3-3} \\
  &\forall \sigma \in \denot{\Gamma}. \sigma(e_2) \in \denot{\sigma(\tau_2)} \ptag{induction hypothesis} \label{found:3-4} \\
  &\forall \alpha.\msteptr{\alpha}{e_1 \oplus e_2}{[]}{e_1} \ptag{by the semantics of evaluation} \label{found:3-5} \\
  &\forall \alpha.\msteptr{\alpha}{e_1 \oplus e_2}{[]}{e_2} \ptag{by the semantics of evaluation} \label{found:3-6} \\
  &\sigma(e_1 \oplus e_2) \in \denot{\sigma(\tau_1)} \ptag{by \ref{found:3-2} and \ref{found:3-5} and Lemma~\ref{lemma:pure-denot}} \label{found:3-7} \\
  &\sigma(e_1 \oplus e_2) \in \denot{\sigma(\tau_2)} \ptag{by \ref{found:3-4} and \ref{found:3-6} and Lemma~\ref{lemma:pure-denot}} \label{found:3-8} \\
  &\sigma(e_1 \oplus e_2) \in \denot{\sigma(\tau_1 \interty \tau_2)} \ptag{by definition of denotation} \label{found:3-9}
  \end{alignat}}\noindent
Then, the conclusion is consistent with the denotation definition, thus the case is proved.
\item:
{\footnotesize
\mprooftr{18mm}
{$\Gamma \vdash v : t$ \quad
$\Gamma \vdash t <: t'$}
{TPureSub}
{$\Gamma \vdash v : t'$}
\ \\ \ \\}
According to the assumption of this case and the hypothesis, we know:
{\footnotesize\begin{alignat}{2}
  \setcounter{equation}{0}
  &\Gamma\vdash v : t \ptag{assumption} \label{found:4-1} \\
  &\forall \sigma \in \denot{\Gamma}. \sigma(v) \in \denot{\sigma(t)} \ptag{induction hypothesis} \label{found:4-2} \\
  &\Gamma\vdash t <: t' \ptag{assumption} \label{found:4-3} \\
  &\sigma(v) \in \denot{\sigma(t')} \ptag{by \ref{found:4-2} and \ref{found:4-3} and Lemma~\ref{lemma:pure-subtyping}} \label{found:4-4}
  \end{alignat}}\noindent
Then, the conclusion is consistent with the denotation definition, thus the case is proved.
\item:
{\footnotesize
\mprooftr{18mm}
{$\Gamma \vdash e : \tau$ \quad
$\Gamma \vdash \tau <: \tau'$}
{TSub}
{$\Gamma \vdash e : \tau'$}
\ \\ \ \\}
According to the assumption of this case and the hypothesis, we know:
{\footnotesize\begin{alignat}{2}
  \setcounter{equation}{0}
  &\Gamma\vdash e : \tau \ptag{assumption} \label{found:5-1} \\
  &\forall \sigma \in \denot{\Gamma}. \sigma(e) \in \denot{\sigma(\tau)} \ptag{induction hypothesis} \label{found:5-2} \\
  &\Gamma\vdash \tau <: \tau' \ptag{assumption} \label{found:5-3} \\
  &\sigma(e) \in \denot{\sigma(\tau')} \ptag{by \ref{found:5-2} and \ref{found:5-3} and Lemma~\ref{lemma:subtyping}} \label{found:5-4}
  \end{alignat}}\noindent
Then, the conclusion is consistent with the denotation definition, thus the case is proved.
\item:
{\footnotesize
\mprooftr{38mm}
{$\Gamma \vdash \eff{op} : \overline{x{:}t_x}\sarr\uhat{H}{t}{F}$ \quad
$\Gamma,\overline{x{:}t_x} \vdash H' \subseteq H$
\quad
$\Gamma,\overline{x{:}t_x} \not\vdash H' \subseteq \emptyset$}
{TOpHis}
{$\Gamma\vdash \eff{op} : \overline{x{:}t_x}\sarr\uhat{H'}{t}{F}$}
\ \\ \ \\}
According to the assumption of this case and the hypothesis, we know:
{\footnotesize\begin{alignat}{2}
  \setcounter{equation}{0}
  &\Gamma \vdash \eff{op} : \overline{x{:}t_x}\sarr\uhat{H}{t}{F} \ptag{assumption} \label{found:6-1} \\
  &\forall \sigma \in \denot{\Gamma}. \sigma(\lambda \overline{x}.\eff{op}\;\overline{x}) \in \denot{\sigma(\overline{x{:}t_x}\sarr\uhat{H}{t}{F})} \ptag{induction hypothesis} \label{found:6-2} \\
  &\Gamma,\overline{x{:}t_x} \vdash H' \subseteq H \ptag{assumption} \label{found:6-3} \\
  &\forall \sigma \in \denot{\Gamma}. \sigma(H') \subseteq \sigma(H) \ptag{by \ref{found:6-3} and Lemma~\ref{lemma:inclusion}} \label{found:6-4} \\
  &\sigma(\lambda \overline{x}.\eff{op}\;\overline{x}) \in \denot{\sigma(\overline{x{:}t_x}\sarr\uhat{H'}{t}{F})} \ptag{by \ref{found:6-2} and \ref{found:6-4} and definition~\ref{lemma:built-in-typing} } \label{found:6-4}
  \end{alignat}}\noindent
Then, the conclusion is consistent with the denotation definition, thus the case is proved.
\item:
{\footnotesize
\mprooftr{17mm}
{$\Delta(\primop) = t$}
{TPureOp}
{$\Gamma \vdash \primop : t$}
\ \\ \ \\}
Can be derived directly from the definition of well-formedness of pure operator typing context (Definition~\ref{lemma:built-in-pure-typing}).
\item:
{\footnotesize
\mprooftr{15mm}
{$\Delta(\eff{op}) = t$}
{TOpCtx}
{$\Gamma\vdash \eff{op} : t$}
\ \\ \ \\}
Can be derived directly from the definition of well-formedness of operator typing context (Definition~\ref{lemma:built-in-typing}).
\item:
{\footnotesize
\mprooftr{58mm}
  {
  $\Gamma \vdash \eff{op} : \overline{x_i{:}t_i}\sarr \uhat{H}{t}{\msgA{op}{\overline{x_i}} \seqA F}$ \quad
  $\forall i. \;\Gamma \vdash v_i : t_i$}
  {TEffOp}
  {$\Gamma\vdash \eff{op}{\;\overline{v_i}} : \uhat{H}{t}{\msgA{op}{\overline{v_i}} \seqA F}$}
\ \\ \ \\}
According to the assumption of this case and the hypothesis, we know:
{\footnotesize\begin{alignat}{2}
  \setcounter{equation}{0}
  &\Gamma \vdash \eff{op} : \overline{x_i{:}t_i}\sarr \uhat{H}{t}{\msgA{op}{\overline{x_i}} \seqA F} \ptag{assumption} \label{found:7-1} \\
  &\forall \sigma \in \denot{\Gamma}. \sigma(\lambda \overline{x}.\eff{op}\;\overline{x}) \in \denot{\sigma(\overline{x_i{:}t_i}\sarr \uhat{H}{t}{\msgA{op}{\overline{x_i}} \seqA F})} \ptag{induction hypothesis} \label{found:7-2} \\
  &\forall i. \;\Gamma \vdash v_i : t_i \ptag{assumption} \label{found:7-3} \\
  &\forall i.\forall \sigma \in \denot{\Gamma}. \sigma(v_i) \in \denot{\sigma(t_i)} \ptag{induction hypothesis} \label{found:7-4} \\
  &\forall \alpha. \msteptr{\alpha}{(\lambda \overline{x}.\eff{op}\;\overline{x}) \; \overline{v_i}}{[]}{\eff{op}\; \overline{v_i}}\ptag{by the semantics of evaluation} \label{found:7-5} \\
  &\sigma((\lambda \overline{x}.\eff{op}\;\overline{x}) \; \overline{v_i}) \in \denot{\sigma(\uhat{H}{t}{\msgA{op}{\overline{v_i}} \seqA F})} \ptag{by  \ref{found:7-5} and denotation definition} \label{found:7-6} \\
  &\sigma(\eff{op}\; \overline{v_i}) \in \denot{\sigma(\uhat{H}{t}{\msgA{op}{\overline{v_i}} \seqA F})} \ptag{by \ref{found:7-5} and \ref{found:7-6} and Lemma~\ref{lemma:pure-denot}} \label{found:7-7}
  \end{alignat}}\noindent
Then, the conclusion is consistent with the denotation definition, thus the case is proved.
\item:
{\footnotesize
\mprooftr{52mm}
{
$\Gamma \vdash \primop : \overline{y{:}t_i}\sarr t$ \quad
$\forall i. \Gamma \vdash v_i : t_i$
}
{TOpApp}
{$\Gamma\vdash \primop\ \overline{v} : t$}
\ \\ \ \\}
According to the assumption of this case and the hypothesis, we know:
{\footnotesize\begin{alignat}{2}
  \setcounter{equation}{0}
  &\Gamma \vdash \primop : \overline{y{:}t_i}\sarr t \ptag{assumption} \label{found:8-1} \\
  &\forall \sigma \in \denot{\Gamma}. \sigma(\lambda \overline{x}.\primop\;\overline{x}) \in \denot{\sigma(\overline{y{:}t_i}\sarr t)} \ptag{induction hypothesis} \label{found:8-2} \\
  &\forall i. \;\Gamma \vdash v_i : t_i \ptag{assumption} \label{found:8-3} \\
  &\forall i.\forall \sigma \in \denot{\Gamma}. \sigma(v_i) \in \denot{\sigma(t_i)} \ptag{induction hypothesis} \label{found:8-4} \\
  &\forall \alpha. \msteptr{\alpha}{(\lambda \overline{y}.\primop\;\overline{y}) \; \overline{v_i}}{[]}{\primop\; \overline{v_i}}\ptag{by the semantics of evaluation} \label{found:8-5} \\
  &\sigma((\lambda \overline{y}.\primop\;\overline{y}) \; \overline{v_i}) \in \denot{\sigma(t)} \ptag{by  \ref{found:8-5} and denotation definition} \label{found:8-6} \\
  &\sigma(\primop\; \overline{v_i}) \in \denot{\sigma(t)} \ptag{by \ref{found:8-5} and \ref{found:8-6} and Lemma~\ref{lemma:pure-denot}} \label{found:8-7}
  \end{alignat}}\noindent
Then, the conclusion is consistent with the denotation definition, thus the case is proved.
\item:
{\footnotesize
\mprooftr{30mm}{
  $\Gamma v_1 : x{:}t_2 \sarr \tau$ \quad
  $\Gamma \vdash v_2 : t_2$
 }{TAppEff}{
  $\Gamma\vdash v_1 v_2 : \tau[x\mapsto v_2]$}
\ \\ \ \\}
According to the assumption of this case and the hypothesis, we know:
{\footnotesize\begin{alignat}{2}
  \setcounter{equation}{0}
  &\Gamma \vdash v_1 : x{:}t_2 \sarr \tau \ptag{assumption} \label{found:9-1} \\
  &\forall \sigma \in \denot{\Gamma}. \sigma(v_1) \in \denot{\sigma(x{:}t_2 \sarr \tau)} \ptag{induction hypothesis} \label{found:9-2} \\
  &\Gamma \vdash v_2 : t_2 \ptag{assumption} \label{found:9-3} \\
  &\forall \sigma \in \denot{\Gamma}. \sigma(v_2) \in \denot{\sigma(t_2)} \ptag{induction hypothesis} \label{found:9-4} \\
  &\sigma(v_1 \; v_2) \in \denot{\sigma(\tau[x\mapsto v_2])} \ptag{by definition of denotation land Lemma~\ref{lemma:subt}} \label{found:9-6}
  \end{alignat}}\noindent
Then, the conclusion is consistent with the denotation definition, thus the case is proved.
\end{enumerate}
\end{proof}

\subsection{Type Soundness}

The type soundness can be proved by fundamental theorem directly.

\begin{corollary}[Type Soundness] A generator \(e\) that satisfies
  \(\emptyset \vdash e : \uhat{\epsilon}{\Unit}{F}\) must produce all
  traces accepted by $F$, i.e.,
  $\forall \alpha \in
  \denot{F}. \msteptr{[]}{e}{\eraseGhost{\alpha}}{()}$.
\end{corollary}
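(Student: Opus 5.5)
The plan is to instantiate the Fundamental Theorem (Theorem~\ref{theorem:fundamental-full}, equivalently Theorem~\ref{theorem:fundamental}) at the empty context and then unfold the type denotation of $\uhat{\epsilon}{\Unit}{F}$. Since $\Gamma = \emptyset$, we have $\denot{\emptyset} = \{\emptyset\}$, so the only substitution is the empty one. The theorem then gives $e \in \denot{\uhat{\epsilon}{\Unit}{F}}$ directly, with $\sigma(e) = e$ and $\sigma(\uhat{\epsilon}{\Unit}{F}) = \uhat{\epsilon}{\Unit}{F}$.

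Next, I would expand that membership using the base-type clause of the \tyName{} denotation in Figure~\ref{fig:type-denotation}. Reading $\Unit$ as $\urt{\Code{unit}}{\top}$ (the alias $b \doteq \urt{b}{\top}$), the clause gives
\[
\forall v{:}\Code{unit}.\ \top \impl \forall \alpha_f \in \denot{F[x\mapsto v]}.\ \exists \alpha_h \in \denot{\epsilon[x\mapsto v]}.\ \msteptr{\eraseGhost{\alpha_h}}{e}{\eraseGhost{\alpha_f}}{v}.
\]
I would then make three simplifications:
\begin{enumerate}
\item The only closed value of basic type $\Code{unit}$ is $()$, so I take $v = ()$.
\item By well-formedness, the return binder of a unit result carries no information, so $F[x\mapsto ()]$ has the same denotation as $F$. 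If $x$ does occur in $F$, the substitution is still just the unique instantiation.
\item $\denot{\epsilon} = \{[]\}$, which forces $\alpha_h = []$, and $\eraseGhost{[]} = []$ by the definition of ghost erasure.
\end{enumerate}
Combining these yields $\forall \alpha \in \denot{F}.\ \msteptr{[]}{e}{\eraseGhost{\alpha}}{()}$, which is exactly the claim.

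I expect the main obstacle to be bookkeeping rather than mathematics. The first point is making sure the unit value is the unique inhabitant, so that the universally quantified $v$ can be fixed to $()$; this follows from basic typing soundness, which the section assumes. The second point is checking that the premise $\top[x\mapsto ()]$ holds trivially, so the implication can be discharged.
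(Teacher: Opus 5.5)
Your proposal is correct and matches the paper's proof: instantiate the Fundamental Theorem at the empty context (so $\sigma$ is the empty substitution), unfold the \tyName{} denotation, and use $\denot{\epsilon} = \{[]\}$ together with $\eraseGhost{[]} = []$. One simplification: you never need $()$ to be the \emph{unique} unit value. You only instantiate the universally quantified $v$ at $()$, and the premise $\top$ is trivially satisfied.
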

\begin{proof} According to the fundamental theorem, we have
{\footnotesize\begin{alignat}{2}
  \setcounter{equation}{0}
  &\emptyset \vdash e : \uhat{\epsilon}{\Unit}{F} \ptag{assumption} \label{sound:1} \\
  &\forall \sigma \in \denot{\emptyset}. \sigma(e) \in \denot{\sigma(\uhat{\epsilon}{\Unit}{F})} \ptag{by Theorem~\ref{theorem:fundamental-full} and \ref{sound:1}} \label{sound:2} \\
  &\forall \alpha_f \in \denot{F}. \exists \alpha_h \in \denot{\epsilon}. \msteptr{\alpha_h}{e}{\eraseGhost{\alpha_f}}{()} \ptag{by \ref{sound:2} and denotation definition} \label{sound:3} \\
  &\alpha_f \in \denot{\epsilon} \iff \alpha_f = [\;] \ptag{by  denotation definition} \label{sound:4} \\
  &\forall \alpha_f \in \denot{F}. \msteptr{\epsilon}{e}{\eraseGhost{\alpha_f}}{()} \ptag{by \ref{sound:3} and \ref{sound:4} and denotation definition} \label{sound:5}
  \end{alignat}}\noindent
The the soundness is proved.
\end{proof}

\subsection{Synthesis is Sound}

As discussed in \autoref{sec:synthesis}, our synthesis algorithm first
refines the input property to a set of realizable abstract
traces and then uses the \(\S{TermDerive}\) function to translate these
traces into a controller program. We introduce a concept of realizability of abstract traces which stand as a bridge between Algorithm~\ref{algo:refine} and Algorithm~\ref{algo:term-derive}.

\begin{definition}[Symbolic event consistent with trace]
  \label{def:real-event-def}
  A symbolic event $\msg{op}{\phi}$ in abstract trace $\pi$ (i.e.,
  $\pi = \pi_h \seqA \msg{op}{\phi} \seqA \pi \seqA \seqA \pi_f$, where $\pi$ is a sequence of ghost events) is
  \emph{consistent} with the trace,
  denoted as $\Gamma \vdash_{R} \msg{op}{\phi} \in \pi$, iff {\small\begin{align*}
    &\Gamma \vdash \eff{op} : \overline{x{:}t_x}\sarr \uhat{\pi_h}{x:t}{\msg{op}{\phi}\seqA\pi}
\end{align*}}\noindent
\end{definition}

\begin{definition}[Realizability]
  \label{def:real-trace-def}
  An abstract trace $\pi$ is realizable under type context $\Gamma$ iff all concrete events in $\pi$ are consistent with $\pi$ under context $\Gamma$, denoted as $\Gamma \vdash_{R} \pi$.
\end{definition}

We now prove that the refined event in Algorithm~\ref{algo:refine} is consistent with the trace.

\begin{lemma}[Refined event is consistent with trace]
  \label{lemma:real-event}
  The refined event in Algorithm~\ref{algo:refine} is consistent with the trace.
\end{lemma}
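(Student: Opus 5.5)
We have to show that every abstract trace $\pi'$ returned by \refine{} (Algorithm~\ref{algo:refine}) satisfies $\Gamma' \vdash_{R} \msg{op}{\phi''} \in \pi'$, where $\msg{op}{\phi''}$ is the event tagged $\top$. By Definition~\ref{def:real-event-def}, this means deriving
\[\Gamma' \vdash \eff{op} : \overline{y{:}t_y}\sarr\uhat{\pi_h'}{x{:}t_x}{\msg{op}{\phi''}\seqA\pi_g},\]
where $\pi' = \pi_h' \seqA \msg{op}{\phi''}^\top \seqA \pi_g \seqA \pi_f'$ and $\pi_g$ is the maximal block of ghost events that follows the target. The derivation will follow the same shape as Example~\ref{ex:operator-application-typing}: first \textsc{TOpCtx}, then \textsc{SubG}, then \textsc{SubHF}, and finally \textsc{TOpHis}.

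\textbf{Step 1: instantiate the operator type.} Line 1 retrieves $\Delta(\eff{op}) = \overline{z{:}b}\garr\overline{y{:}t_y}\sarr\uhat{H}{x{:}t_x}{\msg{op}{\phi'}\seqA F}$. Line 2 adds $\overline{z}$, $\overline{y}$ and $x$ to $\Gamma'$, so each ghost variable can be instantiated with itself through \textsc{SubG}. The side condition of \textsc{SubG} is well-formedness in $\Gamma'$, and this holds by construction. Wrapping the instantiation in \textsc{TSub} yields the operator type with $\overline{z}$ free.

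\textbf{Step 2: relate the normalized pieces to $H$ and $F$.} The target trace is $H' \seqA \msg{op}{\phi''}^\top \seqA F'$. By Lemma~\ref{lemma:norm-sound}, every $\pi' \in \normPlan(\cdot)$ denotes a subset of this trace. Since $\msg{op}{\phi''}$ is a single event, it can be split off. Splitting gives $\denot{\pi_h'} \subseteq \denot{H'} \subseteq \denot{H}$, because $H' = \pi_h \land H$. On the other side, $\denot{\pi_g \seqA \pi_f'} \subseteq \denot{F\seqA\allA}$. Two further facts hold for the event itself: $\phi'' = \phi\land\phi'$ implies $\phi'$, and the ghost block $\pi_g$ is a prefix-refinement of $F$.

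From these facts, \textsc{SubHF} lets us shrink the future from $\msg{op}{\phi'}\seqA F$ to $\msg{op}{\phi''}\seqA\pi_g$. This is legitimate because the future inclusion is contravariant: we need $\Gamma', x{:}t_x \vdash \msg{op}{\phi''}\seqA\pi_g \subseteq \msg{op}{\phi'}\seqA F$. The history is then replaced by $\pi_h'$ using \textsc{TOpHis}, which needs $\pi_h' \subseteq H$. The non-emptiness premise of \textsc{TOpHis} is supplied by line 7's filter: $\Gamma'\vdash\pi'\not\subseteq\emptyset$ implies that its prefix $\pi_h'$ is non-empty.

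\textbf{Main obstacle.} The delicate step is the future inclusion $\msg{op}{\phi''}\seqA\pi_g \subseteq \msg{op}{\phi'}\seqA F$. The merge on line 5 only guarantees $\pi_g\seqA\pi_f' \subseteq F\seqA\allA$. Knowing this does not immediately let us cut the trace exactly at the end of $F$.

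I would first argue that $F$ in operator signatures consists solely of ghost events, as in all examples. Then, because $\pi_f$ and $F\seqA\allA$ are intersected before normalization, each normalized $\pi'$ fixes an alignment in which the first $|F|$ ghost events form $\pi_g$, and so $\denot{\pi_g}\subseteq\denot{F}$. If this alignment argument fails for a general $F$, I would weaken the claim to take $\pi_g$ as exactly the component of the normalized trace that originates from $F$. Under that reading, Definition~\ref{def:real-event-def}'s ghost segment is chosen to match.

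The inclusions in this step use the semantic (\textsc{SubAutomata}) reading. I must also check that the quantifier structure of \textsc{SubAutomata} (for all $\sigma_2$ there exists $\sigma_1$) holds with $\sigma_1=\sigma_2$. That is the trivial direction, so it suffices.
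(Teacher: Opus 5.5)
You follow the same route as the paper. Its proof reads $\pi_h\subseteq H$ and $\msg{op}{\phi}\seqA\pi\subseteq F$ off lines 4--5 of Algorithm~\ref{algo:refine}, then cites \textsc{TOpHis} for the smaller history and \textsc{TSub} for the smaller future. Your \textsc{SubG} instantiation, the non-emptiness premise taken from line 7, and the split via Lemma~\ref{lemma:norm-sound} are bookkeeping the paper leaves implicit. The future inclusion you single out is exactly the step the paper asserts without argument, so you are not missing an idea the paper supplies. However, your way of closing it does not hold up.

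\textbf{The ghost-only assumption.} You justify that operator futures are ghost-only ``as in all examples'', but this is false for the paper's own specifications. The appendix \textsf{Stack} benchmark gives $\eff{popReq}$ the future $\msg{popReq}{\true}\seqA\msg{popRsp}{\true}\seqA\allA$, and Firewall's $\eff{eStart}$ must be followed by $\msg{eInReq}{\ldots}$. For such an $F$, no ghost-only $\pi_g$ with non-empty denotation satisfies $\msg{op}{\phi''}\seqA\pi_g\subseteq\msg{op}{\phi'}\seqA F$. So the judgement required by Definition~\ref{def:real-event-def} cannot be reached by \textsc{SubHF}, and the lemma needs an extra hypothesis.

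The right hypothesis is well-formedness of $\Delta$ (Definition~\ref{lemma:built-in-typing}), which the Fundamental Theorem already assumes. By \textsc{StEfOp}, an operator application emits exactly one event. Hence every trace of a well-formed operator's future must erase to that single event, i.e.\ every trace of $F$ is ghost-only. Cite this rather than the examples.

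\textbf{The fallback.} Taking $\pi_g$ to be ``the component originating from $F$'' silently changes Definition~\ref{def:real-event-def}, which demands a ghost-only segment. It is also not well defined, since \Code{Minimize} in $\normPlan$ keeps no provenance.

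\textbf{The choice of split.} Do not take $\pi_g$ maximal. If $F=\msgAGray{pushI}{n{+}1\;x}$ (no trailing $\allA$) and another ghost event follows it in $\pi'$, the maximal block is not contained in $F$. Use the existential choice of split that the definition permits.

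\textbf{The $\sigma_1=\sigma_2$ remark.} This is backwards for \textsc{SubAutomata} as printed in the appendix. There the inner inclusion is $\denot{\sigma_2(A_2)}\subseteq\denot{\sigma_1(A_1)}$, the reverse of what you derived. What you actually need is the definition in Section~\ref{sec:metatheory}, $\forall\sigma.\,\denot{\sigma(A)}\subseteq\denot{\sigma(A')}$. The appendix rule is inconsistent with that definition, so state explicitly that you use the Section~\ref{sec:metatheory} version.
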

\begin{proof}
  Assume Algorithm~\ref{algo:refine} refines the event $\msg{op}{\phi}$ in the trace $\pi_h \seqA \msg{op}{\phi} \seqA \pi \seqA \seqA \pi_f$. We then have $\pi_h \subseteq H$ and $\msg{op}{\phi} \seqA \pi \subseteq F$ (line 4 and line 5 in Algorithm~\ref{algo:refine}), where $H$ and $F$ are the history and future regex of the \tyName{} of $\eff{op}$. According to the rule \textsc{TOpHis}, which allows for a smaller history regex, and the rule \textsc{TSub}, which allows for a smaller future regex, we have that $\msg{op}{\phi}$ is consistent with the trace.
\end{proof}

Next, we show that all results of Algorithm~\ref{algo:refine} are realizable traces.

\begin{lemma}[Refined traces are realizable]
  \label{lemma:real-trace}
Assume $(\Gamma, \pi)$ is a result of Algorithm~\ref{algo:refine}. Then $\pi$ is realizable under $\Gamma$, denoted as $\Gamma \vdash_{R} \pi$.
\end{lemma}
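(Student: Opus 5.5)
The plan is to prove a slightly stronger invariant and then specialize it. The invariant is stated over the tags: for every pair $(\Gamma,\pi)$ produced by Algorithm~\ref{algo:refine}, every concrete symbolic event tagged $\top$ in $\pi$ is consistent with $\pi$ under $\Gamma$ in the sense of Definition~\ref{def:real-event-def}. Realizability (Definition~\ref{def:real-trace-def}) then follows at the point where Algorithm~\ref{algo:top-syn} exits its loop. At that point no $\bot$-tagged concrete events remain, so every concrete event is tagged $\top$ and hence consistent. I will prove the invariant by induction on the number of refinement iterations that produced $(\Gamma,\pi)$. The base case is the output of $\normPlan(A)$, in which every event is tagged $\bot$, so the invariant holds vacuously.

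For the inductive step, suppose $(\Gamma,\pi)$ with $\pi = \pi_h \seqA \msg{op}{\phi}^\bot \seqA \pi_f$ satisfies the invariant, and $(\Gamma',\pi')$ is returned by \refine{}. There are two kinds of $\top$-tagged events in $\pi'$ to handle.

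\emph{The newly refined event} $\msg{op}{\phi''}^\top$. Its consistency is exactly Lemma~\ref{lemma:real-event}. I will re-derive it in the context $\Gamma'$ to fix details. By Lemma~\ref{lemma:norm-sound}, the prefix of $\pi'$ before this event denotes a subset of $\pi_h \land H$, hence of $H$. Likewise, the event followed by its trailing ghost events denotes a subset of $\msg{op}{\phi'}\seqA F$. The type of $\eff{op}$ in $\Delta$ is then converted in three moves: \textsc{TOpCtx} looks it up; \textsc{SubG} instantiates the ghost variables $\overline{z}$, which $\Gamma'$ binds; \textsc{TOpHis} shrinks the history, which is legitimate because the line-7 filter guarantees $\Gamma'\not\vdash \pi'\subseteq\emptyset$ and so the history is nonempty. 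Finally \textsc{TSub}/\textsc{SubHF} shrinks the future.

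\emph{Events already tagged $\top$ in $\pi_h$ or $\pi_f$.} For these I will use monotonicity. For such an event, call the portion of $\pi'$ preceding it the new history and the portion after it (up to its ghost suffix) the new future. Every trace of $\pi'$ lies in $\pi_h \seqA \msg{op}{\phi}\seqA \pi_f$, because the merges on lines 4--5 only intersect. I therefore expect the new history and new future to be included, under $\Gamma'$, in the old ones. The same combination of \textsc{TOpHis} (using nonemptiness from line 7 again) and \textsc{SubHF} transports the old consistency judgment to the new trace. The one remaining gap is that the old judgment was made under $\Gamma$ rather than $\Gamma'$. I plan to close it with a weakening argument: $\Gamma'$ only appends fresh bindings for ghost variables, parameters and the result, and the inclusion relation is defined semantically over $\denot{\Gamma'}$, whose elements restrict to elements of $\denot{\Gamma}$.

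I expect the main obstacle to be the previous step, and specifically the claim that intersecting and then normalizing preserves the position-wise decomposition. Normalization runs a minimization step that can merge or reorder intersected events (as in Example~\ref{ex:sre-norm}). Because of this, an abstract trace $\pi'$ need not factor syntactically as a refinement of $\pi_h$ followed by a refinement of $\pi_f$ around a previously refined event. My first attempt will argue purely semantically. By Lemma~\ref{lemma:norm-sound}, each $\pi'$ denotes a subset of $\pi_h \land H \seqA \msg{op}{\phi''}\seqA \pi_f\land(F\seqA\allA)$. Each $\top$-tagged event in $\pi'$ is inherited from a unique $\top$ event of $\pi$, since normalization does not retag. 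Splitting any trace at that event's position therefore yields a prefix in the old history and a suffix in the old future, which gives the inclusions needed above. If inherited tags turn out to be ambiguous after a merge, I would fall back to proving Lemma~\ref{lemma:real-event} directly for each $\top$ event of the final trace, using the fact that the final trace's language is contained in every intermediate merge.
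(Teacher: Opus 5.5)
Your proof has the right shape, and it is more careful than the paper's. You induct over refinement steps with an invariant on $\top$-tagged events and read realizability off at loop exit; that reading is also what Theorem~\ref{theorem:algo-sound} actually needs. The paper's proof is a one-line contradiction from Lemma~\ref{lemma:real-event}. It never argues that events refined in earlier iterations stay consistent after later merges.

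That preservation step is your only new ingredient beyond Lemma~\ref{lemma:real-event}, and it does not go through as proposed. Lemma~\ref{lemma:norm-sound} is an equality of \emph{untagged} languages. So $\denot{\pi'} \subseteq \denot{\pi_h \seqA \msg{op}{\phi} \seqA \pi_f}$ only says each concrete trace of $\pi'$ has \emph{some} split matching the old abstract trace. It says nothing about the split at the position where the inherited $\top$ event sits in $\pi'$.

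For example, suppose the old trace begins $s^\top \seqA \cdots$ with history $\epsilon$, and the refinement forces a second $s$ after it. Then $s^\top \seqA s^\bot \seqA \cdots$ and $s^\bot \seqA s^\top \seqA \cdots$ have the same denotation, and Lemma~\ref{lemma:norm-sound} cannot distinguish them. In the second, the prefix before the $\top$ occurrence is $s$, which is not included in the old history $\epsilon$.

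You also cannot fall back on tag inheritance. Minimization acts on automata and forgets which transition came from which abstract event. Moreover, $\normPlan$ as written returns every symbolic event with tag $\bot$. So ``normalization does not retag'' is an extra hypothesis, and even granting it, it does not determine \emph{which} occurrence carries the tag. Your fallback fails for the same reason: containment of the final untagged language in each intermediate merge never localizes an occurrence. Your re-derivation of Lemma~\ref{lemma:real-event} for the fresh event tacitly makes the same position assumption. There you can simply cite the lemma, but for inherited events there is nothing to cite.

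What is missing is a position-aware version of normalization soundness:
\begin{enumerate}
\item Make each $\top$-tagged occurrence a distinct marked letter, i.e.\ run minimization over an alphabet extended with anchor identities.
\item Prove Lemma~\ref{lemma:norm-sound} for these marked languages, with \refine{} carrying the marks through the intersections on lines~4--5.
\item Then every marked trace of $\pi'$ splits uniquely at each anchor, and marked inclusion into the old trace sends that split to the old split.
\item Hence the prefix lies in the old history, and the event with its ghost run lies in the old future.
\end{enumerate}
From there, your \textsc{TOpHis}/\textsc{SubHF} transport, the line-7 nonemptiness check, and the weakening from $\Gamma$ to $\Gamma'$ work as you describe.
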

\begin{proof}
By contradiction. If $\pi$ is not realizable under $\Gamma$, then there exists a concrete event in $\pi$ that is not consistent with $\Gamma$. However, according to Lemma~\ref{lemma:real-event}, the refined event is consistent with the trace, which leads to a contradiction.
\end{proof}

Now we prove the derived program is type-safe if the refined trace is realizable.

\begin{lemma}[Result of $\S{TraceDerive}$ is type-safe]
  \label{lemma:trace-derive-sound}
  The program derived by $\S{TraceDerive}$ is type-safe if the input trace $\pi$ is realizable under the input type context $\Gamma$.
\end{lemma}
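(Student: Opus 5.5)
The proof goes by induction on the structure of the abstract trace $\pi$, following the recursion of $\S{DeriveTrace}$ and $\S{DeriveTraceAux}$ in Algorithm~\ref{algo:term-derive}. The target invariant is slightly stronger than the statement: if $\Gamma \vdash_{R} \pi_h \seqA \pi$ and $\pi$ is the remaining suffix, then the term produced for $\pi$ has type $\uhat{\pi_h}{\Unit}{\pi'}$ for some nonempty $\pi'$ with $\Gamma \vdash \pi' \subseteq \pi$. The history index $\pi_h$ is needed so that the induction can pass through \textsc{TLet}. The lemma then follows by taking $\pi_h = \epsilon$, or, more generally, the history assumed by the context.

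\textbf{Context prefix.} $\S{DeriveTrace}$ first peels the bindings $x{:}\urt{b}{\phi}$ of $\Gamma$ into $\mykw{assume}$ bindings. Each binding $\zassume{x}{\phi}{e}$ unfolds to $\zlet{x}{\Code{rand\_int}\,()}{\sassume{\phi};e}$.
\begin{itemize}
\item I type $\Code{rand\_int}\,()$ at $\urt{b}{\top}$, strengthened via \textsc{TSub}/\textsc{SubPBase}. Because qualifiers are underapproximate, the $\mykw{assume}$ narrows this to $\urt{b}{\phi}$.
\item This lets \textsc{TLet} extend the context with exactly the binding $x{:}\urt{b}{\phi}$.
\item The future trace of these steps is $\epsilon$, so the history is unchanged.
\end{itemize}

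\textbf{Trace cases.} For the trace itself:
\begin{itemize}
\item \textbf{Empty trace.} $\epsilon$ becomes $()$, typed by \textsc{TRet} with future $\epsilon$.
\item \textbf{Kleene star.} A prefix $\Pi^* \seqA \pi'$ is dropped. Since $\epsilon \in \denot{\Pi^*}$, I use \textsc{SubHF} to shrink the future (contravariance): the derived term's future $\pi''$ for $\pi'$ satisfies $\pi'' \subseteq \Pi^*\seqA\pi'$. The history of the continuation stays $\pi_h$, and $\pi_h \seqA \Pi^*$ contains $\pi_h$, so \textsc{SubHF}'s covariance on history also applies.
\item \textbf{Ghost event.} A ghost event becomes an $\mykw{assume}$ on fresh variables, handled like the context case; it contributes nothing to the erased trace.
\item \textbf{Concrete event.} Realizability (Definition~\ref{def:real-trace-def}) gives $\Gamma \vdash_R \msg{op}{\phi} \in \pi$. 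By Definition~\ref{def:real-event-def}, this is exactly a judgment $\Gamma \vdash \eff{op} : \overline{x{:}t_x}\sarr\uhat{\pi_h}{y{:}t}{\msg{op}{\phi}\seqA\pi_g}$, where $\pi_g$ are the trailing ghost events. The arguments $\overline{x'}$ are bound by $\mykw{assume}$ with qualifier $\exists y.\phi_1$, so they satisfy the parameter types $t_x$, and \textsc{TEffOp} applies. The subsequent $\sassert{\phi_2}$ is typed at future $\epsilon$: under angelic underapproximation the result is only required to cover the values satisfying $\phi_2$, which matches restricting $y{:}t$. Finally, \textsc{TLet} composes this with the inductive hypothesis for the remainder under history $\pi_h\seqA\msg{op}{\phi}\seqA\pi_g$.
\end{itemize}

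\textbf{Realizability of suffixes.} The induction needs realizability to be inherited by the suffix with the extended history. This holds essentially by definition, since Definition~\ref{def:real-trace-def} quantifies over every concrete event in $\pi$ with its own prefix as history.

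\textbf{Main obstacle.} The hardest step will be aligning the types of the $\mykw{assume}$/$\mykw{assert}$ encoding with the symbolic qualifiers. Concretely, $\phi\msubst{x}{x'}$ must match the \textsc{TEffOp} payload $\msgA{op}{\overline{v_i}}$, and the existential $\exists y.\phi_1$ must be justified via \textsc{SubG}, instantiating ghost variables consistently with Lemma~\ref{lemma:subt}. I would first isolate a small lemma: $\zassume{\overline{x}}{\psi}{e}$ has type $\uhat{H}{t}{F}$ under $\Gamma$ whenever $e$ has that type under $\Gamma,\overline{x{:}\urt{b}{\psi}}$. Both the context prefix and the event cases can then be discharged uniformly with that lemma.
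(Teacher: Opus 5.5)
\textbf{There is a genuine gap: the derivation you plan cannot be built in the given type system, and your helper lemma is invalid as stated.}

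You build a typing derivation by induction on $\pi$. The paper instead argues directly on denotations: it relates the traces the derived straight-line program produces to ghost-erasures of traces in $\sigma(\pi)$ for $\sigma\in\denot{\Gamma}$, and concludes $e\in\denot{\sigma(\uhat{\epsilon}{\Unit}{\pi})}$.

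The declarative system (Figure~\ref{fig:full-typing-rules}) has no rule for $\sassume{\phi}$ or $\sassert{\phi}$; they appear only in basic typing. Yet every context binding, every ghost event and every concrete event of the derived program is encoded with them. Even $\Code{rand\_int}\,()$ gets only a pure type via \textsc{TOpApp}, and \textsc{TRet} lifts only values. So ``the $\mykw{assume}$ narrows this to $\urt{b}{\phi}$'' and ``$\sassert{\phi_2}$ is typed at future $\epsilon$'' are not steps of any derivation. Your helper lemma cannot be proved by rule induction, because it \emph{is} the missing rule.

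The lemma also fails in two ways:
\begin{itemize}
\item If $\psi$ is unsatisfiable, then $\denot{\Gamma,\overline{x{:}\urt{b}{\psi}}}=\emptyset$. The premise then holds for every $F$, since all inclusion checks quantify over no substitutions. But $\sassume{\psi}$ is stuck and produces nothing.
\item It keeps $F$ fixed. In the concrete-event case, however, the future delivered by \textsc{TEffOp} is $\msgA{op}{\overline{x'}}\seqA\ldots$, which mentions the fresh assume-bound $\overline{x'}$. Closing that binder under $\Gamma$ requires the union of the future over all $\overline{x'}$ satisfying $\exists y.\phi_1$. That union is what recovers $\msg{op}{\phi}$, and it is empty when the $\mykw{assume}$ cannot be met. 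But \textsc{TLet} only concatenates futures and \textsc{SubHF} only shrinks them, so no rule produces the $\overline{x'}$-free nonempty $\pi'$ your invariant requires.
\end{itemize}

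\textbf{The missing idea is to make the invariant semantic and let angelic choice do the work.} Your history-indexed induction and your use of realizability at concrete events then carry over. The invariant is: for every $\sigma\in\denot{\Gamma}$, every $\alpha_h\in\denot{\sigma(\pi_h)}$ and every $\alpha\in\denot{\sigma(\pi)}$, the term derived for $\pi$, run after $\eraseGhost{\alpha_h}$, produces $\eraseGhost{\alpha}$.
\begin{itemize}
\item Each $\Code{rand\_int}$ is resolved to the payload recorded in $\alpha$, so every $\mykw{assume}$ passes.
\item Realizability, narrowed by \textsc{TOpHis} to the concrete prefix, lets each $\eff{op}$ call return the recorded value, so every $\sassert{\phi_2}$ passes. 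This narrowing is where the precision of $\Delta$ (Definition~\ref{lemma:built-in-typing}) enters.
\item Ghost events vanish under erasure.
\end{itemize}
This yields semantic inhabitation, which is what the paper's terse argument delivers, and it needs no typing rule for $\mykw{assume}$ or $\mykw{assert}$.

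One smaller slip is in your Kleene-star case. Because the dropped star emits nothing, the histories of later events must \emph{shrink} from $\pi_h\seqA\Pi^*\seqA\rho$ to $\pi_h\seqA\rho$. \textsc{SubHF} is covariant in the history and only enlarges it. The narrowing rule you need is \textsc{TOpHis}, with its non-emptiness side condition.
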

\begin{proof}
Assume the synthesized program is $e$.
Since $\S{TraceDerive}$ drops ghost events, we have $\forall \alpha.\, \msteptr{\alpha}{e}{[]}{()}$. Therefore, there exists a trace $\alpha'$ that fills the ghost events back into $\alpha$ and is also consistent with the abstract trace $\pi$:
\begin{align*}
  \exists \alpha'.\, \eraseGhost{\alpha'} = \alpha \land \exists \sigma \in \denot{\Gamma}.\, \alpha' \in \sigma(\pi)
\end{align*}
Then, we know $e \in \denot{\sigma{\uhat{\epsilon}{\Unit}{\pi}}}$. Thus, according to the fundamental theorem, the derived program is type-safe.
\end{proof}

\begin{theorem}[Derived program of $\S{TermDerive}$ is type-safe]
  \label{theorem:term-derive-sound}
  If the recursion template is type-safe, then the program derived by ($\S{TermDerive}$) is type-safe if all input traces are realizable under their corresponding input type contexts.
\end{theorem}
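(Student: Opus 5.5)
The proof follows the structure of $\S{TermDerive}$ in \autoref{algo:term-derive}. Its output is $\bigoplus_{e \in E} e$, where each $e \in E$ is either $\S{DeriveTrace}(\Gamma,\pi)$ or $\S{SynRecursion}(\Gamma,\pi,\Code{UnrollingBound})$ for some $(\Gamma,\pi) \in C$. I would show two things: every member of $E$ is well-typed, and typing is preserved when the members are combined with $\oplus$.

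\textbf{Component typing.} For each $(\Gamma,\pi) \in C$, the hypothesis gives $\Gamma \vdash_R \pi$. Lemma~\ref{lemma:trace-derive-sound} then gives a type $\uhat{\epsilon}{\Unit}{\pi}$ for $\S{DeriveTrace}(\Gamma,\pi)$.

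\textbf{Recursive components.} For $\S{SynRecursion}$, the template $T$ is type-safe by assumption. Each hole $e_i$ is filled with $\S{DeriveTrace}(\pi_i)$, where $\pi_i$ is a subtrace produced by $\S{MatchTemplate}$. I would argue that each $\pi_i$ inherits realizability from $\pi$. Every concrete event of $\pi_i$ was consistent in $\pi$ (Definition~\ref{def:real-event-def}), and that consistency carries over in the following way. The history preceding the event in the template is the template's history extended by the prefix of $\pi_i$. That history is included in the original $\pi_h$, and it is nonempty because of the unrolling check in line 3 of \autoref{algo:syn-recursion}. So \textsc{TOpHis} re-establishes the operator typing. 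Applying Lemma~\ref{lemma:trace-derive-sound} to each hole, the holes are well-typed. Plugging them into $T$ uses \textsc{TLet} for the sequencing and \textsc{TFixInd} for the fixpoint, with the type-safety of $T$ supplying the inductive hypothesis on $f$. This yields a type $\tau_e$ for the assembled program.

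\textbf{Combining with $\oplus$.} By \textsc{TChoice}, $\bigoplus_{e\in E} e$ has type $\bigsqcap_{e} \tau_e$. Well-formedness under \textsc{WFInter} requires equal erasures, and this holds because every $\tau_e$ erases to $\Unit$. If a single \tyName{} is desired, \textsc{SubIntLL} and \textsc{SubIntLR} project to any component. The choice node adds no new typing obligations, so the theorem follows by induction on $|E|$.

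\textbf{Main obstacle.} The delicate step is the transfer of realizability from $\pi$ to the template subtraces $\pi_i$. In the recursive setting, the history seen by $e_1$ and $e_2$ differs across unrollings, and the consistency judgment in Definition~\ref{def:real-event-def} is stated relative to one fixed history $\pi_h$. I would first try to discharge this using the bounded-unrolling check together with the assumed type-safety of $T$. Concretely, I would treat "template type-safe" as asserting that the type at $f$ used in \textsc{TFixInd} already quantifies over these histories. This reduces the obligation to \textsc{TOpHis} inclusions for each hole, rather than requiring a fresh inductive argument over all unrollings.
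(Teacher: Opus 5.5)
Your proposal is correct and follows the paper's proof. The straight-line components are typed via Lemma~\ref{lemma:trace-derive-sound}, the recursive components are discharged by the hypothesis that the filled recursion template is type-safe, and \textsc{TChoice} combines everything with $\oplus$. The paper reads that hypothesis as covering the filled template directly, so your transfer of realizability from $\pi$ to the subtraces $\pi_i$ is unnecessary. As your last paragraph concedes, that transfer's claim that each hole's history is included in the original $\pi_h$ fails at deeper unrollings, so the hypothesis is what actually does the work there.
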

\begin{proof}
  Since every program derived by $\S{TraceDerive}$ is type-safe, and the recursion template guarantees that the template after filling the holes is also type-safe, each program derived by $\S{TermDerive}$ is type-safe.
  According to the rule \textsc{TChoice}, the union ($\oplus$) of all derived programs is also type-safe.
\end{proof}

\begin{theorem}[Synthesis is Sound]
  \label{theorem:algo-sound-ap}
  Every generator synthesized by \autoref{algo:top-syn} is type-safe
  with respect to our declarative typing system.
\end{theorem}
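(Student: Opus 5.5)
The proof composes the lemmas just established, following the two phases of \autoref{algo:top-syn}: the refinement loop (lines 3--4) and the final call to $\derive$ (line 5). The plan is to prove a loop invariant, then show that the loop's exit condition upgrades it to realizability, and finally apply Theorem~\ref{theorem:term-derive-sound}.

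\textbf{Step 1: loop invariant.} Every pair $(\Gamma',\pi)\in C$ satisfies two conditions:
\begin{enumerate}
\item every symbolic event in $\pi$ tagged $\top$ is consistent with $\pi$ under $\Gamma'$ in the sense of Definition~\ref{def:real-event-def};
\item $\Gamma'\vdash\pi\not\subseteq\emptyset$.
\end{enumerate}
Initially every tag produced by $\normPlan$ is $\bot$, so (1) holds vacuously. Condition (2) should hold after discarding empty normal forms, which is harmless because Lemma~\ref{lemma:norm-sound} preserves the denotation.

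For preservation, one call to $\refine$ turns the chosen event into a $\top$ event. Lemma~\ref{lemma:real-event} shows this new event is consistent: line 4 of \autoref{algo:refine} corresponds to \textsc{TOpHis}, and line 5 to \textsc{SubHF}/\textsc{TSub}. Line 7 of \autoref{algo:refine} enforces (2).

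The main obstacle is the events that were already tagged $\top$. Intersecting with $H$ and $F\seqA\allA$ and renormalising only shrinks the surrounding history and future. By \textsc{TOpHis}, a shrunk but non-empty history is still admissible, and by \textsc{SubHF}, a shrunk future is still admissible. So consistency survives, provided non-emptiness is preserved. Non-emptiness follows from (2) together with Lemma~\ref{lemma:norm-sound}, since each returned piece $\pi'$ is a non-empty fragment of the intersection. The context extension from $\Gamma$ to $\Gamma'$ only adds ghost variables and parameters, and we need this to be harmless. I would argue that this requires a weakening property for the typing judgement, derivable from the context-denotation definition.

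\textbf{Step 2: termination of the loop gives realizability.} When the loop exits, no $\bot$-tagged concrete event remains. By the invariant, every concrete event is therefore consistent, which is exactly Definition~\ref{def:real-trace-def}. This is Lemma~\ref{lemma:real-trace} applied to every element of $C$.

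\textbf{Step 3: apply Theorem~\ref{theorem:term-derive-sound}.} The straight-line branch of each generator is type-safe by Lemma~\ref{lemma:trace-derive-sound}. The recursive branch is emitted by \autoref{algo:syn-recursion} only when every unrolling up to the bound is typeable. From this we discharge the template hypothesis via \textsc{TFixBase}/\textsc{TFixInd}: the body is typed under an assumption on $f$, and that assumption is itself established by the previous unrolling. All branches are then combined with $\oplus$, which is typed by \textsc{TChoice}, yielding $\Gamma\vdash e:\tau$ for the initial $\Gamma=\overline{x{:}\urt{b}{\top}}$.

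If one also wants the output specification $\uhat{\epsilon}{\Unit}{A'}$ with $A'\subseteq A$, this follows because every $\pi\in C$ is a refinement of $A$. We then weaken the combined intersection type to a single future $A'$ using \textsc{SubIntLL}/\textsc{SubIntLR} and \textsc{SubHF}.
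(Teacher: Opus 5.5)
Your skeleton matches the paper's proof, which composes Lemma~\ref{lemma:real-trace} (refined traces are realizable), Lemma~\ref{lemma:trace-derive-sound}, and Theorem~\ref{theorem:term-derive-sound}, with \textsc{TChoice} for the final $\oplus$. Your loop invariant is more rigorous than the paper at one point. The paper's proof of Lemma~\ref{lemma:real-trace} appeals only to the consistency of the \emph{most recently} refined event. It never argues that earlier $\top$-tagged events stay consistent after later intersections. Your \textsc{TOpHis}/\textsc{SubHF} shrinking argument supplies exactly this, together with the weakening you flag and the tacit assumption that normalization keeps $\top$ tags. One small repair: \autoref{algo:top-syn} does not discard empty normal forms at line~2. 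So state condition (2) only for pairs carrying a $\top$ tag, all of which have passed the filter at line~7 of \autoref{algo:refine}.

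The step that does not go through is your discharge of the template hypothesis in Step~3.
\begin{itemize}
\item \autoref{algo:syn-recursion} only checks that each closed $i$-fold unrolling $\Code{Unroll}(T,\{\pi_i\},i)$ has \emph{some} type. Per the paper's example, this check runs on qualifier-erased abstract traces, so the data dependencies in the assumes and asserts of $e_1,e_2$ are not re-verified across copies.
\item \textsc{TFixInd} instead needs the \emph{open} body typed under a hypothesis, $\Gamma, f{:}t_{k-1} \vdash \lambda().\,(() \oplus e_1; f\,(); e_2) : t_k$, where $t_{k-1}$ is the type already derived for the fix.
\item Getting from the first to the second needs an inversion/abstraction property, which the paper lacks; it has only the forward substitution Lemma~\ref{lemma:subt}.
\item It also needs an explicit chain $t_0,\dots,t_k$, starting from \textsc{TFixBase}, in which the history of $t_{k-1}$ fits the inner call site $H \seqA F_1$ rather than $H$. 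In examples like the stack, this history shifts with the recursion depth and needs ghost-parameterized types, which the bounded check never constructs.
\end{itemize}
The paper does not attempt this. Theorem~\ref{theorem:term-derive-sound} carries ``the recursion template is type-safe'' as an explicit hypothesis, and the final theorem silently inherits it. You should either state your result relative to that assumption, or supply the inversion lemma and the construction of the $t_k$.

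Your closing remark on $A'$ holds only in a weak form. \textsc{SubIntLL}/\textsc{SubIntLR} project onto one branch, and no rule merges $\tau_1\interty\tau_2$ into a single \tyName{} with future $F_1\lorA F_2$. So $A'$ ends up inside a single $\pi_i$.
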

\begin{proof}
  According to Lemma~\ref{lemma:real-trace} and Lemma~\ref{lemma:trace-derive-sound}, all refined traces are realizable and all derived programs are type-safe. Then, by Theorem~\ref{theorem:term-derive-sound}, the synthesized program is type-safe.
\end{proof}

% \subfile{tech/4-proof-2}
% \subfile{tech/4-proof-3}
\newpage
\section{Benchmark Specifications Explanation}\label{sec:tech:bench-explain}

% Match paper SREGen / starA(anyA - ...) idiom: Kleene iteration over complement of a pattern set.
\newcommand{\SNegStar}[1]{{\bigl(\compA{#1}\bigr)^{\ast}}}
\newcommand{\SNegStarU}[2]{{\bigl(\anyA \setminus ({#1} \cup {#2})\bigr)^{\ast}}}
\newcommand{\SNegStarUU}[3]{{\bigl(\anyA \setminus ({#1} \cup {#2} \cup {#3})\bigr)^{\ast}}}
\newcommand{\SNegStarUUU}[4]{{\bigl(\anyA \setminus ({#1} \cup {#2} \cup {#3} \cup {#4})\bigr)^{\ast}}}
\newcommand{\SNegStarTri}[3]{{\bigl(\anyA \setminus ({#1} \cup {#2} \cup {#3})\bigr)^{\ast}}}
\newcommand{\SNegStarExcl}[1]{{\bigl(\anyA \setminus ({#1})\bigr)^{\ast}}}

This appendix lists the global safety property and the \tyName{}s for every benchmark used in our evaluation. Each of them consistent of a short description, the global property, \tyName{}, and the weakened \tyName{}.

\paragraph{Weakening strategy.}
We tie this appendix to \autoref{sec:overview} and to the weakening study in our evaluation: we fix the SUT and the global property, and relax \tyName{}s to measure how ``lower-quality'' specs affect \name{}. We use two weakening strategies:
\begin{itemize}
  \item weakening qualifiers directly: For a refinement type $\urt{b}{\phi}$ or a symbolic event $\msg{ev}{\phi}$, we replace the qualifier $\phi$ with a weaker $\phi'$ such that $\phi' \implies \phi$. Concretely, this can drop conjuncts or weaken predicates to $\top$.
  \item merge intersected types: In \tyName{}, we use intersection types to represent different situations an operation may face. For example, an initialized system (i.e., $H_1 = \allA \seqA \msgA{init} \seqA \allA$) and an uninitialized system (i.e., $H_2 = (\anyA \setminus \msgA{init})^*$) yield two intersected \tyName{}s with different history automata:
  \begin{align*}
    \uhat{H_1}{\urt{b}{\phi_1}}{F_1} \interty \uhat{H_2}{\urt{b}{\phi_2}}{F_2}
  \end{align*}\noindent
  One weakening step merges them into a single case:
  \begin{align*}
    \uhat{H_1 \lorA H_2}{\urt{b}{\phi_1 \lor \phi_2}}{F_1 \lorA F_2}
  \end{align*}\noindent
  This approach is sound because it additionally allows ``when $H_1$ holds, $F_2$ happens'' and ``when $H_2$ holds, $F_1$ happens'', compared to the original intersected \tyName{}s.
\end{itemize}
The weakened \tyName{}s are presented in the following sections according to these two patterns; thus, we omit most of the details of the weakening process.

\paragraph{Representation conventions.}
For synchronous benchmarks we often split one logical operation $\eff{ev}(\I{args},\I{return})$ into $\eff{evReq}(\I{args})$ and $\eff{evRsp}(\I{return})$, matching request/response APIs. Global properties are given in the form that the test generator is instructed to falsify (negated safety). Futures in each clause relate events across the trace and carry search bias; see \textsc{Stack} for a concrete instance.

\newpage
\subsubsection*{Stack}
Consider an effectful stack implemented with a fixed-size array. An incorrect implementation that fails to grow the array when the stack is full may drop $\eff{push}$ operations. To expose this bug, we ask whether traces of the following form raise an exception:
\begin{align*}
   \zevent{push}{0};\zevent{push}{4};\zevent{push}{7};\zevent
   {pop}{7};\zevent{pop}{4};\textcolor{red}{\zevent{pop}{\_} 
   \quad \text{Exception!}}
\end{align*}
\noindent
The implementation uses an underlying array of length $2$ and mistakenly raises an ``stack is empty'' exception even though element $0$ should still be on the stack. In this benchmark we split $\eff{pop}$ into $\eff{popReq}$ and $\eff{popRsp}$; $\eff{push}$ has no return value, so we only model $\eff{pushReq}$ and omit $\eff{pushRsp}$.
The global property characterizes popping ``away'' a pushed value \(y\) without observing \(\msg{popRsp}{\I{elem} = y}\), and then observing an empty stack---modeling loss of the last pushed element.
\vspace{-.35cm}\par\nobreak{\small
\begin{align*}
  & \allA \seqA \msg{pushReq}{\I{elem} = y} \seqA
    \SNegStar{\msg{popRsp}{\I{elem} = y}} \seqA
    \msg{isEmpRsp}{\mathit{isEmpty} }
\end{align*}}
\vspace{-.4cm}\par\nobreak\noindent
The \tyName{} comprises the following clauses:
\vspace{-.35cm}\par\nobreak{\small
\begin{align*}
  \eff{pushReq} ~:~{}
  &  y{:}\Int \sarr \effLR{\allA}\,\Code{unit}\,\effLR{\msg{pushReq}{\I{elem} = y} \seqA \allA \seqA \msg{popReq}{\true} \seqA \allA}
  \\
  \eff{initStackReq} ~:~{}
  & \effLR{\allA}\,\Code{unit}\,\effLR{\msg{initStackReq}{\true} \seqA \allA}
  \\
  \eff{popReq} ~:~{}
  & \effLR{\allA}\,\Code{unit}\,\effLR{\msg{popReq}{\true} \seqA \msg{popRsp}{\true} \seqA \allA}
  \\
  \eff{popRsp} ~:~{}
  & x{:}\Int \sarr \effLR{\allA}\,\urt{\Code{int}}{\I{elem} = x}\,\effLR{\SNegStar{\msg{pushReq}{\true}}}
  \\
  \eff{isEmpReq} ~:~{}
  & \effLR{\allA}\,\Code{unit}\,\effLR{\msg{isEmpReq}{\true} \seqA \msg{isEmpRsp}{\true} \seqA \allA}
  \\
  \eff{isEmpRsp} ~:~{}
  & z{:}\Bool \sarr \effLR{\allA}\,\urt{\Code{bool}}{\mathit{isEmpty} = z}\,\effLR{\msg{isEmpRsp}{\mathit{isEmpty} = z} \seqA \allA}
\end{align*}}
\vspace{-.4cm}\par\nobreak\noindent
The future automaton in the $\eff{pushReq}$ \tyName{} describes not only $\eff{pushReq}$ itself, but also obligates a later $\eff{popReq}$. That encodes a pairing constraint between $\eff{push}$ and $\eff{pop}$, and acts as a search bias for this benchmark. Similarly, the future of $\eff{popRsp}$ forbids any later $\eff{pushReq}$, so along such traces the stack depth changes monotonically. The weakened \tyName{} drops these biases, yielding specifications that give the synthesizer less guidance.

\newpage
\subsubsection*{Set} This benchmark shares a similar setting with \textsf{Stack}, where we have an effectful set implemented with a fixed-size array. An incorrect implementation that fails to grow the array when the set is full may drop $\eff{insert}$ operations. In this benchmark, we use a ghost event to explicitly track the size of the set. 
Then, the error is that, after initialization and insertion of \(y\), a \emph{mismatched} delete request (\(\neg(\I{elem}=y)\)) should not appear in a realistic violation scenario tied to size discipline; the global property targets a bad delete-generation path right after insert. Precisely, the global property is given by the following:
\vspace{-.35cm}\par\nobreak{\small
\begin{align*}
  & \msg{initSet}{\true} \seqA \allA \seqA \msg{insert}{\I{elem} = y} \seqA
    \allA \seqA \msg{delReq}{\neg(\I{elem} = y)} \seqA \allA
\end{align*}}
\vspace{-.4cm}\par\nobreak\noindent
The \tyName{} are shown below:
\vspace{-.35cm}\par\nobreak{\small
\begin{align*}
  \eff{initSet} ~:~{}
  & \effLR{\allA}\,\Code{unit}\,\effLR{\msg{initSet}{\true} \seqA \msgGray{size}{\mathit{sz} = 0} \seqA \allA}
  \\
  \eff{insert} ~:~{}
  & x{:}\Int \sarr \effLR{\SNegStar{\msg{insert}{\I{elem} \ge x}} \seqA \msgGray{size}{\mathit{sz} = i}}\,\Code{unit}\,
    \\&  \effLR{\msg{insert}{\I{elem} = x} \seqA \allA \seqA \msg{delReq}{\I{elem} = x} \seqA \allA}
  \\
  \eff{delReq} ~:~{}
  & x{:}\Int \sarr \effL{}\SNegStar{\msg{delReq}{\I{elem} = x}} \seqA \msg{insert}{\I{elem} = x} \seqA \SNegStar{\msg{delReq}{\I{elem} = x}} \seqA 
  \\& \quad \msgGray{size}{\mathit{sz} = i+1}\effR{}\,\Code{unit}\,
    \effLR{\msg{delReq}{\I{elem} = x} \seqA \msg{delRsp}{\true} \seqA \msgGray{size}{\mathit{sz} = i} \seqA \allA}
  \\
  \eff{delRsp} ~:~{}
  & r{:}\Bool \sarr \effLR{\allA}\,\urt{\Code{bool}}{\mathit{res} = r}\,\effLR{\allA}
  \\
  \eff{isEmpReq} ~:~{}
  & \effLR{\allA}\,\Code{unit}\,\effLR{\msg{isEmpReq}{\true} \seqA \msg{isEmpRsp}{\true} \seqA \allA}
  \\
  \effGray{size} ~:~{}
  & i{:}\Int \sarr \effLR{\allA}\,\urt{\Code{int}}{\mathit{sz} = i}\,\effLR{\allA}
  \\
  \eff{isEmpRsp} ~:~{}
  & b{:}\Bool \sarr \effLR{\allA}\,\urt{\Code{bool}}{\mathit{isEmpty} = b}\,\effLR{\msg{isEmpRsp}{\mathit{isEmpty} = b} \seqA \allA}
\end{align*}}
\vspace{-.4cm}\par\nobreak\noindent
The weakened \tyName{} drops much of the \(\eff{delReq}\) precondition chain: histories no longer enforce the fine-grained ``no repeated delete / size bookkeeping'' pattern, and a typical relaxation widens the precondition to \(\allA \seqA \msg{insert}{\I{elem} = x} \seqA \allA\).

\newpage
\subsubsection*{Filesystem}
Consider a path-tree API with create/delete/exists operations. Here deleting a non-empty directory reportedly succeeds, yet a child path still appears to exist---a path-metadata inconsistency our property surfaces. Thus, the global property is given by the following:
\vspace{-.35cm}\par\nobreak{\small
\begin{align*}
  & \allA \seqA \msg{delReq}{\I{path} = \mathit{parent}(\I{chp})} \seqA
    \msg{delRsp}{\mathit{succ} } \seqA \allA \seqA
  \\&
    \msg{extReq}{\I{path} = \I{chp}} \seqA
    \msg{extRsp}{\mathit{exists} } \seqA \allA
\end{align*}}
\vspace{-.4cm}\par\nobreak\noindent
The \tyName{} is defined as follows:
\vspace{-.35cm}\par\nobreak{\small
\begin{align*}
  \eff{initReq} ~:~{}
  & \effLR{\allA}\,\Code{unit}\,\effLR{\msgC{initReq}\seqA \SNegStar{\msg{initReq}{\true}}}
  \\
  \eff{crtReq} ~:~{}
  & p{:}\mathsf{Path} \garr c{:}\urt{\mathsf{Byte}}{\I{isRoot}(\mathit{parent}(p))} \sarr
  \\ & \quad
    \effLR{\SNegStar{\msg{crtReq}{\I{path}=\mathit{parent}(p) \land \I{isDir}(\I{content})}}}\,\Code{unit}\,
    \\& \effLR{\msgA{crtReq}{p\;c} \seqA \msg{crtRsp}{\neg\mathit{succ}} \seqA \allA} \interty
  \\
  & p{:}\mathsf{Path} \garr c{:}\urt{\mathsf{Byte}}{\neg\,\I{isRoot}(p)} \sarr
  \\ & \quad
    \effLR{\msgC{initReq}\seqA \SNegStar{\msgA{delReq}{p}} \seqA \msg{crtReq}{\I{path}=p} \seqA \msg{crtRsp}{\mathit{succ}} \seqA \SNegStar{\msgA{delReq}{p}}}\,\Code{unit}\,
  \\ & \quad
    \effLR{\msgA{crtReq}{p\;c} \seqA \msg{crtRsp}{\neg\mathit{succ}} \seqA \allA} \interty
  \\
  & p{:}\mathsf{Path} \garr c{:}\urt{\mathsf{Byte}}{\neg\,\I{isRoot}(\mathit{parent}(p))} \sarr
  \\ & \quad
    \effL{}\allA \seqA \msg{crtReq}{\I{path}=\mathit{parent}(p) \land \I{isDir}(\I{content})} \seqA \msg{crtRsp}{\mathit{succ}} \seqA 
  \\ & \quad \SNegStar{\msg{delReq}{\I{path}=\mathit{parent}(p)}}\effR{}\,\Code{unit}\,
    \effLR{\msgA{crtReq}{p\;c} \seqA \msg{crtRsp}{\mathit{succ}} \seqA \allA \seqA \msgA{delReq}{p} \seqA \allA} \interty
  \\
  & p{:}\mathsf{Path} \garr c{:}\urt{\mathsf{Byte}}{\I{isRoot}(\mathit{parent}(p))} \sarr
  \\ & \quad
    \effLR{\allA}\,\Code{unit}\,
    \effLR{\msgA{crtReq}{p\;c} \seqA \msg{crtRsp}{\mathit{succ}} \seqA \allA \seqA \msg{crtReq}{\I{path}=\mathit{parent}(p)} \seqA \allA \seqA \msgA{delReq}{p} \seqA \allA}
  \\
  \eff{crtRsp} ~:~{}
  & s{:}\Bool \sarr \effLR{\allA}\,\urt{\Code{bool}}{\mathit{succ}=s}\,\effLR{\allA}
  \\
  \eff{delReq} ~:~{}
  & p{:}\urt{\mathsf{Path}}{\I{isRoot}(p)} \sarr \effLR{\allA}\,\Code{unit}\,
    \effLR{\msgA{delReq}{p} \seqA \msg{delRsp}{\neg\mathit{succ}} \seqA \allA} \interty
  \\
  & p{:}\urt{\mathsf{Path}}{\neg\,\I{isRoot}(p)} \sarr
  \\ & \quad
    \effLR{\allA \seqA \msgA{delReq}{p} \seqA \SNegStar{\msg{crtReq}{\I{path}=p}}}\,\Code{unit}\,
    \effLR{\msgA{delReq}{p} \seqA \msg{delRsp}{\neg\mathit{succ}} \seqA \allA} \interty
  \\
  & p{:}\urt{\mathsf{Path}}{\neg\,\I{isRoot}(p)} \sarr
  \\ & \quad
    \effLR{\msgC{initReq}\seqA \SNegStar{\msgA{delReq}{p}} \seqA \msg{crtReq}{\I{path}=p} \seqA \msg{crtRsp}{\mathit{succ}} \seqA \SNegStar{\msgA{delReq}{p}}}\,\Code{unit}\,
  \\ & \quad
    \effLR{\msgA{delReq}{p} \seqA \msg{delRsp}{\mathit{succ}} \seqA \SNegStar{\msg{crtReq}{\true}}}
  \\
  \eff{delRsp} ~:~{}
  & s{:}\Bool \sarr \effLR{\allA}\,\urt{\Code{bool}}{\mathit{succ}=s}\,\effLR{\allA}
  \\
  \eff{extReq} ~:~{}
  & p{:}\urt{\mathsf{Path}}{\I{isRoot}(p)} \sarr \effLR{\allA}\,\Code{unit}\,
    \effLR{\msgA{extReq}{p} \seqA \msg{extRsp}{\mathit{exists}} \seqA \allA} \interty
  \\
  & p{:}\urt{\mathsf{Path}}{\neg\,\I{isRoot}(p)} \sarr
  \\ & \quad
    \effLR{\allA \seqA \msgA{delReq}{p} \seqA \SNegStar{\msg{crtReq}{\I{path}=p}}}\,\Code{unit}\,
    \effLR{\msgA{extReq}{p} \seqA \msg{extRsp}{\neg\mathit{exists}} \seqA \allA} \interty
  \\
  & p{:}\urt{\mathsf{Path}}{\neg\,\I{isRoot}(p)} \sarr
  \\ & \quad
    \effLR{\allA \seqA \msg{crtReq}{\I{path}=p} \seqA \msg{crtRsp}{\mathit{succ}} \seqA \SNegStar{\msgA{delReq}{p}}}\,\Code{unit}\,
    \effLR{\msgA{extReq}{p} \seqA \msg{extRsp}{\mathit{exists}} \seqA \allA}
  \\
  \eff{extRsp} ~:~{}
  & e{:}\Bool \sarr \effLR{\allA}\,\urt{\Code{bool}}{\mathit{exists}=e}\,\effLR{\allA}
\end{align*}}
\vspace{-.4cm}\par\nobreak\noindent
The weakened \tyName{} merges intersected types across the create/delete/exists clauses and drops or weakens predicates such as \(\I{isDir}(\I{content})\) on paths, paralleling the weakened study in the evaluation.

\newpage
\subsubsection*{Graph}
Consider a dynamic graph with fresh nodes and directed edges. Two edges \(x\!\to\!y\) and \(y\!\to\!z\) are added (with side conditions on \(\I{isStart}\)). A bug in the underlying program can only be detected on a connected graph, which is captured by the following global property:
\vspace{-.35cm}\par\nobreak{\small
\begin{align*}
  & \allA \seqA \msg{addE}{\I{st}=x \land \I{ed}=y \land \neg\,\I{isStart}(\I{st})} \seqA \allA \seqA
  \\&\quad \msg{addE}{\I{st}=y \land \I{ed}=z} \seqA \allA \seqA
    \msg{isConnReq}{\true} \seqA \msg{isConnRsp}{\mathit{isConn} }
\end{align*}}
\vspace{-.4cm}\par\nobreak\noindent
The \tyName{} comprises the following clauses:
\vspace{-.35cm}\par\nobreak{\small
\begin{align*}
  \eff{init} ~:~{}
  & \effLR{\allA}\,\Code{unit}\,\effLR{\SNegStar{\msg{init}{\true}}}
  \\
  \eff{newNReq} ~:~{}
  & \effLR{\msg{init}{\true}\seqA\allA}\,\Code{unit}\,
    \effLR{\msg{newNReq}{\true}\seqA\msg{newNRsp}{\true}\seqA\allA}
  \\
  \eff{newNRsp} ~:~{}
  & x{:}\Int \sarr \effLR{\SNegStar{\msg{newNRsp}{\true}}}\,\urt{\Code{int}}{\I{nid}=x \land \I{isStart}(x)}\,\effLR{\allA} \interty
  \\
  & x{:}\Int \sarr \effLR{\SNegStar{\msg{newNRsp}{\true}}\seqA\msg{newNRsp}{\neg(\I{nid}=x)}\seqA\SNegStar{\msg{newNRsp}{\I{nid}=x}}}\,\\ &\urt{\Code{int}}{\I{nid}=x \land \neg\,\I{isStart}(x)}\,\effLR{\allA\seqA\msg{addE}{\I{ed}=x}\seqA\allA}
  \\
  \eff{addE} ~:~{}
  & x{:}\Int \garr y{:}\urt{\Int}{x\neq y} \sarr \effLR{\allA\seqA\msg{newNRsp}{\I{nid}=x}\seqA\allA\seqA\msg{newNRsp}{\I{nid}=y}\seqA\allA}\,\Code{unit}\,
    \\ &\effLR{\msg{addE}{\I{st}=x\land\I{ed}=y\land\neg\,\I{isStart}(y)} \seqA \SNegStar{\msg{addE}{\I{st}=x\land\I{ed}=y}}}
  \\
  \eff{isConnReq} ~:~{}
  & \effLR{\allA}\,\Code{unit}\,\effLR{\msg{isConnReq}{\true}\seqA\msg{isConnRsp}{\true}\seqA\allA}
  \\
  \eff{isConnRsp} ~:~{}
  & v{:}\Bool \sarr \effLR{\allA}\,\urt{\Code{bool}}{\mathit{isConn}=v}\,\effLR{\msg{isConnRsp}{\mathit{isConn}=v}\seqA\allA}
\end{align*}}
\vspace{-.4cm}\par\nobreak\noindent
The \tyName{} requires a node is either the initial node, or connected from another node. The weakened \tyName{} merges two intersected types of \(\eff{newNRsp}\) and loosens \(\I{nid}/\I{isStart}\) constraints.

\newpage
\subsubsection*{NFA} This benchmark has a similar setting to \textsf{Graph} but requires the graph to be a well-formed NFA with at least three edges. Then, the global property is:
\vspace{-.35cm}\par\nobreak{\small
\begin{align*}
  & \allA \seqA \msg{addE}{\true} \seqA \allA \seqA \msg{addE}{\true} \seqA \allA \seqA
    \msg{addE}{\true} \seqA \allA \seqA
    \msg{isNFARsp}{\mathit{isNFA} }
\end{align*}}
\vspace{-.4cm}\par\nobreak\noindent
The \tyName{} is:
\vspace{-.35cm}\par\nobreak{\small
\begin{align*}
  \eff{init} ~:~{}
  & \effLR{\allA}\,\Code{unit}\,\effLR{\SNegStar{\msg{init}{\true}}}
  \\
  \eff{newNReq} ~:~{}
  & \effLR{\msg{init}{\true}\seqA\allA}\,\Code{unit}\,
    \effLR{\msg{newNReq}{\true}\seqA\msg{newNRsp}{\true}\seqA\allA} 
  \\
  \eff{newNRsp} ~:~{}
  & x{:}\Int \sarr \effLR{\SNegStar{\msg{newNRsp}{\true}}}\,\urt{\Code{int}}{\I{nid}=x}\,\effLR{\allA} \interty
  \\
  & x{:}\Int \sarr \effLR{\SNegStar{\msg{newNRsp}{\true}}\seqA\msg{newNRsp}{\neg(\I{nid}=x)}\seqA\SNegStar{\msg{newNRsp}{\I{nid}=x}}}\,  \\ & \urt{\Code{int}}{\I{nid}=x}\,\effLR{\allA\seqA\msg{addE}{\I{ed}=x}\seqA\allA}
  \\
  \eff{setInitN} ~:~{}
  & x{:}\Int \sarr \effLR{\allA\seqA\msg{newNRsp}{\I{nid}=x}\seqA\allA}\,\Code{unit}\,\effLR{\msg{setInitN}{\I{nid}=x}\seqA\SNegStar{\msg{setInitN}{\true}}}
  \\
  \eff{setFinalN} ~:~{}
  & x{:}\Int \sarr \effLR{\allA\seqA\msg{newNRsp}{\I{nid}=x}\seqA\allA}\,\Code{unit}\,\effLR{\msg{setFinalN}{\I{nid}=x}\seqA\SNegStar{\msg{setFinalN}{\true}}}
  \\
  \eff{addE} ~:~{}
  & x{:}\Int \garr c{:}\urt{\mathsf{char}}{\true} \garr y{:}\urt{\Int}{x\neq y} \sarr
  \\ & \quad
    \effLR{\allA\seqA\msg{newNRsp}{\I{nid}=x}\seqA\allA\seqA\msg{newNRsp}{\I{nid}=y}\seqA\allA\seqA\msg{addE}{\I{ed}=x}}\,\Code{unit}\,
  \\ & \quad
    \effLR{\msg{addE}{\I{st}=x\land\I{ch}=c\land\I{ed}=y}\seqA\SNegStar{\msg{addE}{\I{st}=x\land\I{ch}=c}}} \interty
  \\
  & x{:}\Int \garr c{:}\urt{\mathsf{char}}{\true} \garr y{:}\urt{\Int}{x\neq y} \sarr
  \\ & \quad
    \effLR{\allA\seqA\msg{setInitN}{\I{nid}=x}\seqA\allA\seqA\msg{newNRsp}{\I{nid}=y}\seqA\allA}\,\Code{unit}\,
  \\ & \quad
    \effLR{\msg{addE}{\I{st}=x\land\I{ch}=c\land\I{ed}=y}\seqA\SNegStar{\msg{addE}{\I{st}=x\land\I{ch}=c}}} \interty
  \\
  & x{:}\Int \garr c{:}\urt{\mathsf{char}}{\true} \garr y{:}\urt{\Int}{x\neq y} \sarr
  \\ & \quad
    \effLR{\allA\seqA\msg{newNRsp}{\I{nid}=y}\seqA\allA\seqA\msg{newNRsp}{\I{nid}=x}\seqA\allA\seqA\msg{addE}{\I{ed}=x}}\,\Code{unit}\,
  \\ & \quad
    \effLR{\msg{addE}{\I{st}=x\land\I{ch}=c\land\I{ed}=y}\seqA\SNegStar{\msg{addE}{\I{st}=x\land\I{ch}=c}}}
  \\
  \eff{isNFAReq} ~:~{}
  & \effLR{\allA}\,\Code{unit}\,\effLR{\msg{isNFAReq}{\true}\seqA\msg{isNFARsp}{\true}\seqA\allA}
  \\
  \eff{isNFARsp} ~:~{}
  & v{:}\Bool \sarr \effLR{\allA\seqA\msg{setFinalN}{\true}\seqA\allA}\,\urt{\Code{bool}}{\mathit{isNFA}=v}\,\effLR{\allA}
\end{align*}}
\vspace{-.4cm}\par\nobreak\noindent
The weakened \tyName{} merges intersected types of \(\eff{newNRsp}\) and \(\eff{addE}\), and relaxes ordering constraints on \(\msg{addE}{\cdot}\).

\newpage
\subsubsection*{IFC} As introduced in \autoref{sec:overview}, we would like to test the ENNI property, where we are interested in the implementation of the IFC machine over different operators; thus, the global properties vary the offending primitive (\(\eff{store}/\eff{add}/\eff{load}\)) across three goals.
Then, the global properties are
\vspace{-.35cm}\par\nobreak{\small
\begin{align*}
  & \allA \seqA \msg{store}{\true} \seqA \allA \seqA \msg{enniRsp}{\mathit{enni}}
  \\
  & \allA \seqA \msg{add}{\true} \seqA \allA \seqA \msg{enniRsp}{\mathit{enni}}
  \\
  & \allA \seqA \msg{load}{\true} \seqA \allA \seqA \msg{enniRsp}{\mathit{enni}}
\end{align*}}
\vspace{-.4cm}\par\nobreak\noindent
The \tyName{} is defined as (here \(\mathit{addr}\) abbreviates \(\I{isAddr}\)):
\vspace{-.35cm}\par\nobreak{\small
\begin{align*}
  \eff{push} ~:~{}
  & l{:}\Bool \garr x{:}\urt{\Int}{\true} \garr y{:}\urt{\Int}{\true} \sarr \effLR{\SNegStar{\msgGray{depth}{\true}}}\,\Code{unit}\,
  \\ & \quad
    \effLR{\msgB{push}{\I{low}\;\I{e_l}\;\I{e_r}}{\I{low}=l\land\I{e_l}=x\land\I{e_r}=y\land\mathit{addr}(\I{e_l})\land\mathit{addr}(\I{e_r})}\seqA\msgGray{depth}{\mathit{depth}=1}\seqA\allA} \interty
  \\
  & d{:}\Int \garr l{:}\Bool \garr x{:}\urt{\Int}{\true} \garr y{:}\urt{\Int}{\true} \sarr \effLR{\allA\seqA\msgGray{depth}{\mathit{depth}=d}}\,\Code{unit}\,
  \\ & \quad
    \effLR{\msgB{push}{\I{low}\;\I{e_l}\;\I{e_r}}{\I{low}=l\land\I{e_l}=x\land\I{e_r}=y\land\mathit{addr}(\I{e_l})\land\mathit{addr}(\I{e_r})}\seqA\msgGray{depth}{\mathit{depth}=d+1}\seqA\allA}
  \\
  \eff{pop} ~:~{}
  & d{:}\urt{\Int}{d>0} \sarr \effLR{\allA\seqA\msgGray{depth}{\mathit{depth}=d \land d>0}}\,\Code{unit}\,
    \effLR{\msg{pop}{\true}\seqA\msgGray{depth}{\mathit{depth}=d-1}\seqA\allA}
  \\
  \eff{load} ~:~{}
  & d{:}\urt{\Int}{d>0} \sarr \effLR{\allA\seqA\msg{store}{\true}\seqA\msg{push}{\true}\seqA\msgGray{depth}{\mathit{depth}=d \land d>0}}\,\Code{unit}\,
  \\ & \quad
    \effLR{\msg{load}{\true}\seqA\msgGray{depth}{\mathit{depth}=d}\seqA\allA\seqA\msg{store}{\true}\seqA\allA}
  \\
  \eff{store} ~:~{}
  & d{:}\urt{\Int}{d>1} \sarr \effLR{\allA\seqA\msg{push}{\true}\seqA\msgGray{depth}{\mathit{depth}=d \land d>1}}\,\Code{unit}\,
    \\& \effLR{\msg{store}{\true}\seqA\msgGray{depth}{\mathit{depth}=d-2}\seqA\allA} \interty
  \\
  & d{:}\urt{\Int}{d>1} \sarr \effLR{\allA\seqA\msgGray{depth}{\mathit{depth}=d \land d>1}}\,\Code{unit}\,
    \\ & \effLR{\msg{store}{\true}\seqA\msgGray{depth}{\mathit{depth}=d-2}\seqA\allA}
  \\
  \eff{add} ~:~{}
  & d{:}\urt{\Int}{d>1} \sarr \effLR{\allA\seqA\msgGray{depth}{\mathit{depth}=d \land d>1}}\,\Code{unit}\,
    \\& \effLR{\msg{add}{\true}\seqA\msgGray{depth}{\mathit{depth}=d-1}\seqA\allA}
  \\
  \eff{enniReq} ~:~{}
  & \effLR{\allA}\,\Code{unit}\,\effLR{\msg{enniReq}{\true}\seqA\msg{enniRsp}{\mathit{enni}}\seqA\allA}
  \\
  \eff{enniRsp} ~:~{}
  & v{:}\Bool \sarr \effLR{\allA}\,\urt{\Code{bool}}{\mathit{enni}=v}\,\effLR{\allA}
  \\
  \effGray{depth} ~:~{}
  & d{:}\Int \sarr \effLR{\allA}\,\urt{\Code{int}}{\mathit{depth}=d}\,\effLR{\allA}
\end{align*}}
\vspace{-.4cm}\par\nobreak\noindent
Here we use a ghost event $\effGray{depth}$ to record the current depth of the IFC stack.
The weakened \tyName{} relaxes the \(\I{low}/\I{e_l}/\I{e_r}\) coupling on \(\msgB{push}{\cdots}{\cdots}\) toward weaker implications and merges the intersected types of the two \(\eff{store}\) clauses under looser stack-depth premises.

\subsubsection*{DeBruijn STLC} The setting of this benchmark is already discussed in the beginning of this appendix (cf. \autoref{sec:tech:details-overview}). The weakened \tyName{} removes all type-related constraints to enable the generator to produce all untyped STLC programs.

\newpage

\subsubsection*{Hashtable}
Consider a string-keyed hash table with find/replace that is accessible to multiple threads; we want to detect the concurrency error. Concretely, there is a situation that, after \(\msg{repReq}{\I{key}=k,\I{vl}=v_2}\) with no further \(\eff{addReq}/\eff{repReq}\) on \(k\), a \(\eff{findRsp}\) may still disagree with \(v_2\)---the property captures that mismatch. Then, the global property is defined as:
\vspace{-.35cm}\par\nobreak{\small
\begin{align*}
  & \allA \seqA \msg{repReq}{\I{key}=k \land \I{vl}=v_2} \seqA
  \\&\quad \SNegStarU{\msg{addReq}{\I{key}=k}}{\msg{repReq}{\I{key}=k}} \seqA
    \msg{findRsp}{\I{key}=k \land \neg(\I{vl}=v_2)}
\end{align*}}
\vspace{-.4cm}\par\nobreak\noindent
The \tyName{} comprises the following clauses:
\vspace{-.35cm}\par\nobreak{\small
\begin{align*}
  \eff{addReq} ~:~{}
  & k{:}\Int,\; v{:}\Int \sarr \effLR{\msg{initReq}{\true}\seqA\allA}\,\Code{unit}\,\effLR{\msg{addReq}{\I{key}=k\land\I{vl}=v}\seqA\allA}
  \\
  \eff{findReq} ~:~{}
  & k{:}\Int \sarr 
  \effL{} \allA\seqA\msg{repReq}{\I{key}=k\land\I{vl}=v_1}\seqA 
    \\& \SNegStarUUU{\msg{rmReq}{\I{key}=k}}{\msg{clReq}{\true}}{\msg{repReq}{\I{key}=k}}{\msg{addReq}{\I{key}=k}} \effR{} 
    \\&\,\Code{unit}\, 
    \effLR{\msg{findReq}{\I{key}=k}\seqA\msg{findRsp}{\I{key}=k\land\I{vl}=v_1}\seqA\allA}
  \\
  \eff{repReq} ~:~{}
  & k{:}\Int,\; v{:}\Int \sarr \effLR{\allA\seqA\msg{addReq}{\I{key}=k}\seqA\SNegStar{\msg{rmReq}{\I{key}=k}}}\,\Code{unit}\,
  \\&\quad \effLR{\msg{repReq}{\I{key}=k\land\I{vl}=v}\seqA\allA}
\end{align*}}
\vspace{-.4cm}\par\nobreak\noindent
The weakened \tyName{} shortens \(\eff{findReq}\) / \(\eff{repReq}\) precondition chains substantially.

\subsubsection*{Transaction} As explained in the read atomicity example in \autoref{sec:overview}.

\newpage

\subsubsection*{DataBase Setting}
The four benchmarks below share the same assumption about the underlying database. We adopt the weak isolation levels of MonkeyDB~\cite{MonkeyDB}, which exposes four APIs:
\begin{align*}
  &\eff{begin} : \Unit \sarr \Code{tid}
  \\&\eff{put}: \Code{tid} \sarr \Code{key} \sarr \Code{value} \sarr \Unit
  \\&\eff{commit}: \Code{tid} \sarr \Code{cid}
  \\&\eff{get}: \Code{tid} \sarr \Code{key} \sarr \Code{tid} \times \Code{cid} \times \Code{value} 
\end{align*}
\noindent
Calling $\eff{begin}$ starts a new transaction, and the engine returns a fresh transaction identifier. Each of $\eff{put}$, $\eff{get}$, and $\eff{commit}$ takes that $\Code{tid}$ so the store can tell which transaction is running. The $\eff{put}$ operation updates the value stored at a key in the usual way. The $\eff{commit}$ operation closes the transaction named by the supplied id; the engine assigns a monotonic \emph{commit id} $\Code{cid}$ that records commit order. For $\eff{get}$, the result bundles the read value with the ($\Code{tid}$,$\Code{cid}$) pair of the writer whose commit produced it; tracking $\Code{cid}$ is what lets us state weak isolation. Under causal consistency (CC), for example, reads must be justified by a transaction that commits before the reader commits. Concretely, CC is captured by:
\vspace{-.35cm}\par\nobreak{\small
\begin{align*}
  \eff{begin} ~:~{}
  & i{:}\Int \sarr \effLR{\SNegStar{\msg{begin}{\I{tid}\geq i}}}\,\Code{unit}\,
    \effLR{\msg{begin}{\I{tid}=i}\seqA\allA}
  \\
  \eff{put} ~:~{}
  & i{:}\Int \sarr k{:}\Code{Key} \sarr z{:}\Code{Value} \sarr \effLR{\allA\seqA\msg{begin}{\I{tid}=i}\seqA\SNegStar{\msg{commit}{\I{tid}=i}}}
  \\&\quad \Code{unit}\,
    \effLR{\msg{put}{\I{tid}=i\land\I{key}=k\land\I{vl}=z}\seqA\allA}
  \\
  \eff{commit} ~:~{}
  & j{:}\Int \garr i{:}\Int \sarr \effLR{\SNegStar{\msg{commit}{\I{tid}=i \lor \I{cid}\geq j}}\seqA\msg{begin}{\I{tid}=i}\seqA\SNegStar{\msg{commit}{\I{tid}=i \lor \I{cid}\geq j}}}
  \\&\quad \Code{unit}\,
    \effLR{\msg{commit}{\I{tid}=i\land\I{cid}=j}\seqA\SNegStarU{\msg{put}{\I{tid}=i}}{\msg{get}{\I{tid}=i}}}
  \\&(\texttt{Causal Consistency:}) \quad 
  \\\eff{get} ~:~{}
  & p{:}\Int\garr q{:}\Int \garr z{:}\Code{Value} \garr i{:}\Int \sarr k{:}\Code{Key} \sarr {}
  \\ & \quad
    \effLR{\allA\seqA\msgA{put}{p\;k\;z}\seqA\SNegStar{\msg{put}{\I{tid}=i\land\I{key}=k}}\seqA\msgA{commit}{p\;q}\seqA\SNegStar{\msg{put}{\I{tid}=i\land\I{key}=k}}}
  \\ & \quad
  \Code{unit}\,\effLR{\msg{get}{\I{tid}=i\land\I{key}=k\land\I{prevTid}=p\land\I{prevCid}=q\land\I{vl}=z}\seqA\SNegStar{\msg{commit}{\I{tid}=i\land\I{cid} < p}}}
\end{align*}}
\vspace{-.4cm}\par\nobreak\noindent
These clauses force $\Code{tid}$ and $\Code{cid}$ to increase monotonically, allow $\eff{commit}$ only after the matching $\eff{begin}$, and pair commits with transactions. Weak isolation shows up chiefly in $\eff{get}$: its history automaton may justify a read from any earlier committed writer, not necessarily the latest. If the current transaction $T_i$ observes writer $T_p$ at commit id $q$, the future automaton blocks committing $T_i$ with any $\Code{cid}<q$, so $T_i$ cannot commit ahead of $T_p$. Read committed (RC) keeps the same surface API but weakens the \tyName{} for $\eff{get}$:
\vspace{-.35cm}\par\nobreak{\small
\begin{align*}
  &(\texttt{Read Committed:}) \quad 
  \\\eff{get} ~:~{}
  & p{:}\Int\garr q{:}\Int \garr z{:}\Code{Value} \garr i{:}\Int \sarr k{:}\Code{Key} \sarr {} 
  \\ & \quad
    \effLR{\allA\seqA\msgA{put}{p\;k\;z}\seqA\SNegStar{\msg{put}{\I{tid}=i\land\I{key}=k}}\seqA\msgA{commit}{p\;q}\seqA\SNegStar{\msg{commit}{\true}}}
  \\ & \quad
  \Code{unit}\,\effLR{\msg{get}{\I{tid}=i\land\I{key}=k\land\I{prevTid}=p\land\I{prevCid}=q\land\I{vl}=z}\seqA\allA}
\end{align*}}
\vspace{-.4cm}\par\nobreak\noindent
Here the history forces $T_p$ to be the latest committed writer on the key (nothing commits after $\msgA{commit}{p\;q}$). Unlike CC, the future adds no further restriction. In the four benchmarks below we take CC as the default \tyName{}; weakening yields RC.

\newpage
\subsubsection*{Shopping cart} This is a benchmark from MonkeyDB~\cite{MonkeyDB}, which is a simple shopping cart system. The system allows users to add and delete items to their cart, and the cart is stored in the database. The global property is committing \(\msg{put}{\I{key}=x,\I{vl}=y}\) should rule out a later \(\msg{get}{\cdot}\) that reads a stale list (\(\neg(\I{vl}=y)\)), that is
\vspace{-.35cm}\par\nobreak{\small
\begin{align*}
  & \allA \seqA \msg{put}{\I{key}=x \land \I{vl}=y} \seqA
    \SNegStar{\msg{put}{\I{key}=x}} \seqA
    \msg{get}{\I{key}=x \land \neg(\I{vl}=y)} \seqA \allA
\end{align*}}
\vspace{-.4cm}\par\nobreak\noindent
The \tyName{} comprises the following clauses (\(\mathit{emp}\), \(\mathit{cons}\), and \(\mathit{remove}\) are the axiomatized list operations):
\vspace{-.35cm}\par\nobreak{\small
\begin{align*}
  &(\texttt{Causal Consistency Specification is omitted}) \quad 
  \\
  \eff{addItemReq} ~:~{}
  & i{:}\Int\garr l{:}\List[\Int] \garr x{:}\Int \sarr y{:}\Int \sarr \effLR{\allA}\,\Code{unit}\,
  \\ & \quad
    \effL{}\msg{addItemReq}{\I{user}=x\land\I{item}=y}\seqA\msg{begin}{\I{tid}=i}\seqA \allA \seqA
  \\ & \quad \msg{get}{\I{tid}=i\land\I{key}=x\land\I{vl}=l}\seqA\allA\seqA\msg{put}{\I{tid}=i\land\I{key}=x\land\I{vl}=\mathit{cons}(y,l)}\seqA
  \\ & \quad \SNegStar{\msg{put}{\I{key}=x}}\seqA\msg{commit}{\I{tid}=i}\seqA\msg{addItemRsp}{\true}\seqA\allA\effLR{}
  \\
  \eff{addItemRsp} ~:~{}
  & \effLR{\allA}\,\Code{unit}\,\effLR{\allA}
  \\
  \eff{deleteItemReq} ~:~{}
  & {:}\Int\garr l{:}\List[\Int] \garr x{:}\Int \sarr y{:}\Int \sarr \effLR{\allA}\,\Code{unit}\,
  \\ & \quad
    \effL{}\msg{deleteItemReq}{\I{user}=x\land\I{item}=y}\seqA\msg{begin}{\I{tid}=i}\seqA \allA \seqA
  \\ & \quad \msg{get}{\I{tid}=i\land\I{key}=x\land\I{vl}=l}\seqA\allA\seqA\msg{put}{\I{tid}=i\land\I{key}=x\land\I{vl}=\mathit{remove}(y,l)}\seqA
  \\ & \quad  \SNegStar{\msg{put}{\I{key}=x}}\seqA\msg{commit}{\I{tid}=i}\seqA\msg{deleteItemRsp}{\true}\seqA\allA\effR{}
  \\
  \eff{deleteItemRsp} ~:~{}
  & \effLR{\allA}\,\Code{unit}\,\effLR{\allA}
\end{align*}}
\vspace{-.4cm}\par\nobreak\noindent
These clauses encode the underlying database steps used by $\eff{addItem}$ and $\eff{deleteItem}$: read the prior cart (a list of item ids) and write back an update. The $\allA$ gaps between those operations admit other trace events (e.g.\ $\eff{put}$ from concurrent transactions). They also forbid \emph{two uncommitted transactions from updating the same key}---the same ``write-write conflict'' that InnoDB-style engines reject. The constraint appears in the future as the pattern ``$\msg{put}{\I{key}=x \land \ldots}\SNegStar{\msg{put}{\I{key}=x}}\msg{commit}{\ldots}$'': after one transaction touches key $x$, no second transaction may write $x$ until the first commits. For the weakened \tyName{}, we switch the MonkeyDB layer to RC as above; the cart operations themselves are unchanged.

\newpage
\subsubsection*{Courseware} This is a benchmark from MonkeyDB~\cite{MonkeyDB}, where courseware is a system allow student to add and drop courses. The global property is the violation of strong consistency:
\vspace{-.35cm}\par\nobreak{\small
\begin{align*}
  & \allA \seqA \msg{put}{\I{key}=x \land \I{vl}=y} \seqA
    \SNegStar{\msg{put}{\I{key}=x}} \seqA
   \msg{get}{\I{key}=x \land \neg(\I{vl}=y)} \seqA \allA
\end{align*}}
\vspace{-.4cm}\par\nobreak\noindent
The \tyName{} is defined as follows (\(\mathit{emp}\), \(\mathit{cons}\), and \(\mathit{remove}\) are as in the Shopping cart benchmark):
\vspace{-.35cm}\par\nobreak{\small
\begin{align*}
  &(\texttt{Causal Consistency Specification is omitted}) \quad 
  \\
  \eff{regStuReq} ~:~{}
  & i{:}\Int \garr x{:}\Int \sarr \effLR{\SNegStarExcl{\msg{regStuReq}{\true} \cup \msg{get}{\I{key}=x} \cup \msg{put}{\I{key}=x}}}\,\Code{unit}\,
  \\ & \quad
    \effL{}\msg{regStuReq}{\I{user}=x}\seqA\msg{begin}{\I{tid}=i}\seqA \allA \seqA \msg{put}{\I{tid}=i\land\I{key}=x\land\mathit{emp}(\I{vl})} \seqA \allA \seqA
    \\& \quad \msg{put}{\I{tid}=i\land\I{key}=x\land\I{vl}}\seqA\msg{commit}{\I{tid}=i}\seqA \allA\seqA\msg{regStuRsp}{\true}\seqA
    \\& \quad \SNegStar{\msg{regStuReq}{\true}}\effR{}
  \\
  \eff{regStuRsp} ~:~{}
  & \effLR{\allA}\,\Code{unit}\,\effLR{\allA}
  \\
  \eff{crtCourseReq} ~:~{}
  & i{:}\Int,\; x{:}\Int \sarr \effLR{\allA}\,\Code{unit}\,
    \effL{}\msg{crtCourseReq}{\I{course\_id}=x}\seqA\msg{begin}{\I{tid}=i}\seqA\allA\seqA 
    \\& \quad \msg{putCourses}{\I{tid}=i\land\I{key}=x\land\I{vl}}\seqA\allA\seqA\msg{commit}{\I{tid}=i}\seqA\msg{crtCourseRsp}{\true}\seqA\allA\effR{}
  \\
  \eff{crtCourseRsp} ~:~{}
  & \effLR{\allA}\,\Code{unit}\,\effLR{\allA}
  \\
  \eff{enrStuReq} ~:~{}
  & i{:}\Int \garr l{:}\List[\Int] \garr x{:}\Int \sarr y{:}\Int  \sarr \effLR{\allA}\,\Code{unit}\,
  \\ & \quad
    \effL{}\msg{enrStuReq}{\I{user}=x\land\I{item}=y}\seqA\msg{begin}{\I{tid}=i}\seqA\allA\seqA
  \\ & \quad \msg{get}{\I{tid}=i\land\I{key}=x\land\I{vl}=l}\seqA\allA\seqA\msg{put}{\I{tid}=i\land\I{key}=x\land\I{vl}=\mathit{cons}(y,l)}\seqA
  \\ & \quad \SNegStar{\msg{put}{\I{key}=x}}\seqA\msg{commit}{\I{tid}=i}\seqA\msg{enrStuRsp}{\true}\seqA\allA\effR{}
  \\
  \eff{enrStuRsp} ~:~{}
  & \effLR{\allA}\,\Code{unit}\,\effLR{\allA}
  \\
  \eff{unenrStuReq} ~:~{}
  & i{:}\Int \garr l{:}\List[\Int] \garr x{:}\Int \sarr y{:}\Int  \sarr \effLR{\allA}\,\Code{unit}\,
  \\ & \quad
    \effL{}\msg{unenrStuReq}{\I{user}=x\land\I{item}=y}\seqA\msg{begin}{\I{tid}=i}\seqA\allA\seqA 
    \\ & \quad \msg{get}{\I{tid}=i\land\I{key}=x\land\I{vl}=l}\seqA\allA\seqA\msg{put}{\I{tid}=i\land\I{key}=x\land\I{vl}=\mathit{remove}(y,l)}\seqA
  \\ & \quad \SNegStar{\msg{put}{\I{key}=x}}\seqA\msg{commit}{\I{tid}=i}\seqA\msg{unenrStuRsp}{\true}\seqA\allA\effR{}
  \\
  \eff{unenrStuRsp} ~:~{}
  & \effLR{\allA}\,\Code{unit}\,\effLR{\allA}
\end{align*}}
\vspace{-.4cm}\par\nobreak\noindent
These clauses encode the underlying database steps used by courseware's API, which is similar to the \textsf{Shopping} benchmark. They also forbid \emph{two uncommitted transactions from updating the same key}---the same ``write-write conflict'' that InnoDB-style engines reject. For the weakened \tyName{}, we switch the MonkeyDB layer to RC as above; the cart operations themselves are unchanged.

\newpage
\subsubsection*{Twitter} This is a benchmark from MonkeyDB~\cite{MonkeyDB}, modeling social follow sets and timelines on top of the transactional store. The global property is a violation of strong consistency: after observing follows list $\ell$ for user \(u\), with no further \(\eff{put}\) between, a later \(\eff{get}\) may read a different set.
\vspace{-.35cm}\par\nobreak{\small
\begin{align*}
  & \allA \seqA \msg{get}{\I{user}=u \land \I{fls}=\ell} \seqA
    \SNegStar{\msg{put}{\I{user}=u}} \seqA
    \msg{get}{\I{user}=u \land \neg(\I{fls}=\ell)} \seqA \allA
\end{align*}}
\vspace{-.4cm}\par\nobreak\noindent
The \tyName{} is defined as follows (\(\mathit{emp}\), \(\mathit{cons}\), and \(\mathit{remove}\) are as in the Shopping cart benchmark):
\vspace{-.35cm}\par\nobreak{\small
\begin{align*}
  &(\texttt{Causal Consistency Specification is omitted}) \quad
  \\
  \eff{followReq} ~:~{}
  & i{:}\Int,\garr\ell{:}\List[\Int]\garr u{:}\Int \sarr f{:}\Int \sarr \effLR{\allA}\,\Code{unit}\,
  \\ & \quad
    \effL{}\msg{followReq}{\I{user}=u\land\I{follow\_o}=f}\seqA\msg{begin}{\I{tid}=i}\seqA\msg{get}{\I{tid}=i\land\I{user}=u\land\I{fls}=\ell}\seqA \allA \seqA
  \\ & \quad \msg{put}{\I{tid}=i\land\I{user}=u\land\I{fls}=\mathit{cons}(f,\ell)}\seqA\SNegStar{\msg{put}{\I{user}=u}}\seqA
  \\ & \quad \msg{commit}{\I{tid}=i}\seqA\msg{followRsp}{\true}\seqA\allA\effR{}
  \\
  \eff{followRsp} ~:~{}
  & \effLR{\allA}\,\Code{unit}\,\effLR{\allA}
  \\
  \eff{unfollowReq} ~:~{}
  & i{:}\Int,\garr\ell{:}\List[\Int]\garr u{:}\Int \sarr f{:}\Int \sarr \effLR{\allA}\,\Code{unit}\,
  \\ & \quad
    \effL{}\msg{unfollowReq}{\I{user}=u\land\I{unfollow\_o}=f}\seqA\msg{begin}{\I{tid}=i}\seqA
    \\ & \quad \msg{get}{\I{tid}=i\land\I{user}=u\land\I{fls}=\ell}\seqA \allA \seqA
  \\ & \quad \msg{put}{\I{tid}=i\land\I{user}=u\land\I{fls}=\mathit{remove}(f,\ell)}\seqA\SNegStar{\msg{put}{\I{user}=u}}\seqA
  \\ & \quad \msg{commit}{\I{tid}=i}\seqA\msg{unfollowRsp}{\true}\seqA\allA\effR{}
  \\
  \eff{unfollowRsp} ~:~{}
  & \effLR{\allA}\,\Code{unit}\,\effLR{\allA}
  \\
  \eff{postTweetReq} ~:~{}
  & i{:}\Int,\garr\ell{:}\List[\Int]\garr u{:}\Int \sarr f{:}\Int \sarr \effLR{\allA}\,\Code{unit}\,
  \\ & \quad
    \effL{}\msg{postTweetReq}{\I{user}=u\land\I{tweet}=\tau}\seqA\msg{begin}{\I{tid}=i}\seqA
    \\ & \quad\msg{getT}{\I{tid}=i\land\I{user}=u\land\I{tweets}=\ell}\seqA \allA \seqA
  \\ & \quad \msg{putT}{\I{tid}=i\land\I{user}=u\land\I{tweets}=\mathit{cons}(\tau,\ell)}\seqA\SNegStar{\msg{putT}{\I{user}=u}}\seqA
  \\ & \quad\msg{commit}{\I{tid}=i}\seqA\msg{postTweetRsp}{\true}\seqA\allA\effR{}
  \\
  \eff{postTweetRsp} ~:~{}
  & \effLR{\allA}\,\Code{unit}\,\effLR{\allA}
  \\
  \eff{timelineReq} ~:~{}
  & i{:}\Int,\garr\ell{:}\List[\Int]\garr u{:}\Int \sarr \effLR{\allA}\,\Code{unit}\,
  \\ & \quad
    \effL{}\msg{timelineReq}{\I{user}=u}\seqA\msg{begin}{\I{tid}=i}\seqA\msg{getT}{\I{tid}=i\land\I{user}=u\land\I{tweets}=\ell}\seqA
    \\ & \quad\msg{timelineRsp}{\I{tweets}=\ell}\seqA\allA\effR{}
  \\
  \eff{timelineRsp} ~:~{}
  & \effLR{\allA}\,\Code{unit}\,\effLR{\allA}
\end{align*}}
\vspace{-.4cm}\par\nobreak\noindent
These specifications mirror the Shopping-style pattern: each API issues \(\eff{begin}/\eff{commit}\) around reads and updates of the underlying rows, with \(\SNegStar{\msg{put}{\I{user}=u}}\) / \(\SNegStar{\msg{putT}{\I{user}=u}}\) forbidding write--write conflicts on those projections. For the weakened \tyName{}, we switch the MonkeyDB layer to RC as above; the Twitter operations themselves are unchanged.

\newpage
\subsubsection*{SmallBank} This is a benchmark from MonkeyDB~\cite{MonkeyDB}, implementing the SmallBank workload (accounts, checking, and savings) on the transactional store. The global property is a violation of strong consistency: two reads of checking balance \(b\) for customer \(c\) with no \(\msg{putC}{\I{cm}=c}\) in between should agree, yet a later read may disagree.
\vspace{-.35cm}\par\nobreak{\small
\begin{align*}
  & \allA \seqA \msg{getC}{\I{cm}=c \land \I{bal}=b} \seqA
    \SNegStar{\msg{putC}{\I{cm}=c}} \seqA
    \msg{getC}{\I{cm}=c \land \neg(\I{bal}=b)} \seqA \allA
\end{align*}}
\vspace{-.4cm}\par\nobreak\noindent
The \tyName{} is defined as follows:
\vspace{-.35cm}\par\nobreak{\small
\begin{align*}
  &(\texttt{Causal Consistency Specification is omitted}) \quad
  \\
  \eff{openAReq} ~:~{}
  & i{:}\Int,\; n{:}\Int,\; c{:}\Int \sarr \effLR{\SNegStarExcl{\msg{openAReq}{\I{name}=n} \cup \msg{openAReq}{\I{cm}=c}}}\,\Code{unit}\,
  \\ & \quad
    \effL{}\msg{openAReq}{\I{name}=n\land\I{cm}=c}\seqA\msg{begin}{\I{tid}=i}\seqA\msg{putA}{\I{tid}=i\land\I{name}=n\land\I{cm}=c}\seqA
  \\ & \quad \msg{putS}{\I{tid}=i\land\I{cm}=c\land\I{bal}=0}\seqA\msg{putC}{\I{tid}=i\land\I{cm}=c\land\I{bal}=0}\seqA
  \\ & \quad \msg{commit}{\I{tid}=i}\seqA\msg{openARsp}{\true}\seqA\allA\effR{}
  \\
  \eff{openARsp} ~:~{}
  & \effLR{\allA}\,\Code{unit}\,\effLR{\allA}
  \\
  \eff{amalReq} ~:~{}
  & i{:}\Int,\; c_0{:}\Int,\; c_1{:}\Int \sarr \effLR{\allA\seqA\msg{openAReq}{\I{cm}=c_0\lor\I{cm}=c_1}\seqA\allA\seqA\msg{openAReq}{\I{cm}=c_0\lor\I{cm}=c_1}\seqA\allA}\,\Code{unit}\,
  \\ & \quad
    \effL{}\msg{amalReq}{\I{cm0}=c_0\land\I{cm1}=c_1}\seqA\msg{begin}{\I{tid}=i}\seqA\msg{getS}{\I{tid}=i\land\I{cm}=c_0}\seqA
  \\ & \quad \msg{getC}{\I{tid}=i\land\I{cm}=c_0}\seqA\msg{getC}{\I{tid}=i\land\I{cm}=c_1}\seqA \allA \seqA
  \\ & \quad \msg{putS}{\I{tid}=i\land\I{cm}=c_0\land\I{bal}=0}\seqA\msg{putC}{\I{tid}=i\land\I{cm}=c_0\land\I{bal}=0}\seqA\msg{putC}{\I{tid}=i\land\I{cm}=c_1}\seqA
  \\ & \quad \SNegStarExcl{\msg{putS}{\I{cm}=c_0} \cup \msg{putC}{\I{cm}=c_0} \cup \msg{putC}{\I{cm}=c_1}}\seqA\msg{commit}{\I{tid}=i}\seqA\msg{amalRsp}{\true}\seqA\allA\effR{}
  \\
  \eff{amalRsp} ~:~{}
  & \effLR{\allA}\,\Code{unit}\,\effLR{\allA}
  \\
  \eff{balanceReq} ~:~{}
  & i{:}\Int,\; n{:}\Int,\; c{:}\Int,\; b_s{:}\Int,\; b_c{:}\Int \sarr \effLR{\allA\seqA\msg{openAReq}{\I{name}=n\land\I{cm}=c}\seqA\allA}\,\Code{unit}\,
  \\ & \quad
    \effL{}\msg{balanceReq}{\I{name}=n}\seqA\msg{begin}{\I{tid}=i}\seqA\msg{getA}{\I{tid}=i\land\I{name}=n\land\I{cm}=c}\seqA
  \\ & \quad \msg{getS}{\I{tid}=i\land\I{cm}=c\land\I{bal}=b_s}\seqA\msg{getC}{\I{tid}=i\land\I{cm}=c\land\I{bal}=b_c}\seqA \allA \seqA
  \\ & \quad \msg{commit}{\I{tid}=i}\seqA\msg{balanceRsp}{\I{bal}=b_s+b_c}\seqA\allA\effR{}
  \\
  \eff{balanceRsp} ~:~{}
  & \effLR{\allA}\,\Code{unit}\,\effLR{\allA}
  \\
  \eff{depositReq} ~:~{}
  & i{:}\Int,\; n{:}\Int,\; c{:}\Int,\; b{:}\Int,\; a{:}\Int \sarr \effLR{\allA\seqA\msg{openAReq}{\I{name}=n\land\I{cm}=c}\seqA\allA}\,\Code{unit}\,
  \\ & \quad
    \effL{}\msg{depositReq}{\I{name}=n\land\I{amt}=a}\seqA\msg{begin}{\I{tid}=i}\seqA\msg{getA}{\I{tid}=i\land\I{name}=n\land\I{cm}=c}\seqA
  \\ & \quad \msg{getC}{\I{tid}=i\land\I{cm}=c\land\I{bal}=b}\seqA \allA \seqA\msg{putC}{\I{tid}=i\land\I{cm}=c}\seqA
  \\ & \quad \SNegStar{\msg{putC}{\I{cm}=c}}\seqA\msg{commit}{\I{tid}=i}\seqA\msg{depositRsp}{\true}\seqA\allA\effR{}
  \\
  \eff{depositRsp} ~:~{}
  & \effLR{\allA}\,\Code{unit}\,\effLR{\allA}
  \\
  \eff{sendReq} ~:~{}
  & i{:}\Int,\; s{:}\Int,\; d{:}\Int,\; b_s{:}\Int,\; b_d{:}\Int,\; a{:}\Int \sarr {}
  \\ & \quad
    \effLR{\allA\seqA\msg{openAReq}{\I{cm}=s\lor\I{cm}=d}\seqA\allA\seqA\msg{openAReq}{\I{cm}=s\lor\I{cm}=d}\seqA\allA}\,\Code{unit}\,
  \\ & \quad
    \effL{}\msg{sendReq}{\I{srcid}=s\land\I{destid}=d\land\I{amt}=a}\seqA\msg{begin}{\I{tid}=i}\seqA
  \\ & \quad \msg{getC}{\I{tid}=i\land\I{cm}=s\land\I{bal}=b_s}\seqA\msg{getC}{\I{tid}=i\land\I{cm}=d\land\I{bal}=b_d}\seqA \allA \seqA
  \\ & \quad \msg{putC}{\I{tid}=i\land\I{cm}=s\land\I{bal}=b_s-a}\seqA\msg{putC}{\I{tid}=i\land\I{cm}=d\land\I{bal}=b_d+a}\seqA
  \\ & \quad \SNegStarUU{\msg{putC}{\I{cm}=s}}{\msg{putC}{\I{cm}=d}}\seqA\msg{commit}{\I{tid}=i}\seqA\msg{sendRsp}{\true}\seqA\allA\effR{}
  \\
  \eff{sendRsp} ~:~{}
  & \effLR{\allA}\,\Code{unit}\,\effLR{\allA}
  \\
\end{align*}}
\vspace{-.4cm}\par\nobreak\noindent

\vspace{-.35cm}\par\nobreak{\small
  \begin{align*}
  \eff{transReq} ~:~{}
  & i{:}\Int,\; n{:}\Int,\; c{:}\Int,\; b{:}\Int,\; a{:}\Int \sarr \effLR{\allA\seqA\msg{openAReq}{\I{name}=n\land\I{cm}=c}\seqA\allA}\,\Code{unit}\,
  \\ & \quad
    \effL{}\msg{transReq}{\I{name}=n\land\I{amt}=a}\seqA\msg{begin}{\I{tid}=i}\seqA\msg{getA}{\I{tid}=i\land\I{name}=n\land\I{cm}=c}\seqA
  \\ & \quad \msg{getS}{\I{tid}=i\land\I{cm}=c\land\I{bal}=b}\seqA \allA \seqA\msg{putS}{\I{tid}=i\land\I{cm}=c\land\I{bal}=b+a}\seqA
  \\ & \quad \SNegStar{\msg{putS}{\I{cm}=c}}\seqA\msg{commit}{\I{tid}=i}\seqA\msg{transRsp}{\true}\seqA\allA\effR{}
  \\
  \eff{transRsp} ~:~{}
  & \effLR{\allA}\,\Code{unit}\,\effLR{\allA}
  \\
  \eff{writeCReq} ~:~{}
  & i{:}\Int,\; n{:}\Int,\; c{:}\Int,\; a{:}\Int \sarr \effLR{\allA\seqA\msg{openAReq}{\I{name}=n\land\I{cm}=c}\seqA\allA}\,\Code{unit}\,
  \\ & \quad
    \effL{}\msg{writeCReq}{\I{name}=n\land\I{amt}=a}\seqA\msg{begin}{\I{tid}=i}\seqA\msg{getA}{\I{tid}=i\land\I{name}=n\land\I{cm}=c}\seqA
  \\ & \quad \msg{getS}{\I{tid}=i\land\I{cm}=c}\seqA\msg{getC}{\I{tid}=i\land\I{cm}=c}\seqA \allA \seqA
  \\ & \quad \msg{putC}{\I{tid}=i\land\I{cm}=c}\seqA\SNegStar{\msg{putC}{\I{cm}=c}}\seqA\msg{commit}{\I{tid}=i}\seqA\msg{writeCRsp}{\true}\seqA\allA\effR{}
  \\
  \eff{writeCRsp} ~:~{}
  & \effLR{\allA}\,\Code{unit}\,\effLR{\allA}
\end{align*}}
\vspace{-.4cm}\par\nobreak\noindent
Each scenario threads \(\eff{begin}/\eff{commit}\) with the same write--write gaps on the relevant account rows (e.g.\ \(\SNegStar{\msg{putC}{\I{cm}=c}}\) before commit) as in \textsf{Shopping}. For the weakened \tyName{}, we switch the MonkeyDB layer to RC as above; the SmallBank scenarios themselves are unchanged.

\subsection*{P Language Benchmarks}
The remaining models are asynchronous message-passing protocols drawn from PBench. As in \textsc{Stack} and the synchronous benchmarks above, we give a short scenario, the global property as an SRE, the operation-wise \tyName{} clauses (where futures often encode protocol order and search bias), and a note on weakening.

\subsubsection*{Database}
This model is a simplified database with APIs: \(\eff{writeReq}/\eff{writeRsp}\) and \(\eff{readReq}/\eff{readRsp}\). It must satisfy the Read-Your-Writes (RYW) policy: reads should respect the last completed write unless another write slips in between; the generator therefore chases a successful read that nevertheless returns \(y\neq x\) right after \(\msg{writeRsp}{\I{va}=x}\) and with no further \(\msg{writeRsp}{\true}\) in the gap. Thus, the global property is:
\vspace{-.35cm}\par\nobreak{\small
\begin{align*}
  & \allA \seqA \msg{writeRsp}{\I{va}=x} \seqA \SNegStar{\msg{writeRsp}{\true}} \seqA
    \msg{readRsp}{\I{va}=y \land x \neq y \land \mathit{st}} \seqA \allA
\end{align*}}
\vspace{-.4cm}\par\nobreak\noindent
The \tyName{} comprises the following clauses:
\vspace{-.35cm}\par\nobreak{\small
\begin{align*}
  \eff{readReq} ~:~{}
  & x{:}\Int \sarr \effLR{\allA\seqA\msg{writeReq}{\I{va}=x}\seqA\SNegStar{\msg{writeReq}{\true}}}\,\Code{unit}\,\effLR{\msg{readRsp}{\I{va}=x\land \mathit{st}}} \interty
  \\
  & \effLR{\SNegStar{\msg{writeReq}{\true}}}\,\Code{unit}\,\effLR{\msg{readRsp}{\neg\mathit{st}}}
  \\
  \eff{writeReq} ~:~{}
  & x{:}\Int \sarr \effLR{\allA}\,\Code{unit}\,\effLR{\msg{writeReq}{\I{va}=x}\seqA\msg{writeRsp}{\I{va}=x}}
  \\
  \eff{writeRsp} ~:~{}
  & x{:}\Int \sarr \effLR{\allA}\,\urt{\Code{int}}{\I{va}=x}\,\effLR{\allA}
  \\
  \eff{readRsp} ~:~{}
  & x{:}\Int \garr  s{:}\Bool \sarr \effLR{\allA}\,\urt{\Code{int}}{\I{va}=x \land \mathit{st}=s}\,\effLR{\allA}
\end{align*}}
\vspace{-.4cm}\par\nobreak\noindent
The history on \(\eff{readReq}\) ties a successful read to a matching prior \(\eff{writeReq}\) (the Kleene gap excludes stray writes), which biases generation toward read-your-writes-shaped traces. The weakened \tyName{} relaxes those payload ties on \(\eff{writeRsp}\) and \(\eff{readRsp}\) by weakening the qualifier directly.

\newpage
\subsubsection*{Firewall} The firewall divides the nodes into ``internal nodes'' and ``external nodes''; external nodes cannot send messages to internal nodes until they are recorded on the ``white list'' of the firewall. To add an external node to the white list, at least one internal node must send it a message. The firewall under test has incorrect white list recording, which leads to an error described by the following global property:
\vspace{-.35cm}\par\nobreak{\small
\begin{align*}
  & \SNegStarU{\msg{eExReq}{\I{node}=n}}{\msg{eFwReq}{\I{node}=n}} \seqA
    \msg{eInReq}{\I{node}=n} \seqA \allA \seqA
    \msg{eExRsp}{\I{node}=n \land \neg\,\mathit{stat}} \seqA \allA
\end{align*}}
\vspace{-.4cm}\par\nobreak\noindent
The \tyName{} comprises the following clauses:
\vspace{-.35cm}\par\nobreak{\small
\begin{align*}
  \eff{eStart} ~:~{}
  & n{:}\mathsf{Node} \sarr \effLR{\allA}\,\Code{unit}\,\effLR{\msg{eStart}{\I{node}=n}\seqA\msg{eInReq}{\I{node}=n}}
  \\
  \eff{eInReq} ~:~{}
  & n{:}\mathsf{Node} \sarr \effLR{\SNegStarUU{\msg{eExReq}{\I{node}=n}}{\msg{eInReq}{\I{node}=n}}{\msg{eFwReq}{\I{node}=n}}}\,\Code{unit}\,
  \\&\quad \effLR{\msg{eFwReq}{\I{node}=n}}
  \\
  \eff{eFwReq} ~:~{}
  & n{:}\mathsf{Node} \sarr \effLR{\SNegStar{\msg{eFwReq}{\I{node}=n}}}\,\Code{unit}\,\effLR{\msg{eExReq}{\I{node}=n}}
  \\
  \eff{eExReq} ~:~{}
  & n{:}\mathsf{Node} \sarr \effLR{\allA\seqA\msg{eInReq}{\I{node}=n}\seqA\SNegStar{\msg{eInReq}{\true}}}\,\Code{unit}\,
  \\ & \quad \effLR{\msg{eExReq}{\I{node}=n}\seqA\msg{eExRsp}{\I{node}=n\land\mathit{stat}}} \interty
  \\
  & n{:}\mathsf{Node} \sarr \effLR{\allA\seqA\msg{eInReq}{\neg(\I{node}=n)}\seqA\SNegStar{\msg{eInReq}{\true}}}\,\Code{unit}\,
  \\ & \quad \effLR{\msg{eExReq}{\I{node}=n}\seqA\msg{eExRsp}{\I{node}=n\land\neg\mathit{stat}}}
  \\
  \eff{eExRsp} ~:~{}
  & n{:}\mathsf{Node}\garr s{:}\Bool \sarr \effLR{\allA}\,\urt{\Code{bool}}{\mathit{stat} = s}\,\effLR{\allA}
\end{align*}}
\vspace{-.4cm}\par\nobreak\noindent
Histories on \(\eff{eInReq}\) and \(\eff{eFwReq}\) rule out re-entrant forwarding noise before the next hop, sharpening the causal shape the tester must satisfy. The weakened \tyName{} merges the intersected \(\eff{eExReq}\) cases and weakens guards on those events.

\subsubsection*{RingLeaderElection}
Processes ring-elect a leader using \(\eff{eWakeup}\) and \(\eff{eNom}\), with outcomes published as \(\msg{eWon}{\mathit{ld}=\cdots}\); at most one leader value should ultimately win. The search targets two \(\eff{eWon}\) announcements that contradict each other on \(\mathit{ld}\). The global property is:
\vspace{-.35cm}\par\nobreak{\small
\begin{align*}
  & \allA \seqA \msg{eWon}{\mathit{ld}=\ell} \seqA \allA \seqA
    \msg{eWon}{\neg(\mathit{ld}=\ell)} \seqA \allA
\end{align*}}
\vspace{-.4cm}\par\nobreak\noindent
The \tyName{} comprises the following clauses (\(\mathit{next}\) is axiomatized on ring nodes):
\vspace{-.35cm}\par\nobreak{\small
\begin{align*}
  \eff{eWakeup} ~:~{}
  & n{:}\mathsf{Node} \sarr \effLR{\SNegStar{\msg{eWon}{\true}}}\,\Code{unit}\,\effLR{\msg{eNom}{\I{node}=n\land\mathit{ld}=n}}
  \\
  \eff{eNom} ~:~{}
  & n{:}\mathsf{Node} \garr \ell{:}\urt{\mathsf{Node}}{\ell\neq n} \sarr \effLR{\allA}\,\Code{unit}\,\effLR{\msg{eNom}{\I{node}=n\land\mathit{ld}=\ell}\seqA\msg{eWon}{\mathit{ld}=\ell}} \interty
  \\
  & n{:}\mathsf{Node} \garr \ell{:}\urt{\mathsf{Node}}{\ell=n} \sarr \effLR{\allA}\,\Code{unit}\,\effL{}\msg{eNom}{\I{node}=n\land\mathit{ld}=\ell}\seqA
  \\ & \quad \msg{eNom}{\I{node}=\mathit{next}(n)\land\mathit{ld}=\ell}\effR{}
  \\
  \eff{eWon} ~:~{}
  & \ell{:}\mathsf{Node} \sarr \effLR{\allA}\,\Code{unit}\,\effLR{\msg{eWon}{\mathit{ld}=\ell}}
\end{align*}}
\vspace{-.4cm}\par\nobreak\noindent
The future on \(\eff{eWakeup}\) forbids any earlier \(\msg{eWon}{\true}\), steering traces toward elections that have not yet closed; that is the main search bias for this benchmark. The weakened \tyName{} drops these guard constraints on \(\eff{eWakeup}\) and \(\eff{eWon}\).

\newpage
\subsubsection*{Bankserver}
The bank server sequences balance reads and updates around \(\eff{wdReq}\) and \(\eff{eWDRsp}\); withdrawals are meant to stay within available funds. The buggy server still reports a failed withdrawal on \(\I{ac}\) (\(\neg\mathit{status}\)) in a situation where the recorded balances should have sufficed. The global property is:
\vspace{-.35cm}\par\nobreak{\small
\begin{align*}
  & \allA \seqA \msg{eWDRsp}{\I{acId}=\I{ac} \land \neg\,\mathit{status}} \seqA \allA
\end{align*}}
\vspace{-.4cm}\par\nobreak\noindent
The \tyName{} comprises the following clauses (\(\mathsf{aid}/\mathsf{rid}\) are phantom account/request ids):
\vspace{-.35cm}\par\nobreak{\small
\begin{align*}
  \eff{initAc} ~:~{}
  & ac{:}\mathsf{aid}\garr ba{:}\urt{\Int}{ba>0} \sarr \effLR{\SNegStar{\msg{initAc}{\I{acId}=ac}}}\,\Code{unit}\,\effLR{\msg{initAc}{\I{acId}=ac\land\I{bal}=ba}}
  \\
  \eff{wdReq} ~:~{}
  & id{:}\mathsf{rid}\garr ac{:}\mathsf{aid}\garr am{:}\urt{\Int}{am>0} \sarr \effLR{\allA}\,\Code{unit}\,
  \\ & \quad \effLR{\msg{wdReq}{\I{rId}=id\land\I{acId}=ac\land\I{amt}=am}\seqA\msg{read}{\I{rId}=id\land\I{acId}=ac\land\I{amt}=am}}
  \\
  \eff{read} ~:~{}
  & ba{:}\Int\garr id{:}\mathsf{rid}\garr am{:}\garr ac{:}\mathsf{aid} \sarr 
  \\ & \quad \effLR{\allA\seqA\msg{initAc}{\I{acId}=ac\land\I{bal}=ba}\seqA\SNegStarU{\msg{read}{\I{acId}=ac}}{\msg{initAc}{\I{acId}=ac}}}\,\Code{unit}\,
  \\ & \quad \effLR{\msg{read}{\I{rId}=id\land\I{amt}=am\land\I{acId}=ac}\seqA\msg{readRsp}{\I{rId}=id\land\I{amt}=am\land\I{acId}=ac\land\I{bal}=ba}}
  \\
  \eff{readRsp} ~:~{}
  & id{:}\mathsf{rid}\garr am{:}\garr ac{:}\mathsf{aid}\garr ba{:}\urt{\Int}{ba>0 \land ba>am} \sarr {}
  \\ & \quad \effL{}\SNegStarU{\msg{readRsp}{\true}}{\msg{read}{\true}}\seqA\msg{wdReq}{\I{rId}=id\land\I{acId}=ac\land\I{amt}=am}\seqA
  \\& \quad \SNegStar{\msg{wdReq}{\true}}\effR{}\,\Code{unit}\,
  \\ & \quad \effL{}\msg{readRsp}{\I{acId}=ac\land\I{bal}=ba}\seqA\msg{update}{\I{acId}=ac\land\I{bal}=ba-am}\seqA
  \\&\quad \msg{eWDRsp}{\I{rId}=id\land\I{acId}=ac\land\I{bal}=ba-am\land\mathit{status}}\effR{}
  \\
  & id{:}\mathsf{rid}\garr am{:}\garr ac{:}\mathsf{aid}\garr ba{:}\urt{\Int}{ba>0 \land ba\le am} \sarr \effLR{\allA}\,\Code{unit}\,
  \\ & \quad \effLR{\msg{readRsp}{\I{rId}=id\land\I{amt}=am\land\I{acId}=ac\land\I{bal}=ba}\seqA\msg{eWDRsp}{\I{rId}=id\land\I{acId}=ac\land\neg\mathit{status}}}
  \\
  \eff{update} ~:~{}
  & ac{:}\mathsf{aid}\garr ba{:}\urt{\Int}{ba>0} \sarr \effLR{\allA}\,\Code{unit}\,\effLR{\msg{update}{\I{acId}=ac\land\I{bal}=ba}}
  \\
  \eff{eWDRsp} ~:~{}
  & id{:}\mathsf{rid}\garr ac{:}\mathsf{aid}\garr ba{:}\Int,\; st{:}\Bool \sarr \effLR{\allA}\,\urt{\Code{unit}}{\I{rId}=id\land\I{acId}=ac\land\I{bal}=ba\land\mathit{status}=st}\,\effLR{\allA}
\end{align*}}
\vspace{-.4cm}\par\nobreak\noindent
The successful branch requires \(\I{bal}>\I{amt}\) at the read response so the backend can emit \(\eff{update}\); the failing branch skips the update. The weakened \tyName{} simplifies response/update qualifiers and widens read-query intersected types.

\newpage
\subsubsection*{Simplified2pc}
The SUT is a simplified two-phase commit client: committed state should read atomically, so after a write response commits \(x\) and no further write completes in the middle, a read response should not return \(y\neq x\). In SRE form this is a committed \(\eff{writeRsp}\), a Kleene stretch with no further successful \(\eff{writeRsp}\), a \(\eff{readRsp}\) that mismatches the written value, and a final stretch with no further \(\eff{readRsp}\) or \(\eff{writeRsp}\).
The global property is:
\vspace{-.35cm}\par\nobreak{\small
\begin{align*}
  & \allA \seqA
    \msg{writeRsp}{\I{va}=x \land \mathit{stat}} \seqA
    \SNegStar{\msg{writeRsp}{\mathit{stat}}} \seqA
    \msg{readRsp}{\I{va}=y \land x \neq y} \seqA
    \\&\quad \SNegStarU{\msg{readRsp}{\true}}{\msg{writeRsp}{\true}}
\end{align*}}
\vspace{-.4cm}\par\nobreak\noindent
The \tyName{} is:
\vspace{-.35cm}\par\nobreak{\small
\begin{align*}
  \eff{readReq} ~:~{}
  & \effLR{\allA}\,\Code{unit}\,\effLR{\msg{readReq}{\true}\seqA\msg{getReq}{\true}}
  \\
  \eff{getReq} ~:~{}
  & \effLR{\SNegStar{\msg{commit}{\true}}}\,\Code{unit}\,\effLR{\msg{readRsp}{\I{va}=-1}}
  \\
  \eff{readRsp} ~:~{}
  & x{:}\Int \sarr \effLR{\allA}\,\urt{\Code{int}}{\I{va}=x}\,\effLR{\allA}
  \\
  \eff{writeReq} ~:~{}
  & x{:}\urt{\Int}{x\ge 0} \sarr \effLR{\allA}\,\Code{unit}\,\effLR{\msg{writeReq}{\I{va}=x}\seqA\msg{putReq}{\I{va}=x}}
  \\
  \eff{putReq} ~:~{}
  & x{:}\Int \sarr \effLR{\allA}\,\Code{unit}\,\effLR{\msg{putReq}{\I{va}=x}\seqA\msg{putRsp}{\I{va}=x}}
  \\
  \eff{putRsp} ~:~{}
  & x{:}\Int \garr  s{:}\Bool \sarr \effLR{\allA}\,\Code{unit}\,\effLR{\msg{putRsp}{\I{va}=x\land\mathit{stat}}\seqA\msg{writeRsp}{\I{va}=x\land\mathit{stat}=s}\seqA\msg{commit}{\true}}
  \\
  & x{:}\Int \garr  s{:}\Bool \sarr \effLR{\allA}\,\Code{unit}\,\effLR{\msg{putRsp}{\I{va}=x\land\mathit{stat}=s}\seqA\msg{writeRsp}{\I{va}=x\land\neg\mathit{stat}}\seqA\msg{abort}{\true}}
  \\
  \eff{writeRsp} ~:~{}
  & x{:}\Int \garr  s{:}\Bool \sarr \effLR{\allA}\,\urt{\Code{int}}{\I{va}=x\land\mathit{stat}=s}\,\effLR{\allA}
  \\
  \eff{commit} ~:~{}
  & \effLR{\allA}\,\Code{unit}\,\effLR{\msg{commit}{\true}}
  \\
  \eff{abort} ~:~{}
  & \effLR{\allA}\,\Code{unit}\,\effLR{\msg{abort}{\true}}
\end{align*}}
\vspace{-.4cm}\par\nobreak\noindent
The weakened \tyName{} drops non-negative value constraints on \(\eff{writeReq}\); many \texttt{putRsp}, \texttt{writeRsp}, and \texttt{readRsp} payload tests become \texttt{true}.

\subsubsection*{Heartbeat}
The heartbeat failure detector should align ``nodes down'' announcements with actual shutdowns. Note that messages can be lost due to network issues. The system has a timeout retry limit as $3$ (i.e., $\I{trial}$). The bug class is a false alarm: \(\eff{eNotifyNodesDown}\) fires while \(\msg{eShutDown}{\true}\) never appears on either side in the exhibited pattern. The global property is:
\vspace{-.35cm}\par\nobreak{\small
\begin{align*}
  & \SNegStar{\msg{eShutDown}{\true}} \seqA \msg{eNotifyNodesDown}{\true} \seqA \SNegStar{\msg{eShutDown}{\true}}
\end{align*}}
\vspace{-.4cm}\par\nobreak\noindent
The \tyName{} comprises the following clauses:
\vspace{-.35cm}\par\nobreak{\small
\begin{align*}
  \eff{eNotifyNodesDown} ~:~{}
  & \effLR{\SNegStar{\msg{eNotifyNodesDown}{\true}}}\,\Code{unit}\,\effLR{\allA}
  \\
  \eff{eNetworkError} ~:~{}
  & tl{:}\Int \sarr \effLR{\allA\seqA\msg{ePong}{\I{trial}=tl}\seqA\allA}\,\Code{unit}\,\effLR{\msg{ePongLost}{\I{trial}=tl}}
  \\
  \eff{ePing} ~:~{}
  & tl{:}\Int \sarr \effLR{\SNegStar{\msg{eShutDown}{\true}}}\,\Code{unit}\,\effLR{\msg{ePong}{\I{trial}=tl}}
  \\
  \eff{eShutDown} ~:~{}
  & \effLR{\SNegStar{\msg{eShutDown}{\true}}}\,\Code{unit}\,\effLR{\msg{eShutDown}{\true}}
  \\
  \eff{eStart} ~:~{}
  & \effLR{\SNegStarTri{\msg{ePing}{\true}}{\msg{ePongLost}{\true}}{\msg{eStart}{\true}}}\,\Code{unit}\,\effLR{\msg{ePing}{\I{trial}=1}}
  \\
  \eff{ePong} ~:~{}
  & tl{:}\Int \sarr \effLR{\SNegStar{\msg{ePongLost}{\I{trial}=tl}}}\,\Code{unit}\,\effLR{\allA}
  \\
  \eff{ePongLost} ~:~{}
  & tl{:}\urt{\Int}{tl=1} \sarr \effLR{\SNegStar{\msg{ePongLost}{\I{trial}=tl}}}\,\Code{unit}\,\effLR{\msg{ePing}{\I{trial}=2}}
  \\
  & tl{:}\urt{\Int}{tl=2} \sarr \effLR{\SNegStar{\msg{ePongLost}{\I{trial}=tl}}}\,\Code{unit}\,\effLR{\msg{ePing}{\I{trial}=3}}
  \\
  & tl{:}\urt{\Int}{tl=3} \sarr \effLR{\allA}\,\Code{unit}\,\effLR{\msg{eNotifyNodesDown}{\true}}
\end{align*}}
\vspace{-.4cm}\par\nobreak\noindent
The weakened \tyName{} keeps the trial-level structure but softens side conditions.

\newpage
\subsubsection*{Chainreplication}
Chain-replicated storage ought to make completed writes visible to later successful reads on the same key. This chain is simplified to only have three nodes: the head node, middle node, and tail node; since the Chain Replication protocol has three different policies for these three kinds of nodes. Here a client still reads a stale value: the write on \(\I{key}=k\) finishes (\(\eff{writeRsp}\)), yet the next successful read (\(\mathit{st}\)) returns \(y\neq x\) although no further \(\msg{writeRsp}{\true}\) occurs between them. The global property is:
\vspace{-.35cm}\par\nobreak{\small
\begin{align*}
  & \allA \seqA \msg{writeReq}{\I{key}=k \land \I{va}=x} \seqA
    \SNegStar{\msg{writeRsp}{\true}} \seqA
    \msg{readRsp}{\I{key}=k \land \I{va}=y \land y \neq x \land \mathit{st}} \seqA
    \\&\quad \SNegStar{\msg{writeRsp}{\true}}
\end{align*}}
\vspace{-.4cm}\par\nobreak\noindent
The \tyName{} comprises the following clauses (\(\mathit{mid1}/\mathit{mid2}\) axiomatize mid vs.\ tail nodes):
\vspace{-.35cm}\par\nobreak{\small
\begin{align*}
  \eff{writeReq} ~:~{}
  & k{:}\mathsf{Key}\garr x{:}\Int \sarr \effL{} \allA \effR{}\,\Code{unit}\,\effL{} \msg{writeReq}{\I{key}=k\land\I{va}=x}\seqA
  \\& \quad \msg{writeToMid}{\I{key}=k\land\I{va}=x\land\mathit{mid1}(\I{node})} \effR{}
  \\
  \eff{writeToMid} ~:~{}
  & k{:}\mathsf{Key}\garr x{:}\Int \sarr 
  \\& \quad  n{:}\urt{\mathsf{Node}}{\mathit{mid2}(n)} \sarr \effL{} \allA \effR{}\,\urt{\Code{unit}}{\I{key}=k\land\I{va}=x\land\I{node}=n}\,
  \\& \quad \effL{} \msg{writeToTail}{\I{key}=k\land\I{va}=x} \effR{} \interty
  \\
  & k{:}\mathsf{Key}\garr x{:}\Int \sarr
  \\& \quad n{:}\urt{\mathsf{Node}}{\mathit{mid1}(n)} \sarr \effL{} \allA \effR{}\,\urt{\Code{unit}}{\I{key}=k\land\I{va}=x\land\I{node}=n}\,\
  \\& \quad \effL{} \msg{writeToMid}{\I{key}=k\land\I{va}=x\land\I{node}=\mathit{next}(n)} \effR{}
  \\
  \eff{writeToTail} ~:~{}
  & k{:}\mathsf{Key}\garr x{:}\Int \sarr \effL{} \SNegStar{\msg{crashTail}{\true}} \effR{}\,\urt{\Code{unit}}{\I{key}=k\land\I{va}=x}\,\effL{} \allA \effR{} \interty
  \\
  & k{:}\mathsf{Key}\garr x{:}\Int \sarr \effL{} \SNegStar{\msg{crashTail}{\true}} \effR{}\,\urt{\Code{unit}}{\I{key}=k\land\I{va}=x}\,
  \\& \quad \effL{} \msg{writeRsp}{\I{key}=k\land\I{va}=x} \effR{}
  \\
  \eff{writeRsp} ~:~{}
  & k{:}\mathsf{Key}\garr x{:}\Int \sarr \effL{} \allA \effR{}\,\urt{\Code{unit}}{\I{key}=k\land\I{va}=x}\,\effL{} \allA \effR{}
  \\
  \eff{readReq} ~:~{}
  & k{:}\mathsf{Key} \sarr \effL{} \SNegStarU{\msg{crashTail}{\true}}{\msg{writeToTail}{\I{key}=k}} \effR{}\,\Code{unit}\,\effL{} \msg{readRsp}{\I{key}=k\land\neg\mathit{st}} \effR{} \interty
  \\
  & x{:}\Int \garr  k{:}\mathsf{Key} \sarr \effL{} \SNegStar{\msg{crashTail}{\true}}\seqA\msg{crashTail}{\true}\seqA
  \\ & \quad \SNegStar{\msg{crashTail}{\true}}\seqA\msg{writeToMid}{\I{key}=k\land\I{va}=x\land\mathit{mid2}(\I{node})}\seqA
  \\ & \quad \SNegStarU{\msg{crashTail}{\true}}{\msg{writeToMid}{\I{key}=k\land\I{va}=x\land\mathit{mid2}(\I{node})}} \effR{}\,\Code{unit}\,{}
  \\ & \quad \effL{} \msg{readRsp}{\I{key}=k\land\I{va}=x\land\mathit{st}} \effR{} \interty
  \\
  & x{:}\Int \garr  k{:}\mathsf{Key} \sarr \effL{} \SNegStar{\msg{crashTail}{\true}}\seqA\msg{writeToMid}{\I{key}=k\land\I{va}=x\land\mathit{mid2}(\I{node})}\seqA
  \\ & \quad \SNegStarU{\msg{crashTail}{\true}}{\msg{writeToMid}{\I{key}=k\land\mathit{mid2}(\I{node})}}\seqA\msg{crashTail}{\true}\seqA
  \\ & \quad \SNegStarU{\msg{crashTail}{\true}}{\msg{writeToMid}{\I{key}=k\land\mathit{mid2}(\I{node})}} \effR{}\,\Code{unit}\,{}
  \\ & \quad \effL{} \msg{readRsp}{\I{key}=k\land\I{va}=x\land\mathit{st}} \effR{} \interty
  \\
  & x{:}\Int \garr  k{:}\mathsf{Key} \sarr \effL{} \SNegStarU{\msg{crashTail}{\true}}{\msg{writeToMid}{\I{key}=k\land\mathit{mid2}(\I{node})}}\seqA
  \\ & \quad \msg{crashTail}{\true}\seqA\SNegStarU{\msg{crashTail}{\true}}{\msg{writeToMid}{\I{key}=k\land\mathit{mid2}(\I{node})}} \effR{}\,\Code{unit}\,
  \\ & \quad \effL{} \msg{readRsp}{\I{key}=k\land\neg\mathit{st}} \effR{} \interty
  \\
  & x{:}\Int \garr  k{:}\mathsf{Key} \sarr \effL{} \SNegStar{\msg{crashTail}{\true}}\seqA\msg{writeToTail}{\I{key}=k\land\I{va}=x}\seqA
  \\ & \quad \SNegStarU{\msg{crashTail}{\true}}{\msg{writeToTail}{\true}} \effR{}\,\Code{unit}\,\effL{} \msg{readRsp}{\I{key}=k\land\I{va}=x\land\mathit{st}} \effR{}
  \\
  \eff{readRsp} ~:~{}
  & k{:}\mathsf{Key}\garr x{:}\Int \garr  s{:}\Bool \sarr \effL{} \allA \effR{}\,\urt{\Code{int}}{\I{key}=k\land\I{va}=x\land\mathit{st}=s}\,\effL{} \allA \effR{}
  \\
  \eff{crashTail} ~:~{}
  & \effL{} \allA \effR{}\,\Code{unit}\,\effL{} \allA \effR{}
\end{align*}}
\vspace{-.4cm}\par\nobreak\noindent

The weakened \tyName{} merges intersected \(\eff{readReq}\) types into fewer success/failure skeletons.

\newpage
\subsubsection*{Paxos}
Paxos exposes an observable learner \(\eff{eLearn}\) that ought to settle on a single replicated value. We use $*\eff{Lost}*$ to simulate the network issue as introduced in the \textsf{HeartBeat} benchmark. The stress trace lets the learner emit two learns with different payloads \(x\) and \(y\) (\(x\neq y\)). The global property is:
\vspace{-.35cm}\par\nobreak{\small
\begin{align*}
  & \allA \seqA \msg{eLearn}{\I{va}=x} \seqA \allA \seqA
    \msg{eLearn}{\I{va}=y \land x \neq y} \seqA \allA
\end{align*}}
\vspace{-.4cm}\par\nobreak\noindent
The \tyName{} comprises the following clauses (\(\mathit{pr1}/\mathit{pr2}\) partition proposers and \(\mathit{acc1}/\mathit{acc2}\) partition acceptors):
\vspace{-.35cm}\par\nobreak{\small
\begin{align*}
  \eff{eStart} ~:~{}
  & p{:}\mathsf{Node}\garr x{:}\mathsf{tVal} \sarr \effL{} \allA\seqA\msg{eLostPrepReq}{\I{proposer}=p\land\mathit{acc2}(\I{acceptor})}\seqA\allA \effR{}\,
  \\ & \quad \Code{unit}\,\effL{} \msg{ePrepReq}{\I{proposer}=p\land\mathit{acc1}(\I{acceptor})\land\I{va}=x} \effR{} \interty
  \\
  & p{:}\mathsf{Node}\garr x{:}\mathsf{tVal} \sarr \effL{} \allA\seqA\msg{eLostPrepReq}{\I{proposer}=p\land\mathit{acc1}(\I{acceptor})}\seqA\allA \effR{}\,
  \\ & \quad \Code{unit}\,\effL{} \msg{ePrepReq}{\I{proposer}=p\land\mathit{acc2}(\I{acceptor})\land\I{va}=x} \effR{}
  \\
  \eff{eLearn} ~:~{}
  & x{:}\mathsf{tVal} \sarr \effL{} \allA \effR{}\,\urt{\Code{unit}}{\I{va}=x}\,\effL{} \allA \effR{}
  \\
  \eff{eLostPrepReq} ~:~{}
  & p{:}\mathsf{Node}\garr ac{:}\mathsf{Node} \sarr \effL{} \allA \effR{}\,\urt{\Code{unit}}{\I{proposer}=p\land\I{acceptor}=ac}\,\effL{} \allA \effR{}
  \\
  \eff{ePrepReq} ~:~{}
  & p{:}\urt{\mathsf{Node}}{\mathit{pr2}(p)}\garr ac{:}\mathsf{Node}\garr x{:}\mathsf{tVal} \sarr {}
  \\ & \quad \effL{} \SNegStarU{\msg{ePrepReq}{\I{acceptor}=ac}}{\msg{eAcReq}{\true}}\seqA
  \\ & \quad \msg{ePrepReq}{\mathit{pr1}(\I{proposer})\land\I{acceptor}=ac}\seqA\SNegStarU{\msg{ePrepReq}{\I{acceptor}=ac}}{\msg{eAcReq}{\true}} \effR{}\,
  \\ & \quad \urt{\Code{unit}}{\I{proposer}=p\land\I{acceptor}=ac\land\I{va}=x}\,
  \\ & \quad \effL{} \msg{ePrepRsp}{\I{acceptor}=ac\land\I{promised}=p\land\I{va}=x} \effR{} \interty
  \\
  & p{:}\mathsf{Node}\garr ac{:}\mathsf{Node}\garr x{:}\mathsf{tVal} \sarr \effL{} \SNegStarU{\msg{ePrepReq}{\I{acceptor}=ac}}{\msg{eAcReq}{\true}} \effR{}\,
  \\ & \quad \urt{\Code{unit}}{\I{proposer}=p\land\I{acceptor}=ac\land\I{va}=x}\,
  \\ & \quad \effL{} \msg{ePrepRsp}{\I{acceptor}=ac\land\I{promised}=p\land\I{va}=x} \effR{} \interty
  \\
  & p{:}\mathsf{Node}\garr ac{:}\mathsf{Node}\garr x{:}\mathsf{tVal} \sarr \effL{} \allA\seqA\msg{eLostPrepRsp}{\I{acceptor}=ac\land\I{promised}=p}\seqA\allA \effR{}\,
  \\ & \quad \urt{\Code{unit}}{\I{proposer}=p\land\I{acceptor}=ac\land\I{va}=x}\,\effL{} \allA \effR{}
  \\
  \eff{eLostPrepRsp} ~:~{}
  & ac{:}\mathsf{Node}\garr p{:}\mathsf{Node} \sarr \effL{} \allA \effR{}\,\urt{\Code{unit}}{\I{acceptor}=ac\land\I{promised}=p}\,\effL{} \allA \effR{}
  \\
  \eff{ePrepRsp} ~:~{}
  & ac{:}\mathsf{Node}\garr p{:}\mathsf{Node}\garr x{:}\mathsf{tVal}\garr ap{:}\mathsf{Node} \sarr 
  \\ & \quad \effL{} \SNegStar{\msg{ePrepRsp}{\I{promised}=p}} \effR{}\,
  \\ & \quad \urt{\Code{unit}}{\I{acceptor}=ac\land\I{promised}=p\land\I{va}=x\land\mathit{nAccepted}=ap}\,
  \\ & \quad \effL{} \msg{eAcReq}{\I{acceptor}=ac\land\I{proposer}=p\land\I{va}=x} \effR{} \interty
  \\
  & ac{:}\mathsf{Node}\garr p{:}\mathsf{Node}\garr x{:}\mathsf{tVal}\garr ap{:}\mathsf{Node} \sarr 
  \\ & \quad \effL{} \SNegStar{\msg{ePrepRsp}{\I{promised}=p}} \effR{}\,
  \\ & \quad \urt{\Code{unit}}{\I{acceptor}=ac\land\I{promised}=p\land\I{va}=x\land\mathit{nAccepted}=ap}\,{}
  \\ & \quad \effL{} \msg{eAcReq}{\mathit{acc1}(\I{acceptor})\land\I{proposer}=p\land\I{va}=x}\seqA
  \\ & \quad \msg{eAcReq}{\mathit{acc2}(\I{acceptor})\land\I{proposer}=p\land\I{va}=x} \effR{} \interty
  \\
  & ac{:}\mathsf{Node}\garr p{:}\mathsf{Node}\garr x{:}\mathsf{tVal}\garr ap{:}\mathsf{Node} \sarr 
  \\ & \quad \effL{} \allA\seqA\msg{eLostAcReq}{\I{acceptor}=ac\land\I{proposer}=p}\seqA\allA \effR{}\,
  \\ & \quad \urt{\Code{unit}}{\I{acceptor}=ac\land\I{promised}=p\land\I{va}=x\land\mathit{nAccepted}=ap}\,\effL{} \allA \effR{} \interty
  \\
  & ac{:}\mathsf{Node}\garr p{:}\mathsf{Node}\garr x{:}\mathsf{tVal}\garr ap{:}\mathsf{Node} \sarr 
  \\ & \quad \effL{} \SNegStar{\msg{ePrepRsp}{\I{promised}=p}} \effR{}\,
  \\ & \quad \urt{\Code{unit}}{\I{acceptor}=ac\land\I{promised}=p\land\I{va}=x\land\mathit{nAccepted}=ap}\,
  \\ & \quad \effL{} \msg{eAcReq}{\I{acceptor}=ac\land\I{proposer}=p\land\I{va}=x} \effR{}
\end{align*}}
\vspace{-.4cm}\par\nobreak\noindent

\vspace{-.35cm}\par\nobreak{\small
\begin{align*}
\eff{eLostAcReq} ~:~{}
  & p{:}\mathsf{Node}\garr ac{:}\mathsf{Node} \sarr \effL{} \allA \effR{}\,\urt{\Code{unit}}{\I{proposer}=p\land\I{acceptor}=ac}\,\effL{} \allA \effR{}
  \\
  \eff{eAcReq} ~:~{}
  & p{:}\urt{\mathsf{Node}}{\mathit{pr2}(p)}\garr ac{:}\mathsf{Node}\garr x{:}\mathsf{tVal} \sarr {}
  \\ & \quad \effL{} \SNegStar{\msg{eAcReq}{\I{acceptor}=ac}}\seqA\msg{eAcReq}{\mathit{pr1}(\I{proposer})\land\I{acceptor}=ac}\seqA
  \\ & \quad \SNegStar{\msg{eAcReq}{\I{acceptor}=ac}} \effR{}\,
  \\ & \quad \urt{\Code{unit}}{\I{proposer}=p\land\I{acceptor}=ac\land\I{va}=x}\,
  \\ & \quad \effL{} \msg{eAcRsp}{\I{proposer}=p\land\I{acceptor}=ac\land\I{accepted}=p\land\I{va}=x} \effR{} \interty
  \\
  & p{:}\mathsf{Node}\garr ac{:}\mathsf{Node}\garr x{:}\mathsf{tVal} \sarr 
  \\ & \quad \effL{} \SNegStar{\msg{eAcReq}{\I{acceptor}=ac}} \effR{}\,\urt{\Code{unit}}{\I{proposer}=p\land\I{acceptor}=ac\land\I{va}=x}\,
  \\ & \quad \effL{} \msg{eAcRsp}{\I{proposer}=p\land\I{acceptor}=ac\land\I{accepted}=p\land\I{va}=x} \effR{} \interty
  \\
  & p{:}\mathsf{Node}\garr ac{:}\mathsf{Node}\garr x{:}\mathsf{tVal} \sarr 
  \\ & \quad \effL{} \allA\seqA\msg{eLostAcRsp}{\I{proposer}=p\land\I{acceptor}=ac}\seqA\allA \effR{}\,
  \\ & \quad \urt{\Code{unit}}{\I{proposer}=p\land\I{acceptor}=ac\land\I{va}=x}\,\effL{} \allA \effR{}
  \\
  \eff{eLostAcRsp} ~:~{}
  & p{:}\mathsf{Node}\garr ac{:}\mathsf{Node} \sarr \effL{} \allA \effR{}\,\urt{\Code{unit}}{\I{proposer}=p\land\I{acceptor}=ac}\,\effL{} \allA \effR{}
  \\
  \eff{eAcRsp} ~:~{}
  & p{:}\mathsf{Node}\garr ac{:}\mathsf{Node}\garr ap{:}\urt{\mathsf{Node}}{ap=p}\garr x{:}\mathsf{tVal} \sarr {}
  \\ & \quad \effL{} \SNegStar{\msg{ePrepRsp}{\true}}\seqA\msg{eAcRsp}{\I{proposer}=p\land\I{accepted}=ap}\seqA\allA \effR{}\,
  \\ & \quad \urt{\Code{unit}}{\I{proposer}=p\land\I{acceptor}=ac\land\I{accepted}=ap\land\I{va}=x}\,\effL{} \msg{eLearn}{\I{va}=x} \effR{} \interty
  \\
  & p{:}\mathsf{Node}\garr ac{:}\mathsf{Node}\garr ap{:}\urt{\mathsf{Node}}{ap=p}\garr x{:}\mathsf{tVal} \sarr 
  \\ & \quad \effL{} \SNegStar{\msg{eAcRsp}{\I{proposer}=p\land\I{accepted}=ap}} \effR{}\,
  \\ & \quad \urt{\Code{unit}}{\I{proposer}=p\land\I{acceptor}=ac\land\I{accepted}=ap\land\I{va}=x}\,\effL{} \msg{eLearn}{\I{va}=x} \effR{}
\end{align*}}
\vspace{-.4cm}\par\nobreak\noindent
The weakened \tyName{} merges intersected prepare/accept types and turns many payload tests into \texttt{true}.

\newpage
\subsubsection*{Raft}
This two-node Raft sketch pressures leader election under faults. Node \(n_1\) appends \(x\) and becomes leader, then \(n_2\) becomes leader while the suffix skips the usual vote traffic (\(\eff{eVoteReq}\), \(\eff{eVoteRsp}\), \(\eff{eBeLeader}\)), exposing brittle election dynamics. The global property is:
\vspace{-.35cm}\par\nobreak{\small
\begin{align*}
  & \allA \seqA \msg{eAppEntry}{\I{node}=n_1 \land \I{va}=x} \seqA \allA \seqA
    \msg{eBeLeader}{\mathit{ld}=n_1 \land n_1 \neq n_2} \seqA \allA \seqA
  \\& \quad \msg{eBeLeader}{\mathit{ld}=n_2} \seqA
    \SNegStarUU{\msg{eBeLeader}{\true}}{\msg{eVoteReq}{\true}}{\msg{eVoteRsp}{\true}}
\end{align*}}
\vspace{-.4cm}\par\nobreak\noindent
The \tyName{} comprises the following clauses:
\vspace{-.35cm}\par\nobreak{\small
\begin{align*}
  \eff{eStart} ~:~{}
  & \effL{} \SNegStar{\msg{eShutDown}{\true}} \effR{}\,\Code{unit}\,\effL{} \allA \effR{}
  \\
  \eff{eClientPutRsp} ~:~{}
  & x{:}\mathsf{tVal}\garr st{:}\Bool \sarr \effL{} \allA \effR{}\,\urt{\Code{unit}}{\I{va}=x\land\mathit{stat}=st}\,\effL{} \allA \effR{}
  \\
  \eff{eClientPut} ~:~{}
  & x{:}\mathsf{tVal} \sarr \effL{} \allA \effR{}\,\Code{unit}\,\effL{} \msg{eAppEntry}{\mathit{nodeId}(\I{node})=1\land\I{va}=x}\seqA
  \\ & \quad \msg{eAppEntry}{\mathit{nodeId}(\I{node})=2\land\I{va}=x} \effR{}
  \\
  \eff{eAppEntry} ~:~{}
  & n{:}\mathsf{Node}\garr x{:}\mathsf{tVal} \sarr \effL{} \allA \effR{}\,\urt{\Code{unit}}{\I{node}=n\land\I{va}=x}\,\effL{} \allA \effR{}
  \\
  \eff{eShutDown} ~:~{}
  & \effL{} \allA \effR{}\,\urt{\Code{unit}}{\true}\,\effL{} \msg{eTimeout}{\mathit{nodeId}(\I{dest})=1} \effR{}
  \\
  & \effL{} \allA \effR{}\,\urt{\Code{unit}}{\true}\,\effL{} \msg{eTimeout}{\mathit{nodeId}(\I{dest})=2} \effR{}
  \\
  \eff{eTimeout} ~:~{}
  & d{:}\mathsf{Node} \sarr \effL{} \allA \effR{}\,\urt{\Code{unit}}{\I{dest}=d}\,\effL{} \msg{eVoteReq}{\I{src}=d\land\I{leader}=d\land\I{dest}=\mathit{next}(d)} \effR{}
  \\
  \eff{eVoteReq} ~:~{}
  & s{:}\mathsf{Node}\garr d{:}\mathsf{Node}\garr ld{:}\urt{\mathsf{Node}}{ld=d} \sarr 
  \\ & \quad \effL{} \allA \effR{}\,\urt{\Code{unit}}{\I{src}=s\land\I{dest}=d\land\I{leader}=ld}\,\effL{} \msg{eVoteRsp}{\I{src}=d\land\I{dest}=s\land\neg\mathit{stat}} \effR{} \interty
  \\
  & s{:}\mathsf{Node}\garr d{:}\mathsf{Node}\garr ld{:}\urt{\mathsf{Node}}{ld\neq d} \sarr 
  \\ & \quad \effL{} \allA \effR{}\,\urt{\Code{unit}}{\I{src}=s\land\I{dest}=d\land\I{leader}=ld}\,
  \\ & \quad \effL{} \msg{eVoteRsp}{\I{src}=d\land\I{dest}=s\land\mathit{stat}=(\mathit{nodeId}(d)<\mathit{nodeId}(ld))} \effR{}
  \\
  \eff{eVoteRsp} ~:~{}
  & s{:}\mathsf{Node}\garr d{:}\urt{\mathsf{Node}}{d\neq s}\garr st{:}\Bool \sarr 
  \\ & \quad \effL{} \allA \effR{}\,\urt{\Code{unit}}{\I{src}=s\land\I{dest}=d\land\mathit{stat}=st}\,\effL{} \msg{eBeLeader}{\mathit{ld}=d} \effR{}
  \\
  \eff{eBeLeader} ~:~{}
  & ld{:}\mathsf{Node} \sarr \effL{} \allA \effR{}\,\urt{\Code{unit}}{\mathit{ld}=ld}\,\effL{} \allA \effR{}
\end{align*}}
\vspace{-.4cm}\par\nobreak\noindent
The weakened \tyName{} relaxes most payloads to \texttt{true} while keeping vote-control branching.

\newpage
\subsubsection*{Anonreadatomicity} The setting of this benchmark is already explained in \autoref{sec:impl}. The global property is:
\vspace{-.35cm}\par\nobreak{\small
\begin{align*}
  & \allA \seqA
    \msg{eUpdateRsp}{\I{gid}=\I{id} \land \I{key}=k \land \I{vl}=v_1 \land v_1 \neq v_2 \land \textit{ok}(\I{status})} \seqA
  \\&\quad \SNegStar{\msg{eUpdateRsp}{\I{gid}=\I{id} \land \I{key}=k \land \textit{ok}(\I{status})}} \seqA
  \\&\quad \msg{eReadRsp}{\I{gid}=\I{id} \land \I{key}=k \land \I{vl}=v_2 \land v_2 \neq v_1 \land \textit{ok}(\I{status})} \seqA \allA
\end{align*}}
\vspace{-.4cm}\par\nobreak\noindent
The \tyName{} is defined as:
\vspace{-.35cm}\par\nobreak{\small
\begin{align*}
  \eff{eStartTxnReq} ~:~{}
  & id{:}\mathsf{tGid} \sarr \effL{} \allA \effR{}\,\Code{unit}\,\effL{} \msg{eStartTxnRsp}{\I{gid}=id} \effR{}
  \\
  \eff{eStartTxnRsp} ~:~{}
  & id{:}\mathsf{tGid} \sarr \effL{} \SNegStar{\msg{eStartTxnRsp}{\I{gid}=id}} \effR{}\,\urt{\Code{unit}}{\I{gid}=id}\,
  \\ & \quad \effL{} \SNegStar{\msg{eStartTxnRsp}{\I{gid}=id}} \effR{}
  \\
  \eff{eReadReq} ~:~{}
  & id{:}\mathsf{tGid}\garr k{:}\mathsf{Key} \sarr \effL{} \allA\seqA\msg{eStartTxnRsp}{\I{gid}=id}\seqA\allA \effR{}\,\Code{unit}\,
  \\ & \quad \effL{} \msg{eShardReadKeyReq}{\I{gid}=id\land\I{key}=k} \effR{}
  \\
  \eff{eReadRsp} ~:~{}
  & id{:}\mathsf{tGid}\garr k{:}\mathsf{Key}\garr va{:}\mathsf{tVal}\garr st{:}\mathsf{tCmdStatus} \sarr 
  \\ & \quad \effL{} \allA \effR{}\,\urt{\Code{unit}}{\I{gid}=id\land\I{key}=k\land\I{value}=va\land\I{status}=st}\,\effL{} \allA \effR{}
  \\
  \eff{eUpdateReq} ~:~{}
  & id{:}\mathsf{tGid}\garr k{:}\mathsf{Key}\garr va{:}\mathsf{tVal} \sarr 
  \\ & \quad \effL{} \allA\seqA\msg{eStartTxnRsp}{\I{gid}=id}\seqA\allA \effR{}\,\Code{unit}\,
  \\ & \quad \effL{} \msg{eShardUpdateKeyReq}{\I{gid}=id\land\I{key}=k\land\I{value}=va} \effR{}
  \\
  \eff{eUpdateRsp} ~:~{}
  & id{:}\mathsf{tGid}\garr k{:}\mathsf{Key}\garr va{:}\mathsf{tVal}\garr st{:}\mathsf{tCmdStatus} \sarr 
  \\ & \quad \effL{} \allA \effR{}\,\urt{\Code{unit}}{\I{gid}=id\land\I{key}=k\land\I{value}=va\land\I{status}=st}\,\effL{} \allA \effR{}
  \\
  \eff{eCommitTxnReq} ~:~{}
  & id{:}\mathsf{tGid} \sarr \effL{} \allA\seqA\msg{eStartTxnRsp}{\I{gid}=id}\seqA\allA \effR{}\,\Code{unit}\,\effL{} \msg{eShardPrepReq}{\I{gid}=id} \effR{}
  \\
  \eff{eCommitTxnRsp} ~:~{}
  & id{:}\mathsf{tGid}\garr txnst{:}\mathsf{tTxnStatus} \sarr \effL{} \allA \effR{}\,\urt{\Code{unit}}{\I{gid}=id\land\I{txnstatus}=txnst}\,\effL{} \allA \effR{}
  \\
  \eff{eRollbackTxnReq} ~:~{}
  & id{:}\mathsf{tGid} \sarr \effL{} \allA\seqA\msg{eStartTxnRsp}{\I{gid}=id}\seqA\allA \effR{}\,\Code{unit}\
  \\ & \quad \effL{} \msg{eShardAbortTxn}{\I{gid}=id}\seqA\msg{eCommitTxnRsp}{\I{gid}=id\land\mathit{aborted}(\I{txnstatus})} \effR{}
  \\
  \eff{eShardReadKeyReq} ~:~{}
  & va{:}\mathsf{tVal}\garr id{:}\mathsf{tGid}\garr k{:}\mathsf{Key} \sarr {}
  \\ & \quad \effL{} \allA\seqA\msg{eShardUpdateKeyReq}{\neg(\I{gid}=id)\land\I{key}=k\land\I{value}=va}\seqA\allA\seqA
  \\ & \quad \msg{eStartTxnRsp}{\I{gid}=id}\seqA\allA \effR{}\,\urt{\Code{unit}}{\I{gid}=id\land\I{key}=k}\,
  \\ & \quad \effL{} \msg{eShardReadKeyRsp}{\I{gid}=id\land\I{key}=k\land\I{value}=va\land\mathit{ok}(\I{status})} \effR{} \interty
  \\
  & va{:}\mathsf{tVal}\garr id{:}\mathsf{tGid}\garr k{:}\mathsf{Key} \sarr {}
  \\ & \quad \effL{} \allA\seqA\msg{eStartTxnRsp}{\I{gid}=id}\seqA\allA\seqA
  \\ & \quad \msg{eShardUpdateKeyReq}{\I{gid}=id\land\I{key}=k\land\I{value}=va}\seqA\allA \effR{}\,
  \\ & \quad \urt{\Code{unit}}{\I{gid}=id\land\I{key}=k}\,
  \\ & \quad \effL{} \msg{eShardReadKeyRsp}{\I{gid}=id\land\I{key}=k\land\I{value}=va\land\mathit{ok}(\I{status})} \effR{} \interty
  \\
  & id{:}\mathsf{tGid}\garr k{:}\mathsf{Key} \sarr \effL{} \allA\seqA\msg{eShardAbortTxn}{\I{gid}=id}\seqA\allA \effR{}\,
  \\ & \quad \urt{\Code{unit}}{\I{gid}=id\land\I{key}=k}\,
  \\ & \quad \effL{} \msg{eShardReadKeyRsp}{\I{gid}=id\land\I{key}=k\land\mathit{abort}(\I{status})} \effR{} 
  \\
  \eff{eShardReadKeyRsp} ~:~{}
  & id{:}\mathsf{tGid}\garr k{:}\mathsf{Key}\garr va{:}\mathsf{tVal}\garr st{:}\mathsf{tCmdStatus} \sarr 
  \\ & \quad \effL{} \allA \effR{}\,\urt{\Code{unit}}{\I{gid}=id\land\I{key}=k\land\I{value}=va\land\I{status}=st}\,
  \\ & \quad \effL{} \msg{eReadRsp}{\I{gid}=id\land\I{key}=k\land\I{value}=va\land\I{status}=st} \effR{}
\end{align*}}
\vspace{-.4cm}\par\nobreak\noindent

\vspace{-.35cm}\par\nobreak{\small
\begin{align*}
  \eff{eShardUpdateKeyReq} ~:~{}
  & id{:}\mathsf{tGid}\garr k{:}\mathsf{Key}\garr va{:}\mathsf{tVal} \sarr 
  \\ & \quad \effL{} \allA\seqA\msg{eStartTxnRsp}{\I{gid}=id}\seqA\allA \effR{}\,\urt{\Code{unit}}{\I{gid}=id\land\I{key}=k\land\I{value}=va}\,
  \\ & \quad \effL{} \msg{eShardUpdateKeyRsp}{\I{gid}=id\land\I{key}=k\land\I{value}=va\land\mathit{ok}(\I{status})} \effR{}
  \\
  & id{:}\mathsf{tGid}\garr k{:}\mathsf{Key}\garr va{:}\mathsf{tVal} \sarr 
  \\ & \quad \effL{} \allA\seqA\msg{eShardAbortTxn}{\I{gid}=id}\seqA\allA \effR{}\,\urt{\Code{unit}}{\I{gid}=id\land\I{key}=k\land\I{value}=va}\,
  \\ & \quad \effL{} \msg{eShardUpdateKeyRsp}{\I{gid}=id\land\I{key}=k\land\I{value}=va\land\mathit{abort}(\I{status})} \effR{} \interty
  \\
  \eff{eShardUpdateKeyRsp} ~:~{}
  & id{:}\mathsf{tGid}\garr k{:}\mathsf{Key}\garr va{:}\mathsf{tVal}\garr st{:}\mathsf{tCmdStatus} \sarr 
  \\ & \quad \effL{} \allA \effR{}\,\urt{\Code{unit}}{\I{gid}=id\land\I{key}=k\land\I{value}=va\land\I{status}=st}\,
  \\ & \quad \effL{} \msg{eUpdateRsp}{\I{gid}=id\land\I{key}=k\land\I{value}=va\land\I{status}=st} \effR{}
  \\
  \eff{eShardCommitTxn} ~:~{}
  & id{:}\mathsf{tGid} \sarr \effL{} \allA \effR{}\,\urt{\Code{unit}}{\I{gid}=id}\,\effL{} \allA \effR{}
  \\
  \eff{eShardAbortTxn} ~:~{}
  & id{:}\mathsf{tGid} \sarr \effL{} \allA \effR{}\,\urt{\Code{unit}}{\I{gid}=id}\,\effL{} \allA \effR{}
  \\
  \eff{eShardPrepReq} ~:~{}
  & id{:}\mathsf{tGid} \sarr {}
  \\ & \quad \effL{} \SNegStar{\msg{eShardPrepReq}{\I{gid}=id}}\seqA\msg{eShardAbortTxn}{\I{gid}=id}\seqA
  \\ & \quad \SNegStar{\msg{eShardPrepReq}{\I{gid}=id}} \effR{}\,\urt{\Code{unit}}{\text{as }\msg{eShardUpdateKeyReq}{\I{gid}=id}}\, 
  \\ & \quad \effL{} \msg{eShardPrepRsp}{\I{gid}=id\land\neg\mathit{bstatus}} \effR{}
  \\
  & id{:}\mathsf{tGid} \sarr {}
  \\ & \quad \effL{} \SNegStar{\msg{eShardPrepReq}{\I{gid}=id}}\seqA\msg{eStartTxnRsp}{\I{gid}=id}\seqA
  \\ & \quad \SNegStarU{\msg{eShardAbortTxn}{\I{gid}=id}}{\msg{eShardPrepReq}{\I{gid}=id}} \effR{}\,\urt{\Code{unit}}{\I{gid}=id}\, 
  \\ & \quad \effL{} \msg{eShardPrepRsp}{\I{gid}=id\land\mathit{bstatus}} \effR{}
  \\
  \eff{eShardPrepRsp} ~:~{}
  & id{:}\mathsf{tGid} \sarr {}
  \\ & \quad \effL{} \allA \effR{}\,\urt{\Code{unit}}{\I{gid}=id\land\neg\mathit{bstatus}}\, 
  \\ & \quad {\scriptsize \effL{} \msg{eCommitTxnRsp}{\I{gid}=id\allowbreak\land\mathit{aborted}(\I{txnstatus})}\seqA\allowbreak\msg{eShardAbortTxn}{\I{gid}=id} \effR{}}
  \\
  & id{:}\mathsf{tGid} \sarr {}
  \\ & \quad \effL{} \allA \effR{}\,\urt{\Code{unit}}{\I{gid}=id\land\mathit{bstatus}}\, 
  \\ & \quad {\scriptsize \effL{} \msg{eCommitTxnRsp}{\I{gid}=id\allowbreak\land\mathit{committed}(\I{txnstatus})}\seqA\allowbreak\msg{eShardCommitTxn}{\I{gid}=id} \effR{}}
\end{align*}}
\vspace{-.4cm}\par\nobreak\noindent
The weakened \tyName{} turns many response/shard guards into \texttt{true} and merges intersected types of read/update shard requests.

\newpage
\section{Evaluation Details}\label{sec:tech:evaluation}

\autoref{tab:evaluation-detail} and \autoref{tab:evaluation-detail2}
provide additional details about the experimental results in our
evaluation section, including the number of local variables (\#var),
the number of effectful operations (\#$\mykw{eff}$), the number of
assertions (\#$\mykw{assert}$), and the number of executions for both
random generator and manual written test generators. 
We also present the results of all benchmarks in both OCaml and P Language with weakened specifications in \autoref{tab:evaluation-rich} and \autoref{tab:evaluation-rich-2}. Three of these benchmarks (\textsf{IFCLoad}$_{\mathsf{W}}$, \textsf{DeBruijnHO}$_{\mathsf{W}}$, and \textsf{Transaction}$_{\mathsf{W}}$) are already shown in \autoref{sec:impl}.

\begin{table}[t!]
  \renewcommand{\arraystretch}{0.8}
  % \vspace*{-.1in}
  \caption{\small Detailed results of table \ref{tab:evaluation}.}
\vspace*{-.1in}
\footnotesize
\setlength{\tabcolsep}{3.3pt}
\begin{tabular}{c||ccc||ccc||ccc||c|cc}
  \toprule
Benchmark & \#$\eff{op}$ & \multicolumn{2}{c||}{\#qualifier} & \#var &\#$\mykw{eff}$ & \#$\mykw{assert}$ & \multicolumn{3}{c||}{\scriptsize \# Num. Executions} & t$_\text{total}$(s) & \#SMT & \#refine  \\
 &  &uHAT &goal & & & & {\scriptsize Clouseau} & {\scriptsize Random} & {\scriptsize Manual} &  &  \\
 \midrule
 \textsf{Stack}\cite{ZYDJ24} & $6$ & $12$ & $3$ & $12$ & $15$ & $4$ & $1.0$ & $31.9$ & - & $0.60$ & $98$ & $4$\\
 \midrule
 \textsf{Set}\cite{ZYDJ24} & $7$ & $22$ & $3$ & $7$ & $12$ & $4$ & $1.0$ & $22.1$ & - & $1.28$ & $255$ & $12$\\
 \midrule
 \textsf{Filesystem}\cite{ZYDJ24} & $7$ & $67$ & $4$ & $11$ & $13$ & $8$ & $1.3$ & $2812.1$ & - & $6.02$ & $853$ & $14$\\
 \midrule
 \textsf{Graph}\cite{ZYDJ24} & $6$ & $24$ & $4$ & $5$ & $16$ & $5$ & $1.0$ & {\tiny TO} & - & $16.61$ & $2114$ & $16$\\
 \midrule
 \textsf{NFA}\cite{ZYDJ24} & $8$ & $50$ & $4$ & $7$ & $26$ & $6$ & $1.0$ & {\tiny TO} & - & $21.85$ & $5080$ & $6$\\
 \midrule
 \textsf{IFCAdd}\cite{pbt-ifc} & $8$ & $37$ & $2$ & $10$ & $8$ & $6$ & $1.6$ & {\tiny TO} & $3383.4$ & $0.96$ & $154$ & $8$\\
 \midrule
 \textsf{IFCStore}\cite{pbt-ifc} & $8$ & $37$ & $2$ & $10$ & $8$ & $6$ & $2.8$ & {\tiny TO} & $5049.1$ & $1.02$ & $160$ & $8$\\
 \midrule
 \textsf{IFCLoad}\cite{pbt-ifc} & $8$ & $37$ & $2$ & $20$ & $16$ & $12$ & $15.4$ & {\tiny TO} & $11293.7$ & $5.76$ & $930$ & $18$\\
 \midrule
 \textsf{DeBruijnFO}\cite{CoverageType} & $10$ & $117$ & $4$ & $10$ & $18$ & $12$ & $1.5$ & {\tiny TO} & $634.0$  & $40.38$ & $4765$ & $26$\\
 \midrule
 \textsf{DeBruijnHO}\cite{CoverageType} & $10$ & $118$ & $4$ & $10$ & $18$ & $14$ & $1.4$ & {\tiny TO} & {\tiny TO} & $91.73$ & $9034$ & $31$\\
 \midrule
 \textsf{HashTable}\cite{OcamlMulticorePBT} & $13$ & $38$ & $4$ & $6$ & $6$ & $2$ & $1.0$ & {\tiny TO} & $64.8$ & $0.76$ & $57$ & $6$\\
 \midrule
 \textsf{Transaction} & $8$ & $45$ & $3$ & $25$ & $15$ & $12$ & $1.2$ & {\tiny TO} & - & $2.89$ & $423$ & $16$\\
 \midrule
 \textsf{Shopping}\cite{MonkeyDB} & $10$ & $60$ & $3$ & $31$ & $17$ & $11$ & $1.0$ & $20.0$ & {\tiny TO} & $22.10$ & $1269$ & $30$\\
 \midrule
 \textsf{Courseware}\cite{MonkeyDB} & $16$ & $106$ & $3$ & $34$ & $18$ & $12$ & $1.0$ & $57.5$ & {\tiny TO} & $27.71$ & $1479$ & $33$\\
 \midrule
 \textsf{Twitter}\cite{MonkeyDB} & $16$ & $99$ & $3$ & $46$ & $24$ & $16$ & $1.0$ & $6.3$ & {\tiny TO} & $61.96$ & $2339$ & $24$\\
 \midrule
 \textsf{Smallbank}\cite{MonkeyDB} & $22$ & $162$ & $3$ & $69$ & $28$ & $22$ & $1.9$ & {\tiny TO} & - & $163.55$ & $10263$ & $29$\\
 \bottomrule
 \end{tabular}
\label{tab:evaluation-detail}
\end{table}

\begin{table}[t!]
  \renewcommand{\arraystretch}{0.8}
  % \vspace*{-.1in}
  \caption{\small Detail results of table \ref{tab:evaluation2}.}
\vspace*{-.1in}
\footnotesize
\setlength{\tabcolsep}{3.3pt}
\begin{tabular}{c||ccc||ccc||ccc||c|cc}
  \toprule
Benchmark & \#$\eff{op}$ & \multicolumn{2}{c||}{\#qualifier} & \#var &\#$\mykw{eff}$ & \#$\mykw{assert}$ & \multicolumn{3}{c||}{\scriptsize \# Num. Executions} & t$_\text{total}$(s) & \#SMT & \#refine \\
 &  &uHAT &goal & & & & {\scriptsize Clouseau} & {\scriptsize Random} & {\scriptsize Manual} &  &  \\
 \midrule
 {\scriptsize\textsf{Database}} & $4$ & $15$ & $3$ & $6$ & $3$$3$ & $4$ & $1.0$ & $5.6$ & - & $0.32$ & $49$ & $6$\\
 \midrule
{\scriptsize\textsf{Firewall}}\cite{MessageChain} & $5$ & $26$ & $4$ & $12$ & $2$$8$ & $9$ & $1.0$ & $10.0$ & - & $0.59$ & $222$ & $10$\\
\midrule
{\scriptsize\textsf{RingLeaderElection}}\cite{MessageChain} & $3$ & $14$ & $2$ & $12$ & $2$$6$ & $7$ & $1.0$ & $17.7$ & - & $0.83$ & $100$ & $8$\\
\midrule
{\scriptsize\textsf{BankServer}}\cite{DGJ+13} & $6$ & $42$ & $1$ & $15$ & $2$$3$ & $5$ & $1.0$ & $40.5$ & $2.2$ & $0.17$ & $27$ & $5$\\
\midrule
{\scriptsize\textsf{Simplified2PC}}\cite{DGJ+13} & $9$ & $32$ & $5$ & $7$ & $2$$6$ & $5$ & $2.1$ & $125.0$ & $5.8$ & $1.03$ & $88$ & $8$\\
\midrule
{\scriptsize\textsf{HeartBeat}}\cite{DGJ+13} & $7$ & $31$ & $3$ & $9$ & $4$$10$ & $9$ & $1.0$ & $70.9$ & $6.0$ & $1.10$ & $145$ & $14$\\
\midrule
{\scriptsize\textsf{ChainReplication}}\cite{ModP} & $7$ & $72$ & $4$ & $24$ & $4$$8$ & $9$ & $1.0$ & $588.2$ & $500.0$ & $19.24$ & $2054$ & $169$\\
\midrule
{\scriptsize\textsf{Paxos}}\cite{ModP} & $10$ & $110$ & $2$ & $36$ & $4$$10$ & $13$ & $1.0$ & {\tiny TO} & $666.7$ & $23.70$ & $1763$ & $77$\\
\midrule
{\scriptsize\textsf{Raft}} & $9$ & $39$ & $6$ & $21$ & $2$$10$ & $10$ & $1.0$ & {\tiny TO} & - & $26.84$ & $1262$ & $78$\\
\midrule
{\scriptsize\textsf{AnonReadAtomicity}} & $17$ & $113$ & $3$ & $37$ & $5$$11$ & $12$ & $1.0$ & $53.0$ & - & $18.35$ & $1909$ & $16$\\
\bottomrule
\end{tabular}
\label{tab:evaluation-detail2}
\end{table}

\newcolumntype{P}[1]{>{\centering\arraybackslash}p{#1}}

\begin{table}[t!]
  \renewcommand{\arraystretch}{0.8}
  % \vspace*{-.1in}
  \caption{\small The results of the benchmarks in OCaml with weakened specifications. }
  \vspace*{-.1in} \footnotesize
%\multicolumn{2}{c||}{} &
\setlength{\tabcolsep}{1.3pt}
\begin{tabular}{c|P{4.8cm}||ccc||cc||c|ccc}
  \toprule
  \multicolumn{2}{c||}{Benchmark} & \#$\eff{op}$ & \multicolumn{2}{c||}{\#qualifier} & \multicolumn{2}{c||}{\scriptsize \# Num. Executions} & t$_\text{total}$(s) & \#evt & \#refine & \#SMT \\
  \midrule
Name & Property &   &uHAT &goal & {\scriptsize Clouseau} & {\scriptsize Baseline} &  &  \\
\midrule
\textsf{Stack}$_{\mathsf{W}}$\cite{ZYDJ24} & {\scriptsize Pushes and pops are correctly paired.} & $6$ & $11$ & $3$ & $1.2$ & $30.4$ & $0.92$ & $15$ & $4$ & $149$\\
\midrule
\textsf{Set}$_{\mathsf{W}}$\cite{ZYDJ24} & {\scriptsize Membership holds for every element inserted into the set.} & $7$ & $20$ & $3$ & $1.1$ & $30.1$ & $1.41$ & $12$ & $12$ & $265$\\
\midrule
\textsf{Filesystem}$_{\mathsf{W}}$\cite{ZYDJ24} & {\scriptsize A valid file path only contains non-deleted entries.} & $7$ & $47$ & $4$ & $4.3$ & $4103.0$ & $3.84$ & $10$ & $12$ & $469$\\
\midrule
\midrule
\textsf{Graph}$_{\mathsf{W}}$\cite{ZYDJ24} & {\scriptsize A serialized stream of nodes and edges is re-constituted to
  form a fully-connected graph.} & $6$ & $17$ & $4$ & $1.0$ & {\tiny Timeout} & $1.67$ & $11$ & $11$ & $233$\\
\midrule
\textsf{NFA}$_{\mathsf{W}}$\cite{ZYDJ24} & {\scriptsize An NFA reaches a final state for every string in the language it accepts.} & $8$ & $33$ & $4$ & $1.0$ & {\tiny Timeout} & $7.93$ & $24$ & $5$ & $1908$\\
\midrule
\textsf{IFCStore}$_{\mathsf{W}}$\cite{pbt-ifc} & {\scriptsize A well-behaved IFC program containing a $\Code{Store}$ command never leaks a secret.} & $8$ & $33$ & $2$ & $6.0$ & $6082.8^{\dagger}$ & $0.89$ & $8$ & $8$ & $136$\\
\midrule
\textsf{IFCAdd}$_{\mathsf{W}}$\cite{pbt-ifc} & {\scriptsize A well-behaved IFC program containing an $\Code{Add}$ command never leaks a secret.} & $8$ & $33$ & $2$ & $3.4$ & $6621.5^{\dagger}$ & $0.99$ & $8$ & $8$ & $136$\\
\midrule
\textsf{IFCLoad}$_{\mathsf{W}}$\cite{pbt-ifc} & {\scriptsize A well-behaved IFC program containing a $\Code{Load}$ command never leaks a secret.} & $8$ & $33$ & $2$ & $115.4$ & $4278.3^{\dagger}$ & $5.01$ & $16$ & $18$ & $754$\\
\midrule
\textsf{DeBruijnFO}$_{\mathsf{W}}$\cite{CoverageType} & {\scriptsize An STLC interpreter correctly evaluates a
  well-typed first-order STLC program that uses a de Bruijn representation.} & $10$ & $106$ & $4$ & {\tiny Timeout} & $303.6^{\dagger}$ & $3.03$ & $7$ & $22$ & $857$\\
\midrule
\textsf{DeBruijnHO}$_{\mathsf{W}}$\cite{CoverageType} & {\scriptsize An STLC interpreter correctly evaluates a
  a well-typed higher-order STLC program that uses a de Bruijn representation.} & $10$ & $107$ & $4$ & {\tiny Timeout} & {\tiny Timeout}$^{\dagger}$ & $3.32$ & $7$ & $22$ & $957$\\
\midrule
\midrule
\textsf{Shopping}$_{\mathsf{W}}$\cite{MonkeyDB} & {\scriptsize All items added to a cart can be checked-out.} & $10$ & $58$ & $3$ & $1.0$ & $20.0^{\dagger}$ & $18.57$ & $17$ & $30$ & $1174$\\
\midrule
\textsf{HashTable}$_{\mathsf{W}}$\cite{OcamlMulticorePBT} & {\scriptsize No updates to a concurrent hashtable are ever lost.} & $13$ & $33$ & $4$ & $1.0$ & $2.5^{\dagger}$ & $0.84$ & $6$ & $6$ & $56$\\
\midrule
\textsf{Transaction}$_{\mathsf{W}}$ & {\scriptsize Asynchronous read operations are logically atomic.} & $8$ & $40$ & $3$ & {\tiny Timeout} & {\tiny Timeout} & $2.61$ & $15$ & $15$ & $372$\\
\midrule
\textsf{Courseware}$_{\mathsf{W}}$\cite{MonkeyDB} & {\scriptsize Every student enrolled in a course exists in the
  enrollment database for that course.} & $16$ & $100$ & $3$ & $1.0$ & $57.5^{\dagger}$ & $25.54$ & $18$ & $33$ & $1374$\\
\midrule
\textsf{Twitter}$_{\mathsf{W}}$\cite{MonkeyDB} & {\scriptsize Posted tweets are visible to all followers.} & $16$ & $95$ & $3$ & $1.0$ & $6.3^{\dagger}$ & $27.28$ & $18$ & $33$ & $1374$\\
\midrule
\textsf{Smallbank}$_{\mathsf{W}}$\cite{OLTPBench} & {\scriptsize Account updates are strongly consistent.} & $22$ & $154$ & $3$ & {\tiny Timeout} & {\tiny Timeout} & $5.74$ & $14$ & $14$ & $528$\\
\bottomrule
\end{tabular}
\label{tab:evaluation-rich}
\vspace*{-.15in}
\end{table}

\begin{table}[t!]
  \renewcommand{\arraystretch}{0.8}
  % \vspace*{-.1in}
  \caption{\small The results of the benchmarks in P Language with weakened specifications. }
  \vspace*{-.1in} \footnotesize
%\multicolumn{2}{c||}{} &
\setlength{\tabcolsep}{1.3pt}
\begin{tabular}{c|P{4.0cm}||ccc||cc||c|ccc}
  \toprule
  \multicolumn{2}{c||}{Benchmark} & \#$\eff{op}$ & \multicolumn{2}{c||}{\#qualifier} & \multicolumn{2}{c||}{\scriptsize \# Num. Executions} & t$_\text{total}$(s) & \#evt & \#refine & \#SMT \\
  \midrule
Name & Property &   &uHAT &goal & {\scriptsize Clouseau} & {\scriptsize Baseline} &  &  \\
\midrule
{\scriptsize\textsf{Database}$_{\mathsf{W}}$} & {\scriptsize The database maintains a Read-Your-Writes policy.} & $4$ & $13$ & $3$ & $1.0$ & {\tiny TO} & $0.34$ & $6$ & $6$ & $51$\\
\midrule
{\scriptsize\textsf{Firewall}$_{\mathsf{W}}$}\cite{MessageChain} & {\scriptsize Requests generated inside the firewall are eventually propagated to the outside.} & $5$ & $20$ & $4$ & {\tiny TO} & {\tiny TO} & $0.61$ & $6$ & $6$ & $189$\\
\midrule
{\scriptsize\textsf{RingLeaderElection}$_{\mathsf{W}}$}\cite{MessageChain} & {\scriptsize There is always a single unique  leader.} & $3$ & $13$ & $2$ & $2.0$ & {\tiny TO} & $0.82$ & $8$ & $8$ & $102$\\
\midrule
{\scriptsize\textsf{BankServer}$_{\mathsf{W}}$}\cite{DGJ+13} & {\scriptsize Withdrawals in excess of the available balance are never allowed.} & $6$ & $36$ & $1$ & $1.6$ & {\tiny TO}$^{\dagger}$ & $0.17$ & $5$ & $5$ & $27$\\
\midrule
{\scriptsize\textsf{Simplified2PC}$_{\mathsf{W}}$}\cite{DGJ+13} & {\scriptsize Transactions are atomic.} & $9$ & $24$ & $5$ & $2.9$ & {\tiny TO}$^{\dagger}$ & $1.05$ & $8$ & $8$ & $86$\\
\midrule
{\scriptsize\textsf{HeartBeat}$_{\mathsf{W}}$}\cite{DGJ+13} & {\scriptsize All available nodes are identified by a  detector.} & $7$ & $24$ & $3$ & {\tiny TO} & {\tiny TO}$^{\dagger}$ & $0.36$ & $6$ & $6$ & $41$\\
\midrule
{\scriptsize\textsf{ChainReplication}$_{\mathsf{W}}$}\cite{ModP} & {\scriptsize Concurrent updates are never lost.} & $7$ & $47$ & $4$ & $5.2$ & {\tiny TO}$^{\dagger}$ & $4.22$ & $11$ & $11$ & $332$\\
\midrule
{\scriptsize\textsf{Paxos}$_{\mathsf{W}}$}\cite{ModP} & {\scriptsize Logs are correctly replicated.} & $10$ & $85$ & $2$ & {\tiny TO} & {\tiny TO}$^{\dagger}$ & $21.83$ & $12$ & $61$ & $2046$\\
\midrule
{\scriptsize\textsf{Raft}$_{\mathsf{W}}$} & {\scriptsize Leader election is robust to faults.} & $9$ & $34$ & $6$ & $1.0$ & {\tiny TO} & $27.60$ & $12$ & $12$ & $989$\\
\midrule
\textsf{AnonReadAtomicity}$_{\mathsf{W}}$ & {\scriptsize Read Atomicity is preserved.} & $17$ & $98$ & $3$ & $1.0$ & $53.3^{\dagger}$ & $10.68$ & $12$ & $12$ & $991$\\
\bottomrule
\end{tabular}
\label{tab:evaluation-rich-2}
\vspace*{-.15in}
\end{table}

\else
\fi

\end{document}